\DeclareRobustCommand\citepos
\def\NAT@nmfmt##1{{\NAT@up##1's}}%
\let\NAT@ctype\z@\NAT@partrue
\renewcommand*{\eqref}[1]{\hyperref[{#1}]{\textup{\tagform@{\ref*{#1}}}}}
\def \expandafter \normalsize \expandafter{\normalsize \setlength \abovedisplayskip{10pt plus 2pt minus 7pt}}
\def \expandafter \normalsize \expandafter{\normalsize \setlength \abovedisplayshortskip{0pt plus 2pt}}
\def \expandafter \normalsize \expandafter{\normalsize \setlength \belowdisplayskip{10pt plus 2pt minus 7pt}}
\def \expandafter \normalsize \expandafter{\normalsize \setlength \belowdisplayshortskip{5pt plus 2pt minus 3pt}}
\theoremstyle{plain}
\newtheorem{theorem}{Theorem}
\newtheorem{lemma}{Lemma}
\newtheorem{corollary}{Corollary}
\theoremstyle{definition}
\newtheorem{theoremA}{Theorem}
\newtheorem{theoremAA}{Theorem}
\newtheorem{rmk}{Remark}
\newenvironment{remark}{\begin{rmk}}{\hfill $\square$ \end{rmk}}
\newtheorem{example}{Example}
\newtheorem{assumFLS}{Assumption}
\newtheorem{assumIV}{Assumption}
\newtheorem{assumpA}{Assumption}
\newtheorem{assumpB}{Assumption}
\numberwithin{assumpB}{subsection}
\DeclareMathOperator{\sgn}{sgn}
\DeclareMathOperator{\IRF}{SIRF}
\DeclareMathOperator{\DDD}{\Delta}
\DeclareMathOperator{\KK}{\mathrm{K}_{\reg}}
\DeclareMathOperator{\KKK}{\mathrm{K}_{\reg}}
\newcommand{\lambdatw}{\bar{\lambda}}
\newcommand{\vtw}{\bar{v}}
\DeclareMathOperator{\op}{op}
\newcommand{\HS}{\text{HS}}
\newcommand{\CC}{\mathtt{C}}
\newcommand{\elltwo}{\widetilde{\mathcal H}}
\newcommand{\XX}{\Upsilon}
\newcommand{\ZZ}{\mathcal Z}
\newcommand{\llambda}{\nu}
\newcommand{\ww}{\varphi}
\newcommand{\wv}{\phi}
\newcommand{\CCC}{\Gamma}
\newcommand{\PP}{\mathcal{P}}
\renewcommand{\SS}{\mathcal{S}}
\newcommand{\SA}{\mathrm{S}}
\newcommand{\nutw}{\bar{\nu}}
\newcommand{\wtw}{\bar{\phi}}
\newcommand{\wwtw}{\bar{\varphi}}
\newcommand{\PPI}{\scalebox{1.3}{$\uppi$}}
\newcommand{\varph}{\vartheta}
\newcommand{\bdw}{\mathsf{b}}
\newcommand{\pcr}{P_{\mathbb{R}}}
\newcommand{\pcrr}{P_{\mathbb{R}}^\ast}
\newcommand{\pch}{P_{\mathcal{H}}}
\newcommand{\pchh}{P_{\mathcal{H}}^\ast}
\renewcommand{\tilde}{\widetilde}
\renewcommand{\hat}{\widehat}
\newcommand{\UU}{\mathcal U}
\newcommand{\UUU}{\mathfrak U}
\newcommand{\UUUU}{U}
\newcommand{\reg}{\tau}
\numberwithin{equation}{section}
\newtheorem{proposition}{Proposition}
\titlespacing*{\paragraph}{0pt}{1.5ex plus 1.0ex minus .2ex}{0.1ex plus .05ex}
\newcommand*{\addFileDependency}[1]{
	\typeout{(#1)}
	\@addtofilelist{#1}
	\IfFileExists{#1}{}{\typeout{No file #1.}}
}
\titlespacing*{\subsubsection}{0pt}{2.5ex plus 1ex minus .2ex}{1.3ex plus .2ex}
\begin{document}
	\title{\Large Functional Linear Projection and Impulse Response Analysis\thanks{We are grateful to \`{O}scar Jord\`{a}, Bonsoo Koo, Benjamin Wong, and seminar participants at the 18th International Symposium on Econometric Theory and Applications (SETA2024) and 33rd Australian New Zealand Econometric Study Group Meeting for their invaluable comments.} }
	\author{	Won-Ki Seo\textsuperscript{a}   \quad\quad Dakyung Seong\textsuperscript{a}\thanks{Corresponding author. Address: School of Economics, University of Sydney, Camperdown, 2006, NSW, Australia. E-mail addresses: \texttt{dakyung.seong@sydney.edu.au (D.\ Seong), \texttt{won-ki.seo@sydney.edu.au (W.-K.\ Seo)}}}\\	\large{\textsuperscript{a} School of Economics, University of Sydney}
	}
	
	\maketitle
	\vspace{-1em}
	\begin{abstract}
		This paper proposes econometric methods for studying how economic variables  respond to function-valued shocks. Our methods are developed based on linear projection estimation of predictive regression models with a function-valued predictor and  other control variables. We show that the linear projection coefficient associated with the functional variable allows for the impulse response interpretation in a functional structural vector autoregressive model under a certain identification scheme, similar to well-known \citeauthor{Sims1972}' (\citeyear{Sims1972}) causal chain, but with nontrivial complications in our functional setup. A novel estimator based on an operator Schur complement is proposed and its asymptotic properties are studied. We illustrate its empirical applicability with two examples involving functional variables:  economy sentiment distributions and functional monetary policy shocks.  \\
		\textbf{Keywords: }Local Projection, Functional linear regression, VARs, Identification
	\end{abstract}

	\section{Introduction}\label{sec: intro} 
	
	After the initial development by \cite{Oscar2005}, the \textit{local projection} approach has become one of the foremost applications of time series analysis in the fields of macroeconomics and public policy for studying dynamic causal effects, known as structural impulse responses.  The work by \cite{PW2021}  provides more rigorous grounding for the approach as an important complement to   structural vector autoregressive (SVAR) models, by establishing the theoretical equivalence between  estimation results of the two. Practitioners have benefited from this result, as the local projection approach is easier to implement and interpret, computationally less burdensome, and less sensitive to model misspecification (see \citealp{Oscar2005} for a detailed discussion).

	This paper contributes to recent developments in this area by providing statistical inference methodologies to  study responses of target variables when  shocks are characterized by function-valued random variables, such as the functional monetary policy shocks in \cite{IR2021}. We consider a linear projection of a target variable onto the space spanned by the variables in the conditioning set, including function-valued  shocks. Similar to  \cite{Oscar2005}, this setup simplifies  impulse response analysis into the estimation of a regression model.  Our first result is the formal theoretical evidence for interpreting the parameters of interest in our regression model as structural impulse response functions of SVAR models involving a functional variable, under a certain identification condition. Considering the scarcity of the literature, our paper will be a valuable first step toward formally understanding functional structural shocks and their impact.  Furthermore,  while the equivalence for this special case has been formally established by \cite{Oscar2005} for finite dimensional VAR models, its extension to the case with a functional covariate has not yet been fully explored.

	Our benchmark model to be studied is closely related to the \textit{functional local projection} approach proposed by \cite{IR2021},  but a crucial distinction exists. Specifically, in contrast to them,  we do not require a parametric assumption on the structure of function-valued shocks and their coefficients. In  \cite{IR2021},  functional shocks are represented by a few known functional factors. This assumption crucially reduces the dimensionality of the variables to be analyzed, thereby making existing estimation and inference methodologies valid despite the infinite dimensionality of functional data. However,  such a parametric structure  is not always available in practice and could result in model misspecification; this point will be further demonstrated in Sections~\ref{sec: model} and~\ref{subsec: para} by using an example.    Considering  practical challenges  of exact identification of those factors in infinite dimensional function spaces, our approach could be an appealing alternative for practitioners.

	The current paper is closely related to recently growing studies on the VAR model involving a functional covariate, including \cite{IR2021},  \cite{BHCJ2023}, and \cite{Chang_Chen_Schorfheide_2024}. However, our focus on identification and the relationship between SVAR models and linear projection in the presence of a functional covariate distinguishes the current paper from theirs.  Moreover, the aforementioned studies typically approximate a functional covariate using a few factors or basis functions and then apply identification strategies developed in the finite dimensional SVAR literature (e.g.,\ \citealp{kilian2017structural}),  assuming that the approximation error is either zero or negligible as the sample size increases. Practitioners might prefer these approaches due to their ease of implementation. However, this simplified method does not always ensure the identification of infinite dimensional structural parameters on the entire space in which the parameters take values, thereby restricting our ability to fully comprehend the features of structural parameters. This point will be detailed in Section~\ref{subsec: para} by using an example.

	We propose an estimator  based on the operator Schur complement and establish its asymptotic properties, including  (local) asymptotic normality. As our model belongs to the so-called  scalar-on-function models studied in, e.g., \cite{Hall2007}, \cite{SHIN2009}, \cite{Florence2015},  \cite{imaizumi2018}, and \cite{Babii2022}, our estimator can be understood as a complement to the existing ones developed therein.  However,  existing estimators   (i) first reduce the dimension of functional predictors  and then use the dimension-reduced ones in the analysis (e.g., \citealp{AP2006,SHIN2009}), (ii) are designed without scalar-valued covariates (e.g., \citealp{Hall2007, Florence2015, imaizumi2018}), or (iii) do not account for time series dependence, which is crucial in our context (except for \citealp{Babii2022}, among the aforementioned articles). The approaches in (i) may not be preferred because they do not take into account covariation across variables in the dimension reduction procedure, while the latter two may limit practical applicability. Our estimator complements them by using a regularization method that allows us to account for covariation between scalar- and function-valued variables. This point is particularly relevant if the function-valued variable is given by a regressor without a structural interpretation, as in Section~\ref{sec: svar}. Later in Appendix~\ref{sec:est2}, we further extend our method to accommodate an endogenous functional predictor.  

	As discussed in \cite{Hall2007}, \cite{Florence2015}, and references therein, estimating functional linear regression models intrinsically involves an inverse problem similar to those in nonparametric estimation. Although we focus on a model that is linear in functional variables, these function-valued random variables can be represented by an infinite number of basis functions. In this regard, our paper is also related to studies on nonlinear impulse responses in, e.g., \cite{Oscar2005}, \cite{kilian2017structural}, and \cite{KP2024}.  
	
	Although our main focus is on estimating impulse response coefficients when shocks are characterized by functions, our methodology could also be an empirically appealing alternative to prediction methods, such as that in \cite{Barbaglia2023}, in that it allows us to utilize richer information, such as distributional information or observations at different frequencies. In particular,  if a functional variable is given by a predictor without a structural interpretation, our model reduces to a predictive regression model with a functional predictor. As studied in \cite{Babii2022} and \cite{seong2021functional}, functional predictors could lead to better forecasting outcomes by exploiting information overlooked in conventional estimation approaches.

	In our empirical application, we first employ the quantile curve of the economic sentiment measure proposed by \cite{Barbaglia2023} and study the impact of perturbations given to the sentiment distribution on Total Nonfarm Payrolls in the US, without a structural interpretation. The sentiment measure is observed at a daily frequency and describes the presence of negative or positive tones in relation to specific terms, such as \textsl{economy} or \textsl{inflation}. The proposed estimator shows that the response depends not only on the magnitude of the perturbation, but also on its shape, which has not been observed previously. Moreover, we find that location shifts in the sentiment distribution have significant effects on the target variable in the short term. For instance, if the overall sentiment distribution shifts to the left, the average sentiment on the economy becomes more pessimistic, predicting significantly negative economic growth in the near future.

	We then revisit the work by \cite{IR2021} on the impact of monetary policy shocks, which are characterized by shifts in yield curves on monetary policy announcement dates. We find that during conventional periods, inflation responses are generally insignificant; however, their impact tends to vary depending on how the shock affects interest rates at short or long maturities. That is, the impact of functional monetary policy shocks depends on their shapes and directions.

	The rest of the paper is organized as follows. Section~\ref{sec: model} motivates our benchmark model and provides examples of  functional variables. In Section~\ref{sec: svar}, we consider the SVAR model involving a functional variable and discuss when the parameters in our model can be interpreted as functional structural impulse responses. Section~\ref{sec:est} proposes our estimator and establishes its asymptotic properties.  We apply our estimators to the empirical data in Section~\ref{sec:emp}. Section~\ref {sec:sim} summarizes simulation results.  Section~\ref{sec:con} concludes.  The appendix includes  proofs and an extension of the main theoretical results to instrumental variable (IV) estimation. In the Supplementary Material, we provide mathematical preliminaries and extension to functional SVAR models.

	\section{Model} \label{sec: model}
	We let $y_t$ denote a scalar-valued target (dependent) variable and $\mathbf{w}_t= ( {w}_{1,t},\ldots, {w}_{m,t})'$ be a vector of exogenous (scalar-valued) control variables possibly containing lagged $y_t$'s. The variable  $X_t = \{X_t (r) : r \in [0,1]\} $ denotes a \textit{function-valued} variable that takes values in some function space $\mathcal H$ to be detailed shortly.\footnote{For the ease of explanation, we suppose  that the domain of $X_t$ is $[0,1]$, but its extension to any arbitrary compact interval $[a,b]$ is straightforward.} Examples of $X_t$ include, but are not limited to, sentiment quantile curves in Example~\ref{example1} and functional monetary policy shocks in \cite{IR2021}. In this paper, we are particularly interested in the response of $y_{t+h}$ when an additional perturbation (or shock) $\zeta$ is introduced to $X_t$; that is, for any $x \in \mathcal H$, we are interested in \begin{equation}
		\mathbb E\left[y_{t+h} |X_t = x+ \zeta , \mathbf{w}_{t} \right] - \mathbb E\left[y_{t+h} | X_t = x, \mathbf{w}_{t}\right]. \label{eq: irf: general}
	\end{equation}
	In previous studies on time series analysis, including \cite{Oscar2005} and \cite{PW2021}, $X_t$, the variable exposed to a perturbation, is characterized by a scalar or a vector, and the linear projection of $y_t$ onto the space spanned by $\{X_t, \mathbf w_t\}$ reduces  the impulse response analysis to inference of  projection coefficients. However, its extension to the case where $X_t$ is given by a function is not  straightforward. For example, in our setup, the perturbation $\zeta$ in \eqref{eq: irf: general} is specified as a function on $[0,1]$. Thus,  the response of $y_t$ in \eqref{eq: irf: general}  depends not only on the magnitude of $\zeta$, but also on its shape. This feature has not been observed in the conventional studies on SVAR models. 

	We follow \cite{Oscar2005} and consider the following linear projection of the scalar target variable $y_t$ onto the space spanned by the variables in the conditioning set, including the function-valued $X_t$: \begin{equation}
		y_{t+h} =  \int_{0}^{1} \beta_h(r)X_t(r) dr +   \mathbf{w}_{t} ' \alpha_h  
		+  u_{h,t}. \label{eq: model: benchmark}
	\end{equation}
	The functional covariate $X_t$ and its coefficient $\beta_h$ are assumed to take values in $\mathcal H=L^2[0,1]$ (the Hilbert space of square-integrable functions), which allows us to conveniently write $\int_{0}^{1} \beta_h(r)X_t(r) dr$ as the inner product $\langle \beta_h,  X_t \rangle$ defined on $\mathcal H$.  
	
	Under the linear projection in \eqref{eq: model: benchmark}, we  identify   \eqref{eq: irf: general} as follows:\begin{equation} 
		\mathbb E\left[y_{t+h} |X_t = x+ \zeta , \mathbf{w}_{t} \right] - \mathbb E\left[y_t | X_t = x, \mathbf{w}_{t}\right] = \int_{0}^{1} \beta_h(r)\zeta(r)dr = \langle  \beta_h,\zeta\rangle.  \label{def: local: irf} 
	\end{equation}
	If $X_t$ is given by a functional structural shock (such as the functional monetary policy shock in \citealp{IR2021}) or if the true data generating process (DGP) follows a special structure  in Section~\ref{sec: svar}, the parameter $\beta_h$ reduces to  the \textit{structural impulse response function} (SIRF) of $y_t$ when $X_t$ experiences a function-valued shock $\zeta$. In this regard, we extend the seminal work of \cite{Oscar2005} to allow for a functional variable.

	\begin{figure}[h!]
		\centering 
		\caption{Quantile Curves of Economy Sentiment}
		\begin{subfigure}{.6\textwidth}\subcaption{Overall sentiment}
			\includegraphics[width = \textwidth, height = 0.5\textwidth, trim= {0 0 0 5cm},clip]{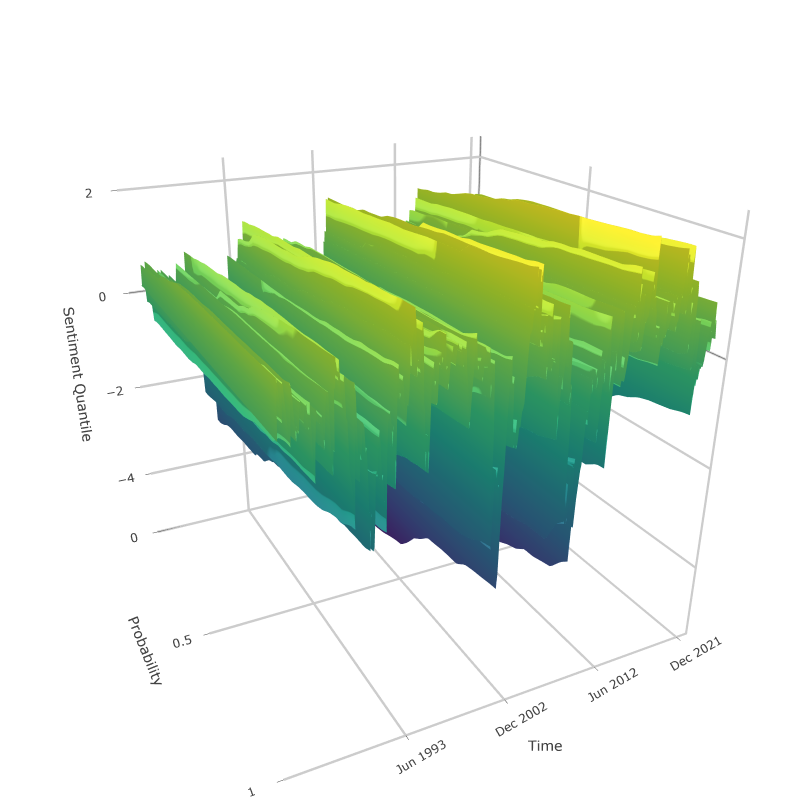}  
			\label{fig0a}
		\end{subfigure}\\
		\begin{subfigure}{.32\textwidth}\subcaption{Overall period}
			\includegraphics[width = \textwidth]{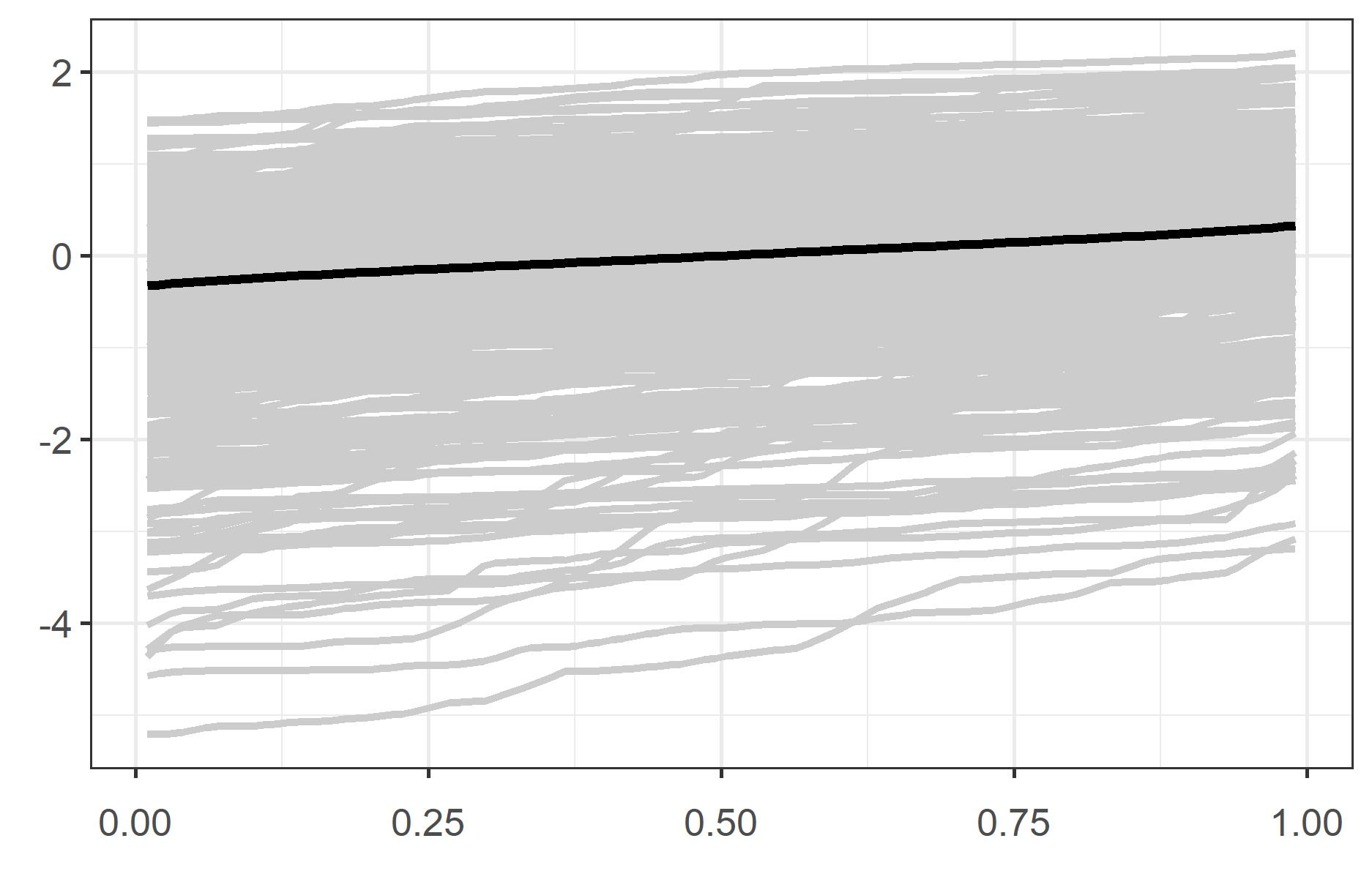}  
			\label{fig1a}
		\end{subfigure}
		\begin{subfigure}{.32\textwidth}\subcaption{Recessive period}
			\includegraphics[width = \textwidth]{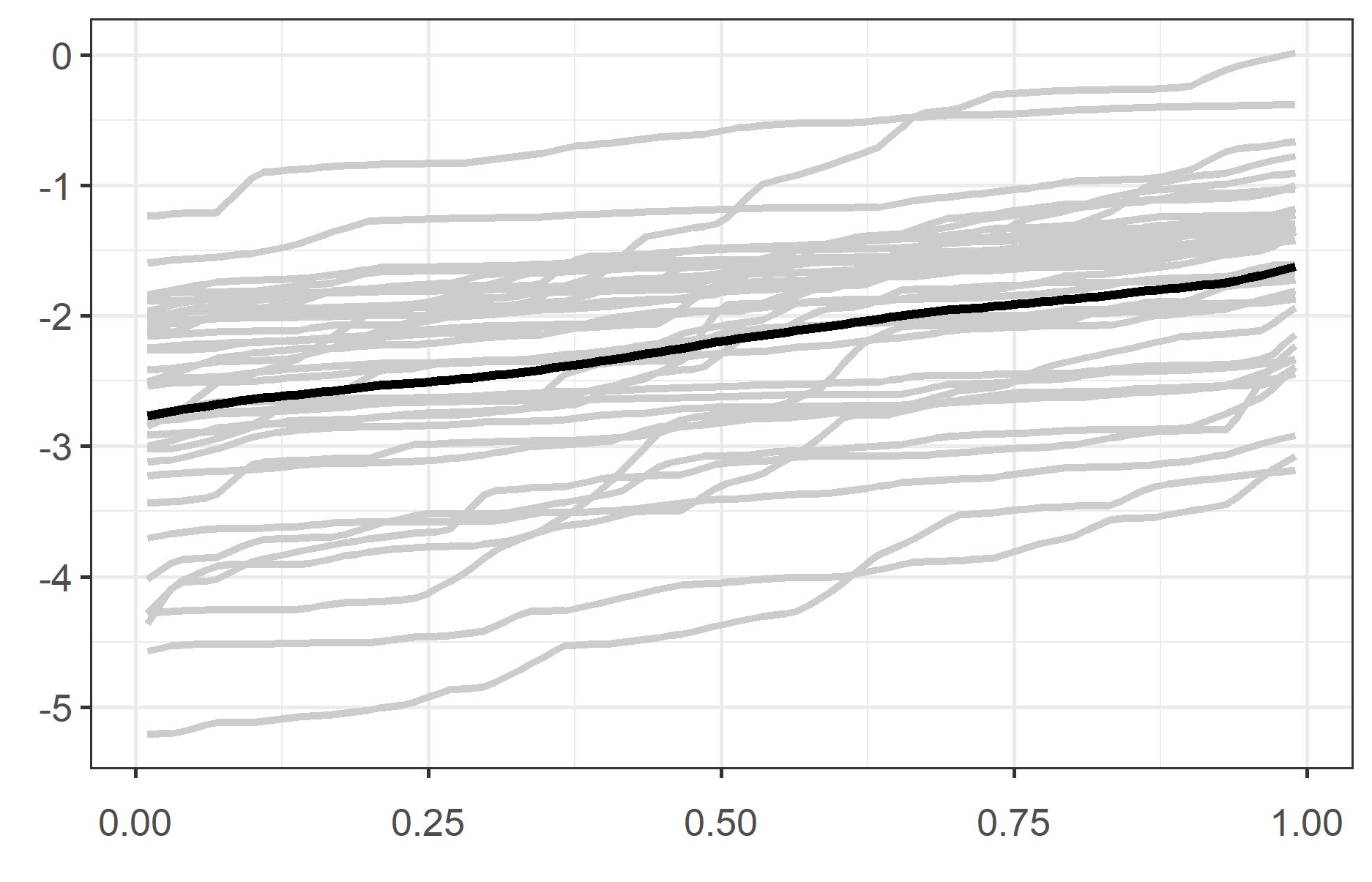}  
			\label{fig1b}
		\end{subfigure}
		\begin{subfigure}{.32\textwidth}\subcaption{Expansive period}
			\includegraphics[width = \textwidth]{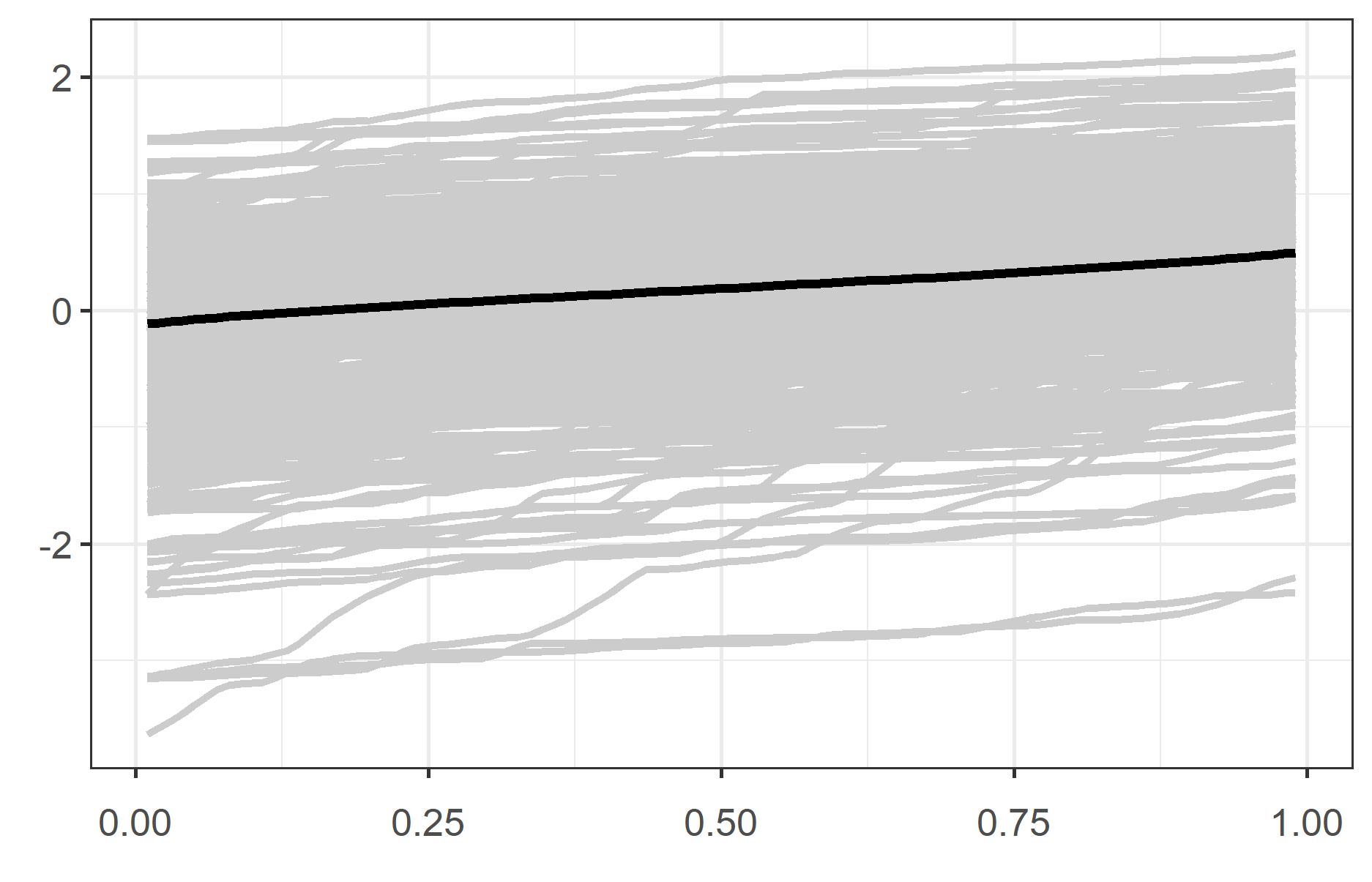}  
			\label{fig1c}
		\end{subfigure}\vspace{-1.5em}
		\flushleft{\scriptsize{Notes: The monthly quantile curves of economic sentiment are reported for overall, recessive, and expansive periods with each other their mean functions (black). Each curve is obtained by smoothing the daily economic sentiment measure proposed by \cite{Barbaglia2023} for each month, see Section~\ref{sec:emp1} for details.   }}
		\label{fig: 1}
	\end{figure}
	
	\begin{figure}[h!]  
		\caption{Examples of $\zeta$ in \eqref{def: local: irf} and   distributional changes}
		\begin{subfigure}{.5\textwidth} \flushleft{\subcaption{ positive location shift ($\zeta_1$)\label{fig2a}}}
			\includegraphics[width = 0.49\linewidth]{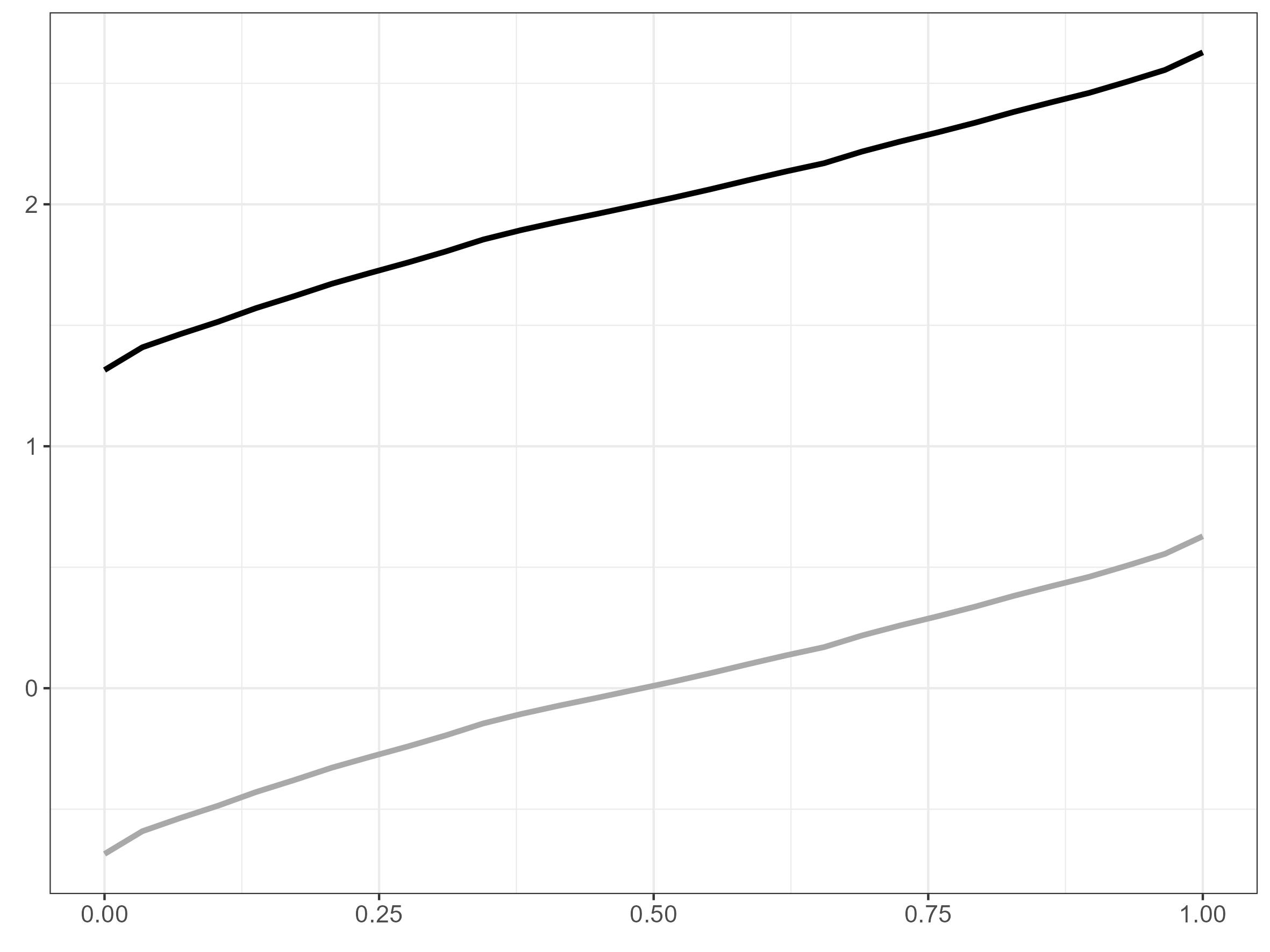}
			\includegraphics[width = 0.49\linewidth]{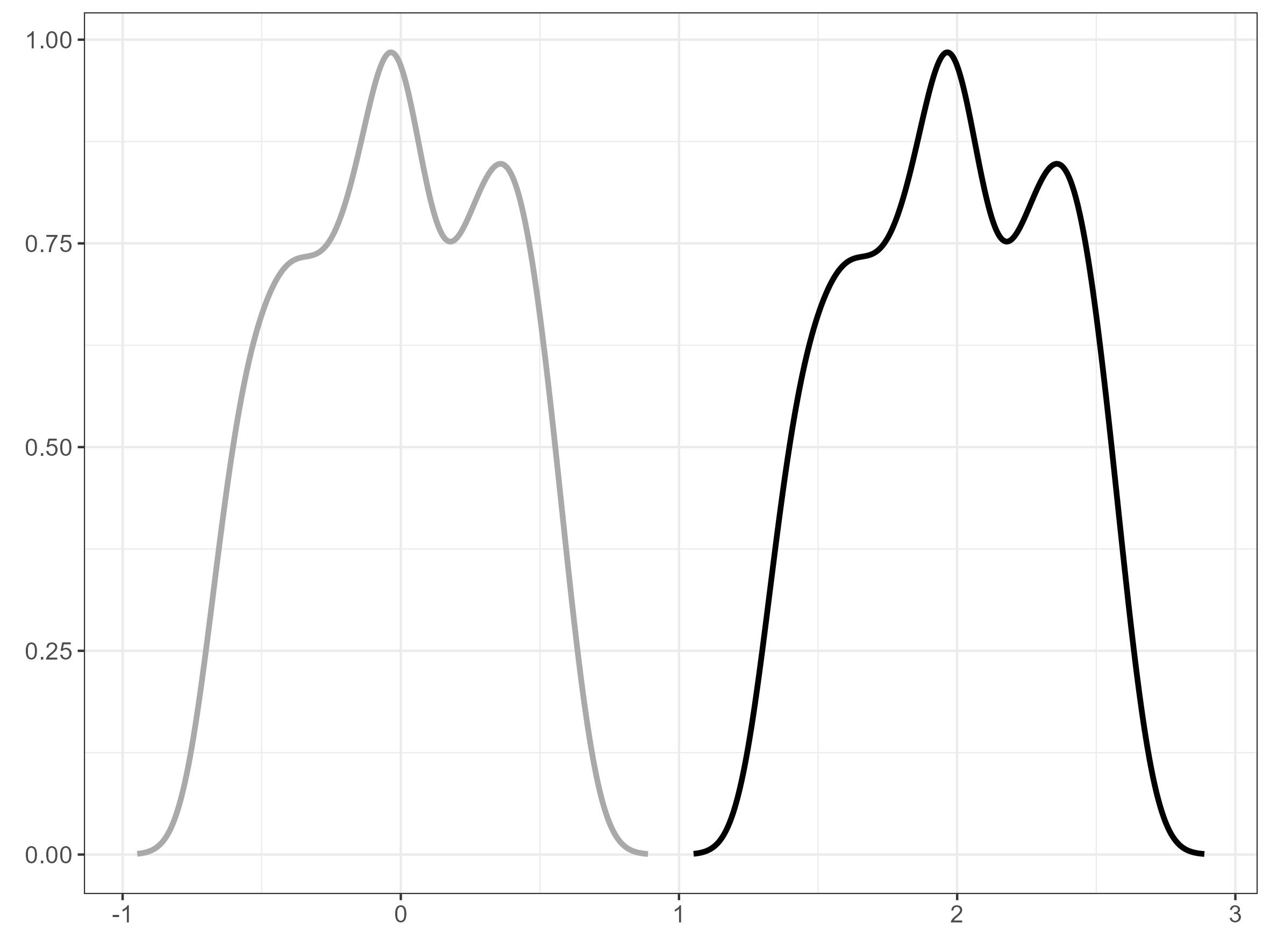}
		\end{subfigure}
		\begin{subfigure}{.5\textwidth} \subcaption{ negative location shift ($\zeta_2$)\label{fig2b}}
			\includegraphics[width = 0.49\linewidth]{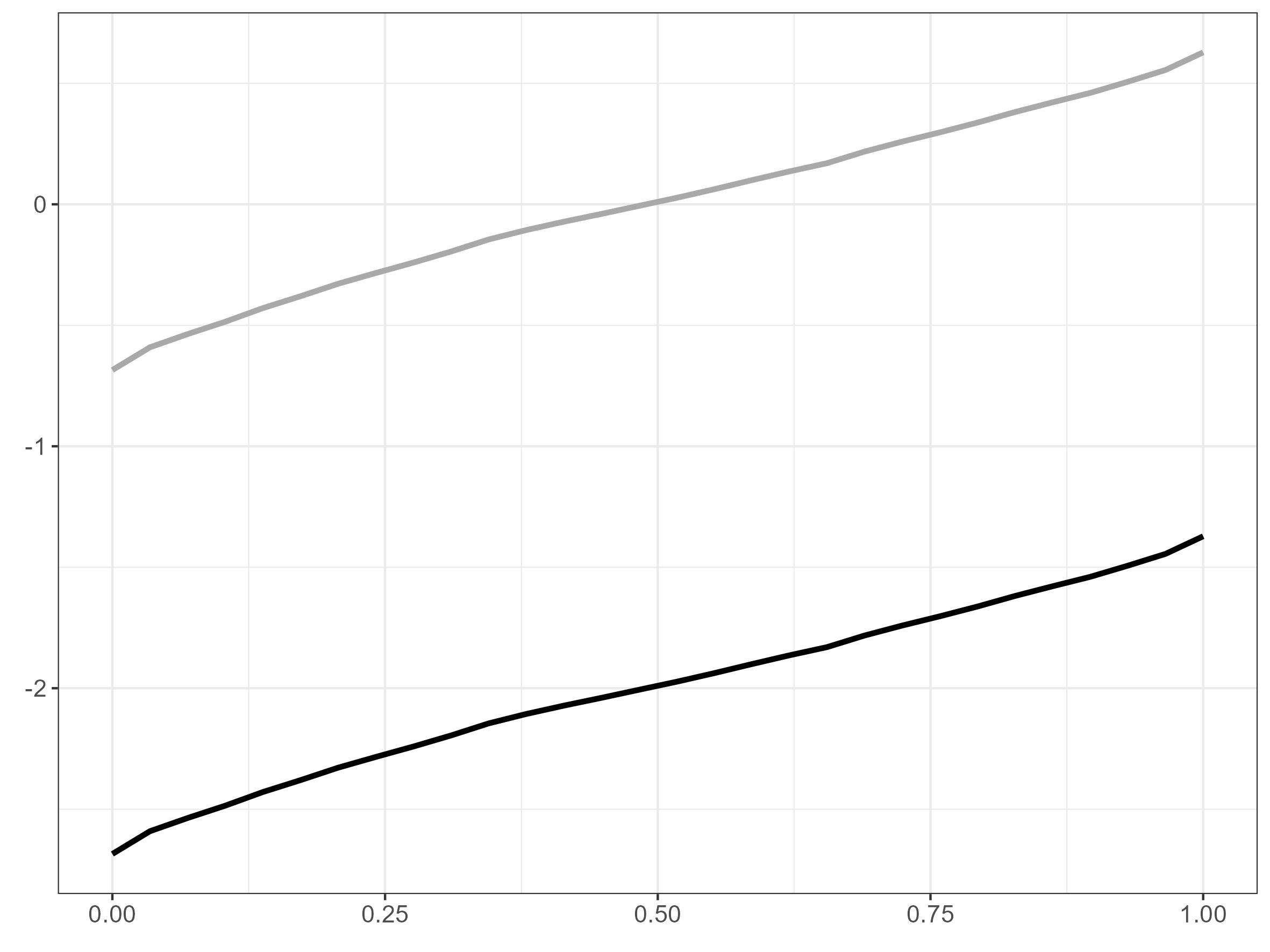}
			\includegraphics[width = 0.49\linewidth]{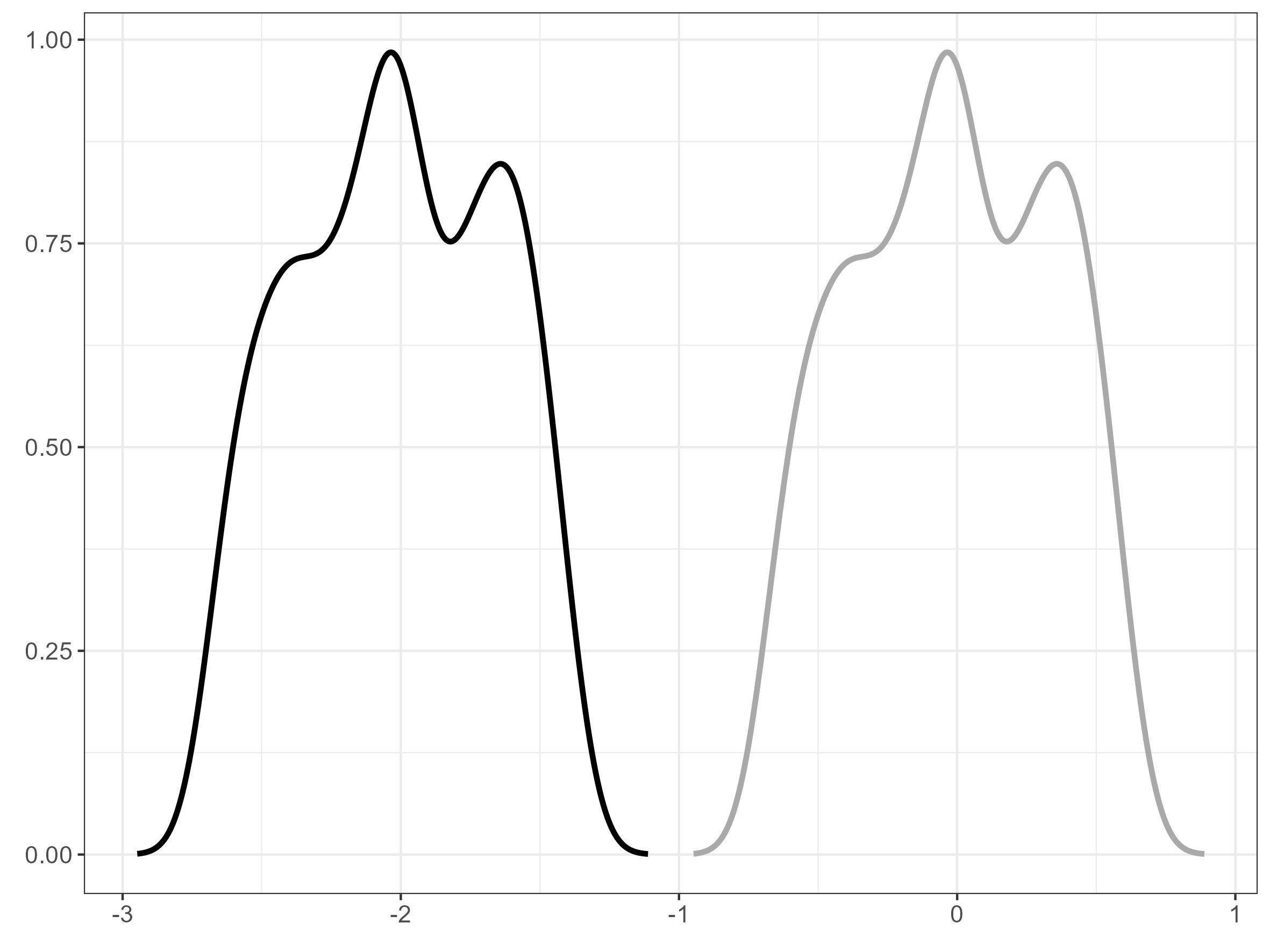}
		\end{subfigure}
		\begin{subfigure}{.5\textwidth} \subcaption{ greater dispersion ($\zeta_3$) \label{fig2c}}
			\includegraphics[width = 0.49\linewidth]{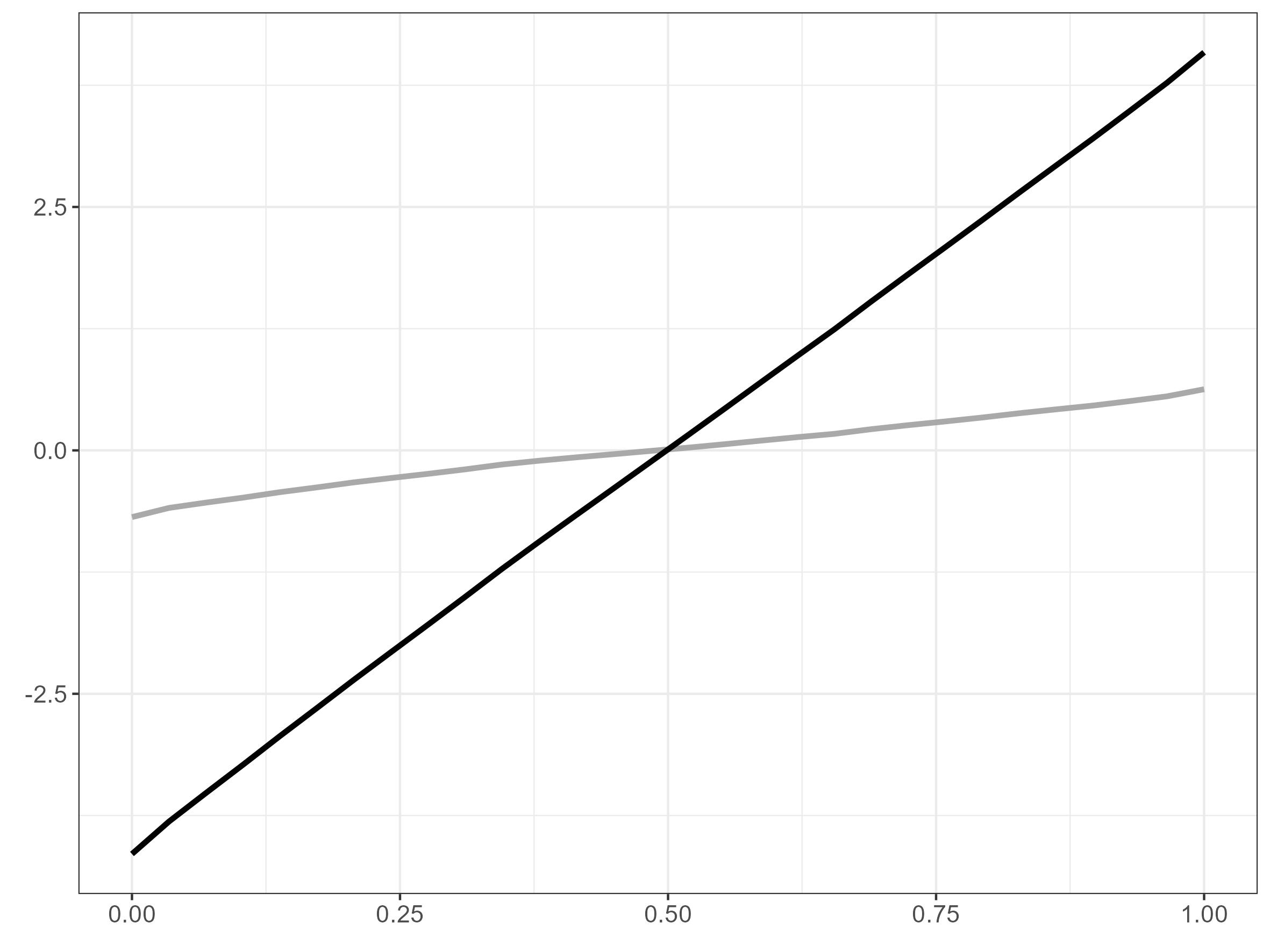}
			\includegraphics[width = 0.49\linewidth]{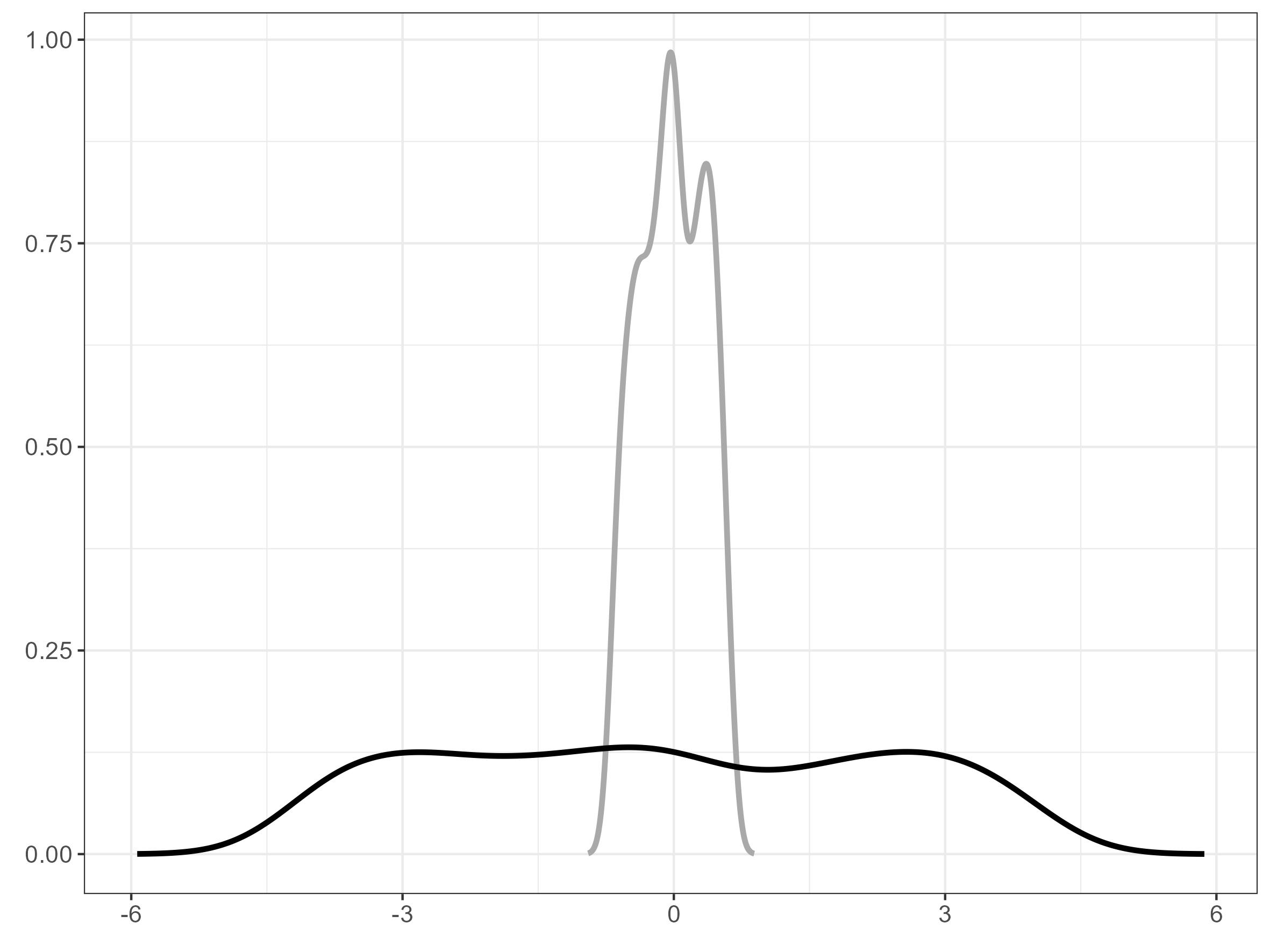}
		\end{subfigure}
		\begin{subfigure}{.5\textwidth} \subcaption{ greater dispersion, negative location shift ($\zeta_4$)\label{fig2d}}
			\includegraphics[width = 0.49\linewidth]{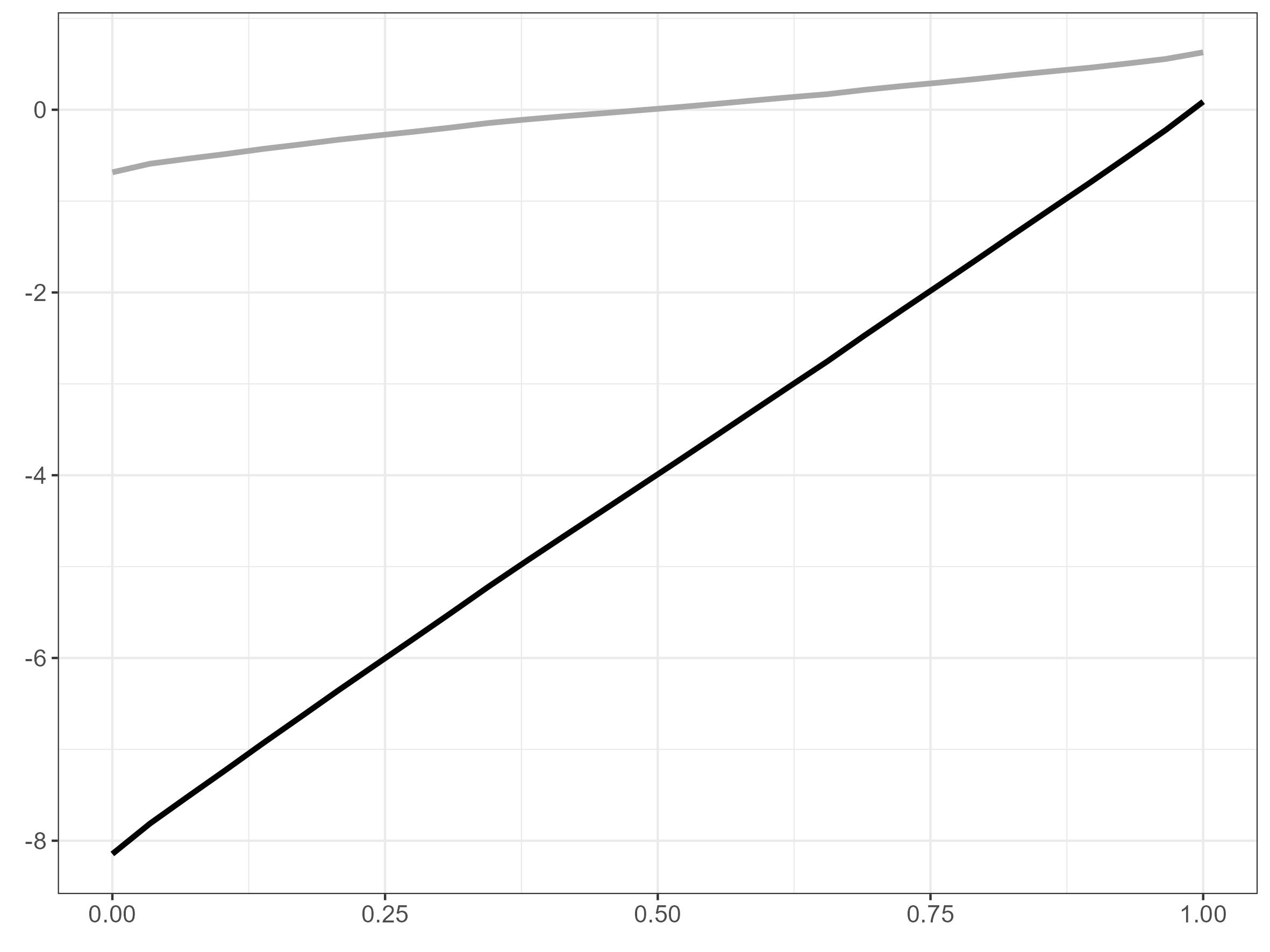}
			\includegraphics[width = 0.49\linewidth]{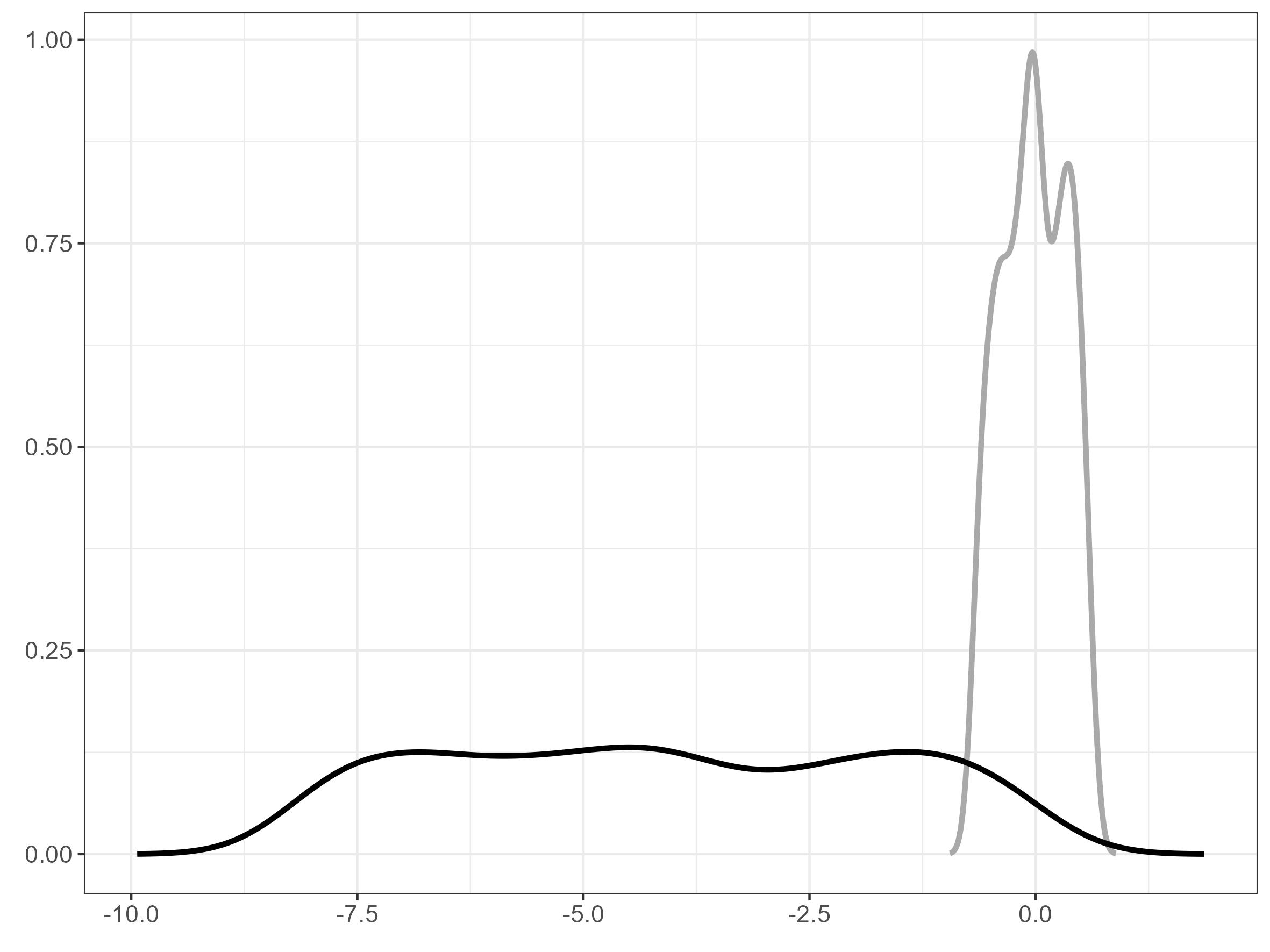}
		\end{subfigure}
		\label{fig: 2}\vspace{-.5em}
		\flushleft{\scriptsize{Notes: Each panel reports the impact of the perturbation $\zeta$ on the average sentiment quantile (left) and its associated distribution (right). The grey (resp.\ black) lines report the average quantile and its associated probability density function before (resp.\ after) the perturbation is given.  }} 
	\end{figure}

	\begin{example}[Economic Sentiment Quantile Functions]\label{example1}
		In Section~\ref{sec:emp1}, we consider a quantile function as an example of $X_t$, which represents the distribution of \citepos{Barbaglia2023} sentiment measure about the US economy. The distributional information is expected to provide meaningful insights into how economic variables respond when individuals' sentiment about the economy changes. 
		To help with intuition, we report the sentiment curves in Figure~\ref{fig: 1}. Figure~\ref{fig: 1}.\ref{fig0a} reports the  sentiment quantile data over the sample period from January 1984 to December 2021.  Figures~\ref{fig: 1}.\ref{fig1a}, \ref{fig: 1}.\ref{fig1b}, and \ref{fig: 1}.\ref{fig1c} respectively report  the quantiles during the overall, recessive, and expansive periods with the probability $r\in[0,1]$ on the x-axis.  The black lines in the figures represent the mean functions for each period.\footnote{The recessive and expansive periods are defined following the NBER Business Cycle dates: \url{https://www.nber.org/research/business-cycle-dating}.}  The first interesting observation is  the different scales of the vertical axis; during recessive periods, individuals tend to hold negative sentiments on the economy and its mean function  exhibits a larger slope ($\approx$ -1.96), compared to overall ($\approx$ -0.002) and expansive ($\approx$ 0.2) periods. This suggests that	the economy, on average, tends to hold a higher level of heterogeneous beliefs, which may be explained by greater uncertainty during that time. Such distributional information will be lost if we aggregate $X_t$ into a single scalar-valued random variable. 
		
		When a random shock is given to the sentiment distribution, its impact on economic variables may depend on whether the shock increases or decreases overall economic sentiment and/or changes the level of disagreement about the economic sentiment. This can be studied by, for example, setting $\zeta (\equiv \zeta(\cdot))$ in \eqref{def: local: irf} to $\zeta_1(s) = 2  $, $\zeta_2(s) = -2 $, $\zeta_3(s) = 2 s$, and $\zeta_4(s) = -2  + 2s$, for $s \in[0,1]$. 
		Figures~\ref{fig2a}--\ref{fig2d} present the  average sentiment quantiles and distributions before (grey) and after (black) these perturbations are introduced. As shown in Figures~\ref{fig2a} and \ref{fig2b}, the mean function, tends to shift up or down in its level when $\zeta_1$ and $\zeta_2$ are introduced, which corresponds to the location shifts of the sentiment distribution. Meanwhile,  $\zeta_3$ increases the slope of the quantile without changing its average level. Thus, the sentiment distribution exposed to the perturbation will have a greater dispersion, meaning a higher level of disagreements about economy status.  The last shock $\zeta_4$ increases the level of disagreement and negatively shifts the overall sentiment distribution, so that it gets similar  to the average quantile observed in recessive periods.  
	\end{example}

	\begin{figure}[h!]
		\centering 
		\caption{  Functional Monetary Policy Shocks in \cite{IR2021} }
		\begin{subfigure}{.32\textwidth}\subcaption{Overall period}
			\includegraphics[width = \textwidth]{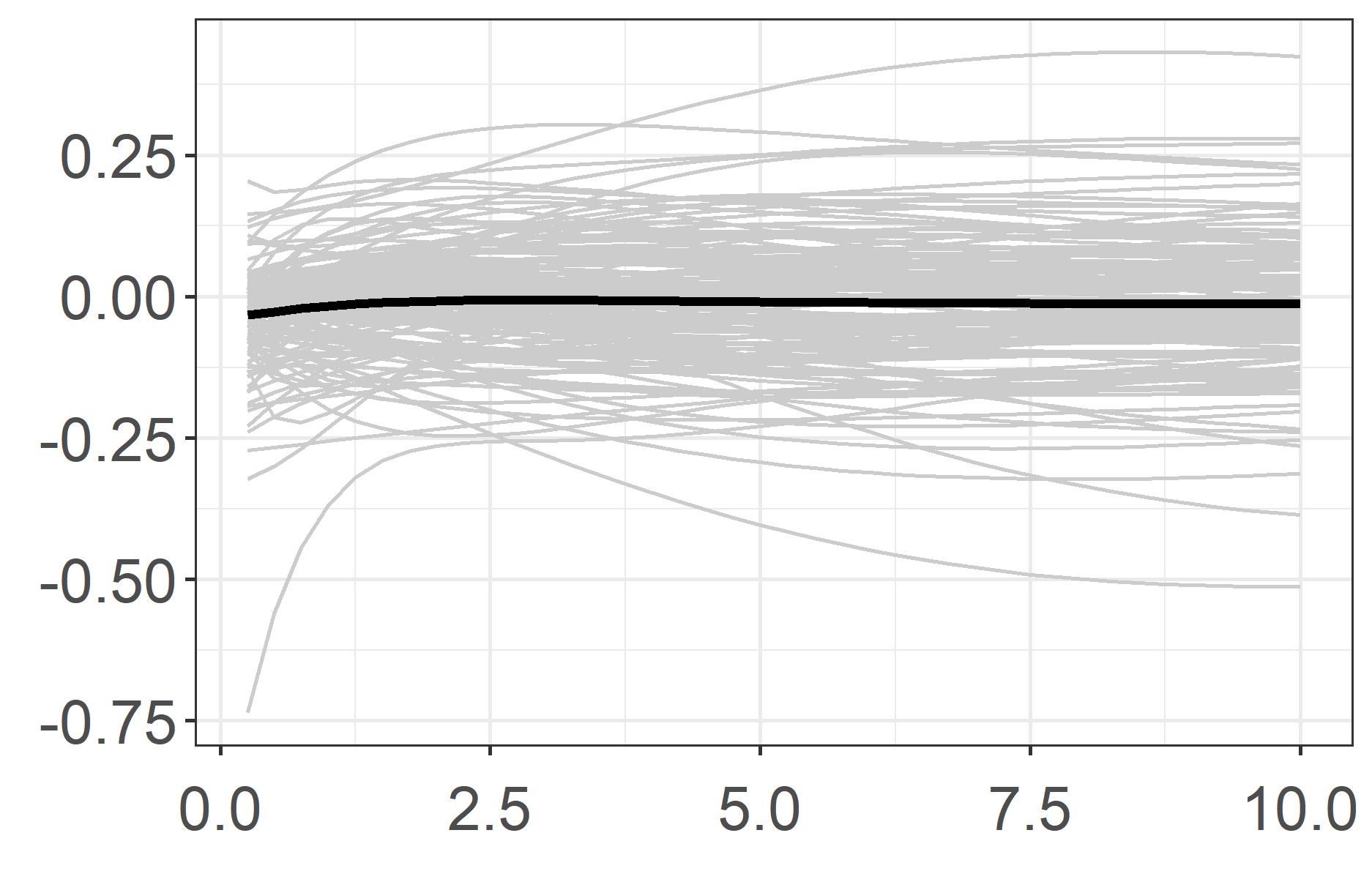}  	\label{fig1a:ir}
		\end{subfigure}
		\begin{subfigure}{.32\textwidth}\subcaption{Conventional period}
			\includegraphics[width = \textwidth]{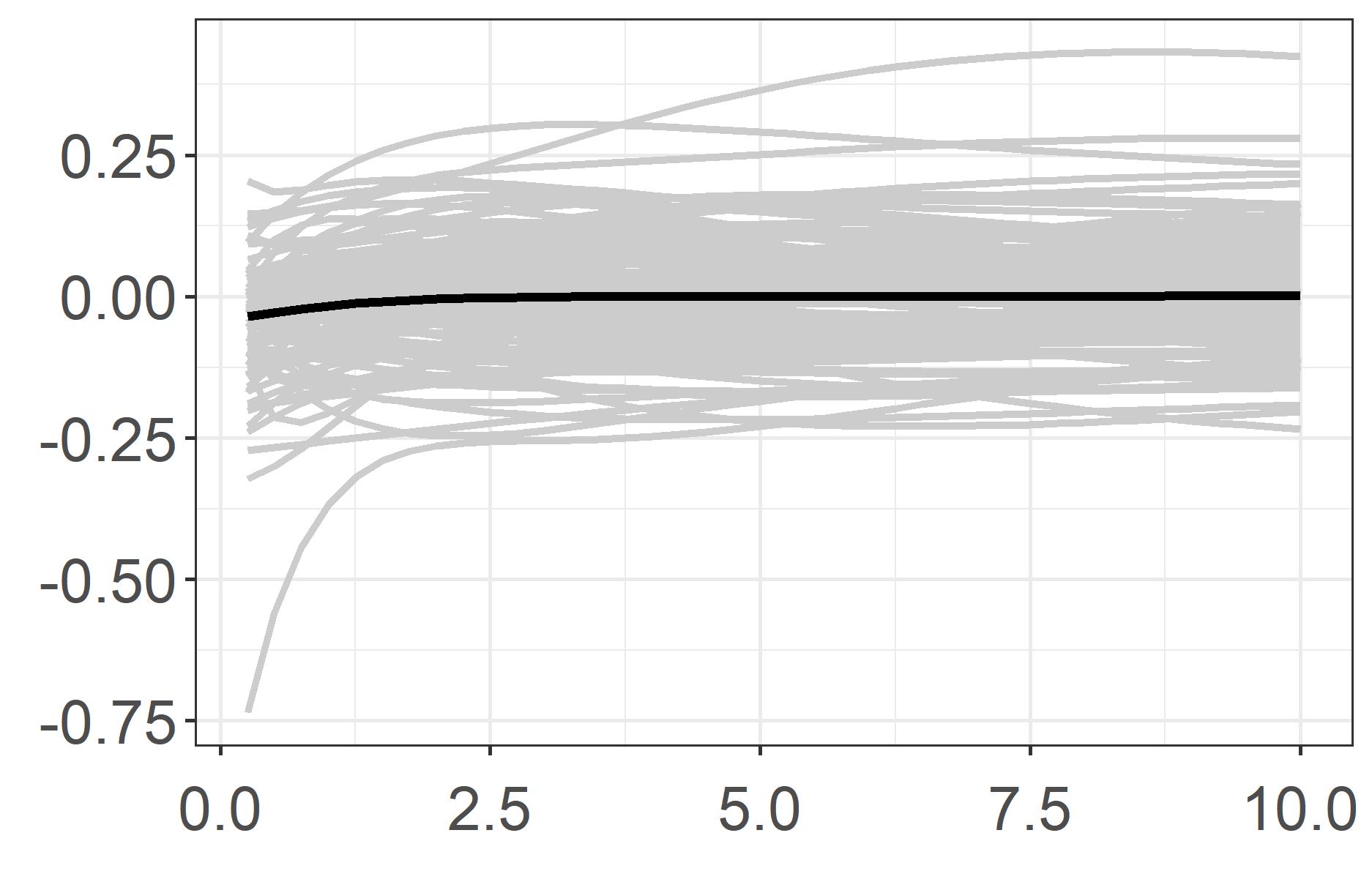} 	\label{fig1b:ir}
		\end{subfigure}
		\begin{subfigure}{.32\textwidth}\subcaption{Unconventional period}
			\includegraphics[width = \textwidth]{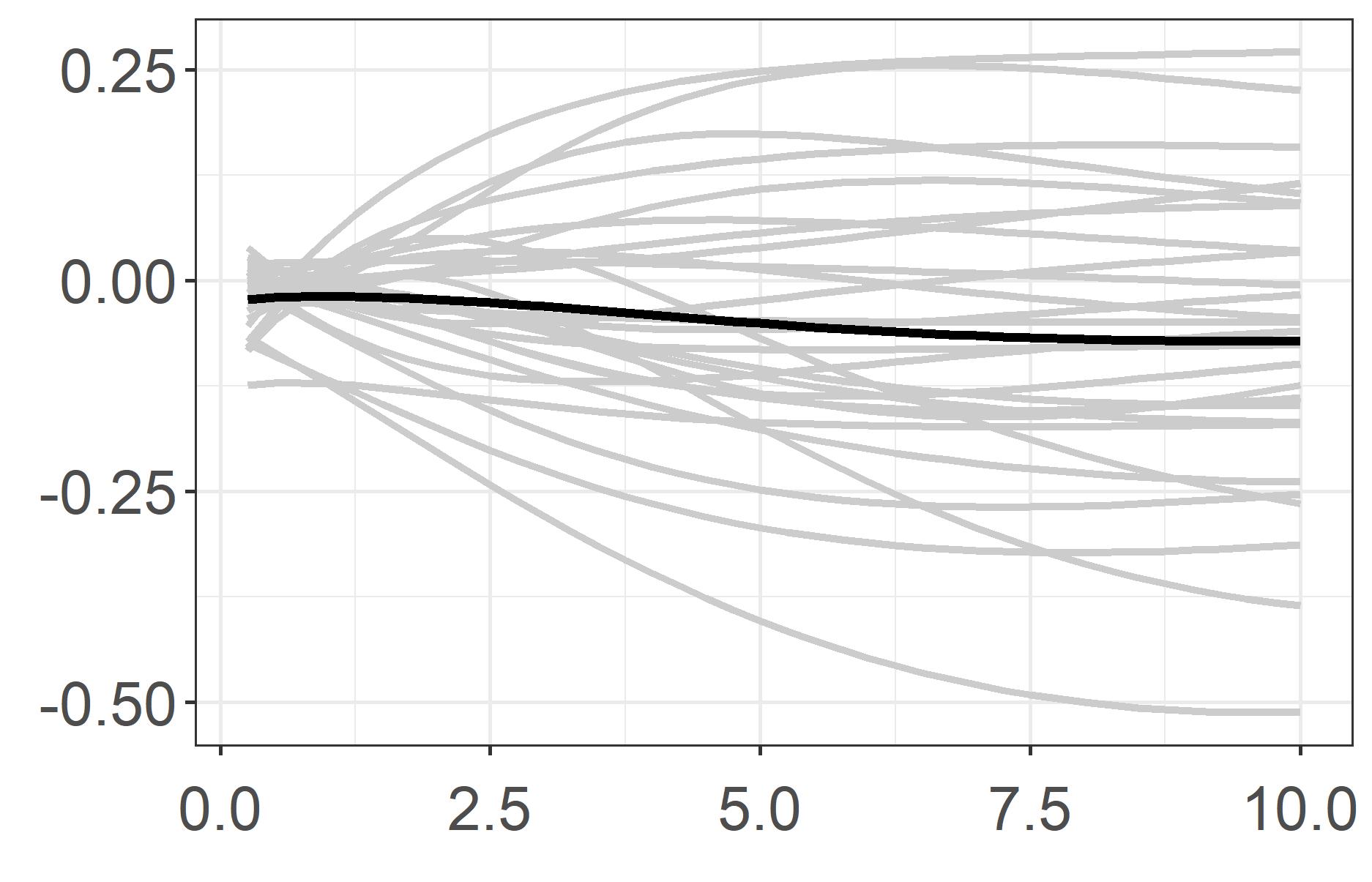}	\label{fig1c:ir}
		\end{subfigure}\vspace{-1.5em}
		\flushleft{\scriptsize{Notes: The monthly functional monetary policy shocks proposed by \cite{IR2021} are reported for overall, conventional, and unconventional periods for the period from January 1995 to June 2016. The black lines represent the mean functions for each of these periods. The shocks are characterized by the shifts in yield curves on monetary policy announcement dates, see \cite{IR2021} for details. }}
		\label{fig: ir: 1}   
	\end{figure}

	\begin{example}[Functional Monetary Policy Shocks]\label{example2} Another important example of $X_t$ is functional monetary policy shocks proposed by \cite{IR2021}. 
		The functional monetary policy shocks are reported in Figure~\ref{fig: ir: 1} for the overall (Fig.\ \ref{fig1a:ir}), conventional (Fig.\ \ref{fig1b:ir}), and unconventional (Fig.\ \ref{fig1c:ir}) periods, along with their mean functions (black). The functional monetary policy shocks in conventional periods appear to have regular shapes. Specifically, during  conventional periods, the shocks tend to have a greater negative value at short maturities, compared to those at long maturities. Such a regular shape is not observed during unconventional periods in Figure~\ref{fig1c:ir}.  The figures suggest that monetary policy shocks realized by shifts in yield curves may allow us to utilize richer information related to their shape and direction at different maturities. This example was formerly studied by \cite{IR2021} and will be further explored in  Section~\ref{sec:emp2}.
	\end{example}

	\begin{figure}[h!]
		\centering 
		\caption{Financial Risk Shocks  }
		\begin{subfigure}{.32\textwidth}\subcaption{Overall period}
			\includegraphics[width = \textwidth]{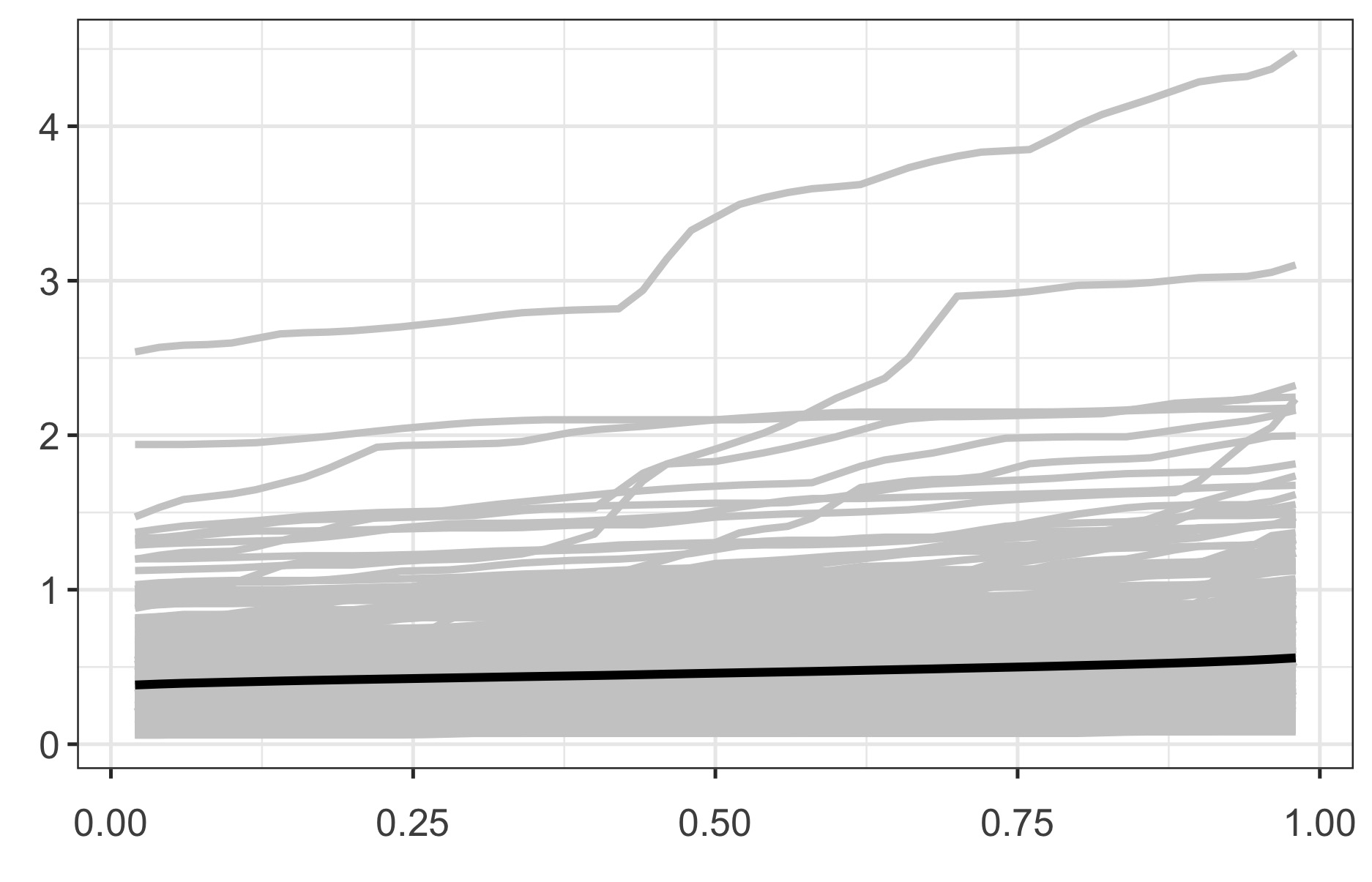}  
			\label{fig1a:ted}
		\end{subfigure}
		\begin{subfigure}{.32\textwidth}\subcaption{Recessive period}
			\includegraphics[width = \textwidth]{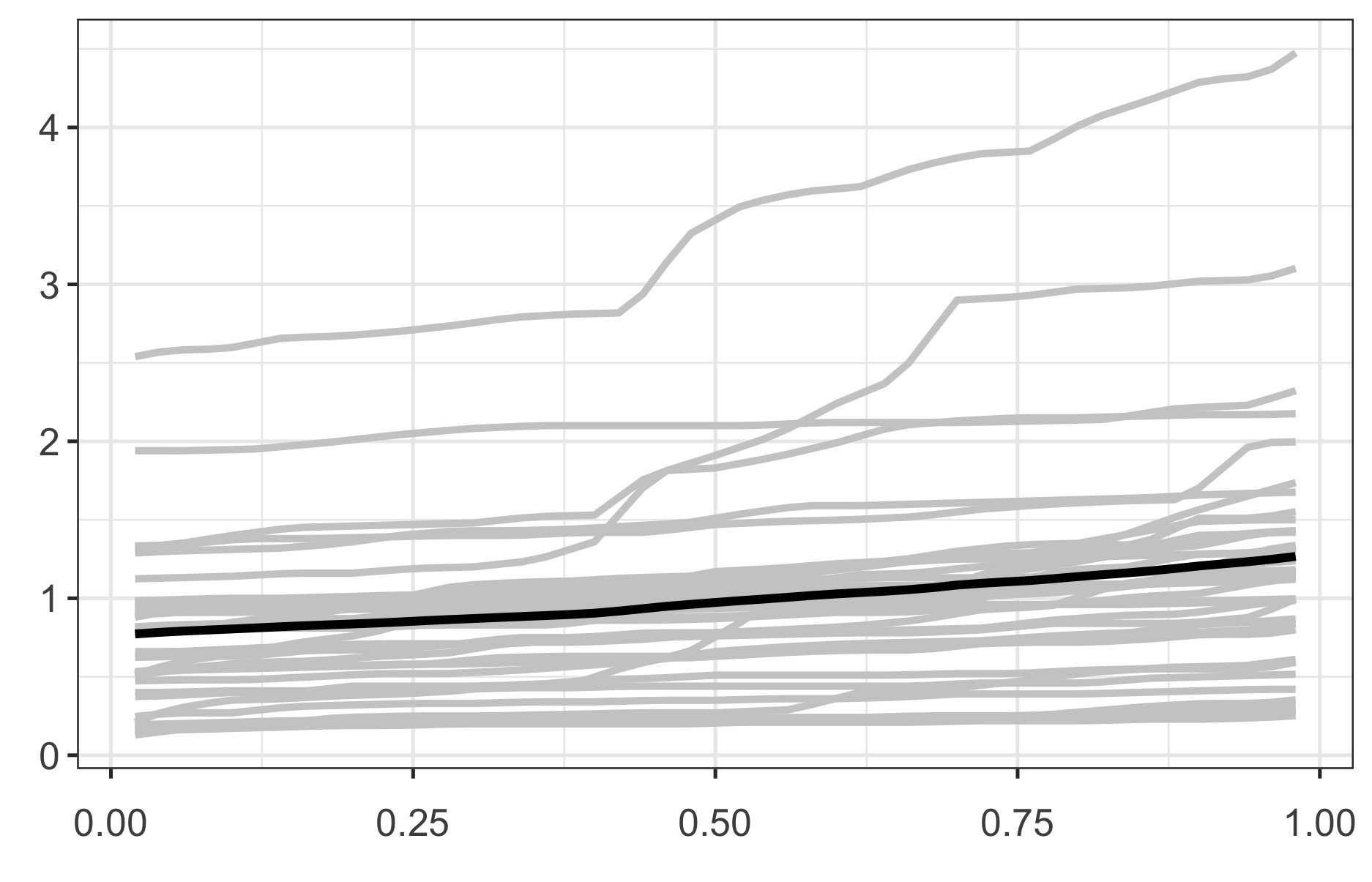} 
			\label{fig1b:ted}
		\end{subfigure}
		\begin{subfigure}{.32\textwidth}\subcaption{Expansive period}
			\includegraphics[width = \textwidth]{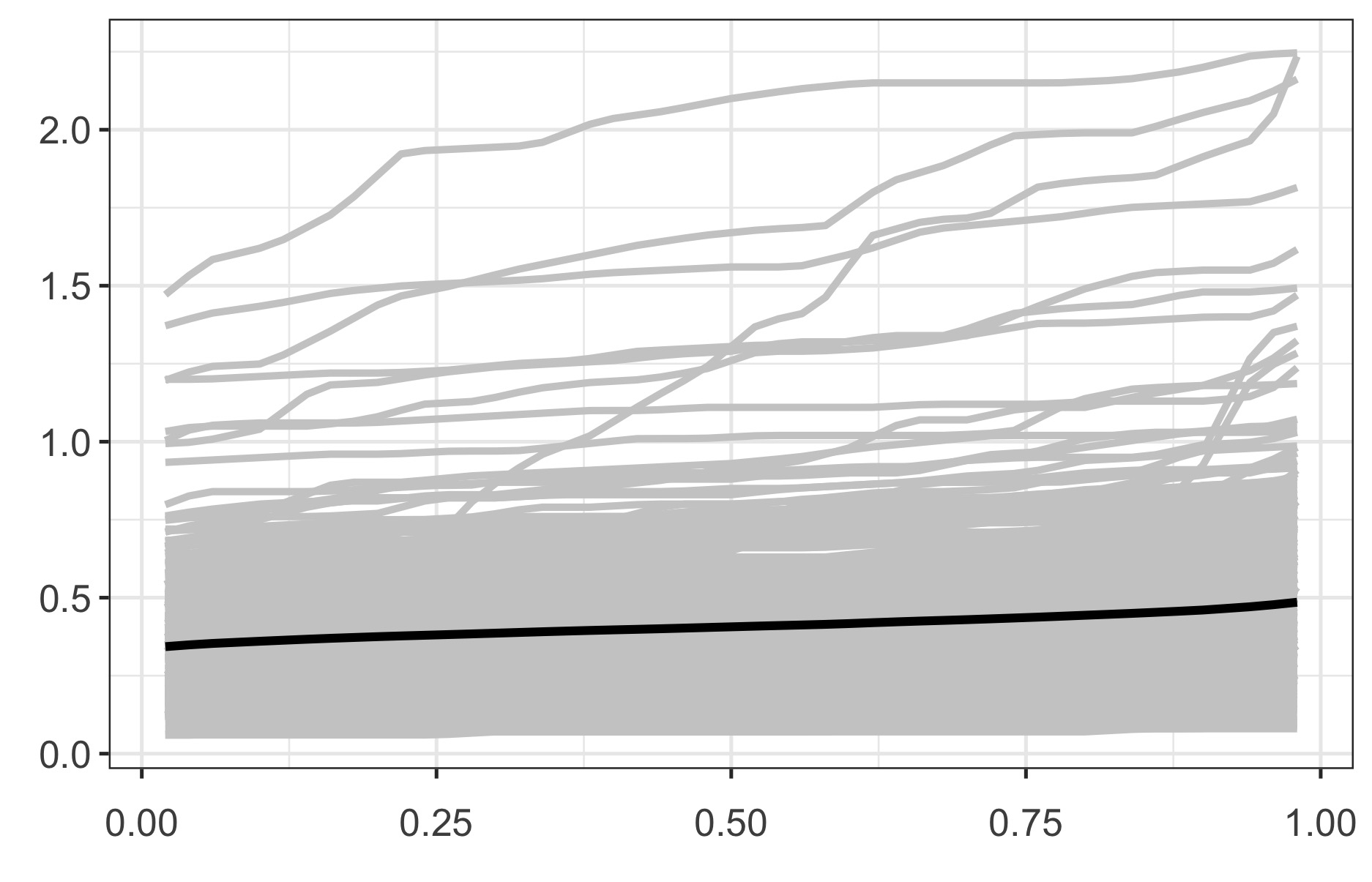}
			\label{fig1c:ted}
		\end{subfigure}		\label{fig: ted}  \vspace{-1.5em}
		\flushleft{\scriptsize{Notes: Each figure reports the monthly quantiles of TED spread for the period from February 1990 to December 2022. The black lines report the mean function in the overall, recessive, and expansive periods. }}
	\end{figure}

	\begin{example}[Proxy of Financial Risk Shocks] 
		Another example is the monthly quantiles of the TED spread that captures the likelihood of unpredictable default risk in financial markets. The TED spread is popularly employed in the literature as a proxy or an IV to measure the structural shock associated with financial markets (see \citealt[p.\ 110]{SW2012}). This quantile function could be used in place of $X_t$ in \eqref{eq: irf: general} as a proxy for the financial or liquidity shock, or as a function-valued instrument for the IV estimator developed in Appendix~\ref{sec:est2}.
	\end{example}
	

	\section{Relation to SVAR with a functional covariate} \label{sec: svar}
	If $X_t$ is identified as a functional structural shock,  as  in \cite{IR2021}, the parameter  $\beta_h$ in \eqref{eq: model: benchmark} can be certainly interpreted as the SIRF of $y_t$ to that shock. However, its general 
	link to the SIRF similar to those in \cite{Oscar2005} or \cite{PW2021} has not yet formally established in the functional setup, although  \eqref{eq: irf: general} and \eqref{eq: model: benchmark} partly provide an intuitive guidance on its interpretation as a direct estimator of the reduced-form impulse response. 
	
	In this section, we show that the parameter $\beta_h$ in \eqref{def: local: irf} provides a crucial and interesting interpretation as a SIRF when the true DGP follows an autoregressive structure. This suggests the importance of our benchmark model \eqref{eq: model: benchmark}, particularly as an extension of \cite{Oscar2005}.

	\subsection{Formulation of the functional SVAR model}
	For the ease of exposition, we exclude $\mathbf w_t$ in this section. The results given in this section can be easily extended to include other covariates as long as they are finite dimensional. 
	Thus we do not lose any generality by this simplification.
	
	As mathematical preliminaries, Section~\ref{sec_prelim} of the Supplementary Material provides a brief review of definitions and properties of \(\mathcal{H}\)-valued random variable \(X_t\), linear operators on \(\mathcal{H}\), and the product Hilbert space \(\mathbb{R} \times \mathcal{H}\), where the tuple \(\left[\begin{smallmatrix} y_t\\ X_t\end{smallmatrix}\right]\) takes values.\footnote{We treat the tuple \( \left[\begin{smallmatrix} y\\ x\end{smallmatrix} \right] \), consisting of \(\mathbb{R}\)-valued and \(\mathcal{H}\)-valued random variables, as a “column vector” in the  multivariate setting. This notation introduces no confusion or mathematical imprecision in the subsequent discussion. Similar notation will be used for any other tuples appearing later.}  
	As detailed in the appendix, a linear operator \(\mathcal{D}\) on \(\mathbb{R} \times \mathcal{H}\) can be represented as an operator matrix, say $\mathcal{D} = \left[\begin{smallmatrix} d_{11} & d_{12} \\ d_{21} & d_{22} \end{smallmatrix}\right].$
	This operator maps $ \left[\begin{smallmatrix} y\\ x\end{smallmatrix} \right] \in \mathbb{R} \times \mathcal{H} $ to 
	$\left[ \begin{smallmatrix}
		d_{11}y + d_{12}x\\ d_{21}y + d_{22}x
	\end{smallmatrix}\right] \in \mathbb{R} \times \mathcal{H}$,
	analogous to transforming a \((2 \times 1)\) vector using a \((2 \times 2)\) matrix. For this reason, with a slight abuse of notation, we henceforth treat the \((\mathbb{R} \times \mathcal{H})\)-valued random element \( \left[\begin{smallmatrix} y\\ x\end{smallmatrix} \right]\) as if it were a \((2 \times 1)\) random vector, and any linear operator on \(\mathbb{R} \times \mathcal{H}\) as if it were a \((2 \times 2)\) matrix. This simplification is valid in the considered setup without any loss of technical rigor (see Section~\ref{sec_prelim2} of the Supplementary Material).
	

In this section, we suppose that the tuple $\left[\begin{smallmatrix} y_t\\ X_t\end{smallmatrix}\right] \in  \mathbb R \times \mathcal H$ follows the model below:
\begin{equation}
	\underbrace{\begin{bmatrix} I_1& \beta_{12} \\ \beta_{21}&  I_2\end{bmatrix} }_{:=\mathcal B }\begin{bmatrix} y_t 
		\\ X_t \end{bmatrix} =\underbrace{ \begin{bmatrix} \alpha_{11} & \alpha_{12} \\ \alpha_{21} & a_{22} \end{bmatrix}}_{:=\mathcal A}\begin{bmatrix} y_{t-1} 
		\\ X_{t-1} \end{bmatrix} + \begin{bmatrix} u_{1t} 
		\\ U_{2t}\end{bmatrix}.\label{eq: model: svar}
\end{equation}
The above model is identical to the standard bivariate SVAR model other than that $X_t$ is a $\mathcal H$-valued random variable, and thus the coefficients $\mathcal A$ and $\mathcal B$ are in fact operator matrices. 
Specifically, $\alpha_{12}$ and $ \beta_{12}$ are given by linear maps from $ \mathcal H$ to $ \mathbb R$, while  $\alpha_{21}$ and $\beta_{21}$ are maps from $\mathbb R$ to $\mathcal H$. The $(1,1)$-th (resp.\ $(2,2)$-th) elements, $I_1$ and $a_{11}$ (resp.\ $I_2$ and $a_{22}$), are linear operators acting on $\mathbb{R}$ (resp.\ $\mathcal H)$, where $I_1$ and $I_2$ denote the identity maps in the relevant spaces. The tuple of structural shocks $\left[\begin{smallmatrix} u_{1t}\\U_{2t}\end{smallmatrix}\right]$ is a $(\mathbb R \times \mathcal H)$-valued random element. Its covariance operator (see \eqref{eqcovop} in the Supplementary Material) is given as follows: 
\begin{equation*}
	\mathbf \Sigma  
	=  \begin{bmatrix} \mathbb{E}[u_{1t} \otimes u_{1t}] &  \mathbb{E}[U_{2t} \otimes u_{1t}] \\  \mathbb{E}[u_{1t} \otimes U_{2t}] &  \mathbb{E}[U_{2t}\otimes U_{2t}]\end{bmatrix} = \begin{bmatrix} \sigma_{11}   &  0 \\  0 & \mathbf{\Sigma}_{22} \end{bmatrix},
\end{equation*} 
where $\otimes$ denotes the  tensor product, which generalizes the outer product in the Euclidean space, and $\sigma_{11}$ (resp.\ $\mathbf{\Sigma}_{22}$) is a linear operator acting on $\mathbb{R}$ (resp.\ $\mathcal H$).\footnote{Since  $\mathbf{\Sigma}$ is an operator matrix whose $(i,j)$-th entry is a map from $\mathcal{H}_i$ to $\mathcal{H}_j$, with $\mathcal{H}_1 = \mathbb R$ and  $\mathcal{H}_2=\mathcal{H}$, $\sigma_{11}$ is given by a linear map on $\mathbb{R}$. However, any linear map on $\mathbb{R}$ is nothing but a scalar multiplication, given by $c I_1$ for some $c \in \mathbb{R}$. Thus, there is little risk of confusion even if we understand $\sigma_{11}$ as a real-valued constant. \label{foot1}
} See Section~\ref{sec_prelim}  of the Supplementary Material for details. 

We first establish the fully functional identification of structural parameters and highlight its differences from existing approaches. Then, we link  the SIRF implied by the SVAR structure to the coefficients in our benchmark model.

\subsection{Fully Functional Identification of the SVAR with a functional covariate} \label{sec_svar}
It can be shown that the operator $\mathcal B$ in \eqref{eq: model: svar}  is invertible  and $\Gamma:=\mathcal B^{-1}\mathcal A$ can be well defined under the condition in Proposition \ref{prop: svar: identification} that will appear shortly. Thus \eqref{eq: model: svar} can be written into the following reduced-form VAR (RFVAR) model:
\begin{equation} 
\begin{bmatrix} y_t 
	\\ X_t \end{bmatrix} =\underbrace{\begin{bmatrix} \gamma_{11} & \gamma_{12} \\ \gamma_{21} & \gamma_{22} \end{bmatrix}}_{:=\Gamma}\begin{bmatrix} y_{t-1} 
	\\ X_{t-1} \end{bmatrix} + \begin{bmatrix} \varepsilon_{1t} 
	\\ \mathcal E_{2t}\end{bmatrix}, \quad\text{ where }\quad  \Sigma_{\varepsilon} = \begin{bmatrix} \sigma_{\varepsilon,11} & \Sigma_{\varepsilon,12} \\ \Sigma_{\varepsilon,21} & \Sigma_{\varepsilon,22} \end{bmatrix}.\label{eqreduced1}
	\end{equation}
	$\Sigma_{\varepsilon}$ denotes the covariance of $\left[\begin{smallmatrix}\varepsilon_{1t}\\\mathcal E_{2t}\end{smallmatrix}\right]$.
	In this section, we discuss conditions under which the structural parameters in \eqref{eq: model: svar} can be identified from the reduced-form parameters in \eqref{eqreduced1}. 
	Our assumption for achieving such  identification is based on the standard causal ordering of the (contemporaneous) variables in the system, as in \cite{Sims1972,Sims1980}. Later, we will show that the coefficients of our benchmark model reduces to the SIRFs under this identification scheme.  
	
	\begin{proposition}\label{prop: svar: identification}\normalfont
Suppose that $ \left[\begin{smallmatrix} y_t\\ X_t\end{smallmatrix} \right]$ satisfies  \eqref{eq: model: svar}. Then the structural parameters in \eqref{eq: model: svar} are identified 
under either 
\begin{enumerate*}[(i)]
	\item\label{prop: svar: identification1} \textsl{$\beta_{12} = 0$} or 
	\item\label{prop: svar: identification2} \textsl{$\beta_{21}=0$ and $\mathbf\Sigma_{22}$ is injective.} 
\end{enumerate*}
\end{proposition}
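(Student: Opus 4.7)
The plan is to express the reduced-form parameters as explicit functions of the structural parameters in each case, and then invert these relations. I would start by noting that, under either (i) or (ii), the operator $\mathcal{B}$ is block triangular with identity diagonal blocks, hence invertible: $\mathcal{B}^{-1}=\left[\begin{smallmatrix}I_{1}&0\\-\beta_{21}&I_{2}\end{smallmatrix}\right]$ in case (i), and $\mathcal{B}^{-1}=\left[\begin{smallmatrix}I_{1}&-\beta_{12}\\0&I_{2}\end{smallmatrix}\right]$ in case (ii). This reduces the problem to recovering $(\mathcal{B},\mathbf{\Sigma})$ from $\Sigma_{\varepsilon}=\mathcal{B}^{-1}\mathbf{\Sigma}(\mathcal{B}^{-1})^{*}$, after which $\mathcal{A}=\mathcal{B}\Gamma$ is immediately identified.

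In case (i), I would write $\left[\begin{smallmatrix}\varepsilon_{1t}\\\mathcal{E}_{2t}\end{smallmatrix}\right]=\mathcal{B}^{-1}\left[\begin{smallmatrix}u_{1t}\\U_{2t}\end{smallmatrix}\right]$ to obtain $\varepsilon_{1t}=u_{1t}$ and $\mathcal{E}_{2t}=-\beta_{21}u_{1t}+U_{2t}$, and then read off $\sigma_{\varepsilon,11}=\sigma_{11}$, $\Sigma_{\varepsilon,21}=-\beta_{21}\sigma_{11}$, and $\Sigma_{\varepsilon,22}=\mathbf{\Sigma}_{22}+\beta_{21}\sigma_{11}\beta_{21}^{*}$ from the block-diagonal form of $\mathbf{\Sigma}$. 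Under the standard non-degeneracy $\sigma_{11}\neq 0$, these three equations identify $\sigma_{11}$, $\beta_{21}$, and $\mathbf{\Sigma}_{22}$ in turn. For case (ii), I would carry out the analogous calculation: $\varepsilon_{1t}=u_{1t}-\beta_{12}U_{2t}$ and $\mathcal{E}_{2t}=U_{2t}$ give $\Sigma_{\varepsilon,22}=\mathbf{\Sigma}_{22}$, $\Sigma_{\varepsilon,12}=-\beta_{12}\mathbf{\Sigma}_{22}$, and $\sigma_{\varepsilon,11}=\sigma_{11}+\beta_{12}\mathbf{\Sigma}_{22}\beta_{12}^{*}$, from which I would recover $\mathbf{\Sigma}_{22}$, $\beta_{12}$, and $\sigma_{11}$ in that order.

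The hard part will be identifying $\beta_{12}$ from $\Sigma_{\varepsilon,12}=-\beta_{12}\mathbf{\Sigma}_{22}$ in case (ii). In finite dimensions one would simply invert $\mathbf{\Sigma}_{22}$, but typical covariance operators on $\mathcal{H}$ are compact and hence not boundedly invertible, so injectivity is strictly weaker than invertibility. My way through is to use that $\mathbf{\Sigma}_{22}$ is self-adjoint, so its injectivity forces $\ran(\mathbf{\Sigma}_{22})^{\perp}=\ker(\mathbf{\Sigma}_{22})=\{0\}$, i.e., $\ran(\mathbf{\Sigma}_{22})$ is dense in $\mathcal{H}$. Since $\beta_{12}$ is a bounded linear functional on $\mathcal{H}$, the relation $\Sigma_{\varepsilon,12}=-\beta_{12}\mathbf{\Sigma}_{22}$ pins down $\beta_{12}$ on a dense subspace, and continuity then extends it uniquely to all of $\mathcal{H}$. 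This density-and-continuity step is precisely where the proof departs from the standard finite-dimensional SVAR identification argument.
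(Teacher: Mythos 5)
Your proposal is correct and follows essentially the same route as the paper: invert the block-triangular $\mathcal B$, read off the structural parameters from the blocks of $\Sigma_\varepsilon=\mathcal B^{-1}\mathbf\Sigma(\mathcal B^{-1})^{*}$, and recover $\mathcal A=\mathcal B\Gamma$. Your density-plus-continuity argument for extracting $\beta_{12}$ from $\sigma_{\varepsilon,12}=-\beta_{12}\mathbf\Sigma_{22}$ is exactly the step the paper handles by citing \cite{carrasco2007linear}, so you have simply made that reference explicit.
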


Although the conditions \ref{prop: svar: identification1} and \ref{prop: svar: identification2} in Proposition \ref{prop: svar: identification} are seemingly similar to each other, they suggest  different levels of complexity in estimating the outcome equations. Specifically, if there is no contemporaneous impact of function-valued $X_t$ to scalar-valued $y_t$ (i.e., $\beta_{12} = 0$), the outcome equation of $y_t$ reduces to $y_t = \alpha_{11} y_{t-1} + \alpha_{12}X_{t-1} + u_{1t}$, and thus it includes   only one infinite dimensional parameter $\alpha_{12}$. On the other hand, under the second identification condition (i.e., $\beta_{21} = 0$), the outcome equation reduces to $y_t = \alpha_{11} y_{t-1} -\beta_{12}X_t +  \alpha_{12}X_{t-1} + u_{1t}$, in which the number of infinite dimensional parameters doubles. Estimating these parameters involves solving an inverse problem that necessitates the use of a regularization scheme. Considering this, the resulting estimator from the second identification scheme is likely to suffer from a larger regularization bias.


Another interesting implication of Proposition~\ref{prop: svar: identification} is that the standard rank-based identification strategy cannot be straightforwardly translated into our setup involving a functional covariate. Instead, depending on the direction of the contemporaneous impact, we may additionally need an injectivity condition 
to identify structural parameters from the reduced-form parameters. 
For example, under the condition \ref{prop: svar: identification2} in Proposition~\ref{prop: svar: identification},  $\Sigma_{\varepsilon,22} = \Sigma_{22}$ and  $\beta_{12}$ satisfies
\begin{equation} \label{eqinject}
\sigma_{e,12} =\beta_{12}\Sigma_{22}; 
\end{equation} 
see our proof of Proposition~\ref{prop: svar: identification} in Appendix~\ref{app_proof}. In this case, the injectivity of $\Sigma_{22}$ is required to uniquely identify  $\beta_{12}$  from \eqref{eqinject} (see \citealp{carrasco2007linear,Seo2024}).  


The two observations mentioned above contrast with identification conditions and their implications   in the existing literature, such as \cite{Sims1972}, where a different ordering of variables primarily affects their economic interpretation and the direction of contemporaneous shocks, rather than the level of asymptotic bias or computational burden.  Thus, in the SVAR model with functional covariates,  identification conditions need to be cautiously chosen with taking into account the above.

\begin{remark}\label{remchol}
In the existing literature on SVAR models concerning $k$-dimensional vector-valued time series,  it is commonly assumed that the $k\times k$ matrix $\Sigma_\varepsilon$ allows the Cholesky factorization such that $\Sigma_\varepsilon = LL'$ for a lower triangular matrix $L
$. Then $\mathcal B $ in \eqref{eq: model: svar} simply reduces to $ L^{-1}$. This is one of the most popular identification strategies and similar to the causal ordering in our Proposition~\ref{prop: svar: identification}. However, in the considered setup, where $X_t$ is a function-valued random element, the covariance $\Sigma_\varepsilon$ is not invertible on $\mathbb{R}\times\mathcal H$ (see \citealp{Mas2007}, Section 2.2), which makes it infeasible to apply the existing identification strategy.
\end{remark}

\subsection{SIRF and its relationship with the linear projection}  \label{sec_IRF}


To understand the relationship between the coefficient in our benchmark model \eqref{eq: model: benchmark} and the SIRF implied by \eqref{eq: model: svar}, it is convenient to consider the  MA($\infty$) representation.  
Under the invertibility of  $\mathcal B$, the SVAR model in \eqref{eq: model: svar} allows the reduced-form representation in \eqref{eqreduced1} and furthermore, we have
\begin{equation} \label{eqrfvar}
\begin{bmatrix}y_t\\ X_t
\end{bmatrix} = \sum_{j=0}^\infty \Gamma^{j} \mathcal B^{-1} {\mathbf{u}}_{t-j},  
\end{equation} 
where    $\mathbf{u}_t = \left[\begin{smallmatrix}{u}_{1,t}\\{U}_{2,t}\end{smallmatrix}\right] \in\mathbb R \times \mathcal H$ and $\Gamma ^0  = I$ which is the identity operator on $\mathbb{R}\times \mathcal H$.  \eqref{eq: irf: general} and \eqref{eqrfvar} suggest that the SIRF of $ \left[\begin{smallmatrix} y_t\\ X_t\end{smallmatrix} \right]$ at horizon~$h$, when the perturbation $\widetilde\zeta \equiv \left[\begin{smallmatrix}1 \\ \zeta\end{smallmatrix}\right] \in \mathbb{R}\times \mathcal H$ is introduced to $\mathbf{u}_t$, is given as follows:
\begin{equation}\label{eq: sirf}
\IRF_h(\widetilde{\zeta}) :=\mathbb{E}\left[\left.\begin{bmatrix} y_{t+h} \\ X_{t+h}   \end{bmatrix} \right| {\mathbf u}_t+\widetilde\zeta,{\mathbf u}_{t-1},\ldots\right] - \mathbb{E}\left[ \left.\begin{bmatrix} y_{t+h} \\ X_{t+h}   \end{bmatrix} \right| \mathbf u_t,\mathbf u_{t-1},\ldots\right]   = \Gamma^h\mathcal B^{-1} \widetilde{\zeta}.
\end{equation} 
Both $\Gamma ^h$ and $\mathcal B^{-1}$ are operator matrices mapping from $\mathbb R \times \mathcal H$ to $\mathbb R \times \mathcal H$, and the same holds for $\Gamma ^h\mathcal B^{-1}$. Therefore, $\Gamma ^h \mathcal B^{-1}$ allows the following representation:
\begin{equation} \nonumber
\Gamma^{h} \mathcal B^{-1} =  \begin{bmatrix} \pcr\Gamma^{h} \mathcal B^{-1}\pcrr & \pcr \Gamma^{h} \mathcal B^{-1}\pchh \\  \pch\Gamma^{h} \mathcal B^{-1}\pcrr   &  \pch\Gamma^{h} \mathcal B^{-1}\pchh   \end{bmatrix} = \begin{bmatrix} \IRF_{11,h} &  \IRF_{12,h} \\  \IRF_{21,h} &  \IRF_{22,h}   \end{bmatrix},
\end{equation}
where $\pcr$ denotes the projection map given by $\pcr\left[\begin{smallmatrix} x_1\\ x_2\end{smallmatrix}\right] = x_1$ and  $\pcrr$ is its adjoint satisfying $\pcrr(x_1) = \left[\begin{smallmatrix} x_1\\ 0\end{smallmatrix}\right]$. Similarly, $\pch$ (resp.\ $\pchh$) is a projection map satisfying $\pch \left[\begin{smallmatrix}
x_1\\x_2
\end{smallmatrix}\right] = x_2$ (resp.\  $\pch^\ast (x_2) =\left[\begin{smallmatrix}
0\\x_2
\end{smallmatrix}\right] $) for $x_2 \in \mathcal H$. Given these projection maps, each element of $\Gamma ^h \mathcal B^{-1}$ is characterized by different linear maps\footnote{$\IRF_{11,h} : \mathbb R \to \mathbb R$ (i.e., a scalar multiplication), $\IRF_{12,h}: \mathcal H \to \mathbb R$, $\IRF_{21,h}: \mathbb R \to \mathcal H$ and $\IRF_{22,h}:\mathcal H \to \mathcal H$.} and measures an effect of an additional structural shock on $y_{t+h}$ or $X_{t+h}$. For example, the response of $y_{t+h}$ to a function-valued shock  $\zeta\in \mathcal H$ is given by a scalar such that 	\begin{equation}\label{eqpartialirf1}
\IRF_{12,h}(\zeta) = \pcr \Gamma^{h} \mathcal B^{-1}\pchh (\zeta) = \delta_{12,h} \in \mathbb R.
\end{equation}
The response of $X_{t+h}$ to the unit shock on $u_{1,t}$ is characterized by a function such that
\begin{equation}\label{eqpartialirf2}
\IRF_{21,h}(1)=\pch \Gamma^{h} \mathcal B^{-1} \pcrr (1) = \delta_{21,h} \in \mathcal H. 
\end{equation}
These are often of interest to practitioners. 
The SIRFs in \eqref{eqpartialirf1} and \eqref{eqpartialirf2} can be interpreted similarly to those of standard SVAR models. However, because $X_t$ and $U_{2,t}$ are given by $\mathcal H$-valued functions of infinite dimension, the implementation of $\zeta$ to the structural error can be represented with an infinite number of basis functions. That is, for   $\{ \xi_j\}_{j \geq 1}$,   a set of orthonormal basis functions that span $\mathcal H$, we have \begin{equation*}
\langle  U_{2,t},\zeta \rangle  =\langle   \sum_{j=1} ^\infty \langle U_{2,t}, \xi_j \rangle \xi_j,\zeta \rangle  = \sum_{j=1} ^\infty  \langle U_{2,t}, \xi_j \rangle \langle    \xi_j ,\zeta\rangle.
\end{equation*}
In this regard, the SIRF in \eqref{eqpartialirf1} will be interpreted as the response of $y_t$ when all the basis functions move \textit{jointly} to the direction of $\zeta$. A similar observation was previously made by \cite{IR2021} under a parametric assumption, and it remains valid even in our setting without that assumption.


\begin{proposition}\label{prop: svar: identification: a} \normalfont Suppose that $\left[\begin{smallmatrix} y_t\\ X_t\end{smallmatrix} \right]$ satisfies \eqref{eq: model: svar} and \eqref{eqrfvar}.  Then the following hold: below, $v_{1,t}$ (resp.\ $V_{2,t}$) is some $\mathbb{R}$-valued (resp.\ $\mathcal H$-valued) MA($h-1$) process, and  
$v_{1,t}$ (resp.\ $V_{2,t}$) is not correlated with $\{y_{t-\ell},X_{t-\ell}\}$ for every $\ell \geq 0$.
\begin{enumerate}[(i)]
	\item \label{eq: model: svar1}Under the conditions in Proposition \ref{prop: svar: identification}\ref{prop: svar: identification1}, $\IRF_{12,h}$ is equivalent to the coefficient map $\langle \beta_h^y, \cdot \rangle:\mathcal H \to \mathbb{R}$  in \begin{equation}
		y_{t+h}=  \alpha_h ^y y_t + \langle \beta_{h} ^y, X_{t}\rangle + v_{1,t}, \label{prop2: eq1}
	\end{equation}
	where $\alpha_h ^y \in \mathbb{R}$ and $\beta_{h}^y \in \mathcal H$. Moreover, $\IRF_{21,h}$ is equivalent to the coefficient $a_h ^X : \mathbb{R} \to \mathcal H$ in
	\begin{equation}
		X_{t+h} = \alpha_{h} ^X y_t + \beta_{h}^X y_{t-1} + \gamma_h^X X_{t-1} + V_{2,t},\label{prop2: eq2}
	\end{equation}
	where $\beta_h^X : \mathbb{R}\to\mathcal H$ and $\gamma_h^X : \mathcal H\to \mathcal H$.
	\item Under the conditions in Proposition \ref{prop: svar: identification}\ref{prop: svar: identification2},  $\IRF_{12,h}$ is equivalent to the coefficient map $\langle \alpha_{h} ^y,\cdot \rangle:\mathcal H \to \mathbb{R}$ in
	\begin{equation}
		y_{t+h} =  \langle \alpha_{h} ^y, X_t \rangle  + \langle {\beta}_{h} ^y, X_{t-1}\rangle  + \gamma_h ^y y_{t-1} + v_{1,t},\label{prop2: eq4}
	\end{equation}
	where $\alpha_h^y \in \mathcal H$, $\beta_h^y \in \mathcal H$ and $\gamma_h^y \in \mathbb{R}$. Moreover, $\IRF_{21,h}$ is equivalent to the coefficient $ \beta_h ^X:  \mathbb{R} \to \mathcal H$ in 
	\begin{equation}
		X_{t+h} =  {\alpha}_h ^X X_t +  {\beta}_{h} ^X   y_{t} + V_{2,t}, \label{prop2: eq3}
	\end{equation}
	where $\alpha_h^X : \mathcal H\to \mathcal H$.  
	\end{enumerate}\end{proposition}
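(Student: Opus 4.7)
The plan is to iterate the reduced-form VAR \eqref{eqreduced1} forward and match the coefficients of each projection equation to the corresponding block of $\Gamma^h\mathcal B^{-1}$. Write $Z_t:=\left[\begin{smallmatrix}y_t\\X_t\end{smallmatrix}\right]$, so that $Z_{t+h}=\Gamma^h Z_t + W_{t,h}$ with $W_{t,h}:=\sum_{j=1}^h \Gamma^{h-j}\varepsilon_{t+j}$. Since $\varepsilon_{t+j}$ for $j\geq 1$ are reduced-form white-noise innovations, $\pcr W_{t,h}$ and $\pch W_{t,h}$ are MA($h-1$) processes orthogonal to every linear combination of $\{y_s,X_s\}_{s\leq t}$. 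The strategy is then to rewrite the components of $Z_t$ in terms of the regressors appearing in each equation, using the structural form of $\mathcal B$ under the identification scheme at hand, and to verify orthogonality of the remainder to those regressors; uniqueness of the linear projection in $\mathbb R\times\mathcal H$ will then identify the coefficients.

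Part (i): with $\beta_{12}=0$, the explicit inverse $\mathcal B^{-1}=\left[\begin{smallmatrix}I_1&0\\-\beta_{21}&I_2\end{smallmatrix}\right]$ gives $\IRF_{12,h}=\pcr\Gamma^h\pchh$ and $\IRF_{21,h}=\pch\Gamma^h\pcrr-\pch\Gamma^h\pchh\beta_{21}$. Applying $\pcr$ to $Z_{t+h}=\Gamma^h Z_t+W_{t,h}$ yields \eqref{prop2: eq1} with $v_{1,t}=\pcr W_{t,h}$ and operator coefficient $\pcr\Gamma^h\pchh$ on $X_t$, which Riesz-represents as $\langle\beta_h^y,\cdot\rangle=\IRF_{12,h}$. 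For \eqref{prop2: eq2}, substitute the structural equation $X_t=-\beta_{21}y_t+\alpha_{21}y_{t-1}+a_{22}X_{t-1}+U_{2t}$ into the $\pch$-expansion $X_{t+h}=\pch\Gamma^h\pcrr y_t+\pch\Gamma^h\pchh X_t+\pch W_{t,h}$ and collect coefficients: the coefficient on $y_t$ becomes $\pch\Gamma^h\pcrr-\pch\Gamma^h\pchh\beta_{21}=\IRF_{21,h}=\alpha_h^X$, while the remainder $\pch\Gamma^h\pchh U_{2t}+\pch W_{t,h}$ is orthogonal to $\{y_t,y_{t-1},X_{t-1}\}$ by $\mathbb E[u_{1t}\otimes U_{2t}]=0$ and the scheme-(i) equation $y_t=\alpha_{11}y_{t-1}+\alpha_{12}X_{t-1}+u_{1t}$.

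Part (ii): with $\beta_{21}=0$, $\mathcal B^{-1}=\left[\begin{smallmatrix}I_1&-\beta_{12}\\0&I_2\end{smallmatrix}\right]$, so $\IRF_{12,h}=\pcr\Gamma^h\pchh-\pcr\Gamma^h\pcrr\beta_{12}$ and $\IRF_{21,h}=\pch\Gamma^h\pcrr$. For \eqref{prop2: eq4}, expand $y_{t+h}=\pcr\Gamma^{h+1}Z_{t-1}+\pcr\Gamma^h\varepsilon_t+\pcr W_{t,h}$, substitute $\varepsilon_t=\mathcal B^{-1}\mathbf u_t=\left[\begin{smallmatrix}u_{1t}-\beta_{12}U_{2t}\\U_{2t}\end{smallmatrix}\right]$, and replace $U_{2t}$ by $X_t-\alpha_{21}y_{t-1}-a_{22}X_{t-1}$ (valid because $\beta_{21}=0$ makes the scheme-(ii) equation for $X_t$ free of $y_t$). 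The coefficient on $X_t$ collapses to $\pcr\Gamma^h\pchh-\pcr\Gamma^h\pcrr\beta_{12}=\IRF_{12,h}$, and the residual $(\pcr\Gamma^h\pcrr)u_{1t}+\pcr W_{t,h}$ is orthogonal to $\{X_t,y_{t-1},X_{t-1}\}$ by block-diagonality of $\mathbf\Sigma$ and the white-noise property. Equation \eqref{prop2: eq3} is the most direct: applying $\pch$ to $Z_{t+h}=\Gamma^h Z_t+W_{t,h}$ gives $X_{t+h}=(\pch\Gamma^h\pcrr)y_t+(\pch\Gamma^h\pchh)X_t+\pch W_{t,h}$, identifying the coefficient on $y_t$ as $\pch\Gamma^h\pcrr=\IRF_{21,h}=\beta_h^X$.

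The main obstacle will be the operator-matrix bookkeeping in the functional setting: the four blocks of $\Gamma^h$ and $\mathcal B^{-1}$ act between $\mathbb R$ and $\mathcal H$ and never commute with the off-diagonal maps $\beta_{12}:\mathcal H\to\mathbb R$ or $\beta_{21}:\mathbb R\to\mathcal H$, so each substitution has to be carried out so that the resulting map has the correct domain and range, and the accumulated operator expression is exactly the relevant block of $\Gamma^h\mathcal B^{-1}$. Once that is in place, orthogonality of each residual reduces either to future-innovation arguments (for $W_{t,h}$) or to the scalar-functional uncorrelatedness encoded in $\mathbf\Sigma$, and uniqueness of the Hilbert-space projection on $\mathbb R\times\mathcal H$ (together with Riesz representation to cast the scalar coefficient map as an inner product) finishes the proof.
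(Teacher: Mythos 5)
Your proposal is correct and follows essentially the same route as the paper: iterate the reduced-form VAR to get $Z_{t+h}=\Gamma^h Z_t + W_{t,h}$ with an MA($h-1$) remainder orthogonal to time-$t$ information, read off the relevant block of $\Gamma^h\mathcal B^{-1}$ under each identification scheme, and substitute the contemporaneous structural equation ($X_t$ in scheme (i), $y_t$ or equivalently $U_{2t}$ in scheme (ii)) to convert the reduced-form coefficient into the corresponding $\IRF$ block. The only cosmetic difference is that for \eqref{prop2: eq4} you iterate one extra step and substitute for $\varepsilon_t$ and then $U_{2t}$, whereas the paper substitutes the structural $y_t$-equation directly into the $h$-step first row; both yield the identical decomposition and residual $\gamma_{h,11}u_{1t}+\tilde v_{1,t}$.
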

	Proposition~\ref{prop: svar: identification: a}\ref{eq: model: svar1} provides a theoretical foundation for interpreting the map $\langle \beta_h, \cdot \rangle$ in the benchmark model \eqref{eq: model: benchmark} as $\IRF_{12,h}$. Specifically, if the identification condition $\beta_{12} = 0$ holds and the true DGP follows \eqref{eq: model: svar}, then the coefficient map $\langle \beta_h, \cdot \rangle$ 
	represents the SIRF at horizon $h$ when the functional structural error associated with $X_t$ experiences a shock $\zeta$. It may be possible to estimate the SIRFs directly from the RFVAR model by applying an appropriate regularization scheme required for an infinite dimensional setup. We provide a brief outline of this approach in  Section~\ref{sec: svar: est} of the Supplementary Material. However, as detailed in the appendix, this approach makes statistical inference on the SIRFs much more challenging, 
	while our benchmark model simplifies the inference procedure by exclusively focusing on the outcome equation of interest. 
	
	Proposition~\ref{prop: svar: identification: a}\ref{eq: model: svar1} implies that $\IRF_{21,h}$ can be characterized by a certain coefficient map in a function-on-function regression model with  scalar control variables. We  study how statistical inference can be implemented for this quantity in Section~\ref{sec_extension} of the Supplementary Material.

	\subsection{Caveats of finite dimensional approximation\label{subsec: para}}
	In empirical studies involving functional random variables, it is a common practice to first approximate the functional variable as a finite dimensional vector and then apply existing estimation or inference methods. However, as \cite{Nielsen2023} pointed out in the context of cointegration tests for functional time series, this finite dimensional approximation can lead to misspecification errors, causing  misleading estimation and interpretation.  To illustrate this, suppose
\begin{equation}
\widetilde	\Upsilon_t = A \widetilde\Upsilon_{t-1} + \mathbf{u}_t, \label{eq: model: svar3}
\end{equation} where $ \widetilde	\Upsilon_t = \left[\begin{smallmatrix} y_t\\ X_t\end{smallmatrix} \right] \in \mathbb{R}\times \mathcal H$. To study the above model, a popular approach is to transform $X_t$ in $\widetilde\Upsilon_t$ into a finite dimensional vector, say $\mathbf{x}_t$, in advance and then estimate a VAR model similar to \eqref{eq: model: svar3} with $(y_t,\mathbf{x}_t')'$. The transformation from $X_t$ to $\mathbf{x}_t$ may be expressed as a projection of $X_t$ onto a finite dimensional space such that $P_0{\widetilde\Upsilon}_t = (y_t, \langle X_t, \xi_1 \rangle,\ldots, \langle X_t, \xi_K \rangle)'$ with some finite $K$. $\xi_j$ may be either a pre-specified parametric function (\citealp{IR2021}) or a principal component (\citealp{BHCJ2023}). However, as shown by \cite{Nielsen2023}, this projection operation does not preserve the VAR structure. Specifically, from \eqref{eq: model: svar3}, we have
\begin{equation}
P_0 \widetilde \Upsilon_t = P_0 A P_0\widetilde\Upsilon_{t-1} + \mathbf{v}_t, \qquad \mathbf{v}_t = P_0 A (I-P_0) \widetilde\Upsilon_{t-1} + P_0 \mathbf{u}_t, \nonumber
\end{equation} If $P_0 \neq I$, $P_0\widetilde\Upsilon_{t}$ does not follow the VAR(1) structure in \eqref{eq: model: svar3} since it is generally correlated with $\mathbf{v}_t$, which potentially invalidates many existing estimation and inference methodologies developed for VAR models; for instance, \cite{Nielsen2023} show that the cointegration rank test of \cite{Johansen1991,Johansen1995} is generally misleading in this setup. This finite dimensional approximation can be justified only when $P_0$ is close enough to $I$ (unless we consider the special case where $X_t$ can be fully expressed by a finite number of known basis functions, which is essentially equivalent to a finite dimensional setup). This implies that \( K \) must be sufficiently large. In functional linear models, it is well known that an increase in \( K \) can significantly raise the variance of coefficient estimators while reducing the regularization bias. Thus, choosing \( K \) in advance without accordance with the desired inferential methods may introduce an unbalanced bias-variance trade-off. 

\section{Estimation}\label{sec:est}
\subsection{Representation of the proposed model} 

To facilitate our discussion, it is useful to consider an alternative representation of our benchmark model in \eqref{eq: model: benchmark}. We  let $\elltwo$ denote the product Hilbert space, whose inner product is given by the sum of the inner products in $\mathbb{R}^m$ and $\mathcal H$ (see Section~\ref{sec_prelim2} of the Supplementary Material). With a slight abuse of notation, we let $\langle \cdot, \cdot \rangle$ denote the inner product defined on any of $\mathbb{R}^m$, $\mathcal H$, and $\elltwo$ (e.g., if $h_1, h_2 \in \mathbb{R}^m$ then $\langle h_1,h_2 \rangle=h_1'h_2$). Because the inner product is inherently defined for elements in the same space, there is little risk of confusion following this simplification. For any elements $h_j \in \mathcal H_j$ and $h_k\in \mathcal H_k$, where $\mathcal H_j$ and $\mathcal H_k$ can be any of $\mathbb{R}^m$, $\mathcal H$, and $\elltwo$, we let $\otimes$ denote the tensor product, defined by $h_j\otimes h_k (\cdot) = \langle h_j,\cdot \rangle h_k$ (see Section~\ref{sec_prelim} of the Supplementary Material). In particular, if $h_1 = \left[\begin{smallmatrix}h_{11}\\ h_{12}\end{smallmatrix}\right] \in \elltwo$ and $h_{2} = \left[\begin{smallmatrix}h_{21}\\ h_{22}\end{smallmatrix}\right] \in \elltwo$, then $h_1\otimes h_2$ can be understood as an operator matrix given by $\left[\begin{smallmatrix} h_{11} \otimes h_{21} & h_{12} \otimes h_{21} \\   h_{11} \otimes h_{22} &   h_{12} \otimes h_{22}\end{smallmatrix}\right]$.   We define the following $\elltwo$-valued random element  $\Upsilon_t$ and its coefficient $\theta_h$:
\begin{equation} \label{eqdefY}
\Upsilon_t =  \begin{bmatrix}\mathbf{w}_t\\ X_t\end{bmatrix} \quad\text{ and }\quad \theta_h = \begin{bmatrix}
	\alpha_h \\\beta_h
\end{bmatrix}.
\end{equation}
Under this representation, $ \langle\theta_h, \Upsilon_t\rangle = \langle  \alpha_h,\mathbf{w}_t \rangle + \langle \beta_h, X_t \rangle$, and  \eqref{eq: model: benchmark} is simplified into the following regression model:
\begin{equation} \label{eq: model: benchmark: reduced2}
y_{t+h} =   \langle\theta_h, \Upsilon_t\rangle  + u_{h,t}.
\end{equation}  




\subsection{Estimator}\label{sec:est1}

We  hereafter assume that the variables $y_{t+h}$, $X_t$ and $\mathbf w_t$ have zero means. Extending to the case where these means are unknown is straightforward by considering their demeaned values, under the assumptions detailed shortly. Furthermore, we assume that $X_t$ and $\mathbf{w}_t$ are exogenous with respect to $u_{h,t}$. Then, by its definition  in \eqref{eqdefY}, $\Upsilon_t$  satisfies the exogeneity condition such that  $C_{\Upsilon u} = \mathbb{E}[u_{h,t}\Upsilon_t] = 0$.  This in turn implies the following moment condition: 
\begin{equation}
C_{\Upsilon y}\equiv \mathbb{E}[y_{t+h}\Upsilon_t] = \mathbb E[ \langle\Upsilon_t ,\theta_h \rangle \Upsilon_t ]   = \mathbb E[\Upsilon_t \otimes \Upsilon_t] \theta_h \equiv  C_{\Upsilon\Upsilon}\theta_h. \label{eqpopmoment} 
\end{equation} 
Under the identification condition to be detailed shortly, $\theta_h$ can be estimated from the sample counterpart of \eqref{eqpopmoment} given by
\begin{equation}
\widehat{C}_{\Upsilon y} = \widehat{C}_{\Upsilon\Upsilon}\bar{\theta}_h, \label{eqsammoment} 
\end{equation}where $\widehat{C}_{\Upsilon y}$ and $\widehat{C}_{\Upsilon\Upsilon}$  are given by $
\widehat{C}_{\Upsilon y} = T^{-1}\sum_{t=1}^T  y_{t+h}  \Upsilon_t$ and $  \widehat{C}_{\Upsilon \Upsilon} = T^{-1} \sum_{t=1}^T  \Upsilon_t \otimes \Upsilon_t$.   However, in our setup, $\bar{\theta}_h$ is not generally obtainable from \eqref{eqsammoment}, since  $  \widehat{C}_{\Upsilon\Upsilon}$, an operator acting on $\widetilde{\mathcal H}$, is not invertible. We circumvent this issue by considering an estimator constructed using a regularized inverse of  $\widehat{C}_{\Upsilon\Upsilon}$ based on its operator Schur complement. Specifically, we  note that whenever convenient, ${C}_{\Upsilon\Upsilon}$ and $\widehat{C}_{\Upsilon\Upsilon}$ can be understood as operator matrices on $\elltwo$ such that 
\begin{equation} \label{covblock}
{C}_{\Upsilon\Upsilon} = \begin{bmatrix}
	\CCC_{11} &\CCC_{12} \\ \CCC_{21}& \CCC_{22}
\end{bmatrix} \quad\text{and}\quad
\widehat{C}_{\Upsilon\Upsilon} = \begin{bmatrix}
	\widehat{\CCC}_{11} & \widehat{\CCC}_{12} \\ \widehat{\CCC}_{21} & \widehat{\CCC}_{22}
\end{bmatrix},
\end{equation}
where  $\CCC_{ij} = \mathbb{E}[g_{jt}\otimes g_{it}]$,  $\widehat{\CCC}_{ij} = T^{-1}\sum_{t=1}^T g_{jt}\otimes g_{it}$,  $g_{1t} = \mathbf{w}_t$, and $g_{2t} = X_t$. That is, for any $\mathbf{a} \in \mathbb R^m$, $\CCC_{11}\mathbf{a} = \mathbb E[ \langle \mathbf w_t, \mathbf a \rangle \mathbf{w}_t]$ and $\CCC_{21}\mathbf{a} = \mathbb E[ \langle \mathbf w_t, \mathbf a \rangle X_t]$. Similarly, for any $ h\in \mathcal{H}$, $\CCC_{12} h =  \mathbb E[\langle X_t, h\rangle \mathbf w_t]$ and $\CCC_{22} h = \mathbb E[\langle X_t, h \rangle X_t]$. Note that $\CCC_{11}$ is the covariance of  $ \mathbf{w}_t$ which is assumed to be invertible throughout the paper (see Assumption \ref{assum2}).  
We then  define the following operator Schur complements $\SA$ (of ${C}_{\Upsilon\Upsilon}$) and $\widehat{\SA}$ (of $\widehat{C}_{\Upsilon\Upsilon}$) (see \citealp[Section 2.2]{Bart2007}):
\begin{equation} \label{eqopschur}
\SA ={\CCC}_{22}-{\CCC}_{21}{\CCC}_{11}^{-1}{\CCC}_{12}\quad\text{and}\quad \widehat{\SA}=\widehat{\CCC}_{22}-\widehat{\CCC}_{21}\widehat{\CCC}_{11}^{-1}\widehat{\CCC}_{12}.
\end{equation} 
{The operator Schur complement $\SA$ can  be understood as the covariance of the residuals obtained by regressing $X_t$   on $\mathbf{w}_t$, and $\widehat{\SA}$  is its sample counterpart (see  \citealp{fukumizu2004dimensionality}).} Lastly, to introduce our regularization scheme, we further represent $\SA$ (resp.\  $\widehat{\SA}$) with respect to its eigenvalues and eigenvectors $\{ \lambda_j,\nu_j \}_{j \geq 1}$ (resp.\ $\{ \widehat\lambda_j, \widehat \nu_j \}_{j \geq 1}$) as follows:
\begin{equation}\label{eqshur0}
{\SA}= \sum_{j=1}^\infty {\lambda}_j {v}_j \otimes {v}_j\quad\text{and}\quad \widehat{\SA}= \sum_{j=1}^\infty \hat{\lambda}_j \hat{v}_j \otimes \hat{v}_j,
\end{equation}
where $\lambda_1\geq\lambda_2\geq\ldots \geq 0$ and $\hat{\lambda}_1\geq\hat{\lambda}_2\geq\ldots\geq 0$. 
This representation is possible as $\SA$ and $\widehat{\SA}$ are self-adjoint, nonnegative, and compact  (see \citealp{Bosq2000}, p.\ 34). The empirical eigenelements $\{\hat{\lambda}_j , \hat{v}_j\}$ can be obtained using the functional principal component analysis (FPCA). From \eqref{eqshur0},  noninvertibility of $\widehat{\SA}$ is evident as its partial inverse $\widehat{\SA}_K^{-1} = \sum_{j=1}^K \hat{\lambda}_j ^{-1}\hat{v}_j\otimes \hat{v}_j$ grows without bound in the operator norm as $K$ increases. This consequently leads to noninvertibility of $\widehat{C}_{\Upsilon\Upsilon}$ (see \citealp{Bart2007}, p.\ 29). Therefore, to construct a regularized inverse of $\widehat{C}_{\Upsilon\Upsilon}$, we first construct regularized inverses of the operator Schur complements $\SA$ and $\widehat{\SA}$ as follows:
\begin{equation} \label{eqshur1}
\SA_{\KK}^{-1} = \sum_{j=1}^{\KK} \lambda_j^{-1}v_j\otimes v_j \quad\text{and}\quad \widehat{\SA}_{\KK}^{-1}= \sum_{j=1}^{\KK} \hat{\lambda}_j^{-1} \hat{v}_j \otimes \hat{v}_j,\quad \text{ where}\quad \KK = \max\{j: \hat{\lambda}_j^2 \geq \reg\} ,
\end{equation}
for the regularization parameter $\reg$, decaying to zero as $T$ increases. Then,   we define the following regularized inverse $\widehat{C}_{\Upsilon\Upsilon,\KK}^{-1}$ of $\widehat{C}_{\Upsilon\Upsilon}$:

\begin{equation} \label{eqreginv}
\widehat{C}_{\Upsilon\Upsilon,\KK}^{-1}  = \begin{bmatrix} 
	\widehat{\CCC}_{11}^{-1} + \widehat{\CCC}_{11}^{-1}\widehat{\CCC}_{12} \widehat{\SA}^{-1}_{\KK} \widehat{\CCC}_{21}\widehat{\CCC}_{11}^{-1} & -\widehat{\CCC}_{11}^{-1} \widehat{\CCC}_{12} \widehat{\SA}^{-1}_{\KK} \\ -\widehat{\SA}^{-1}_{\KK} \widehat{\CCC}_{21}\widehat{\CCC}_{11}^{-1} &
	\widehat{\SA}^{-1}_{\KK}
\end{bmatrix}.
\end{equation}
By using the regularized inverse \eqref{eqreginv} and the moment condition \eqref{eqsammoment}, we define our estimator $\hat\theta_h$  as follows:
\begin{equation}
\hat{\theta}_h = \widehat{C}_{\Upsilon\Upsilon,\KK}^{-1}  \widehat{C}_{\Upsilon y}. \nonumber
\end{equation} 
\begin{remark}
Similar to the literature in Section~\ref{subsec: para},  \cite{AP2006} and \cite{SHIN2009} resolve the ill-posed inverse problem similar to \eqref{eqsammoment} by approximating    $X_t$ with  a finite number of eigenvectors of $\widehat\CCC_{22}$. 
On the other hand, we in this paper utilize the fact that the ill-posedness is directly associated by the Schur complement $\widehat{\SA}$, so we  regularize it to solve the inverse problem. The operator to be regularized  $\widehat{\SA}$ is the the covariance of the residuals obtained by regressing $X_t$   on $\mathbf{w}_t$, which is differentiated from $\widehat{\Gamma}_{22}$.  Moreover, in contrast to the aforementioned literature whose focus lies on consistency, we provide a formal statistical inference procedure for $\theta_h$ based on the asymptotic properties of our proposed estimators, which is another novel contribution of this paper.
		\end{remark}

		\subsection{Identification and consistency} \label{sec:est1a} 
		Let $\ker A$ denote the kernel of $A$. To uniquely identify $\theta_h$ from \eqref{eqpopmoment}, we assume the following: 
		\begin{assumFLS} \label{assum1} $\langle x,\theta_h \rangle=0$ for all $x\in \ker C_{\Upsilon\Upsilon}$. 
		\end{assumFLS}
		Note that $\Upsilon_t$ involves the $\mathcal H$-valued random variable $X_t$. In the literature on functional data analysis, it is commonly assumed that the covariance  of such a functional random variable allows infinitely many nonzero eigenvalues. Therefore, as deduced from \citet[Proposition 2.1]{Mas2007} and the Riesz representation theorem (\citealp{Conway1994}, p.\ 13), the parameter of interest in \eqref{eqpopmoment}, $\theta_h$, is not uniquely identified as an element of $\elltwo$ if $\ker C_{\Upsilon \Upsilon} \neq \{0\}$. The failure of identification occurs because, for any $\psi \in \ker C_{\Upsilon\Upsilon}$, $C_{\Upsilon\Upsilon} ( \theta_h+\psi ) =  C_{\Upsilon\Upsilon}\theta_h$. Thus, if $\theta_h$ satisfies \eqref{eqpopmoment}, $\theta_h+\psi$ also satisfies it. Assumption~\ref{assum1} prevents this failure of identification.

		Our next assumption is related to asymptotic properties of $\widehat\theta_h$. Below, $\widehat{C}_{\Upsilon u} = T^{-1}\sum_{i=1} ^T u_{h,t}  \Upsilon_t$  and  $\Vert \cdot \Vert _{\op}$ is the operator norm (see  Section~\ref{sec_prelim} of the Supplementary Material for its formal definition). 
\begin{assumFLS} \label{assum2}
\begin{enumerate*}[(i)]
\item \label{assum2a}  \eqref{eq: model: benchmark: reduced2}  holds with $\mathbb{E}[u_{h,t}\Upsilon_t]=0$;
\item\label{assum2b}   $\{\Upsilon_t\}$, $\{u_{h,t}\}$ and $\{u_{h,t}\Upsilon_t\}$ are stationary and  $L^4$-$m$-approximable; 
\item \label{assum2c}  $\|\widehat{C}_{\Upsilon u}\|_{\op} = O_p(T^{-1/2})$, $\|\widehat{C}_{\Upsilon\Upsilon} -C_{\Upsilon\Upsilon}\|_{\op} = O_p(T^{-1/2})$, and $\|\widehat{\CCC}_{11}^{-1} - \CCC_{11}^{-1}\|_{\op} = O_p(T^{-1/2})$.
\end{enumerate*} 
\end{assumFLS}
The $L^4$-$m$-approximability in Assumption \ref{assum2}\ref{assum2b}, whose formal definition is provided in  Section~\ref{Section_AFTS} of the Supplementary Material, is employed to use existing limit theorems. The assumption is not only widely adopted in the literature on stationary functional time series but also inclusive of many practical and interesting examples, such as the SVAR model in Section \ref{sec: svar}. Assumption~\ref{assum2}\ref{assum2c} contains high-level conditions on the limiting behavior of $\widehat{C}_{\Upsilon \Upsilon}$ and $\widehat{C}_{\Upsilon u}$, and  is not restrictive given the stationarity of $\{\Upsilon_t\}$ and $\{ u_{h,t}\Upsilon_t\}$. Some primitive sufficient conditions can be found in  \citet[Chapter  2]{Bosq2000}. 

We  impose the following conditions to characterize the rate of convergence of $\hat\theta_h$.
\begin{assumFLS} \label{assum3} For a generic constant $\CC>0$, the following holds: \begin{enumerate*}[(i)] 
\item\label{assum3a}  
$\lambda_j^2 \leq \CC j^{-\rho}$ and $\lambda_j^2-\lambda_{j+1}^2 \geq \CC j^{-\rho-1}$ for $\rho>2$;
\item\label{assum3b}   $|\langle \beta_h, v_j \rangle| \leq \CC j^{-\varsigma}$ for   $\varsigma > 1/2$. 
\end{enumerate*}
\end{assumFLS}
When $\SA$ is given by the standard covariance of the functional explanatory variable, Assumption~\ref{assum3}\ref{assum3a} reduces to a standard assumption commonly used in the literature on functional linear models.
Given that $\sum_{j=1}^\infty \lambda_j < \infty$ must hold
(see Lemma \ref{lem1}), the first condition $\lambda_j^2 \leq \CC j^{-\rho}$ for some $\rho>2$ is natural (obviously, $\rho\leq 2$ may not result in the summability of $\{\lambda_j\}_{j\geq 1}$). By the latter condition, we require that the eigenvalues of $\SA$ are well separated. As documented in a similar context concerning  functional linear models (see e.g., \citealp{Hall2007,imaizumi2018,seong2021functional}), this separation is crucial for achieving sufficient accuracy in the estimation of eigenelements. Assumption~\ref{assum3}\ref{assum3b}  can be understood as a smoothness condition on $\beta_h$  with respect to the eigenvectors $\{v_j\}_{j \geq 1} $. In contrast to $\beta_h$, a similar restriction is not necessary for $\alpha_h$ associated with the vector-valued random variable $\mathbf{w}_t$. Considering that $\beta_{h}$ is an element in a Hilbert space satisfying $\sum_{j=1}^\infty \langle \beta_h, v_j \rangle^2 <\infty$,  Assumption \ref{assum3}\ref{assum3b} does not seem too restrictive. 


The following theorem states consistency and the rate of convergence of $\hat\theta_h$.
\begin{theorem} \label{thm1} Suppose Assumptions \ref{assum1}--\ref{assum3} hold and $T\reg^{1+4/\rho} \to \infty$. Then, \begin{equation}	\|\hat{\theta}_h - \theta_h\| = O_p(T^{-1/2}\reg^{-1/2-2/\rho} + \reg^{(2\varsigma-1)/2\rho}).\nonumber\end{equation}
\end{theorem}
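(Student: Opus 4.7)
The plan is to start from the algebraic identity $\hat\theta_h-\theta_h = \bigl(\widehat C_{\Upsilon\Upsilon,\KK}^{-1}\widehat C_{\Upsilon\Upsilon}-I\bigr)\theta_h + \widehat C_{\Upsilon\Upsilon,\KK}^{-1}\widehat C_{\Upsilon u}$, obtained by substituting $\widehat C_{\Upsilon y}=\widehat C_{\Upsilon\Upsilon}\theta_h+\widehat C_{\Upsilon u}$ (which follows from Assumption~\ref{assum2}\ref{assum2a}) into the definition of $\hat\theta_h$. A direct block-matrix multiplication using the formula in \eqref{eqreginv} then collapses the right-hand side to
\begin{equation*}
\hat\theta_h - \theta_h = \begin{bmatrix} \widehat\CCC_{11}^{-1}\widehat\CCC_{12}(I-\hat P_\KK)\beta_h \\ (\hat P_\KK - I)\beta_h \end{bmatrix} + \widehat C_{\Upsilon\Upsilon,\KK}^{-1}\widehat C_{\Upsilon u},
\end{equation*}
where $\hat P_\KK := \widehat{\SA}_\KK^{-1}\widehat{\SA} = \sum_{j=1}^\KK \hat v_j\otimes \hat v_j$ is the empirical projection onto the first $\KK$ principal directions of $\widehat{\SA}$. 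The key benefit of the Schur-complement construction is that the entire regularization bias collapses to a single scalar, $\|(I-\hat P_\KK)\beta_h\|$, modulated by the bounded factor $\widehat\CCC_{11}^{-1}\widehat\CCC_{12}$.

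I would then control this bias by splitting $\|(I-\hat P_\KK)\beta_h\|\leq\|(I-P_\KK)\beta_h\| + \|(\hat P_\KK-P_\KK)\beta_h\|$, with $P_\KK:=\sum_{j=1}^\KK v_j\otimes v_j$. Assumption~\ref{assum3}\ref{assum3b} gives $\|(I-P_\KK)\beta_h\|^2 \leq C\sum_{j>\KK}j^{-2\varsigma}\lesssim \KK^{1-2\varsigma}$. For the perturbation piece, the identity $\lambda_j^2-\lambda_{j+1}^2=(\lambda_j+\lambda_{j+1})(\lambda_j-\lambda_{j+1})$ combined with Assumption~\ref{assum3}\ref{assum3a} yields the two-sided decay $\lambda_j\asymp j^{-\rho/2}$ and the spectral-gap bound $\delta_j:=\min(\lambda_{j-1}-\lambda_j,\lambda_j-\lambda_{j+1})\gtrsim j^{-\rho/2-1}$. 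Invoking the classical bound $\|\hat v_j-v_j\|\leq 2\sqrt 2\|\widehat\SA-\SA\|_{\op}/\delta_j$ (\citealt{Bosq2000}, Lemma~4.3), together with $\|\widehat\SA-\SA\|_{\op}=O_p(T^{-1/2})$ (which follows from Assumption~\ref{assum2}\ref{assum2c} and the continuity of the Schur-complement map), produces
\begin{equation*}
\|\hat P_\KK-P_\KK\|_{\op}\leq\sum_{j=1}^\KK \|\hat v_j\otimes \hat v_j - v_j\otimes v_j\|_{\op}\lesssim\sum_{j=1}^\KK j^{\rho/2+1}T^{-1/2}\lesssim \KK^{\rho/2+2}T^{-1/2}.
\end{equation*}
Weyl-type control $|\hat\lambda_j^2-\lambda_j^2|=O_p(T^{-1/2})$ in combination with the selection rule $\hat\lambda_\KK^2\geq\reg>\hat\lambda_{\KK+1}^2$ pins down $\KK\asymp\reg^{-1/\rho}$ on an event of probability tending to one, converting the bias bound into $O_p\bigl(\reg^{(2\varsigma-1)/(2\rho)}+T^{-1/2}\reg^{-1/2-2/\rho}\bigr)$.

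For the variance contribution $\widehat C_{\Upsilon\Upsilon,\KK}^{-1}\widehat C_{\Upsilon u}$, the block inverse expresses its second component as $\widehat{\SA}_\KK^{-1}\widehat C_{ru}$, where $\widehat C_{ru}:=\widehat C_{Xu}-\widehat\CCC_{21}\widehat\CCC_{11}^{-1}\widehat C_{wu}$. The truncation rule gives $\|\widehat{\SA}_\KK^{-1}\|_{\op}\leq\hat\lambda_\KK^{-1}\leq\reg^{-1/2}$ and Assumption~\ref{assum2}\ref{assum2c} yields $\|\widehat C_{ru}\|=O_p(T^{-1/2})$, so this component is $O_p(T^{-1/2}\reg^{-1/2})$; the finite-dimensional first component $\widehat\CCC_{11}^{-1}\widehat C_{wu}-\widehat\CCC_{11}^{-1}\widehat\CCC_{12}\widehat{\SA}_\KK^{-1}\widehat C_{ru}$ is of the same or smaller order. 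Since $\reg\to 0$, $\reg^{-1/2}$ is dominated by $\reg^{-1/2-2/\rho}$, and the variance is absorbed into the bias term $T^{-1/2}\reg^{-1/2-2/\rho}$, giving the rate claimed.

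The main obstacle is the eigenvector-perturbation step: extracting the spectral-gap bound $\delta_j\gtrsim j^{-\rho/2-1}$ from the squared-eigenvalue form of Assumption~\ref{assum3}\ref{assum3a} (no direct statement on $\lambda_j-\lambda_{j+1}$ is given), and then summing the per-vector estimates $\|\hat v_j-v_j\|\lesssim j^{\rho/2+1}T^{-1/2}$ tightly, is exactly what produces the exponent $\KK^{\rho/2+2}$ behind the $\reg^{-1/2-2/\rho}$ factor in the final rate. A secondary technical task is certifying that the random truncation index $\KK$ concentrates around its oracle value $\KK^\ast=\max\{j:\lambda_j^2\geq\reg\}\asymp\reg^{-1/\rho}$; the condition $T\reg^{1+4/\rho}\to\infty$ enters both to validate this concentration (making $O_p(T^{-1/2})$ fluctuations negligible relative to the eigenvalue gap at level $\reg$) and to ensure the stated bound itself vanishes.
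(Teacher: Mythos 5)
Your proposal is correct and follows essentially the same route as the paper's proof: the same block-inverse identity reducing the error to $(I-\widehat\Pi_{\KK})\beta_h$ plus the noise term $\widehat C_{\Upsilon\Upsilon,\KK}^{-1}\widehat C_{\Upsilon u}$, the same split into truncation bias $(I-\Pi_{\KK})\beta_h$ and eigenvector-perturbation $(\widehat\Pi_{\KK}-\Pi_{\KK})\beta_h$ controlled via Bosq's Lemma 4.3 with the spectral gap $\lambda_j-\lambda_{j+1}\gtrsim j^{-\rho/2-1}$ extracted from the squared-eigenvalue condition, and the same identification $\KK\asymp\reg^{-1/\rho}$ yielding $\KK^{\rho/2+2}T^{-1/2}=O_p(T^{-1/2}\reg^{-1/2-2/\rho})$ and $\KK^{1-2\varsigma}=O_p(\reg^{(2\varsigma-1)/\rho})$. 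No gaps.
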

In Theorem \ref{thm1}, the choice of the regularization parameter $\reg$ to ensure the consistency of the proposed estimator depends on $\rho$, allowing $\reg$ to decay at a faster rate as $\rho$ increases. Assuming that the same regularization parameter is employed, an increase in $\rho$ implies a faster convergence of $\hat{\theta}_h$ to $\theta_h$.
It is worth noting that the consistency is established if $\reg$ satisfies $T\reg^{3} \to \infty$  (such as e.g.,  $\reg = T^{-1/3 + \epsilon}$ or $T^{-1/3} \log^{\epsilon} T$ for small $\epsilon>0$) as long as $\rho >2$ as assumed in Assumption \ref{assum3}; of course, $\reg$ decaying at a slower rate can also be considered in practice, without affecting consistency. Nevertheless, this naive selection of $\tau$ may be particularly advantageous for practitioners with little knowledge of these eigenvalues.

\subsection{Statistical inference based on local asymptotic normality}\label{sec:est1b}
Given that $\langle \beta_h, \zeta\rangle$ can naturally be interpreted as the response to a perturbation applied to $X_t$, as in \eqref{def: local: irf}, and in some special cases, such as in Proposition \ref{prop: svar: identification: a}, it can further be interpreted as $\IRF_{12,h}(\zeta)$, it is of our interest to conduct statistical inference on this quantity. More generally,  we consider inference on \( \langle \theta_h, \zeta \rangle \) for any \( \zeta \in \elltwo \). To clarify the perturbation \( \zeta \) in the subsequent discussion and avoid potential confusion regarding the space in which \( \zeta \) takes values, we let  
\begin{equation}  \label{eqzetadecom}
\zeta =\left[\begin{matrix}\zeta_{1}\\ \zeta_{2}\end{matrix}\right] \in \widetilde{\mathcal H},  \quad\text{where}\quad \zeta_1 \in \mathbb{R}^m \quad\text{and}\quad \zeta_2 \in \mathcal H .
\end{equation}  
By setting \( \zeta_1 = 0 \), inference on \( \langle \theta_h, \zeta \rangle \) reduces to inference on \( \langle \beta_h, \zeta_2 \rangle \) for \( \zeta_2 \in \mathcal H \). Meanwhile, if $\zeta_2 = 0$, it reduces to  inference on $\langle \alpha_h, \zeta_1\rangle \equiv \alpha_h ' \zeta_1$.
Let $\hat{u}_{h,t}= y_{t+h}- \langle \Upsilon_t, \hat{\theta}_h\rangle$ and $\mathrm{k}(\cdot)$ be a standard weight function to be specified shortly. Then, under Assumption \ref{assum2}, 
we define the following long-run covariance operator and its sample counterpart (see  \citealt[Theorem 2]{berkes2013weak}):   below, ${\UU}_{h,t} = {u}_{h,t}\Upsilon_{t}$ and $\hat{\UU}_{h,t} = \hat{u}_{h,t}\Upsilon_{t}$.
\begin{equation*} 
\Lambda_{\UU} =\sum_{s=-\infty}^\infty \mathbb{E}[\UU_{h,t} \otimes \UU_{h,t-s}]   \quad	\text{and}\quad		\widehat{\Lambda}_{\UU} =  \frac{1}{T} \sum_{s=-\bdw}^{\bdw}\mathrm{k}\left(\frac{s}{\bdw}\right)\left( \sum_{1 \leq t, t-s \leq T} \hat{\UU}_{h,t}  \otimes \hat{\UU}_{h,t-s} \right).
\end{equation*} 
We also define the quantity $c_{m,j}$ as
\begin{equation} \label{eqcm}
	c_{m,j}(\zeta) =  \lambdatw_j^{-1}\langle \zeta,\vtw_j \rangle\bigg/\sqrt{\sum_{j=1}^{m} \lambdatw_j^{-2}\langle \zeta,\vtw_j \rangle^2} ,
\end{equation}
where   $\{  \lambdatw_j \}$ and  $ \{ \vtw_j\}$ denote the eigenvalues and eigenvectors of $C_{\Upsilon\Upsilon}$, which are generally different from the eigenelements of $\SA$. The quantity satisfies $\sum_{j=1}^{m} c_{m,j}^2(\zeta) = 1$ for every $m$ unless $\langle \zeta,\vtw_j \rangle \neq 0$ for at least one $j$. 

To establish local asymptotic normality of $\hat\theta_h$, we employ the following assumption:
\begin{assumFLS} \label{assum4} 
	\begin{enumerate*}[(i)]
		\item \label{assum4a}  $\bdw=\bdw(T) \to \infty$, $\bdw(T)/T \to 0$ and  $\mathrm{k}(\cdot)$ is an even function with $\mathrm{k}(0)=1$, $\mathrm{k}(s) = 0$ if $|s| > c$ for some $c>0$ and $\mathrm{k}(\cdot)$ is continuous on $[-c,c]$;
\item \label{assum4aa}  $\zeta \notin \ker C_{\Upsilon\Upsilon}$ and $\Lambda_{\UU} \vtw_j \neq 0$ for $\vtw_j$ corresponding to $\lambdatw_j >0$;
\item  \label{assum4bb} $\sum_{j=1}^{m} \sum_{\ell=1}^{m} c_{m,j}(\zeta) c_{m,\ell}(\zeta) \langle \Lambda_{\UU}\vtw_j,\vtw_{\ell} \rangle \to \CC>0$ as $m\to \infty$; \item  \label{assum4cc}  $\sup_{1\leq t\leq T} \|\Upsilon_t\| = O_p(1)$.
\end{enumerate*}
\end{assumFLS}
Assumption \ref{assum4}\ref{assum4a} is adopted from \cite{horvath2013estimation} and imposed for convenience in our proof of consistency of $\widehat{\Lambda}_{\UU}$. Assumption \ref{assum4}\ref{assum4aa} is imposed to ensure nondegenerate convergence rate in our asymptotic normality result; in the special case where $\Lambda_{\UU} = \sigma_u^2 C_{\Upsilon\Upsilon}$ for some $\sigma_u^2>0$,\footnote{This can happen when $\{u_t\}$ is a homogeneous martingale difference with respect to the  filtration $\mathfrak F_t=\sigma(\{u_{s}\}_{s\leq t-1},\{\Upsilon_{s}\}_{s\leq t})$ as in the case considered by \cite{seong2021functional}}  the former condition $\zeta \notin \ker C_{\Upsilon\Upsilon}$ implies the latter, and hence Assumption \ref{assum4}\ref{assum4aa} can be simplified. Assumptions~\ref{assum4}\ref{assum4bb} and \ref{assum4}\ref{assum4cc} are technical requirements facilitating our asymptotic analysis, and are not quite restrictive. In particular, as $\Lambda_{\UU}$ is a covariance operator (\citealt[Theorem 1.7]{Bosq2000}),  the quantity in Assumption~\ref{assum4}\ref{assum4bb} converges to a nonnegative constant for every $\zeta \in \elltwo$. Thus, what we require of $\zeta$ is solely to ensure the positivity of the limit. Assumption \ref{assum4}\ref{assum4cc} is similar to standard assumptions found in the literature (see  \citealt[Theorems 2.12-2.14]{Bosq2000}).


We  employ the following assumption: below, we let $X_{w,t} = X_t-\CCC_{21}\CCC_{11}^{-1}\mathbf{w}_t$, $r_t(j,\ell) = \langle X_t,v_j \rangle \langle X_{w,t},v_{\ell} \rangle - \mathbb E[\langle X_t,v_j \rangle \langle X_{w,t},v_{\ell} \rangle]$ for $j,\ell \geq 1$, and  $\CC$ be a generic positive constant.
\begin{assumFLS} \label{assum5}
\begin{enumerate*}[(i)]
\item\label{assum5a} $\mathbb{E}[\langle X_t,v_j \rangle^4] \leq \CC \lambda_j^2$, $\mathbb{E}[\langle X_{w,t},v_j \rangle^4] \leq \CC \lambda_j^2$, and  for some $\tilde{\CC} > 1$ and $s\geq 1$, $\mathbb{E}[r_t(j,\ell)r_{t-s}(j,\ell)]\leq \CC s^{-\tilde{\CC}}\mathbb{E}[r_t^2(j,\ell)]$;
\item\label{assum5b} for some $\delta>1/2$,  $|\langle \zeta_2,v_j \rangle| \leq \CC j^{-\delta}$ and $|\langle \zeta_1, {\CCC}_{11}^{-1}{\CCC}_{12} v_j \rangle| \leq \CC j^{-\delta}$.
\end{enumerate*}
\end{assumFLS}

A similar but slightly different assumption can be found in  \cite{seong2021functional} and references therein.  In particular, Assumption~\ref{assum5}\ref{assum5b} pertains to the smoothness of $\zeta$; similar to Assumption~\ref{assum3}\ref{assum3b},  it is natural to consider $\delta >1/2$ as $\zeta_2\in\mathcal H$ and $ {\CCC}_{11}^{-1}{\CCC}_{12}$ is a bounded linear functional.\footnote{There exists unique element $h \in \mathcal H$ such that ${\CCC}_{11}^{-1}{\CCC}_{12}v_j = \langle h, v_j \rangle$ by the Riesz representation theorem (see  \citealp{Conway1994}, p.\ 13), and $\sum_{j=1}^\infty\|{\CCC}_{11}^{-1}{\CCC}_{12}v_j\|^2 < \infty$ as $\{v_j\}$ is an orthonormal basis in $\mathcal H$.} 

Lastly, we let $\widehat{P}_{\KK}=\widehat{C}_{\Upsilon\Upsilon,\KK}^{-1} \widehat{C}_{\Upsilon\Upsilon} $ and $P_{\KK} =  {C}_{\Upsilon\Upsilon,\KK}^{-1}  {C}_{\Upsilon\Upsilon}$, where  ${C}_{\Upsilon\Upsilon,\KK}^{-1}$ is defined by replacing $\hat\CCC_{ij}$ and  $\hat\SA_{\KK}^{-1}$ in \eqref{eqreginv} with $\CCC_{ij}$ and $\SA_{\KK}^{-1}$ respectively. Then, $\widehat{P}_{\KK}  $ and $P_{\KK}  $ allow  the following representations:
\begin{equation}\label{eqprojections}
\widehat{P}_{\KK} 
= \begin{bmatrix}
I_1 & \widehat{\CCC}_{11}^{-1}\widehat{\CCC}_{12}(I_2-\widehat{\Pi}_{\KK}) \\ 0 & \widehat{\Pi}_{\KK}
\end{bmatrix}\quad\text{ and }\quad   {P}_{\KK}   
= \begin{bmatrix}
I_1 & {\CCC}_{11}^{-1}{\CCC}_{12}(I_2-{\Pi}_{\KK}) \\ 0 & {\Pi}_{\KK}
\end{bmatrix},
\end{equation} 
where $I_1$ (resp.\ $I_2$) is the identity map on $\mathbb{R}^{m}$ (resp.\ $\mathcal H$) and
\begin{equation}
\widehat{\Pi}_{\KK}=\sum_{j=1}^{\KK} \hat{v}_j \otimes \hat{v}_j\quad\text{and}\quad  \Pi_{\KK}=\sum_{j=1}^{\KK} v_j \otimes v_j. \nonumber
\end{equation}
That is, $\Pi_{\KK}$ is the projection onto the span of $\{v_j\}_{j=1}^{\KK}$ and $\widehat{\Pi}_{\KK}$ is its sample counterpart. 
Using $\widehat{P}_{\KK}$ and ${P}_{\KK}$, we decompose $\langle \hat{\theta}_h-\theta_h,\zeta\rangle$ into $ \widehat{\Theta}_1+\widehat{\Theta}_{2A}+\widehat{\Theta}_{2B}$ such that
\begin{equation} \label{eqdecomtheta}
\widehat{\Theta}_1 =  \langle \hat{\theta}_h-\widehat{P}_{\KK}\theta_h, \zeta \rangle,\quad \widehat{\Theta}_{2A}=  \langle\widehat{P}_{\KK}\theta_h - P_{\KK}{\theta}_h,\zeta\rangle, \text{ and } \widehat{\Theta}_{2B}=  \langle {P}_{\KK}\theta_h - {\theta}_h,\zeta\rangle. 
\end{equation}

\begin{theorem} \label{thm2}  Suppose that Assumptions \ref{assum1}-\ref{assum4} hold and  $T\reg^{2+4/\rho} \to \infty$.    Then the following holds:
\begin{enumerate}[(i)]
\item\label{thm2i0} $\sqrt{{T}/{\psi_{\KK}(\zeta)}}\widehat{\Theta}_1 \to_d N(0,1)$ for $	{\psi}_{\KK}(\zeta) = \langle {\Lambda}_{\UU}{C}_{\Upsilon\Upsilon,\KK}^{-1}\zeta,{C}_{\Upsilon\Upsilon,\KK}^{-1}\zeta \rangle$.	 
\end{enumerate} 

If Assumption \ref{assum5} is additionally satisfied with $\rho/2 + 2 < \varsigma+ \delta$, 
the following hold:
\begin{enumerate}[(i)]\addtocounter{enumi}{1}
\item\label{thm2i1}  If $\psi_{\KK}(\zeta) \to_p \infty$, $
\sqrt{{T}/{\psi_{\KK}(\zeta)}}\widehat{\Theta}_{2A} \to_p 0$.
\item \label{thm2i2} If $\psi_{\KK}(\zeta) \to_p \infty$ and $T^{1/2}\reg^{(\delta+\varsigma-1)/\rho} \to 0$, 
$\sqrt{{T}/{\psi_{\KK}(\zeta)}}\widehat{\Theta}_{2B} \to_p 0.$

\end{enumerate}
The results  in \ref{thm2i0}-\ref{thm2i2} hold when $\psi_{\KK}(\zeta)$ is replaced by $\widehat{\psi}_{\KK}(\zeta) = \langle  \widehat{\Lambda}_{\UU} \widehat{C}_{\Upsilon\Upsilon,\KK}^{-1}\zeta, \widehat{C}_{\Upsilon\Upsilon,\KK}^{-1}\zeta \rangle$. 
\end{theorem}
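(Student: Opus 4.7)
The plan is to analyze the three components of the decomposition \eqref{eqdecomtheta} separately. Starting from the moment identity $\widehat{C}_{\Upsilon y} = \widehat{C}_{\Upsilon\Upsilon}\theta_h + \widehat{C}_{\Upsilon u}$ and the definition of $\hat\theta_h$, one obtains $\hat\theta_h - \widehat{P}_{\KK}\theta_h = \widehat{C}_{\Upsilon\Upsilon,\KK}^{-1}\widehat{C}_{\Upsilon u}$, and hence by self-adjointness of $\widehat{C}_{\Upsilon\Upsilon,\KK}^{-1}$,
\begin{equation*}
\widehat{\Theta}_1 = \bigl\langle \widehat{C}_{\Upsilon u}, \widehat{C}_{\Upsilon\Upsilon,\KK}^{-1}\zeta\bigr\rangle = T^{-1}\sum_{t=1}^T \bigl\langle \UU_{h,t}, \widehat{C}_{\Upsilon\Upsilon,\KK}^{-1}\zeta\bigr\rangle.
\end{equation*}
The block representations in \eqref{eqprojections} express $\widehat{\Theta}_{2A}$ and $\widehat{\Theta}_{2B}$ explicitly through $\widehat{\Pi}_{\KK}$, $\Pi_{\KK}$, and the block maps, which I will exploit term by term.

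For part \ref{thm2i0}, I would first replace $\widehat{C}_{\Upsilon\Upsilon,\KK}^{-1}$ by its population counterpart $C_{\Upsilon\Upsilon,\KK}^{-1}$ inside $\widehat{\Theta}_1$. The residual error is controlled via the operator-matrix identity \eqref{eqreginv}, Assumption~\ref{assum2}\ref{assum2c}, and standard eigenprojection perturbation bounds applied to $\widehat{\SA}-\SA$ (using $\|\widehat{\SA}-\SA\|_{\op}=O_p(T^{-1/2})$ together with the spectral-gap condition in Assumption~\ref{assum3}\ref{assum3a}); the hypothesis $T\reg^{2+4/\rho}\to\infty$, which is strictly stronger than the Theorem~\ref{thm1} condition, is precisely what makes the residual $o_p(\sqrt{\psi_{\KK}(\zeta)/T})$. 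Once the inverse is deterministic, define $\xi_{t,\KK} = \psi_{\KK}(\zeta)^{-1/2}\langle C_{\Upsilon\Upsilon,\KK}^{-1}\zeta, \UU_{h,t}\rangle$, which is a scalar triangular array whose long-run variance equals $1$ by construction. Expanding $C_{\Upsilon\Upsilon,\KK}^{-1}\zeta$ in the eigenbasis $\{\vtw_j\}$ of $C_{\Upsilon\Upsilon}$, the weights reduce to the quantities $c_{\KK,j}(\zeta)$ defined in \eqref{eqcm}, so that Assumption~\ref{assum4}\ref{assum4bb} yields the covariance convergence needed for asymptotic normality. A CLT for $L^4$-$m$-approximable sequences in the spirit of \cite{berkes2013weak} then delivers $T^{-1/2}\sum_t \xi_{t,\KK}\to_d N(0,1)$, with the Lindeberg condition verified through Assumption~\ref{assum4}\ref{assum4cc} and the $L^4$ control in Assumption~\ref{assum2}\ref{assum2b}.

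For part \ref{thm2i1}, the representation \eqref{eqprojections} reduces $\widehat{\Theta}_{2A}$ to inner products against $\widehat{\Pi}_{\KK}-\Pi_{\KK}$ and $\widehat{\CCC}_{11}^{-1}\widehat{\CCC}_{12}-\CCC_{11}^{-1}\CCC_{12}$. The first piece is controlled by standard eigenprojection perturbation bounds, yielding an $O_p(T^{-1/2}\reg^{-1-2/\rho})$ rate that becomes $o_p(\sqrt{\psi_{\KK}(\zeta)/T})$ under $\psi_{\KK}(\zeta)\to_p\infty$ and $T\reg^{2+4/\rho}\to\infty$; the second is handled directly via Assumption~\ref{assum2}\ref{assum2c}. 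For part \ref{thm2i2}, \eqref{eqprojections} gives
\begin{equation*}
\widehat{\Theta}_{2B} = \bigl\langle \zeta_1, \CCC_{11}^{-1}\CCC_{12}(I_2-\Pi_{\KK})\beta_h\bigr\rangle - \bigl\langle \zeta_2, (I_2-\Pi_{\KK})\beta_h\bigr\rangle,
\end{equation*}
and each inner product is a tail series $\sum_{j>\KK}\langle \beta_h,v_j\rangle(\cdot)_j$ in the $\{v_j\}$ basis whose magnitude, by Assumption~\ref{assum3}\ref{assum3b} together with Assumption~\ref{assum5}\ref{assum5b} and $\KK\asymp\reg^{-1/\rho}$, is $O(\KK^{1-\varsigma-\delta})=O(\reg^{(\varsigma+\delta-1)/\rho})$. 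The condition $T^{1/2}\reg^{(\delta+\varsigma-1)/\rho}\to 0$ is exactly what eliminates this bias after multiplication by $\sqrt{T/\psi_{\KK}(\zeta)}$, while $\rho/2+2<\varsigma+\delta$ guarantees that all rate constraints on $\reg$ are simultaneously feasible.

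Finally, the replacement of $\psi_{\KK}(\zeta)$ by $\widehat{\psi}_{\KK}(\zeta)$ reduces to showing $\widehat{\psi}_{\KK}(\zeta)/\psi_{\KK}(\zeta)\to_p 1$, which I would prove by combining a kernel long-run covariance consistency result $\|\widehat{\Lambda}_{\UU}-\Lambda_{\UU}\|_{\op}=o_p(1)$ (based on Assumption~\ref{assum4}\ref{assum4a} and the Theorem~\ref{thm1} consistency of $\hat u_{h,t}$) with the same $\widehat{C}_{\Upsilon\Upsilon,\KK}^{-1}$--$C_{\Upsilon\Upsilon,\KK}^{-1}$ perturbation bound already used in part \ref{thm2i0}. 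The main obstacle I expect is the CLT step in part \ref{thm2i0}: because $\KK=\KK(T)$ grows and $\psi_{\KK}(\zeta)$ typically diverges, the CLT has to be handled on a genuine triangular array whose \emph{effective dimension} tends to infinity, and the replacement of $\widehat{C}_{\Upsilon\Upsilon,\KK}^{-1}$ by $C_{\Upsilon\Upsilon,\KK}^{-1}$ must remain negligible on the shrinking scale $(\psi_{\KK}(\zeta)T)^{-1/2}$. Verifying the covariance convergence of Assumption~\ref{assum4}\ref{assum4bb} and the Lindeberg-type condition in concert with this inverse-replacement step is where the most delicate bookkeeping will lie.
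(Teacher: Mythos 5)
Your decomposition, your treatment of $\widehat{\Theta}_{2B}$, and your plan for replacing $\psi_{\KK}(\zeta)$ by $\widehat{\psi}_{\KK}(\zeta)$ all track the paper's strategy, but two steps as written would not go through.

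The most concrete failure is in part (ii). You control $\langle(\widehat{\Pi}_{\KK}-\Pi_{\KK})\beta_h,\zeta_2\rangle$ by the operator-norm perturbation rate $O_p(T^{-1/2}\reg^{-1-2/\rho})$ and claim this becomes $o_p(\sqrt{\psi_{\KK}(\zeta)/T})$ because $\psi_{\KK}(\zeta)\to_p\infty$. After multiplying by $\sqrt{T/\psi_{\KK}(\zeta)}$ you are left with $\reg^{-1-2/\rho}/\sqrt{\psi_{\KK}(\zeta)}$, and $\psi_{\KK}(\zeta)$ is only known to diverge --- possibly arbitrarily slowly, since it is bounded above by $O_p(\reg^{-1})$ (Remark \ref{remadd2}) --- so this ratio need not vanish. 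This is precisely why Assumption \ref{assum5} and the condition $\rho/2+2<\varsigma+\delta$ are imposed for parts (ii)--(iii): the paper expands the difference into three sums over $j\leq\KK$ and uses the coordinatewise bounds $\langle(\widehat{\SA}-\SA)v_j,v_\ell\rangle=O_p(T^{-1/2})\lambda_j^{1/2}\lambda_\ell^{1/2}$, hence $\langle\hat{v}_j-v_j,\beta_h\rangle=O_p(T^{-1/2})j^{\rho/2+1-\varsigma}$ and $\langle\hat{v}_j-v_j,\zeta_2\rangle=O_p(T^{-1/2})j^{\rho/2+1-\delta}$, so that each sum is $O_p(1)$ and the whole term is $O_p(1/\sqrt{\psi_{\KK}(\zeta)})=o_p(1)$. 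Your proposal never invokes Assumption \ref{assum5} in part (ii), and without the refined bounds the claim is false at the stated level of generality; the analogous refinement is also needed for the $\CCC_{11}^{-1}\CCC_{12}(\widehat{\Pi}_{\KK}-\Pi_{\KK})\beta_h$ piece.

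The second gap is the CLT step in part (i), which you flag but do not resolve. After replacing $\widehat{C}_{\Upsilon\Upsilon,\KK}^{-1}$ by $C_{\Upsilon\Upsilon,\KK}^{-1}$ you want a scalar CLT for a triangular array obtained by projecting the dependent sequence $\UU_{h,t}$ onto a direction $C_{\Upsilon\Upsilon,\KK}^{-1}\zeta/\sqrt{\psi_{\KK}(\zeta)}$ that drifts with $T$; a Lindeberg-type verification for $m$-approximable arrays along drifting directions is not an off-the-shelf result. The paper sidesteps this entirely: it applies the functional CLT of \cite{berkes2013weak} together with the Skorokhod representation to produce, for each $T$, an exactly Gaussian element $V_T$ with law $N(0,\Lambda_{\UU})$ satisfying $\|T^{-1/2}\sum_{t}u_{h,t}\Upsilon_t-V_T\|\to_p 0$, so that $\langle V_T,\zeta_{\KK}\rangle$ is exactly $N(0,1)$ for every $T$ by the normalization, and the coupling error is shown negligible by lower-bounding $|\langle V_T,\zeta_{\KK}/\|\zeta_{\KK}\|\rangle|$ via Assumptions \ref{assum4}\ref{assum4aa}--\ref{assum4bb}. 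You should either adopt this coupling device or supply a genuine triangular-array CLT; as proposed, the normality claim rests on an unestablished limit theorem.
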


The quantity $\psi_{\KK}(\zeta)$ in Theorem~\ref{thm2} plays an important role as a normalizing factor in our asymptotic normality result. 
With an obvious adaptation of the discussion given by \citet[Remarks 4 and 11]{seong2021functional}, we know that it is convergent only on a strict subspace of $\elltwo$, and ${\psi}_{\KK}(\zeta)$ is likely to diverge (at a rate slower than $T$; see Remark \ref{remadd2}) for $\zeta$ arbitrarily chosen by practitioners.

Similar to Theorem \ref{thm1}, the decay rate of the regularization parameter in Theorem~\ref{thm2} depends on $\rho$, and a larger value of $\rho$ leads to a faster convergence. Note that $\reg$ satisfying $T\reg^{4} \to \infty$ (e.g.,  $\reg = T^{-1/4 + \epsilon}$ or $T^{-1/4} \log^{\epsilon} T$ for small $\epsilon>0$) meets the requirement for Theorem~\ref{thm2}\ref{thm2i0}, provided that $\rho >2$ as assumed in Assumption~\ref{assum3}. Such a choice of $\reg$ may be preferred for practitioners. Of course, as in Theorem~\ref{thm1}, we may also consider $\reg$ decaying to zero at a slower rate, but this is not recommended due to the condition $T^{1/2}\reg^{(\delta+\varsigma-1)/\rho}  \to 0$ for the result given in Theorem~\ref{thm2}\ref{thm2i2}. This condition can be understood as requiring sufficiently large $\varsigma$ and $\delta$ for a given $\reg$; moreover, as the decay rate of $\reg$ decreases, stricter requirements are imposed on $\varsigma$ and $\delta$.   

\begin{remark} \label{remadd2} 
The normalizing factor $\sqrt{{T}/{\psi_{\KK}(\zeta)}}$ can diverge at a different rate depending on the choice of $\zeta$, and the factor is not stochastically bounded for any choice of $\zeta$. From \eqref{addeqrem} and Assumption \ref{assum3}, it is straightforward to see that  ${\psi_{\KK}(\zeta)}$ is bounded above by $O_p(\KK^{\rho})$, which is in turn $O_p(\reg^{-1})$ as shown in our proof of Theorem \ref{thm1}. Since $T\tau \to \infty$, this implies that ${\psi_{\KK}(\zeta)}/T$ must decay to zero (i.e., $\sqrt{{T}/{\psi_{\KK}(\zeta)}}$ is divergent) regardless of the choice of $\zeta$.   
\end{remark}

\begin{remark}\normalfont
If $\reg = T^{-1/4} \log^{\epsilon} T$ for  $\epsilon>0$, then $T^{1/2}\reg^{(\delta+\varsigma-1)/\rho}
=T^{(2\rho+1-\delta-\varsigma)/(4\rho)}\times$ $(\log T)^{\epsilon(\delta+\varsigma-1)/\rho}$. Since $\varsigma>1/2$ and $\delta>1/2$ under the employed assumptions, either (i) $\delta+\varsigma > 2\rho+1$ or (ii) $\varsigma \geq 2\rho + 1/2$ ensures the condition that $T^{1/2}\reg^{(\delta+\varsigma-1)/\rho} =o(1)$. In case (ii), Theorems \ref{thm2}\ref{thm2i1}-\ref{thm2i2} hold for any $\delta > 1/2$.  More generally, if $\reg = T^{-\rho/(2\rho+4)} \log^{\epsilon}T$, it can be shown that the aforementioned condition is satisfied under either of (i)$'$ $\delta+\varsigma>\rho+3$  or (ii)$'$ $\varsigma \geq \rho + 5/2$. 
Given that $\rho>2$, (i)$'$ (resp.\ (ii)$'$) is weaker than (i) (resp.\ (ii)), but they are nearly identical if $\rho$ is close to $2$. 
\end{remark}


If $\zeta$ is restricted to have a nonzero element only in $\mathbb{R}^m$ (i.e., $\zeta_2 = 0 \in \mathcal{H}$) and $\psi_{\KK}(\zeta) \to_p  C_{\zeta} <\infty$, it can be shown that $\langle \hat{\theta}_h - \theta_h, \zeta \rangle$ converges in distribution to a normal random variable at the rate of $\sqrt{T}$. However, we here emphasize that the former condition, $\zeta_2 = 0$, does not guarantee the latter condition, $\psi_{\KK}(\zeta) \to_p  C_{\zeta}$. Thus, it is common to observe a nonparametric rate of convergence even for the parameters associated with the finite dimensional random variables in this setup (see Remark~\ref{remmas} for more details).

\begin{corollary} \label{coradd}
Suppose that Assumptions \ref{assum1}-\ref{assum5} hold, $\psi_{\KK}(\zeta) \to_p  C_{\zeta} < \infty$, and $\zeta = \left[\begin{smallmatrix}\zeta_1\\0\end{smallmatrix}\right]$ where $\zeta_1 \in \mathbb{R}^m$ and $0\in \mathcal H$. Then,
\begin{equation}
\sqrt{T}\langle \hat{\theta}_h-\theta_h,\zeta\rangle =  \sqrt{T} \langle \hat \alpha_h - \alpha_h, \zeta_1 \rangle  \to_d N(0, C_{\zeta}).      
\end{equation}
\end{corollary}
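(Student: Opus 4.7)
The plan is to use the three-term decomposition $\langle \hat\theta_h - \theta_h, \zeta\rangle = \widehat\Theta_1 + \widehat\Theta_{2A} + \widehat\Theta_{2B}$ from \eqref{eqdecomtheta} and establish, under the hypothesis $\psi_{\KK}(\zeta) \to_p C_\zeta<\infty$ with $\zeta=\left[\begin{smallmatrix}\zeta_1\\ 0\end{smallmatrix}\right]$, that the leading stochastic term $\widehat\Theta_1$ is asymptotically normal at the parametric rate $\sqrt{T}$ with variance $C_\zeta$, while both bias terms $\widehat\Theta_{2A}$ and $\widehat\Theta_{2B}$ are $o_p(T^{-1/2})$. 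Slutsky's theorem then delivers the conclusion.

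For the leading term, Theorem \ref{thm2}\ref{thm2i0} gives $\sqrt{T/\psi_{\KK}(\zeta)}\,\widehat\Theta_1 \to_d N(0,1)$. Writing $\sqrt{T}\,\widehat\Theta_1 = \sqrt{\psi_{\KK}(\zeta)}\cdot\sqrt{T/\psi_{\KK}(\zeta)}\,\widehat\Theta_1$ and applying the continuous mapping theorem together with Slutsky and $\psi_{\KK}(\zeta)\to_p C_\zeta$ yields $\sqrt{T}\,\widehat\Theta_1 \to_d N(0,C_\zeta)$.

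For $\widehat\Theta_{2B}=\langle (P_{\KK}-I)\theta_h,\zeta\rangle$, the block form of $P_{\KK}$ in \eqref{eqprojections} combined with $\zeta_2=0$ collapses the expression to
\begin{equation*}
\widehat\Theta_{2B} = \langle (I_2-\Pi_{\KK})\beta_h,\; \CCC_{21}\CCC_{11}^{-1}\zeta_1\rangle = \sum_{j>\KK}\langle \beta_h,v_j\rangle\,\langle \zeta_1,\CCC_{11}^{-1}\CCC_{12}v_j\rangle,
\end{equation*}
where the adjoint identity $\langle v_j,\CCC_{21}\CCC_{11}^{-1}\zeta_1\rangle=\langle \zeta_1,\CCC_{11}^{-1}\CCC_{12}v_j\rangle$ was used. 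Assumptions \ref{assum3}\ref{assum3b} and \ref{assum5}\ref{assum5b} then give $|\widehat\Theta_{2B}|\le \CC\sum_{j>\KK}j^{-(\varsigma+\delta)}=O(\KK^{-(\varsigma+\delta-1)})=O_p(\reg^{(\varsigma+\delta-1)/\rho})$, using $\KK\asymp \reg^{-1/\rho}$ from Assumption \ref{assum3}\ref{assum3a}. The rate condition $T^{1/2}\reg^{(\varsigma+\delta-1)/\rho}\to 0$ (already imposed in Theorem \ref{thm2}\ref{thm2i2}) then delivers $\sqrt{T}\,\widehat\Theta_{2B}\to_p 0$. For $\widehat\Theta_{2A}=\langle(\widehat P_{\KK}-P_{\KK})\theta_h,\zeta\rangle$, the same $\zeta_2=0$ simplification and an add-and-subtract manipulation reduce the task to bounding
\begin{equation*}
\langle (I_2-\widehat\Pi_{\KK})\beta_h,\; (\hat\CCC_{21}\hat\CCC_{11}^{-1}-\CCC_{21}\CCC_{11}^{-1})\zeta_1\rangle + \langle (\Pi_{\KK}-\widehat\Pi_{\KK})\beta_h,\; \CCC_{21}\CCC_{11}^{-1}\zeta_1\rangle.
\end{equation*}
The first summand is $O_p(T^{-1/2}\KK^{-(\varsigma-1/2)})=o_p(T^{-1/2})$ by Assumptions \ref{assum2}\ref{assum2c} and \ref{assum3}\ref{assum3b}. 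For the second, I would expand $(\widehat\Pi_{\KK}-\Pi_{\KK})\beta_h$ using standard eigenvector perturbation bounds $\|\hat v_j-v_j\|=O_p(j^{\rho+1}T^{-1/2})$ (from Assumption \ref{assum3}\ref{assum3a} and $\|\widehat\SA-\SA\|_{\op}=O_p(T^{-1/2})$), pair each contribution with the Assumption \ref{assum5}\ref{assum5b} decay of $\langle v_j,\CCC_{21}\CCC_{11}^{-1}\zeta_1\rangle$, and verify that the resulting sum is $o_p(T^{-1/2})$ under the rate restrictions already invoked for Theorem \ref{thm2}.

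The delicate step will be bounding $\widehat\Theta_{2A}$ in absolute terms rather than relative to $\sqrt{\psi_{\KK}(\zeta)/T}$ as in Theorem \ref{thm2}\ref{thm2i1}. The crucial structural fact making this possible is that with $\zeta_2=0$, the contributions that would ordinarily pass through the lower-right block $\widehat\SA_{\KK}^{-1}$ of the regularized inverse are eliminated, and what remains acts only through the $\CCC_{11}^{-1}\CCC_{12}$ adjoint against $\zeta_1$, whose coefficients decay at rate $j^{-\delta}$ by Assumption \ref{assum5}\ref{assum5b}. This extra finite-dimensional smoothness supplied by the direction $\zeta_1$ is what permits the eigenvector-perturbation errors to telescope to $o_p(T^{-1/2})$ without invoking the $\sqrt{\psi_{\KK}(\zeta)}$ normalization used in Theorem \ref{thm2}.
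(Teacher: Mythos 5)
Your proposal is correct and follows essentially the same route as the paper: the leading term is handled by multiplying the Theorem \ref{thm2}\ref{thm2i0} statement by $\sqrt{\psi_{\KK}(\zeta)}\to_p\sqrt{C_\zeta}$ and applying Slutsky, while the bias terms collapse (because $\zeta_2=0$) to expressions involving only $\CCC_{11}^{-1}\CCC_{12}(I_2-\widehat\Pi_{\KK})\beta_h$, which the paper likewise bounds by combining the eigenvector-perturbation rates $\|\hat v_j-v_j\|=O_p(j^{2}T^{-1/2})$ with the $j^{-\delta}$ decay of $\langle\zeta_1,\CCC_{11}^{-1}\CCC_{12}v_j\rangle$, the condition $\rho/2+2<\varsigma+\delta$, and $T^{1/2}\reg^{(\varsigma+\delta-1)/\rho}\to 0$. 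The only cosmetic difference is the grouping: the paper treats $\widehat\Theta_{2A}+\widehat\Theta_{2B}$ jointly as $F_1+(F_3+F_4+F_5)$, whereas you bound the two pieces separately, which amounts to the same estimates.
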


\begin{remark} \label{remmas}
As deduced from our results in Theorem \ref{thm2} and the discussions provided by \cite{Mas2007}, the convergence rate of the proposed estimator depends on the choice of $\zeta$ and is generally slower than $\sqrt{T}$. Moreover,  even when we concern the coefficients associated with finite dimensional elements, the parametric $\sqrt{T}$-convergence rate of $\widehat\alpha_h$ requires $\lim_{\KK \to \infty} \psi_{\KK}(\zeta) < \infty$, which, in turn, necessitates  sufficient smoothness of  $\zeta$   with respect to the eigenvectors ${\vtw_j}$ of ${C}_{\Upsilon\Upsilon}$.
A similar condition can be found in the literature  on partially linear functional regression models, including   \cite{AP2006} and \citet[Theorem 3.1 and eqn.\ (20)]{SHIN2009}. Given its nontrivial nature, this condition reflects the cost of implementing inference in a model involving functional variables.
\end{remark}


\section{Empirical Application \label{sec:emp}}

\subsection{US Economic Sentiment Quantile Curves\label{sec:emp1}}
In this section, we study how economic variables respond to random perturbations introduced to economic sentiment distribution in the US. To measure sentiment, we adopt \citepos{Barbaglia2023} data, which quantify sentiment on specific economic subjects from sentences published in major US news articles. The sentiment measure is defined by each token (i.e., word) and its neighboring sentences, and includes both intensity and tone for each subject. We refer interested readers to \cite{Barbaglia2023} for a detailed definition of the sentiment measure.

Among others, we particularly focus on the daily sentiment measure on the \textsl{economy}, and use it to construct monthly quantile sentiment curves.  The quantile curves are represented by 31 Fourier basis functions, which become our functional predictor, and are reported in Figure~\ref{fig: 1}. As mentioned earlier in Section~\ref{sec: model} and also by \cite{Barbaglia2023}, the sentiment  quantile and its associated distribution seem to be closely related to business cycle fluctuations: during recessive periods, the overall quantiles tend to shift downward and exhibit a steeper slope coefficient, indicating  a larger dispersion in the sentiment distribution. This could be interpreted as a higher level of disagreement about economic sentiment during recessive periods.

The dependent variable $y_t$ is specified to monthly Total Nonfarm Payroll (PAYEMS) in the US. The data are provided by the Federal Reserve Bank.   We follow \cite{Barbaglia2023} and let    $\mathbf{w}_t$ in \eqref{eq: model: benchmark} be the vector consisting of the Chicago Fed National Activity Index (CFNAI), the National Financial Conditions Index (NFCI), and \citepos*{ADS2009} measure of economic activity (ADS). Because NFCI and ADS are observed at a higher frequency, we average them over months to match the frequency of the target variable.  The sample runs from January 1984 to December 2021, with a total size of 456 observations.

The regularization parameter of our estimator is chosen by roughly considering $\rho$ satisfying Assumption~\ref{assum3}.   Specifically, if $\mathtt{C}=1$, Assumption~\ref{assum3}\ref{assum3b} tells us that  $\rho \geq \rho^\ast  \equiv -(\log (\lambda_j ^2  - \lambda_{j+1} ^2) /\log j)-1$. Thus, we set $\widetilde \rho$ to $\lceil 100 \rho ^\ast \rceil/100$, where $\lceil \cdot \rceil$ is the ceiling function. Then the regularization parameter is set to $0.01\Vert  \widehat C_{\Upsilon\Upsilon}\Vert_{\HS}  T^{-\widetilde \rho/(\widetilde \rho+2)}$, where $\Vert \cdot \Vert_{\HS}$ denotes the Hilbert-Schmidt norm. This approach may provide  practical insights into the magnitude of $\rho$, thereby allowing us to choose $\reg$ without violating the assumption. We investigate the finite-sample performance of our estimator computed with this regularization parameter in Section~\ref{sec:sim}.

\begin{figure}[h!]
\caption{ Responses of Growth in Total Nonfarm Payroll to Sentiment Shocks} \label{fig: emp-1} 
\includegraphics[width = \textwidth, height= 0.2\textheight]{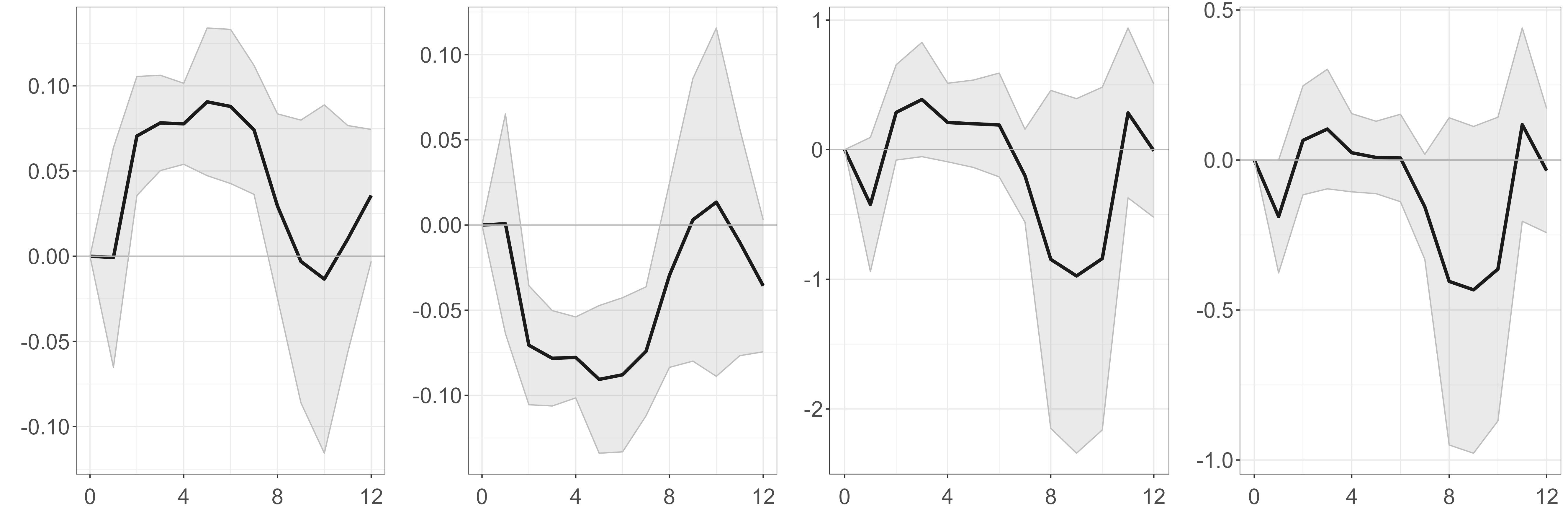}  
\flushleft{\scriptsize{Notes: The figure reports the impulse response estimates when the shocks are specified to $\zeta_1, \zeta_2$, $\zeta_3$, and $\zeta_4$ in Figure~\ref{fig: 2} respectively. The shocks are normalized to a unit norm before analysis. The solid lines are reporting the pointwise estimates and grey areas report their 90\% confidence interval.}}
\end{figure}

We first estimate the impulse responses of PAYEMS to shocks given to the sentiment distribution using our benchmark model in \eqref{eq: model: benchmark} for $ h \in\{1,2,\ldots,12\}$, without a structural interpretation.  We let $\zeta$ be the shocks considered in Figure \ref{fig: 2}, with their norms normalized to one. The normalization is taken to ensure reasonable comparison between their effects. The estimation results are reported in Figure~\ref{fig: emp-1}, where solid lines indicate $\{ \langle \beta_h, \zeta \rangle \}_{h=1}^{12}$ in \eqref{eq: sirf} and the shaded area represents their 90\% confidence intervals computed using the local asymptotic normality result in Theorem~\ref{thm2}. 
As  mentioned earlier, the first two shocks are associated with positive and negative shifts in the sentiment distribution. Meanwhile, in the last two columns, we consider cases where sentiment about the economy exhibits greater dispersion, with and without a negative shift, respectively. The responses in Figure~\ref{fig: emp-1} have different vertical scales, which may require caution in their interpretation.

Figure~\ref{fig: emp-1} suggests that the effects of the distributional shocks tend to reach their peak after 6 to 8 months, and their magnitude tends to decrease as the forecasting horizon increases. 
In particular, when the average  sentiment about the economy becomes more optimistic (resp.\ pessimistic), payroll growth appears to significantly increase (resp.\ decrease), especially during the first six months. The change in disagreement levels in sentiment appears to have a smaller effect on PAYEMS growth compared to locational shifts in the distribution.

Lastly, we study if our estimation approach produces any improvement compared to the existing estimators, by using the empirical median absolute prediction error (MAPE). The MAPEs are computed based on a rolling window with two different test sets, selected to include approximately 60\% and 70\% of the observations from the total sample. In addition to ours, denoted by SCInv, we consider two alternative estimators for comparison: PCA-FR and PCA-SVAR. The PCA-FR is obtained by applying \citepos{SHIN2009} estimation approach to our benchmark model. The PCA-SVAR is computed using the standard estimation methods developed for the recursive SVAR model in a finite dimensional setting, after pre-applying the FPCA-based dimension reduction to $X_t$.

\begin{table}[tbp]
\caption{Estimated Median Absolute Prediction Errors \label{tab: emp-1}} 
\vskip -8pt
\small \centering
\begin{tabular*}{\linewidth}{@{\extracolsep{\fill}}lcccccccccc}
\toprule 
& \multicolumn{5}{c}{2006 Q4$\sim$}  & \multicolumn{5}{c}{2010 Q3$\sim$}  \\ \cmidrule{2-6}\cmidrule{7-11}
h & 1 & 2 & 3 & 4 & 5 & 1 & 2 & 3 & 4 & 5 \\ \midrule  
SCInv & 0.055 & 0.060 & 0.067 & 0.067 & 0.063 & 0.054 & 0.058 & 0.054 & 0.059 & 0.053 \\ 
PCA-FR & 0.055 & 0.060 & 0.067 & 0.066 & 0.062 & 0.054 & 0.057 & 0.054 & 0.059 & 0.053 \\ 
PCA-SVAR &  0.060 & 0.065 & 0.061 & 0.069 & 0.072 & 0.056 & 0.064 & 0.059 & 0.058 & 0.056 \\ 
\bottomrule
\end{tabular*}
\flushleft{\scriptsize{Notes: The table reports the empirical median absolute prediction errors computed using our estimator (SCInv) and two FPCA-based estimators: PCA-FR and PCA-SVAR. PCA-FR is obtained by applying \citepos{SHIN2009} estimator to our benchmark model. PCA-SVAR is derived by applying the SVAR model to the vector consisting of $\mathbf w_t$		and $X_t$, with its dimension reduced using the FPCA approach. The regularization parameters are set to choose the first two score functions for all the estimators. }}
\end{table}
Table~\ref{tab: emp-1} summarizes MAPEs. To mitigate a potential effect of different choices of regularization parameters on  estimation results, we fix $\tau$ so that all the estimators select the first two score functions, regardless of the forecasting horizon. This choice is based on empirical data, which suggests that approximately 99\% of the variations are  explained by the first two scores. Overall, it seems that our estimator outperforms the PCA-SVAR in terms of smaller MAPEs, and the superior performance is more significantly observed as the forecasting horizon increases. An interesting observation is that the PCA-FR approach also outperforms the PCA-SVAR, although both estimators are computed with the same empirical eigenelements, and thus the PCA-FR can be understood as a single-equation estimation of SVAR models. This may be attributed to the fact that, in the PCA-SVAR estimation, the sentiment quantile curve serves as both the target (i.e., dependent) and prediction (i.e., explanatory) variables, whereas it is only used as a predictor in the PCA-FR approach. Therefore, the PCA-SVAR estimator is likely to be subject to a larger bias associated with \textsl{double truncation}. Meanwhile, in this particular example, the two functional approaches, the SCInv and the PCA-FR, produce similar MAPEs regardless of forecasting horizons.

\subsection{Impact of Functional Monetary Policy Shocks \label{sec:emp2}}
In this section, we employ \citepos{IR2021} functional monetary policy shocks and study the monetary policy impact on the inflation growth by using the proposed estimator. The data span the period from January 1995 to June 2016 and the functional shock is reported in Figure~\ref{fig: ir: 1}. Under Assumption~I in \cite{IR2021}, the slope coefficient of the shock ($\beta_h$ in our notation) can be  understood as the impulse response to monetary policy shocks, with the effect represented by a function. In this section, we overall follow \cite{IR2021}. The control vector $\mathbf w_t$ is given by the set of the first two lagged dependent variables.  $y_t$ is given by the inflation growth rate. The impulse response $\{ \langle \beta_h,\zeta\rangle \}_{h=1}^{20}$ is estimated with $\zeta$ corresponding to shocks observed on three specific dates of interest identified by the authors:   9/1998,   2/1999, and 1/2007. Note that the shocks and the changes in yield curves induced by them in the current study may differ slightly from those in \cite{IR2021} because  we do not represent them into level and curvature components.

\begin{figure}[h!]
\centering
\caption{Inflation Response to Functional Shocks in \cite{IR2021}}\label{fig: emp: ir}
\begin{subfigure}{.32\textwidth}\caption{Change in Sep 1998} 
\includegraphics[width =  \textwidth ]{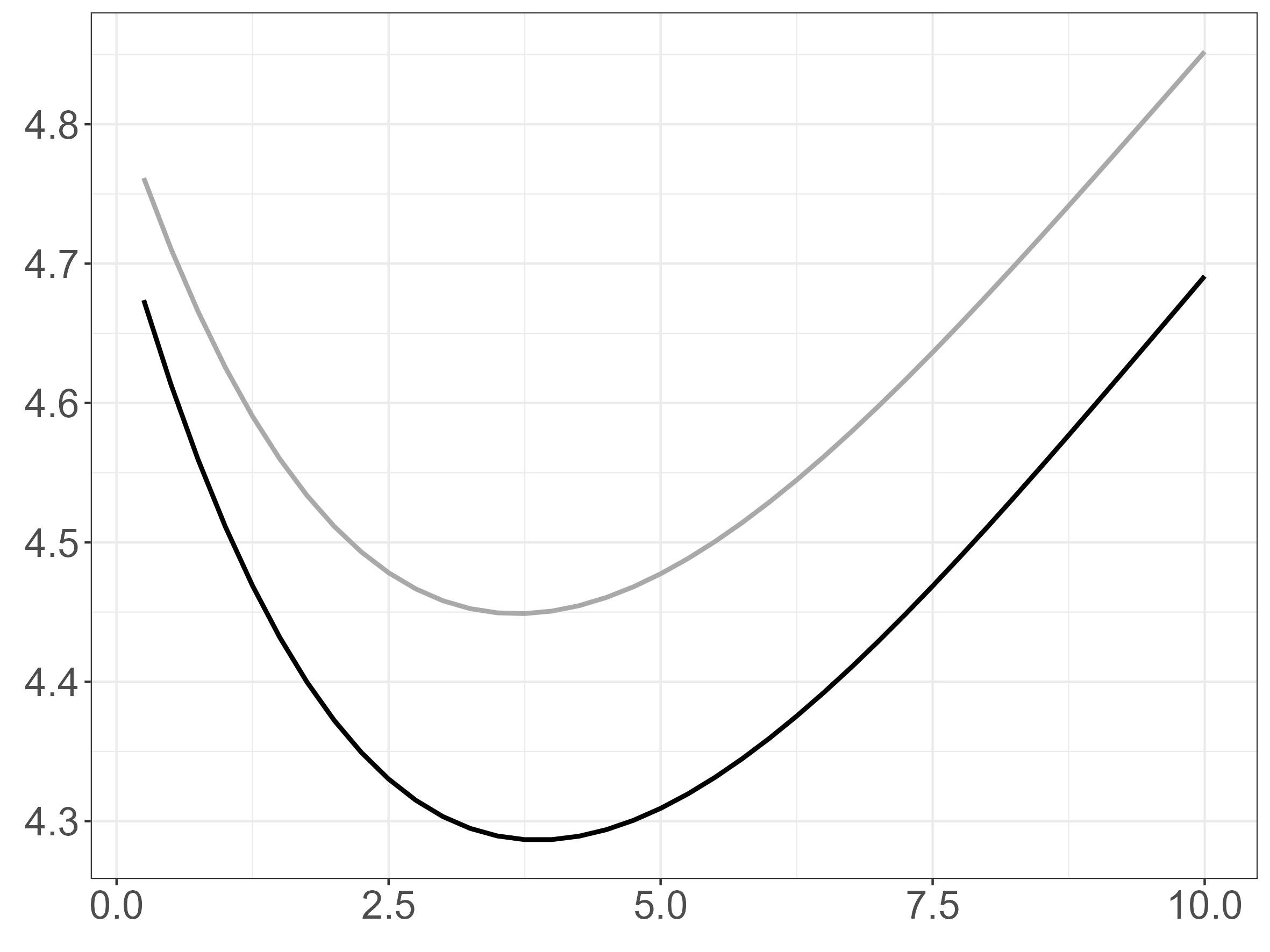} 
\end{subfigure}    
\begin{subfigure}{.32\textwidth}\caption{Change in Feb 1999}
\includegraphics[width = \textwidth ]{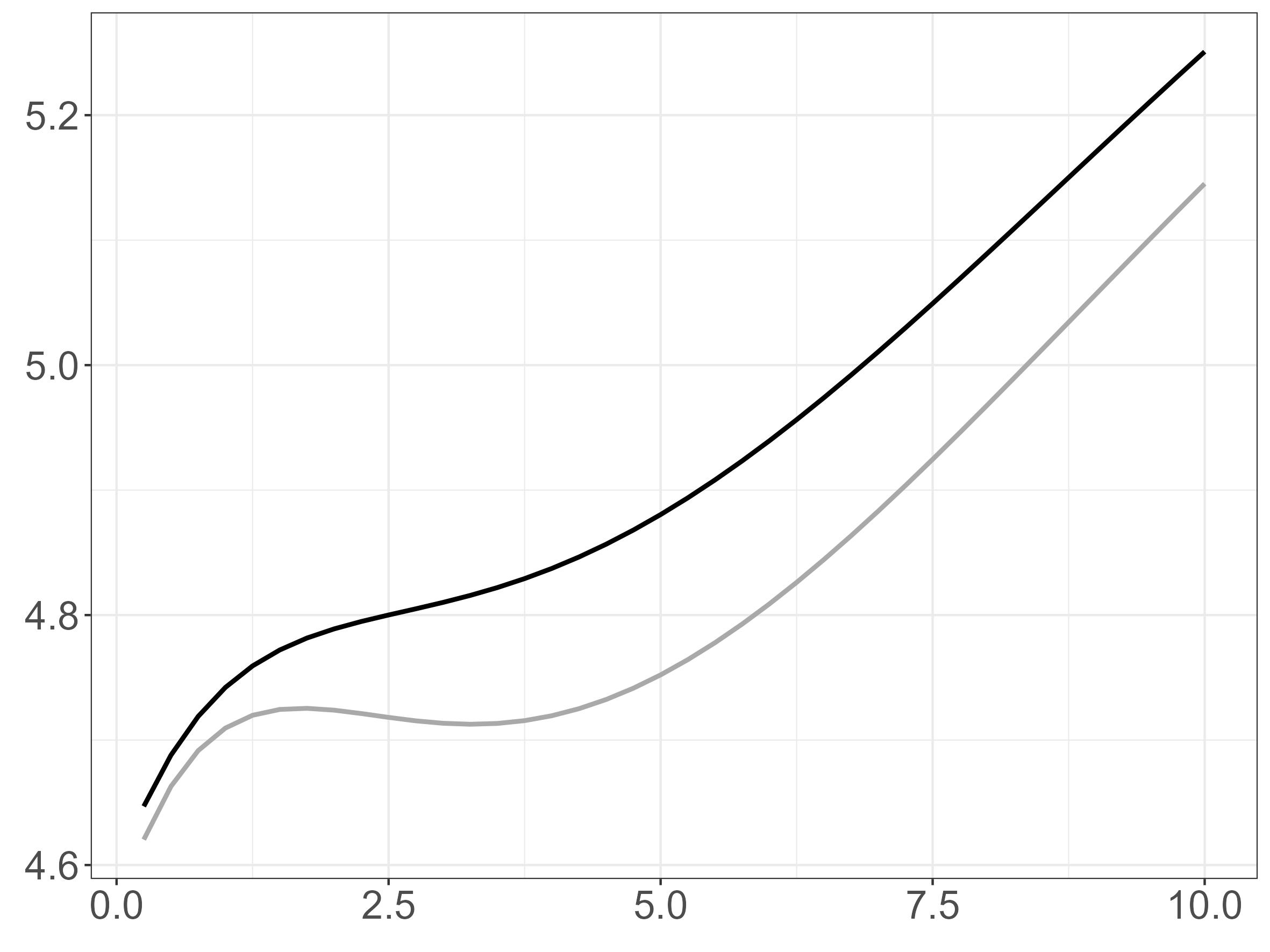} 
\end{subfigure}    
\begin{subfigure}{.32\textwidth}\caption{Change in Jan 2007}
\includegraphics[width = \textwidth ]{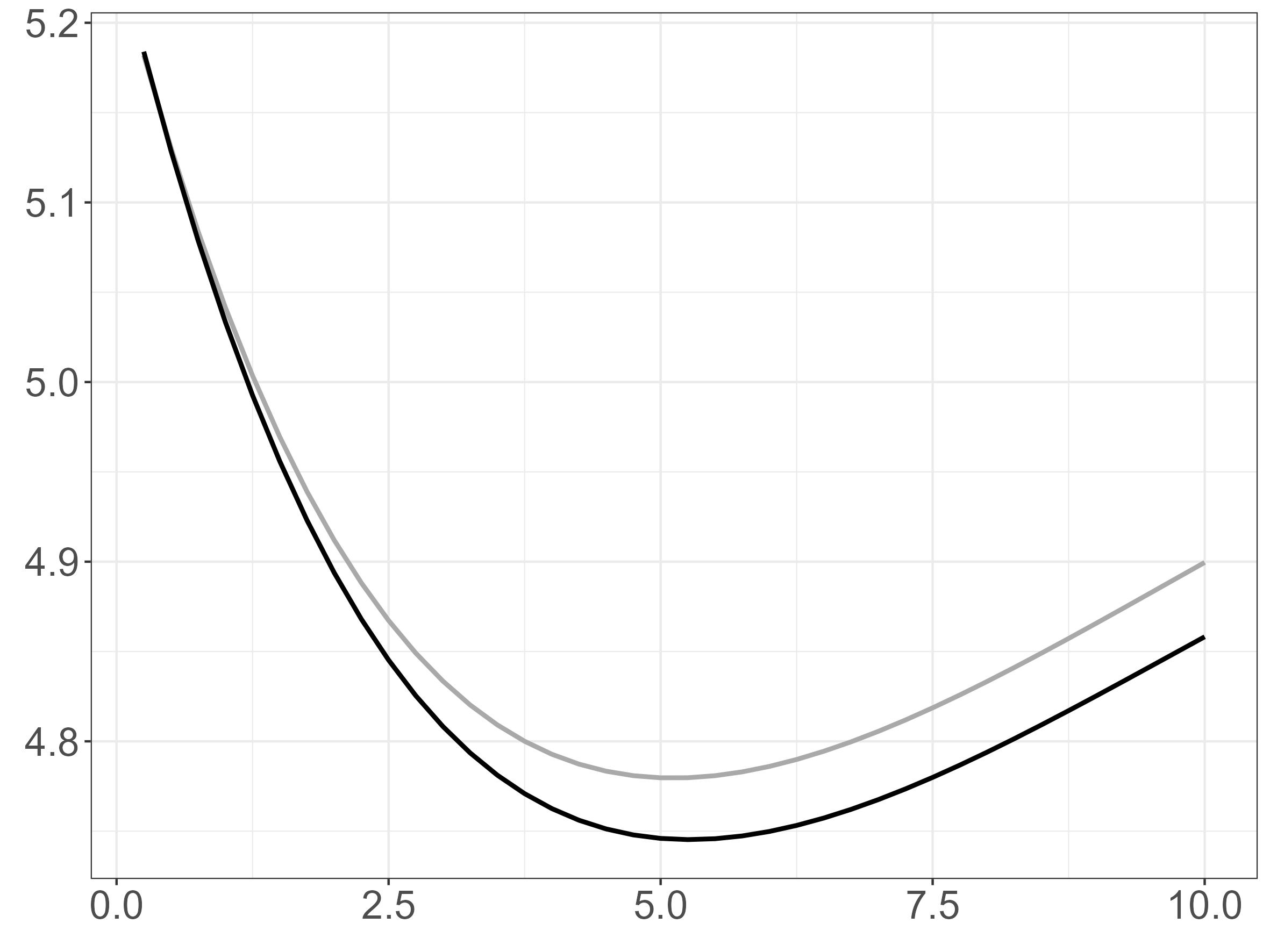} 
\end{subfigure}    
\begin{subfigure}{.32\textwidth}\caption{Inflation IRF (Sep 1998)}\label{figIRd}
\includegraphics[width = \textwidth ]{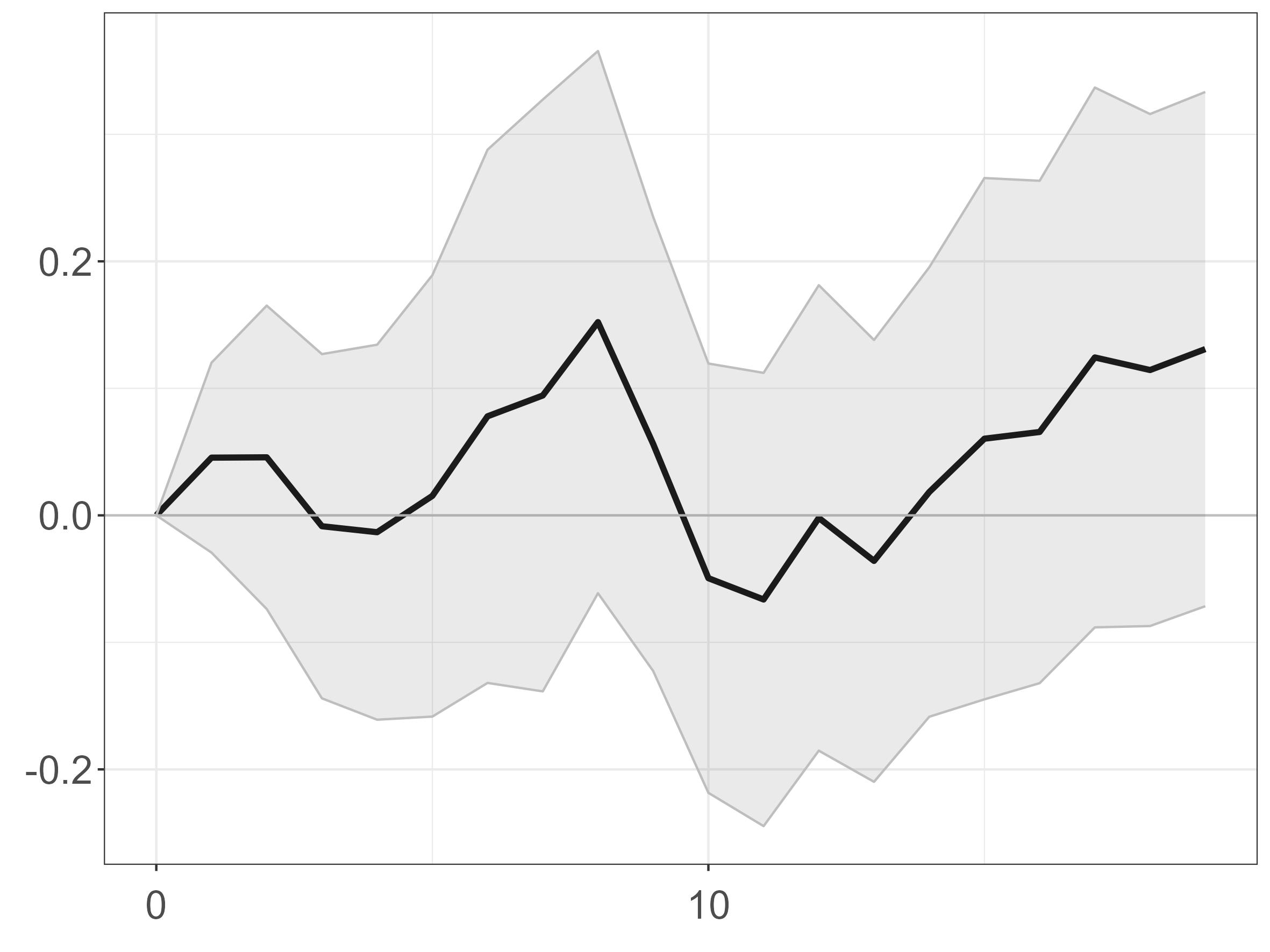} 
\end{subfigure}   
\begin{subfigure}{.32\textwidth}\caption{Inflation IRF (Feb 1999)}\label{figIRe}
\includegraphics[width = \textwidth ]{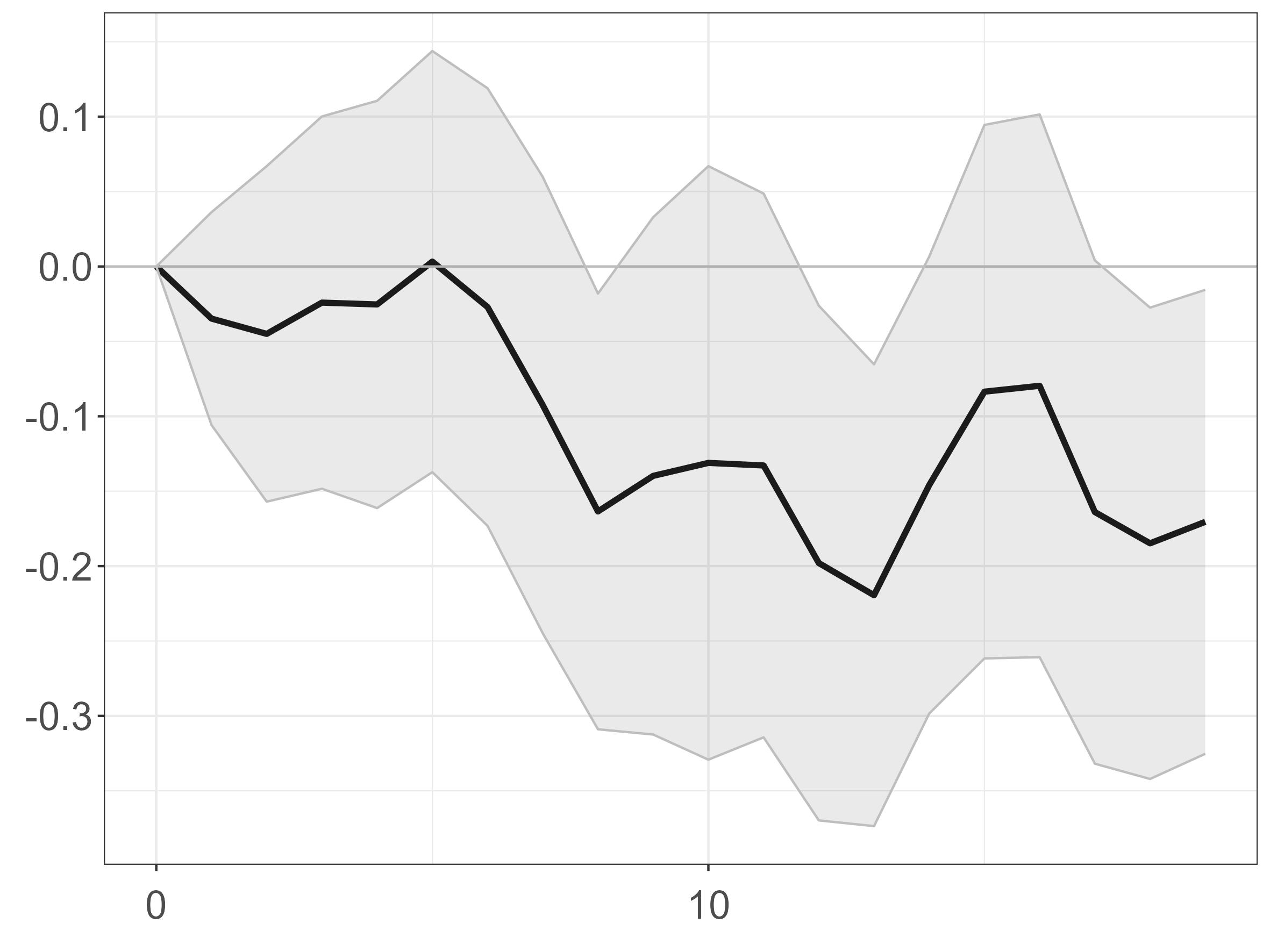} 
\end{subfigure} 
\begin{subfigure}{.32\textwidth}\caption{Inflation IRF (Jan 2007)}\label{figIRf}
\includegraphics[width = \textwidth ]{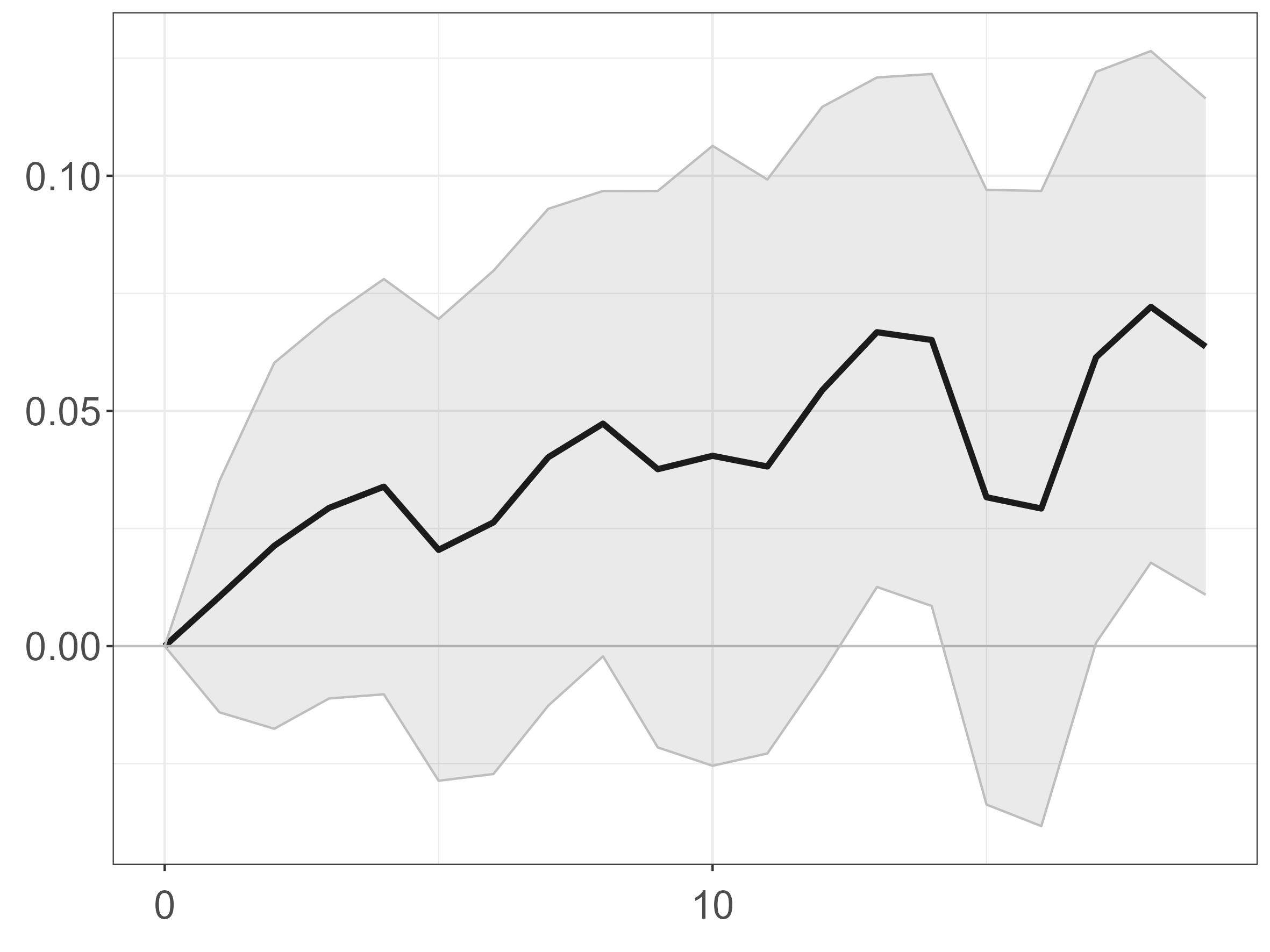} 
\end{subfigure}    
\flushleft{\scriptsize{Notes: The figure reports the change in the yield curves induced by shocks in 9/1998, 2/1999, and 1/2007, specified in \cite{IR2021}, and inflation response estimates. The top panel shows the yield curves before (grey) and after (black) the shock is introduced. In the bottom panel, the solid lines represent the pointwise estimates, while the dotted gray lines indicate their 90\% confidence interval.}}
\label{fig:enter-label}
\end{figure}

The estimated impulse responses and shifts in the yield curve caused by the shocks are reported in Figure~\ref{fig: emp: ir}. 
The first observation is that the response seems to depend on the shape of the shock. For example, in September 1998 and January 2007, the yield curves experience decreases in overall maturities. Their effects reported in Figures~\ref{figIRd} and \ref{figIRf} align with economic theory, which suggests that decreases in interest rates for all maturities lead to higher inflation growth. On the other hand, if the shock raises  interest rates in general (e.g., February 1999), inflation tends to respond negatively, with the effect becoming more pronounced over a longer horizon.  Although such a similar observation can be found in \cite{IR2021},  it is worth noting that  the results in Figure~\ref{fig: emp: ir} are not directly comparable with those reported in \cite{IR2021}. This is because of the nonparametric nature of our estimation approach, which does not require a parametric restriction on either the parameter of interest or the functional variables. Therefore, our estimator should be understood as a complement to the estimator considered by the authors.

\section{Simulation\label{sec:sim}}
In this section, we study  finite sample performance of estimators studied in Section~\ref{sec:emp} using Monte Carlo experiments. Throughout the section, we consider the case with $T=250$ and $T=500$, and 
the total number of replications is set to 1,000.

\subsection{Experiment A:  Linear projection  DGP\label{sec:simA}}
In our simulation, the variable $X_t $ is designed to mimic the economic sentiment quantile functions  in Section~\ref{sec:emp1}. To this end,  let $X_t =  \sum_{j=1} ^{31} x_{j,t}\xi_j$ where $\xi_j$ denotes the $j$-th eigenvector of $X_t$'s variance and the $j$-th coordinate process $x_{j,t} $ is given by $ \langle X_t, \xi_j\rangle$. Assume that $x_{j,t}$ follows AR(1) such that \begin{equation*}
x_{j,t} = {\alpha}_{j} ^x x_{j,t-1} + c_{e,j} { \sigma}_{j} e_{j,t}, 
\end{equation*} for $j=1, \ldots, 31$, where $e_{j,t}\sim_{iid} N(0,1)$. The parameter values $\alpha_j ^x$ and $\sigma_j$ are set to the estimates obtained from the economic sentiment quantile curves,  with $c_{e,j}$ set to one for all $j$. Later in the section, we keep or exaggerate the variance of each coordinate process by setting $c_{e,j} = c_{e,-1} 1\{ j\neq 1\} + 1\{j=1\}$ and considering two different values of $c_{e,-1}$: (i) $c_{e,-1} = 1$ and (ii) $c_{e,-1} = 4$. In the latter case, the ordered eigenvalues (from largest to smallest) are scaled up, except for the first one, without altering their order. Hence, $\lambda_j$ exhibits a significantly slower diminishing tendency for $j$ that is not large, compared to the former case.

We assume $\{ y_t,  X_t \}_{t=1} ^T$ follows the DGP such that  
\begin{equation*}
y_{t+h} =  \alpha_{h} y_{t} +    \int_0 ^1  X_t(s) \beta_h (s) ds  + c_u \sigma_{u,h} u_t, \quad \text{for }h=1,\ldots, H,
\end{equation*}
where $u_{t}\sim_{iid} N(0,1)$ and $\beta_h(\cdot) = \sum_{j=1} ^{J} \beta_{j,h} \xi_j(\cdot)$. In our empirical data, the first two coordinate processes ($x_{1,t}$ and $x_{2,t}$) explain more than 99\% of the variations. Thus, for each $h$, the parameters $\{\alpha_h, \beta_{1,h},  \beta_{2,h}, \sigma_{u,h}\}$ are replaced by the estimates from the empirical data when $c_u=1$ and $J=2$. In the simulation, we increase $J$ to 31 and let the remaining coefficients $\{\beta_{j+2,h}\}_{j=1} ^{29}$ be given by $ \min \{ |\beta_{1,h}|, |\beta_{2,h}| \} \times 0.7^j$. This simulation design allows us to avoid the inverse problem in obtaining the  coefficient estimates while keeping most variations in $X_t$, so that realizations from the DGP can mimic the empirical data. In the simulation, the constant $c_u$ is set to 0.5. We note that the coefficients in this section should not be interpreted as the coefficient estimates in Section~\ref{sec:emp1}.

We consider three estimators: SCInv, SCInv$_1$ and PCA-FR. 
SCInv and SCInv$_1$ are differentiated in the choice of the regularization parameter. SCInv is computed as in Section~\ref{sec:emp}. The second estimator SCInv$_1$ and also \citepos{SHIN2009} PCA-FR are computed by using the AIC criterion detailed in \cite{SHIN2009}. 

\begin{table}[tbp]
\caption{ Relative Bias and Variance Estimates (Experiment 1) } \label{tab: simA-1}
\vskip -8pt
\small
\begin{tabular*}{\linewidth}{@{\extracolsep{\fill}}lllccccccccc}
\toprule       &  &  & \multicolumn{3}{c}{$h= 1$} &     \multicolumn{3}{c}{ $h=3$} &    \multicolumn{3}{c}{$h= 5$}      \\\cmidrule{4-6}\cmidrule{7-9}\cmidrule{10-12}  
T & $c_{e,-1}$  &  & SCInv &SCInv$_1$ & PCA-FR & SCInv & SCInv$_1$ & PCA-FR & SCInv & SCInv$_1$ & PCA-FR \\ \midrule
\multirow{4}{*}{250} & \multirow{2}{*}{4} & Bias & 1.00 & 1.07 & 1.07 & 1.00 & 2.03 & 2.03 & 1.00 & 1.53 & 1.53 \\ 
&  & Var & 1.00 & 1.80 & 1.79 & 1.00 & 2.15 & 2.15 & 1.00 & 0.30 & 0.30 \\ \cmidrule{2-12}
& \multirow{2}{*}{1} & Bias & 1.00 & 3.13 & 3.13 & 1.00 & 4.44 & 4.44 & 1.00 & 1.48 & 1.48 \\ 
&  & Var & 1.00 & 1.83 & 1.83 & 1.00 & 1.24 & 1.24 & 1.00 & 0.23 & 0.23 \\ \midrule
\multirow{4}{*}{500} & \multirow{2}{*}{4} & Bias & 1.00 & 1.11 & 1.11 & 1.00 & 1.20 & 1.20 & 1.00 & 1.64 & 1.64 \\ 
&  & Var & 1.00 & 0.90 & 0.90 & 1.00 & 1.68 & 1.68 & 1.00 & 0.27 & 0.27 \\ \cmidrule{2-12}
& \multirow{2}{*}{1} & Bias & 1.00 & 1.82 & 1.82 & 1.00 & 4.27 & 4.27 & 1.00 & 1.49 & 1.49 \\ 
&  & Var & 1.00 & 2.57 & 2.57 & 1.00 & 2.11 & 2.11 & 1.00 & 0.15 & 0.15 \\ 
\bottomrule
\end{tabular*}
\flushleft{\scriptsize{Notes: Based on 1,000 replications. The table reports the $L_2$ norm of the bias (Bias) and the variance (Var) relative to those of SCInv. As $h$ increases, the estimators exhibit smaller bias and variance, which is expected, as the function values of $\beta_h$, computed from empirical data, tend to diminish in scale as $t$ increases.}}
\end{table}

\begin{table}[h!]
\caption{Coverage Probability (Experiment 1) } \label{tab: simA-2}
\vskip -8pt
\small
\begin{tabular*}{\linewidth}{@{\extracolsep{\fill}}lcccccc}
\toprule      &      \multicolumn{3}{c}{$T= 250$} &     \multicolumn{3}{c}{ $T=500$}        \\\cmidrule{2-4}\cmidrule{5-7}  
$c_{e,-1} \backslash h $  & 1&3&5&1&3&5   \\ \midrule  
4 & 0.95 & 0.93 & 0.93 & 0.95 & 0.94 & 0.94 \\  
1 & 0.94 & 0.93 & 0.93 & 0.95 & 0.94 & 0.94 \\  
\bottomrule
\end{tabular*}
\flushleft{\scriptsize{Notes: Based on 1,000 replications. The 95\% coverage probabilities are computed using SCInv and Theorem~\ref{thm2}. }}
\end{table}

The estimation results are summarized in Table~\ref{tab: simA-1} for three forecasting horizons: $h=1,3,5$. We report the bias (Bias) and variance (Var) of each estimator relative to those computed using SCInv. Overall, the proposed estimator based on the naive choice of $\rho$ (SCInv) tends to produce the smallest bias and variance, particularly when $c_{e,-1}$ is large, which is related to the relative decay rate of $\lambda_j$. 
Meanwhile, the two estimators based on the same regularization parameter, SCInv$_{1}$ and PCA-FR, perform similarly. 


We then study Theorem~\ref{thm2} by the means of local asymptotic normality to study the impulse responses $\{\langle \beta_h, \zeta \rangle; h=1,3,5 \}$, where $\zeta$ is set to a constant function for simplicity. Table~\ref{tab: simA-2} reports 95\% coverage probabilities computed with SCInv for each forecasting horizon. Overall, the coverage probability is very close to the nominal level in both sample sizes. As discussed in Section~\ref{sec:est}, to the best of the authors' knowledge, the assumptions on $\rho$, $\varsigma$ and $\delta$ in Theorem~\ref{thm2} are not directly testable in practice, and thus the asymptotic bias terms $\widehat \Theta_{2A}$ and $\widehat \Theta_{2B}$ may not be asymptotically negligible. Nevertheless, the simulation results reported in Table~\ref{tab: simA-2} suggest that those asymptotic bias would be small and thus do not distort testing results based on the asymptotic approximation in Theorem~\ref{thm2}. 

Figure~\ref{fig: simA-1} reports the pointwise estimates of the functional coefficient at $h=1$ and the impulse response estimates for $h=1,\ldots, 6$ when $c_{e,-1}=1$. The simulation results for $c_{e,-1}=4$ are similar; thus, we omit the figures to save space. The reported coefficient estimates are obtained by averaging the estimates across 1,000 simulations.  Consistent with the observations in the tables, our estimator produces both functional coefficient estimates and the pointwise impulse response estimates close to the true values, demonstrating the practical applicability of our approach.

\begin{figure}
\centering\caption{   $\beta_1$ and $\{\langle \zeta, \beta_h\rangle \}_{h=1} ^{6}$ Estimates (Experiment 1; $T=250$) \label{fig: simA-1}}
\begin{subfigure}{.45\linewidth}\subcaption{Functional coefficient estimates at $h=1$\label{fig: simA-1a}}
\includegraphics[width = \textwidth]{figure/Estcurve\_Lsigfac100lnfac2Ctmp60Cexp70ModelEx.jpeg}
\end{subfigure}
\begin{subfigure}{.45\linewidth} \subcaption{Impulse response estimates for $h=1,\ldots , 6$.\label{fig: simA-1b}}
\includegraphics[width = \textwidth]{figure/IRF\_Lsigfac100lnfac2Ctmp60Cexp70ModelSC.jpeg}
\end{subfigure}
\flushleft{\scriptsize{Notes: Averaged across 1,000 replications. Figure~\ref{fig: simA-1a} reports the functional coefficient estimate for $h=1$ computed with SCInv, SCInv1 and PCA-FR. Figure~\ref{fig: simA-1b} reports $\langle \zeta, \beta_h \rangle $ (blue), its pointwise estimates based on the SCInv (dashed) and its 90\% confidence interval (shaded area) for $h=1,\ldots, 6$, when $\zeta$ is specified to the constant function. $c_{e,-1} = 1$.  }}
\end{figure}

\subsection{Experiment B: SVAR-based DGP\label{sec:simB}}
Lastly, we examine the performance of our estimator as an estimator of the SIRF suggested in Propositions~\ref{prop: svar: identification} and \ref{prop: svar: identification: a}, using a small-scale Monte Carlo simulation based on the following SVAR(1) process:\begin{align*}
y_t &= \alpha_{11} y_{t-1} + \alpha_{12} x_{1,t-1} +  \alpha_{13} x_{2,t-1} +    u_{1,t},\\
x_{1,t} & = \beta_{1} y_{t} + \alpha_{21} y_{t-1} + \alpha_{22} x_{1,t-1} +  \alpha_{23} x_{2,t-1} +  u_{2,t},\\
x_{2,t} & = \beta_{2} y_{t} + \alpha_{31} y_{t-1} + \alpha_{32} x_{1,t-1} +  \alpha_{33} x_{2,t-1} + u_{3,t}, 
\end{align*} where $(u_{1,t}, u_{2,t}, u_{3,t} )' \sim_{iid} \mathcal N(0,   \text{diag}(\sigma_1^2, \sigma_2^2, \sigma_3^2))$ and $x_{j,t}$ denotes the $j$-th coordinate process of economic sentiment quantiles. The functional predictor $X_t$ is assumed to be generated as follows. \begin{equation}
X_t =  \sum_{j=1} ^{2} \mathbf{x}_{t, 2 }\xi_j  + \sum_{j= 3} ^{31}  u_{j+1, t} \xi_j.
\end{equation} 
where $  u_{j+1, t}  \sim_{iid}  N( {0},   \sigma_j ^2  ) $ across $j$ and $t$, and $\sigma_j ^2 = \sigma_3 0.8^j$ for $j = 3, \ldots, 31$, and $\xi_j$ are the eigenvectors of $X_t$'s covariance. The parameters and  $\{\xi_j\}_{j=1} ^{31}$  are estimated from empirical data as in Section \ref{sec:simA}. As before, this setup is designed to keep the structure of $X_t$ whose largest variations are determined by the first two coordinate processes. Then, in the simulation, we replace the variance structure of $\mathbf {u}_t=  (u_{1,t},u_{2,t}, \ldots , u_{32,t})'$ with $c_{\ast}\text{diag}(c_1 \sigma_1^2,   \sigma_2^2,\sigma_3^2, \ldots, \sigma_{32}^2 )$, where $c_{\ast}$ is the normalizing constant that allows us to keep the norm of the variance of $\mathbf {u}_t$ to be equal to one. The other constant $c_1$ determines the relative magnitude of the structural error associated with $y_t$, and thus it is inversely related to the signal-to-noise ratio. We consider three values of $c_1$: 1, 0.5, and 0.2. Along with our  estimators, the SCInv and the SCInv$_{1}$,  we consider  the PCA-SVAR for comparison. As the PCA-SVAR produces the same estimation results with the PCA-FR in this setup, we omit the latter.

\begin{table}[tbp]
\caption{Relative Bias and Variance Estimates (Experiment 2) \label{tab: simB-1}} 
\vskip -8pt
\small
\begin{tabular*}{\linewidth}{@{\extracolsep{\fill}}llccccccccc}
\toprule 
& $c_{1}$  & \multicolumn{3}{c}{ 1}   & \multicolumn{3}{c}{ 0.5} &   \multicolumn{3}{c}{ 0.2}   \\\cmidrule{3-5}\cmidrule{6-8}\cmidrule{9-11}
T &  &   SCInv & SCInv$_{1}$ & SVAR & SCInv & SCInv$_{1}$ & SVAR & SCInv & SCInv$_{1}$ & SVAR \\\midrule  
\multirow{2}{*}{250} & Bias &  1.00 &  8.98 &  8.98 &  1.00 & 10.27 & 10.25 &  1.00 &  9.35 &  9.32 \\  
& Var & 1.00 &  0.46 &  0.46 &  1.00 &  0.74 &  0.74 &  1.00 &  1.24 &  1.24 \\  \midrule
\multirow{2}{*}{500}& Bias & 1.00 & 29.57 & 29.53 &  1.00 & 25.11 & 25.06 &  1.00 & 13.61 & 13.56 \\  
& Var & 1.00 &  0.65 &  0.65 &  1.00 &  1.00 &  1.00 &  1.00 &  1.17 &  1.17 \\  \bottomrule
\end{tabular*}
\flushleft{\scriptsize{Notes: Based on 1,000 replications. The table reports the $L_2$ norm of the bias (Bias) and the variance (Var) of each estimator relative to those of SCInv. }}
\end{table}

\begin{figure}[h!]
\caption{Estimated Functional Coefficient (Experiment 2; $T=250$) \label{fig: simB-1}}
\includegraphics[width = \textwidth, height= .2\textheight]{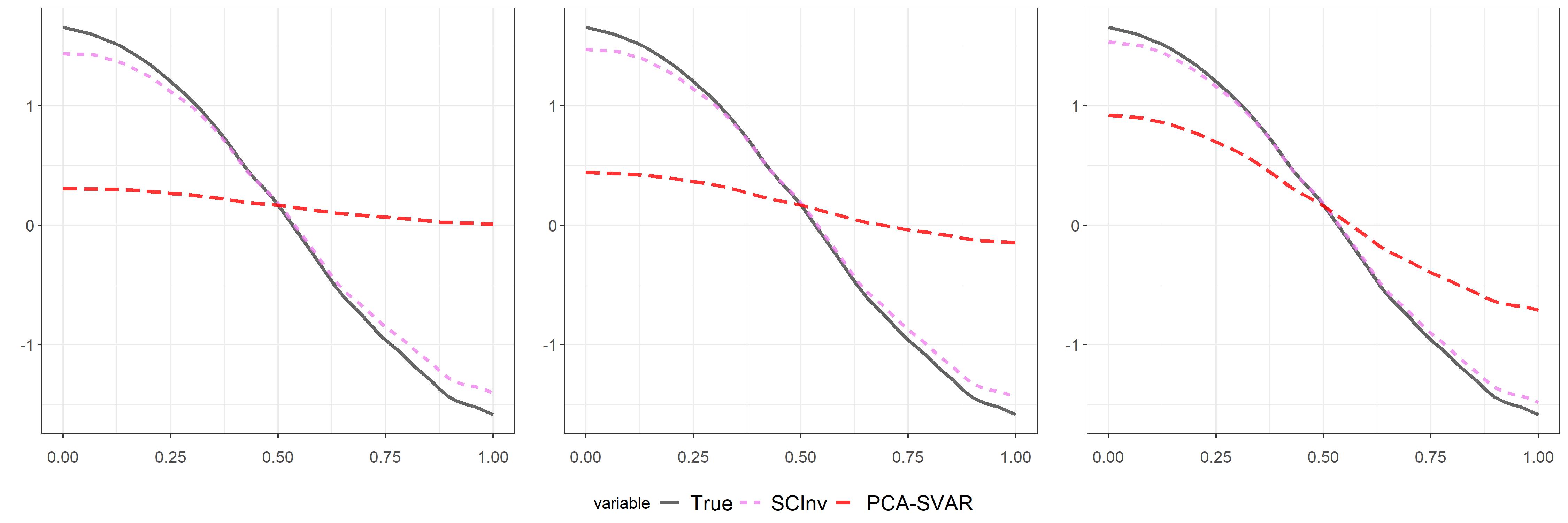}
\flushleft{\scriptsize{Notes: Averaged across 1,000 replications. The figure reports the true functional coefficient (black) and its estimates based on SCInv (dotted) and SVAR (dashed) when $c_1$ is 1 (left), 0.5 (middle), and 0.2 (right). $h=1$.  }}
\end{figure}

Table~\ref{tab: simB-1} summarizes  simulation results. Overall, the SCInv produces the smallest bias at the cost of a relatively large variance. As observed previously, the other two estimators, the SCInv$_1$ and the PCA-SVAR, report similar estimation results with each other, regardless of the value of $c_1$ and the sample size, as the difference between the two regularization schemes is not significant in this simulation design. 
Figure~\ref{fig: simB-1} reports the true functional coefficient and its estimates (averaged across 1,000 replications) for each value of $c_1$ when $h=1$. In the figure, it is noticeable that the PCA-SVAR estimator tends to get closer to the true functional coefficient as $c_1$ decreases. On the other hand, our estimator, the SCInv (dotted), produces very robust estimation results close to the true regardless of the value of $c_1$.

\section{Conclusion \label{sec:con}}
In this paper, we study  impulse response analysis with functional predictors and other scalar-valued covariates and propose new estimation and inference methodologies. We show that the proposed estimator allows for an interesting interpretation as a structural impulse response in some special cases. In our empirical application, we study how economic variables respond to a certain distributional shock on sentiment. The results are consistent with existing observations and show negative (resp.\ positive) responses of economic growth variables when sentiment distributions shift to the left (resp.\ right). Monte Carlo simulation results also confirm our theoretical findings.



\makeatletter
\def\@seccntformat#1{%
\csname the#1\endcsname.\quad 
}
\makeatother

\renewcommand{\thesubsection}{\thesection.\arabic{subsection}}

\newpage 
\appendix
\section*{Appendix} \label{sec:app}

\section{Estimation with endogenous predictors}\label{sec:est2}
There are various circumstances where endogenous regressors are observed. This issue is particularly relevant when a proxy for a structural shock, rather than the shock itself, is available, as considered in the macroeconomics literature. Moreover, considering the practical issue that functional observations are almost always constructed from their discrete realizations (\citealp{Chen_et_al_2020,seong2021functional}), any regression model with functional regressors is likely to inherently involve endogeneity. In this section, we propose an IV estimation method to resolve the endogeneity problem. 


Suppose that $\XX_t$ is endogenous but there exists $\ZZ_t$ satisfying $\mathbb{E}[u_{h,t} \ZZ_t] = 0$ and $\mathbb{E}[ \XX_{t}\otimes\ZZ_t ] \neq 0$. If $\ZZ_t$ additionally satisfies the lead-lag exogeneity in  \cite{SW2018}, it could further allow us to identify the structural impulse response.  
Let \begin{equation}
C_{\ZZ \XX } = \mathbb{E}[\XX_t\otimes\ZZ_t] \quad\text{and}\quad \widehat{C}_{\ZZ \XX} = \frac{1}{T}\sum_{t=1}^T \XX_t\otimes \ZZ_t. \nonumber
\end{equation}
Similarly, we define $C_{\ZZ y}$ and $C_{\ZZ u }$ by replacing $\Upsilon_t$ with $\ZZ_t$ in our definitions of $C_{\Upsilon y}$ and $C_{\Upsilon u}$, and let $\widehat{C}_{\ZZ y}$ and $\widehat{C}_{\ZZ u}$ denote their sample counterparts. As in \eqref{covblock} and \eqref{eqopschur}, we write 
\begin{equation}\nonumber
{C}_{\ZZ \XX} = \begin{bmatrix}		\DDD_{11} &\DDD_{12} \\ \DDD_{21}& \DDD_{22}	\end{bmatrix}\quad\text{and}\quad   \SS ={\DDD}_{22}-{\DDD}_{21}{\DDD}_{11}^{-1}{\DDD}_{12},\end{equation}
where $\DDD_{ij} = \mathbb{E}[g_{j,t} \otimes h_{i,t}]$, $g_{1,t} = \mathbf{w}_t$, $g_{2,t}=X_t$,  $h_{1,t}=\mathbf{z}_{1,t}$ and $h_{2,t}=Z_{2,t}$  when
\( \ZZ_t = \left[\begin{smallmatrix}\mathbf{z}_{1,t}\\ Z_{2,t}\end{smallmatrix}\right] \) with $\mathbf{z}_{1,t}$ (resp.\   \( Z_{2,t} \)) taking values in $\mathbb R^{m}$ (resp.\ \( \mathcal H \)). We also define their sample counterparts by replacing $\Delta_{ij}$ with $\hat{\Delta}_{ij}= T^{-1}\sum_{t=1}^T g_{j,t} \otimes h_{i,t}$ as follows:
\begin{equation}\nonumber
\widehat{C}_{\ZZ \XX} = \begin{bmatrix}		\widehat{\DDD}_{11} & \widehat{\DDD}_{12} \\ \widehat{\DDD}_{21} &\widehat{\DDD}_{22}	\end{bmatrix}\quad\text{and}\quad \widehat{\SS}=\widehat{\DDD}_{22}-\widehat{\DDD}_{21}\widehat{\DDD}_{11}^{-1}\widehat{\DDD}_{12}. \end{equation}	 
Observing that $\Delta_{11}$ is the cross-covariance of $\mathbf{w}_t$ and $\mathbf{z}_{1,t}$ (corresponding to the first $m$ elements of $\ZZ_t$), one can easily see that the invertibility of $\DDD_{11}$ is analogous to the invertibility assumption of the cross-covariance between the vectors of endogenous variables and IVs in a finite dimensional just-identified endogenous linear model. Here, $\DDD_{11}$ and $\DDD_{22}$ (resp.\ $\hat{\DDD}_{11}$ and $\hat{\DDD}_{22}$) are not self-adjoint, and neither are ${C}_{\ZZ \XX}$ nor ${\SS}$ (resp.\ $\hat{C}_{\ZZ \XX}$ nor $\hat{\SS}$). 

We represent ${C}_{\ZZ\XX}$, $\SS$ and $\widehat{\SS}$ with respect to their eigenelements as follows:
\begin{equation}\nonumber
{C}_{\ZZ\XX} =  \sum_{j=1}^\infty {\nutw}_j  {\wtw}_j \otimes {\wwtw}_j, \qquad {\SS}= \sum_{j=1}^\infty {\nu}_j {\wv}_j \otimes {\ww}_j, \quad\text{and}\quad \widehat{\SS}= \sum_{j=1}^\infty \hat{\llambda}_j \hat{\wv}_j \otimes \hat{\ww}_j.
\end{equation}
Subsequently, we will need $\hat{\nu}_j, \hat{\wv}_j$, and  $\hat{\ww}_j$ to construct a regularized inverse of $\widehat{\SS}$. These can easily be computed, as the eigenelements of $\widehat{\SA}$ in Section \ref{sec:est}, from the fact that $\hat{\nu}_j^2$ and $\hat{\wv}_j$ (resp.\ $\hat{\nu}_j^2$ and $\hat{\ww}_j$) are the eigenelements of $\widehat{\SS}^\ast \widehat{\SS}$ (resp.\ $\widehat{\SS} \widehat{\SS}^\ast$), which is self-adjoint, nonnegative, and compact (see \citealp{Bosq2000}, pp.\ 117--118). 


To introduce our estimator, we observe that $C_{ \ZZ y} =  C_{\ZZ \XX}\theta_h$ holds if \eqref{eq: model: benchmark: reduced2} is satisfied and  $\mathbb{E}[u_{h,t} \ZZ_t] = 0$. Our estimator is obtained by replacing the population covariance operators with their sample counterparts and then computing $\theta_h$ using an appropriate regularized inverse. Specifically, we consider 
\begin{equation}
\tilde{\theta}_h =  \widehat{C}_{\ZZ\XX,\KKK}^{-1} \widehat{C}_{\ZZ y},
\end{equation} 
where  $\KKK = \max\{j: \hat{\llambda}_j^2 \geq \reg\}$ for some regularization parameter $\reg$ decaying to zero as $T\to \infty$,
\begin{equation}  \label{eqreginv2}
\widehat{C}_{\ZZ\XX,\KKK}^{-1}  = \begin{bmatrix} 
\widehat{\DDD}_{11}^{-1} + \widehat{\DDD}_{11}^{-1}\widehat{\DDD}_{12} \widehat{\SS}^{-1}_{\KKK} \widehat{\DDD}_{21}\widehat{\DDD}_{11}^{-1} & -\widehat{\DDD}_{11}^{-1} \widehat{\DDD}_{12} \widehat{\SS}^{-1}_{\KKK} \\ -\widehat{\SS}^{-1}_{\KKK} \widehat{\DDD}_{21}\widehat{\DDD}_{11}^{-1} &
\widehat{\SS}^{-1}_{\KKK}
\end{bmatrix},
\end{equation}
and  $\widehat{\SS}^{-1}_{\KKK} = \sum_{j=1}^{\KKK} \hat{\nu}_j^{-1} \hat{\ww}_j \otimes \hat{\wv}_j$.
As in \eqref{eqprojections}, we let $\widehat{\PPI}_{\KKK} = \sum_{j=1}^{\KKK} \hat{\wv}_j \otimes \hat{\wv}_j$, ${\PPI}_{\KKK} = \sum_{j=1}^{\KKK} {\wv}_j \otimes {\wv}_j$,  
\begin{equation} \label{eqwkadd} 
\widehat{\PP}_{\KKK}  = \begin{bmatrix} 
I_1 & \widehat{\DDD}_{11}^{-1} \widehat{\DDD}_{12} (I_2-\widehat{\PPI}_{\KKK})  \\ 0 &  \widehat{\PPI}_{\KKK}  
\end{bmatrix}\quad \text{ and }\quad {\PP}_{\KKK} = \begin{bmatrix}
I_1 & {\DDD}_{11}^{-1}{\DDD}_{12}(I_2-{\PPI}_{\KK}) \\ 0 & {\PPI}_{\KKK}
\end{bmatrix}.
\end{equation} 
Similar to $\widehat P_{\KK}$ in Section~\ref{sec:est1b}, $\widehat{\PP}_{\KKK}=\widehat{C}_{\ZZ\XX,\KKK}^{-1}\widehat{C}_{\ZZ\XX}$ and ${\PP}_{\KKK} =  {C}_{\ZZ\XX,\KKK}^{-1}{C}_{\ZZ\XX}$, where ${C}_{\ZZ\XX,\KKK}^{-1}$ is defined by replacing $\hat\DDD_{ij}$ with $\DDD_{ij}$ and $\hat\SS_{\KK}^{-1}$ with $\SS_{\KK}^{-1} = \sum_{j=1}^{\KKK} \llambda_j^{-1}\ww_j \otimes \wv_j$ in \eqref{eqreginv2}. We  assume the following throughout this section: below, $\CC$ denotes a generic positive constant.

\begin{assumIV} \label{assumpIV2}
\begin{enumerate*}[(i)]
\item \label{assumpIV1} $\langle \theta_h, x \rangle = 0$ for all $x\in \ker C_{\ZZ\XX}$ and $ C_{\ZZ\XX} \neq 0$;
\item \label{assumpIV2a} $y_{h,t} = \langle \XX_t,\theta_h \rangle + u_{h,t}$ holds with $\mathbb{E}[u_{h,t}\XX_t]\neq 0$  and $\mathbb{E}[u_{h,t}\ZZ_t]=0$; 
\item\label{assumpIV2b} $\{\XX_t \}$,  $\{\ZZ_t\}$,  $\{u_{h,t}\}$, and $\{u_{h,t}\ZZ_t\}$  are stationary and  $L^4$-$m$-approximable;  
\item \label{assumpIV2c} $\|\widehat{C}_{\ZZ\XX} -C_{\ZZ\XX}\|_{\op} = O_p(T^{-1/2})$, $\|\widehat{C}_{\ZZ u}\|_{\op} = O_p(T^{-1/2})$  and $\|\hat{\Delta}_{11}^{-1} - \Delta_{11}^{-1}\|_{\op} = O_p(T^{-1/2})$;
\item\label{assumpIV3a} for $\rho>2$, 
$\nu_j^2 \leq \CC j^{-\rho}$, $\nu_j^2-\nu_{j+1}^2 \geq \CC j^{-\rho-1}$;
\item\label{assumpIV3b} for $\varsigma > 1/2$,  $|\langle \beta_h, \wv_j \rangle| \leq \CC j^{-\varsigma}$.
\end{enumerate*} 
\end{assumIV}
As in Assumption \ref{assum1}, Assumption \ref{assumpIV2}\ref{assumpIV1} is imposed for the unique identification of $\theta_h$ from the equation $C_{ \ZZ y} =  C_{\ZZ\XX}\theta_h$.   Assumption \ref{assumpIV2}\ref{assumpIV2a} allows for the endogeneity of $\XX_t$ and requires $\ZZ_t$ to be a valid IV (see e.g., \citealp{Florence2015, Benatia2017, seong2021functional}). Assumptions~\ref{assumpIV2}\ref{assumpIV2b} and~\ref{assumpIV2}\ref{assumpIV2c} do not appear restrictive, given that the considered time series are stationary. 
Assumption \ref{assumpIV2}\ref{assumpIV3a} serves as a relevance condition in the conventional IV literature (see \citealp{seong2021functional}). Assumption \ref{assumpIV2}\ref{assumpIV3b} is natural as $\beta_h \in \mathcal H$. 

The consistency of $\tilde{\theta}_h$ is established as follows: 
\begin{theoremAA} \label{thm3} Let Assumption \ref{assumpIV2} hold and $T\reg^{2+4/\rho} \to \infty$. Then, 
\begin{equation}
\|\tilde{\theta}_h - \theta_h\| = O_p(T^{-1/2}\reg^{-1-2/\rho} + \reg^{(2\varsigma-1)/2\rho}).\nonumber
\end{equation} 
\end{theoremAA}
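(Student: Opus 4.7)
The plan is to adapt the proof of Theorem \ref{thm1} to the IV setting, where the essential new difficulty is that both $C_{\ZZ\XX}$ and its operator Schur complement $\SS$ are no longer self-adjoint. Starting from the moment identity $C_{\ZZ y} = C_{\ZZ\XX}\theta_h$, which follows from Assumptions \ref{assumpIV2}\ref{assumpIV1} and \ref{assumpIV2}\ref{assumpIV2a}, together with the sample decomposition $\widehat{C}_{\ZZ y} = \widehat{C}_{\ZZ\XX}\theta_h + \widehat{C}_{\ZZ u}$, I would first write
\begin{equation*}
\tilde{\theta}_h - \theta_h = \widehat{C}_{\ZZ\XX,\KKK}^{-1}\widehat{C}_{\ZZ u} + \bigl(\widehat{\PP}_{\KKK}\theta_h - \theta_h\bigr),
\end{equation*}
using the identity $\widehat{\PP}_{\KKK} = \widehat{C}_{\ZZ\XX,\KKK}^{-1}\widehat{C}_{\ZZ\XX}$ noted after \eqref{eqwkadd}. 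The first summand is a stochastic variance term, and the second is a bias term combining a deterministic regularization bias and a sampling error on the projection.

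For the bias I would further split $\widehat{\PP}_{\KKK}\theta_h - \theta_h = (\widehat{\PP}_{\KKK} - {\PP}_{\KKK})\theta_h + ({\PP}_{\KKK} - I)\theta_h$. The explicit block form of ${\PP}_{\KKK}$ in \eqref{eqwkadd} shows that the deterministic piece evaluates to $\bigl[\DDD_{11}^{-1}\DDD_{12}(I_2-\PPI_{\KKK})\beta_h,\ -(I_2-\PPI_{\KKK})\beta_h\bigr]^\prime$, whose norm is dominated by $\|(I_2-\PPI_{\KKK})\beta_h\| = \bigl(\sum_{j>\KKK}|\langle\beta_h,\wv_j\rangle|^2\bigr)^{1/2}$, which is $O(\KKK^{1/2-\varsigma})$ by Assumption \ref{assumpIV2}\ref{assumpIV3b}. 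Since Assumption \ref{assumpIV2}\ref{assumpIV3a} combined with $\KKK = \max\{j : \hat{\llambda}_j^2 \geq \reg\}$ yields $\KKK \asymp \reg^{-1/\rho}$ with high probability, this matches the claimed $\reg^{(2\varsigma-1)/(2\rho)}$ contribution. The stochastic correction $(\widehat{\PP}_{\KKK} - {\PP}_{\KKK})\theta_h$ is handled by combining $\|\widehat{\DDD}_{ij} - {\DDD}_{ij}\|_{\op} = O_p(T^{-1/2})$ from Assumption \ref{assumpIV2}\ref{assumpIV2c} with perturbation bounds on $\widehat{\PPI}_{\KKK} - \PPI_{\KKK}$, and turns out to be absorbed by the variance rate.

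For the variance term, the block structure \eqref{eqreginv2} shows that every entry of $\widehat{C}_{\ZZ\XX,\KKK}^{-1}$ is a bounded combination of $\widehat{\DDD}_{11}^{-1}$, $\widehat{\DDD}_{12}$, $\widehat{\DDD}_{21}$ and $\widehat{\SS}_{\KKK}^{-1}$. I would therefore expand $\widehat{\SS}_{\KKK}^{-1}$ applied to each component of $\widehat{C}_{\ZZ u}$ in the empirical singular system $\{\widehat{\wv}_j,\widehat{\ww}_j\}_{j=1}^{\KKK}$, and split the result into a leading term evaluated at the population eigen-elements and a perturbation correction. Combining $\|\widehat{\SS}_{\KKK}^{-1}\|_{\op} \leq \hat{\llambda}_{\KKK}^{-1} \lesssim \reg^{-1/2}$, $\|\widehat{C}_{\ZZ u}\|_{\op} = O_p(T^{-1/2})$ from Assumption \ref{assumpIV2}\ref{assumpIV2c}, the decay rate on $\nu_j$ in Assumption \ref{assumpIV2}\ref{assumpIV3a}, and sharp perturbation bounds on $\{\hat{\llambda}_j,\widehat{\wv}_j,\widehat{\ww}_j\}$ then delivers the $T^{-1/2}\reg^{-1-2/\rho}$ variance rate; the side condition $T\reg^{2+4/\rho} \to \infty$ is precisely what keeps the dominant perturbation correction asymptotically negligible against this leading term.

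The principal obstacle is the non-self-adjointness of $\widehat{\SS}$, which blocks a direct application of the compact self-adjoint perturbation results invoked for Theorem \ref{thm1}. My plan is to route around this by passing to the companion self-adjoint operators $\widehat{\SS}^\ast\widehat{\SS}$ and $\widehat{\SS}\,\widehat{\SS}^\ast$, whose common eigenvalues are $\hat{\llambda}_j^2$ and whose eigenvectors are $\widehat{\wv}_j$ and $\widehat{\ww}_j$, respectively. The gap condition $\nu_j^2 - \nu_{j+1}^2 \gtrsim j^{-\rho-1}$ in Assumption \ref{assumpIV2}\ref{assumpIV3a}, together with $\|\widehat{\SS}^\ast\widehat{\SS} - \SS^\ast\SS\|_{\op} \leq \|\widehat{\SS} - \SS\|_{\op}\bigl(\|\widehat{\SS}\|_{\op} + \|\SS\|_{\op}\bigr) = O_p(T^{-1/2})$, which itself follows from Assumption \ref{assumpIV2}\ref{assumpIV2c} and boundedness of $\|\SS\|_{\op}$, then supplies the required perturbation bounds on $\widehat{\wv}_j - \wv_j$ and $\widehat{\ww}_j - \ww_j$ (up to spectral-gap factors that account for the $\reg^{-2/\rho}$ inflation in the variance rate), at which point the block-matrix decomposition of $\widehat{C}_{\ZZ\XX,\KKK}^{-1}$ and the argument of Theorem \ref{thm1} can be closed out in parallel.
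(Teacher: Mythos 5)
Your proposal follows essentially the same route as the paper's proof: the same decomposition $\tilde{\theta}_h-\theta_h = \widehat{C}_{\ZZ\XX,\KKK}^{-1}\widehat{C}_{\ZZ u} + (\widehat{\PP}_{\KKK}-\PP_{\KKK})\theta_h + (\PP_{\KKK}-I)\theta_h$, the same tail-sum bound giving $\reg^{(2\varsigma-1)/2\rho}$, and the same device of passing to the self-adjoint companions $\widehat{\SS}^\ast\widehat{\SS}$ and $\widehat{\SS}\widehat{\SS}^\ast$ with the gap condition $\nu_j^2-\nu_{j+1}^2\gtrsim j^{-\rho-1}$ to obtain singular-vector perturbation bounds. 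The only quibble is bookkeeping: the term $\widehat{C}_{\ZZ\XX,\KKK}^{-1}\widehat{C}_{\ZZ u}$ is only $O_p(T^{-1/2}\reg^{-1/2})$, and the dominant stochastic contribution $T^{-1/2}\reg^{-1-2/\rho}$ actually comes from $\|(\widehat{\PPI}_{\KKK}-\PPI_{\KKK})\beta_h\|\leq O_p(T^{-1/2})\sum_{j\leq\KKK}j^{\rho+1}$, not from the noise term, but this does not affect the validity of your argument.
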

The requirement on $\reg$ for the consistency of $\tilde{\theta}_h$ in Theorem \ref{thm3} is slightly stronger than that in Theorem \ref{thm1}; for example,  the consistency  is ensured if   $T\reg^{4} \to \infty$, which can be compared with the requirement in Section~\ref{sec:est}  that $T\reg^{3} \to \infty$. 

We next discuss on the distributional properties of $\langle \tilde{\theta}_h, \zeta\rangle$ for any $\zeta \in \widetilde{\mathcal H}$ as in Theorem \ref{thm2}. Consider the decomposition $\langle \tilde{\theta}_h-\theta_h,\zeta \rangle=\widetilde{\Theta}_1+\widetilde{\Theta}_{2A}+\widetilde{\Theta}_{2B}$, where 
\begin{equation}\nonumber
\widetilde{\Theta}_1 =  \langle \widetilde{\theta}_h-\widehat{\PP}_{\KKK}\theta_h, \zeta \rangle,\quad \widetilde{\Theta}_{2A}=  \langle\widehat{\PP}_{\KKK}\theta_h - {\PP}_{\KKK}{\theta}_h,\zeta\rangle, \quad\text{and}\quad \widetilde{\Theta}_{2B}=  \langle{\PP}_{\KKK}\theta_h - {\theta}_h,\zeta\rangle.
\end{equation}
We define  $\Lambda_{\UUU}$ and $  \widehat{\Lambda}_{\UUU}$ as follows: below $\UUU_t = u_{h,t}\ZZ_t$ and  $\hat{\UUU}_t = \tilde{u}_{h,t}\ZZ_t$ with $\tilde{u}_{h,t}= y_t - \langle \tilde{\theta}_h, \Upsilon_t \rangle$.
\begin{equation} \nonumber
\Lambda_{\UUU} =\sum_{s=-\infty}^\infty \mathbb{E}[\UUU_t \otimes \UUU_{t-s}]  \quad \text{and}\quad 
\widehat{\Lambda}_{\UUU} =  \frac{1}{T} \sum_{s=-h}^h \mathrm{k}\left(\frac{s}{\bdw}\right) \sum_{1 \leq t, t-s \leq T}  \widehat{\UUU}_t \otimes \hat{\UUU}_{t-s}.
\end{equation}
We introduce the required assumptions for the subsequent discussion. To this end, 
we define 
\begin{equation}\nonumber
d_{m,j}(\zeta)= \nutw_j^{-1} \langle  \wtw_j,\zeta \rangle \bigg/ \sqrt{\sum_{j=1}^{m} \nutw_j^{-2} \langle  \wtw_j,\zeta\rangle^2} .
\end{equation}
Below, we let \( Z_{w,t} = Z_{2,t} - \DDD_{21}\DDD_{11}^{-1}\mathbf{z}_{1,t} \), which takes values in $\mathcal H$. Additionally, we define  
$\tilde{r}_t(j,\ell) = \langle X_t, \wv_j \rangle \langle Z_{w,t}, \ww_{\ell} \rangle - \mathbb{E}[\langle X_t, \wv_j \rangle \langle Z_{w,t}, \ww_{\ell} \rangle]$
for \( j, \ell \geq 1 \), and let \( \CC \) denote a generic positive constant.

\begin{assumIV} \label{assumpIV4} 
\begin{enumerate*}[(i)] 
\item \label{assumpIV4a} Assumption \ref{assum4}\ref{assum4a} holds;
\item \label{assumpIV4aa}  $\zeta \notin \ker C_{\ZZ\XX}$ and $\Lambda_{\UUU} \wwtw_j \neq 0$  for all $\wwtw_j $ corresponding to $\nutw_j >0$;
\item  \label{assumpIV4bb} $\sum_{j=1}^{m} \sum_{\ell=1}^{m} d_{m,j}(\zeta) d_{m,\ell}(\zeta) \langle \Lambda_{\UUU}\wwtw_j,\wwtw_{\ell} \rangle \to \CC$ as $m\to \infty$; \item \label{assumpIV4cc} $\sup_{1\leq t\leq T} \|\ZZ_t\| = O_p(1)$.
\end{enumerate*}
\end{assumIV}
\begin{assumIV} \label{assumpIV5} 
\begin{enumerate*}[(i)] 
\item \label{assumpIV5b} $\mathbb{E}[\langle X_t,\wv_j \rangle^4] \leq \CC \llambda_j^2$, $\mathbb{E}[\langle Z_{2,t},\ww_j \rangle^4] \leq \CC \llambda_j^2$, $\mathbb{E}[\langle Z_{w,t},\ww_j \rangle^4] \leq \CC \llambda_j^2$,   and for some $\tilde{\CC}>1$ and $s\geq 1$, $\mathbb{E}[\tilde{r}_t(j,\ell)\tilde{r}_{t-s}(j,\ell)]\leq \CC s^{-\tilde{\CC}}\mathbb{E}[\tilde{r}_t^2(j,\ell)]$;
\item\label{assumpIV5c} for some $\delta>1/2$,  $|\langle \zeta_2,\wv_j \rangle| \leq \CC j^{-\delta}$ and $|\langle \zeta_1, \DDD_{11}^{-1}\DDD_{12}\wv_j \rangle|\leq \CC j^{-\delta}$.
\end{enumerate*}
\end{assumIV}
Assumption \ref{assumpIV4}\ref{assumpIV4a} is employed to facilitate the mathematical proofs of the subsequent results. Assumptions \ref{assumpIV4}\ref{assumpIV4aa} and \ref{assumpIV4}\ref{assumpIV4bb} are adapted from Assumptions \ref{assum4}\ref{assum4aa} and \ref{assum4}\ref{assum4bb}, respectively, and they play similar roles in our asymptotic analysis. Assumption \ref{assumpIV5} is adapted from Assumption \ref{assum5} to extend the discussion to the case of endogeneity.

We next present asymptotic normality results similar to those of Theorem~\ref{thm2}:
\begin{theoremAA} \label{thm4}   Suppose that Assumptions \ref{assumpIV2}-\ref{assumpIV4}  hold  and $T\reg^{3+4/\rho} \to \infty$. Let   ${C}_{\XX \ZZ,\KKK}^{-1}$ be the adjoint of $ {C}_{\ZZ\XX,\KKK}^{-1}$ (this is defined in a straightforward manner; see Remark \ref{remaddapp01} in Section \ref{sec_prelim0}).  Then the following holds:
\begin{enumerate}[(i)]
\item\label{thm4i0} $\sqrt{{T}/{\omega_{\KKK}(\zeta)}}\widetilde{\Theta}_1 \to_d N(0,1)$ for {$\omega_{\KKK}(\zeta)= \langle {C}_{\ZZ\XX,\KKK}^{-1} {\Lambda}_{\UUU} {C}_{\XX\ZZ,\KKK}^{-1}\zeta, \zeta\rangle$}.
\end{enumerate}
If Assumptions \ref{assumpIV5} is additionally satisfied with $\rho/2 + 2 < \varsigma+ \delta$, 
the following hold:
\begin{enumerate}[(i)]\addtocounter{enumi}{1}
\item\label{thm4i1}  If $\omega_{\KKK}(\zeta) \to_p \infty$, $\sqrt{{T}/{\omega_{\KKK}(\zeta)}}\widetilde{\Theta}_{2A} \to_p 0$.
\item \label{thm4i2} If $\omega_{\KKK}(\zeta) \to_p \infty$ and $T^{1/2}\reg^{(\delta+\varsigma-1)/\rho} \to 0$,  $\sqrt{{T}/{\omega_{\KKK}(\zeta)}}\widetilde{\Theta}_{2B} \to_p 0.$
	
\end{enumerate}
The results in \ref{thm4i0}-\ref{thm4i2} hold 
when $\omega_{\KKK}(\zeta)$ is replaced by $\hat\omega_{\KKK}(\zeta)= \langle \hat{C}_{\ZZ\XX,\KKK}^{-1} \hat{\Lambda}_{\UUU} \hat{C}_{\XX\ZZ,\KKK}^{-1} \zeta,\zeta\rangle$,
where $\hat{C}_{\XX\ZZ,\KKK}^{-1}$ is the adjoint of $\hat{C}_{\ZZ\XX,\KKK}^{-1}$. 
\end{theoremAA}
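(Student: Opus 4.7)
The plan is to adapt the proof strategy of Theorem \ref{thm2} to the IV setting, carefully handling the fact that the operator Schur complement $\SS$ is no longer self-adjoint, so that spectral arguments must be replaced by singular value decomposition arguments.

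As the starting point, I would use the model equation in Assumption \ref{assumpIV2}\ref{assumpIV2a} to write $\widehat{C}_{\ZZ y} = \widehat{C}_{\ZZ\XX}\theta_h + \widehat{C}_{\ZZ u}$, which yields
\begin{equation*}
\tilde{\theta}_h - \widehat{\PP}_\KKK \theta_h = \widehat{C}_{\ZZ\XX,\KKK}^{-1}\widehat{C}_{\ZZ u}, \quad\text{so}\quad \widetilde{\Theta}_1 = \big\langle \widehat{C}_{\ZZ u},\, \widehat{C}_{\XX\ZZ,\KKK}^{-1}\zeta\big\rangle = \tfrac{1}{T}\sum_{t=1}^T u_{h,t}\big\langle \ZZ_t, \widehat{C}_{\XX\ZZ,\KKK}^{-1}\zeta\big\rangle.
\end{equation*}
For part \ref{thm4i0}, I would first replace the random operator $\widehat{C}_{\XX\ZZ,\KKK}^{-1}$ with its population counterpart $C_{\XX\ZZ,\KKK}^{-1}$ and show the difference
$T^{-1/2}\omega_\KKK(\zeta)^{-1/2}\langle \widehat{C}_{\ZZ u},\,(\widehat{C}_{\XX\ZZ,\KKK}^{-1}-C_{\XX\ZZ,\KKK}^{-1})\zeta\rangle$ is $o_p(1)$. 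This uses Assumption~\ref{assumpIV2}\ref{assumpIV2c} together with the perturbation bounds obtained from singular value perturbation theory applied to $\widehat{\SS}-\SS$ (analogous to the spectral bounds used for $\SA$ in the proof of Theorem \ref{thm2}). After the replacement, I would apply the CLT under $L^4$-$m$-approximability from Assumption~\ref{assumpIV2}\ref{assumpIV2b}, coupled with Assumption~\ref{assumpIV4}\ref{assumpIV4bb} to pin down the limiting variance. As in Theorem \ref{thm2}, one argues through a finite truncation at level $m$, invokes CLT for the finite-dimensional sum $T^{-1/2}\sum_t u_{h,t}\sum_{j\leq m} d_{m,j}(\zeta)\langle \ZZ_t,\wwtw_j\rangle$, and then sends $m\to\infty$ using the limiting scaling provided by Assumption~\ref{assumpIV4}\ref{assumpIV4bb} and the uniform bound in Assumption~\ref{assumpIV4}\ref{assumpIV4cc}.

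For parts \ref{thm4i1} and \ref{thm4i2}, I would exploit the block structure of $\widehat{\PP}_\KKK$ and ${\PP}_\KKK$ in \eqref{eqwkadd}, which reduces both $\widetilde{\Theta}_{2A}$ and $\widetilde{\Theta}_{2B}$ to expressions involving $(\widehat{\PPI}_\KKK-\PPI_\KKK)\beta_h$ and $(I_2-\PPI_\KKK)\beta_h$ paired with appropriate images of $\zeta$. For \ref{thm4i1}, expand $(\widehat{\PPI}_\KKK-\PPI_\KKK)$ as a telescoping sum over the singular-vector pairs of $\widehat{\SS}$ and $\SS$, and apply Bosq-type bounds (\citealp{Bosq2000}, Ch.~4) adapted for singular vectors, in combination with the eigengap condition $\nu_j^2-\nu_{j+1}^2\geq\CC j^{-\rho-1}$ from Assumption~\ref{assumpIV2}\ref{assumpIV3a} and the moment bounds in Assumption~\ref{assumpIV5}\ref{assumpIV5b}. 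The smoothness conditions in Assumption~\ref{assumpIV5}\ref{assumpIV5c} then control $\langle \zeta_2,\wv_j\rangle$ and $\langle \zeta_1,\DDD_{11}^{-1}\DDD_{12}\wv_j\rangle$, so that summing the perturbation bounds weighted by these inner products yields an $O_p(\cdot)$ rate that is negligible against $\sqrt{T/\omega_\KKK(\zeta)}$ under $T\reg^{3+4/\rho}\to\infty$ and $\rho/2+2<\varsigma+\delta$. For \ref{thm4i2}, the bias is deterministic and, using Assumptions~\ref{assumpIV2}\ref{assumpIV3b} and~\ref{assumpIV5}\ref{assumpIV5c}, one bounds it by $\sum_{j>\KKK}|\langle \beta_h,\wv_j\rangle||\langle \zeta,\cdot\rangle|\lesssim \KKK^{1-\varsigma-\delta}\asymp \reg^{(\delta+\varsigma-1)/\rho}$; multiplying by $\sqrt{T/\omega_\KKK(\zeta)}\leq \sqrt{T}$ (since $\omega_\KKK(\zeta)\to\infty$) and invoking $T^{1/2}\reg^{(\delta+\varsigma-1)/\rho}\to 0$ gives the claim.

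Finally, to replace $\omega_\KKK(\zeta)$ with $\hat\omega_\KKK(\zeta)$, I would establish $\|\hat\Lambda_\UUU-\Lambda_\UUU\|_{\op}=o_p(1)$ by the kernel-estimator consistency of \citet[Thm.~2]{berkes2013weak} under Assumption~\ref{assumpIV4}\ref{assumpIV4a}, together with consistency of the residuals obtained from Theorem~\ref{thm3}. Combining this with the perturbation bounds for $\widehat{C}_{\ZZ\XX,\KKK}^{-1}-C_{\ZZ\XX,\KKK}^{-1}$ developed earlier yields $\hat\omega_\KKK(\zeta)/\omega_\KKK(\zeta)\to_p 1$, at which point Slutsky's theorem completes the argument.

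The hard part, I expect, will be the perturbation analysis for the singular vectors $(\hat\wv_j,\hat\ww_j)$ of the non-self-adjoint operator $\widehat{\SS}$ appearing inside a randomly truncated series: unlike the self-adjoint setup of Theorem \ref{thm2}, one must work with two sequences of singular vectors simultaneously, tracking separate perturbation bounds and then coupling them via the singular values and the eigengap condition. This coupling, combined with the random index $\KKK$, produces cross-terms whose negligibility has to be proved carefully under the stated regularization condition $T\reg^{3+4/\rho}\to\infty$.
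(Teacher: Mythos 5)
Your proposal follows essentially the same route as the paper's proof: the same decomposition $\tilde\theta_h=\widehat{\PP}_{\KKK}\theta_h+\widehat{C}_{\ZZ\XX,\KKK}^{-1}\widehat{C}_{\ZZ u}$, replacement of $\widehat{C}_{\ZZ\XX,\KKK}^{-1}$ by its population counterpart via the bound on $\|\widehat{\SS}_{\KKK}^{-1}-\SS_{\KKK}^{-1}\|_{\op}$ (which is exactly where $T\reg^{3+4/\rho}\to\infty$ is used), a functional CLT with the normalization $\omega_{\KKK}(\zeta)$ making the leading term standard normal, singular-vector perturbation bounds obtained from the self-adjoint operators $\widehat{\SS}^\ast\widehat{\SS}$ and $\widehat{\SS}\widehat{\SS}^\ast$ for $\widetilde\Theta_{2A}$, the tail-sum bound $\KKK^{1-\varsigma-\delta}\lesssim\reg^{(\delta+\varsigma-1)/\rho}$ for $\widetilde\Theta_{2B}$, and consistency of $\widehat\Lambda_{\UUU}$ plus the operator-inverse perturbation bounds for the feasible variance. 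The only cosmetic difference is that the paper handles the CLT step via a Skorokhod/Gaussian-coupling device rather than your finite-$m$ truncation, which is an equivalent standard alternative.
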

Obviously, the decay rate of the regularization parameter $\reg$, required in Theorem~\ref{thm4}, is slightly stronger than that in Theorem~\ref{thm2}. However, still, $\reg$ satisfying $T\reg^{5} \to \infty$, such as e.g., $\reg= T^{-1/5+\epsilon}$ or  $T^{-1/5}\log^{\epsilon} T$ for some small $\epsilon>0$, meets the requirement for any $\rho>2$, and such a choice is recommended.  As shown by Theorems~\ref{thm3} and~\ref{thm4}, using the functional IV approach, we can consistently estimate the parameter of interest and also implement a statistical inference on $\theta_h$ as in Section \ref{sec:est1}.

\section{Proofs}\label{sec:app:pf}
We first discuss some useful lemmas, which will be repeatedly used in the subsequent sections, and then provide proofs of the theoretical results. 	With a slight abuse of notation, we will use $\langle \cdot, \cdot \rangle$ (resp. $\|\cdot\|$) to denote the inner product (resp. norm) regardless of the underlying Hilbert space. This may cause little confusion while significantly reducing notational burden. As another way to simplify notation, we let $I_1$ denote the identity map on $\mathbb{R}^m$ for various $m$ depending on the context. This is used together with $I_2$, denoting the identity map on $\mathcal H$, and $I$, denoting the identity map on $\mathbb{R}^m \times \mathcal H$.  
\subsection{Preliminary results and useful lemmas} \label{sec_prelim0} 
\begin{lemma} \label{lem1}
The operator Schur complement $\SA$ 
defined in \eqref{eqopschur} is self-adjoint, nonnegative and trace-class (i.e., 	$\sum_{j=1}^\infty \langle \SA h_j, h_j \rangle < \infty$ for some orthonormal basis $\{h_j\}_{j\geq 1}$).
\end{lemma}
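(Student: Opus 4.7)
The plan is to establish the lemma by reinterpreting $\SA$ as the covariance operator of an $\mathcal{H}$-valued residual random element, from which all three properties follow by standard facts about covariance operators on Hilbert spaces.

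First, I would observe that the tensor-product definition gives $\CCC_{21}^\ast = \CCC_{12}$ (so that $\CCC_{21}\CCC_{11}^{-1}\CCC_{12} = \CCC_{12}^\ast \CCC_{11}^{-1} \CCC_{12}$), since for any $\mathbf a \in \mathbb R^m$ and $h \in \mathcal H$,
\[
\langle \CCC_{21}\mathbf a, h\rangle = \mathbb E[\langle \mathbf w_t,\mathbf a\rangle \langle X_t,h\rangle] = \langle \mathbf a, \CCC_{12} h\rangle.
\]
Self-adjointness of $\SA$ then follows immediately from self-adjointness of $\CCC_{22}$ and of $\CCC_{11}^{-1}$ (which inherits self-adjointness from the symmetric positive definite matrix $\CCC_{11}$).

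Next, define the $\mathcal H$-valued residual $X_{w,t} = X_t - \CCC_{21}\CCC_{11}^{-1}\mathbf w_t$, which lies in $L^2_{\mathcal H}$ because $\mathbb E\|X_t\|^2 < \infty$ (implicit in $\CCC_{22}$ being well defined) and $\CCC_{21}\CCC_{11}^{-1}:\mathbb R^m \to \mathcal H$ is a bounded linear map acting on the square-integrable $\mathbf w_t$. I would verify by a direct bilinear-form computation that $\mathbb E[X_{w,t} \otimes X_{w,t}] = \SA$. Concretely, for $h,h' \in \mathcal H$, writing $\langle X_{w,t},h\rangle = \langle X_t,h\rangle - \langle \mathbf w_t, \CCC_{11}^{-1}\CCC_{12}h\rangle$ and expanding the product $\langle X_{w,t},h\rangle \langle X_{w,t},h'\rangle$, the two cross terms and the $\mathbf w_t \mathbf w_t$ term each collapse to $\langle \CCC_{21}\CCC_{11}^{-1}\CCC_{12}h, h'\rangle$, leaving $\langle \SA h, h'\rangle$.

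Having identified $\SA$ as the covariance operator of $X_{w,t}$, nonnegativity is immediate because $\langle \SA h,h\rangle = \mathbb E[\langle X_{w,t},h\rangle^2] \geq 0$, and the trace-class property follows from
\[
\sum_{j=1}^\infty \langle \SA h_j,h_j\rangle = \sum_{j=1}^\infty \mathbb E[\langle X_{w,t},h_j\rangle^2] = \mathbb E\|X_{w,t}\|^2 < \infty
\]
by monotone convergence and the finite second moment established above (this is precisely the Bosq (2000, Theorem 1.7) characterization). There is no real obstacle here; the only point that requires a touch of care is the verification that $\SA$ coincides with the residual covariance, as one must consistently use the conventions $(h \otimes k)(\cdot) = \langle h,\cdot\rangle k$ and $\CCC_{21}^\ast = \CCC_{12}$ to avoid transposition errors when expanding the bilinear form.
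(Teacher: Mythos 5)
Your proof is correct, but it takes a genuinely different route from the paper's. The paper invokes Baker's (1973) factorization $\CCC_{ij}=\CCC_{ii}^{1/2}V_{ij}\CCC_{jj}^{1/2}$ with $\|V_{ij}\|_{\op}\leq 1$, rewrites $\SA = \CCC_{22}-\CCC_{22}^{1/2}V_{21}V_{12}\CCC_{22}^{1/2}$, and then reads off self-adjointness and compactness from this form, bounds the trace by $\sum_j \langle \CCC_{22}h_j,h_j\rangle$ using $\|V_{21}V_{12}\|_{\op}\leq 1$, and cites Seo (2024) for nonnegativity. You instead identify $\SA$ as the covariance operator of the residual $X_{w,t}=X_t-\CCC_{21}\CCC_{11}^{-1}\mathbf{w}_t$ and then appeal to the standard fact that covariance operators of square-integrable Hilbert-valued random elements are self-adjoint, nonnegative and trace-class; your expansion of the bilinear form is correct (the two cross terms each contribute $-\langle \CCC_{21}\CCC_{11}^{-1}\CCC_{12}h,h'\rangle$ and the quadratic term contributes $+\langle \CCC_{21}\CCC_{11}^{-1}\CCC_{12}h,h'\rangle$, leaving exactly $\langle \SA h,h'\rangle$). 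Your argument is more elementary and self-contained — it delivers nonnegativity and the trace-class property in one stroke via $\sum_j\langle \SA h_j,h_j\rangle = \mathbb{E}\|X_{w,t}\|^2<\infty$, and it matches the interpretation the paper itself states informally after \eqref{eqopschur}. What the Baker-factorization route buys in exchange is that it does not require constructing the residual element or using the invertibility of $\CCC_{11}$ beyond the definition of $\SA$, and it yields the operator comparison $\sum_j\langle\SA h_j,h_j\rangle\leq\sum_j\langle\CCC_{22}h_j,h_j\rangle$ directly. Both proofs are valid; yours relies on $\mathbb{E}\|X_t\|^2<\infty$ and $\mathbb{E}\|\mathbf{w}_t\|^2<\infty$, which are implicit in the existence of $\CCC_{22}$ and $\CCC_{11}$, so there is no gap.
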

\begin{proof}[Proof of Lemma \ref{lem1}]
From Theorem 1 of \cite{baker1973joint}, we know that $\CCC_{ij}=\CCC_{ii}^{1/2}V_{ij}\CCC_{jj}^{1/2}$ for $i,j \in \{1,2\}$ and for some $V_{ij}$ satisfying $\|V_{ij}\|_{\op} \leq 1$. Thus, $\SA = \CCC_{22} - \CCC_{22}^{1/2}V_{21}V_{12} \CCC_{22}^{1/2}$, from which we find that $\SA$ is self-adjoint and compact. Let $\{h_j\}_{j\geq 1}$ be an orthonormal basis, then we find that 
\begin{align}
	\sum_{j=1}^\infty \langle \SA h_j, h_j \rangle   &= \sum_{j=1}^\infty \langle \CCC_{22} h_j, h_j \rangle - \sum_{j=1}^\infty \langle \CCC_{22}^{1/2}V_{21}V_{12} \CCC_{22}^{1/2}  h_j, h_j \rangle \notag  \\
	&= \sum_{j=1}^\infty \langle \CCC_{22} h_j, h_j \rangle - \sum_{j=1}^\infty \langle V_{21}V_{12} \CCC_{22}^{1/2}  h_j, \CCC_{22}^{1/2} h_j \rangle.  \label{eqlem01}
\end{align}
Note that $\sum_{j=1}^\infty \langle V_{21}V_{12} \CCC_{22}^{1/2}  h_j, \CCC_{22}^{1/2} h_j \rangle  \leq \sum_{j=1}^\infty \|V_{21}V_{12}\|_{\op}\langle  \CCC_{22}^{1/2}  h_j, \CCC_{22}^{1/2} h_j \rangle \leq \sum_{j=1}^\infty \langle\CCC_{22} h_j,  h_j \rangle < \infty$, which follows from the fact that $\|V_{21}V_{12}\|_{\op}\leq \|V_{21}\|_{\op}\|V_{12}\|_{\op} \leq 1$ and $\CCC_{22}$ is trace-class. Combining this result with \eqref{eqlem01}, we find that $\SA$ is a trace-class operator. Similarly, using the properties of the inner product, $\CCC_{22}$ and $V_{ij}$, it can also be shown that $\langle \SA h,h \rangle = \langle \CCC_{22}h,h \rangle - \langle \CCC_{22}^{1/2}V_{21}V_{12}\CCC_{22}^{1/2}h,h \rangle  \geq 0$; see the proof of Proposition 2.1 of \cite{Seo2024}, in particular (A.2).
\end{proof}
\begin{lemma} \label{lem2}
$\|\widehat{C}_{\Upsilon\Upsilon,\KK}^{-1}\|_{\op} = O_p(\|\widehat{\SA}_{\KK}^{-1}\|_{\op})$ and $\|\widehat{C}_{\ZZ\XX,\KK}^{-1}\|_{\op} = O_p(\|\widehat{\SS}_{\KKK}^{-1}\|_{\op})$. 
\end{lemma}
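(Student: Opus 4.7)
The plan is to exploit the explicit block-operator representations of $\widehat{C}_{\Upsilon\Upsilon,\KK}^{-1}$ in \eqref{eqreginv} and $\widehat{C}_{\ZZ\XX,\KKK}^{-1}$ in \eqref{eqreginv2}, and to bound the operator norm block-by-block using sub-multiplicativity $\|AB\|_{\op}\leq\|A\|_{\op}\|B\|_{\op}$. Since the operator norm of a $2\times 2$ block operator on a product Hilbert space is controlled (up to a universal constant) by the sum of the operator norms of its four blocks, it suffices to handle each block separately.

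First I would inspect the four blocks of $\widehat{C}_{\Upsilon\Upsilon,\KK}^{-1}$. The $(1,1)$ block is $\widehat{\CCC}_{11}^{-1}+\widehat{\CCC}_{11}^{-1}\widehat{\CCC}_{12}\widehat{\SA}_{\KK}^{-1}\widehat{\CCC}_{21}\widehat{\CCC}_{11}^{-1}$, while the other three each contain the factor $\widehat{\SA}_{\KK}^{-1}$ multiplied by bounded pieces. Under Assumption~\ref{assum2}\ref{assum2c}, $\|\widehat{\CCC}_{11}^{-1}-\CCC_{11}^{-1}\|_{\op}=O_p(T^{-1/2})$ gives $\|\widehat{\CCC}_{11}^{-1}\|_{\op}=O_p(1)$, and $\|\widehat{C}_{\Upsilon\Upsilon}-C_{\Upsilon\Upsilon}\|_{\op}=O_p(T^{-1/2})$ gives $\|\widehat{\CCC}_{ij}\|_{\op}=O_p(1)$ for $i,j\in\{1,2\}$, since each $\widehat{\CCC}_{ij}$ is the compression of $\widehat{C}_{\Upsilon\Upsilon}$ by coordinate projections whose operator norms are at most one. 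Sub-multiplicativity then yields that every term on the right-hand side of \eqref{eqreginv} except the lone $\widehat{\CCC}_{11}^{-1}$ summand has operator norm bounded by $O_p(\|\widehat{\SA}_{\KK}^{-1}\|_{\op})$.

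To absorb the leftover $\widehat{\CCC}_{11}^{-1}$ term, I would observe that $\|\widehat{\SA}_{\KK}^{-1}\|_{\op}=\hat{\lambda}_{\KK}^{-1}\geq \hat{\lambda}_1^{-1}$ from the decreasing ordering of the eigenvalues. By consistency of empirical eigenvalues (see, e.g., \citealp{Bosq2000}, Theorem~4.10) combined with Assumption~\ref{assum2}\ref{assum2c}, $\hat{\lambda}_1\to_p\lambda_1>0$, so $\hat{\lambda}_1^{-1}$ is bounded above \emph{and below} away from zero with probability tending to one. Consequently $\|\widehat{\CCC}_{11}^{-1}\|_{\op}=O_p(1)=O_p(\|\widehat{\SA}_{\KK}^{-1}\|_{\op})$, and adding over the four blocks gives the first assertion. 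The second assertion follows from an identical argument after substituting $\widehat{\DDD}_{ij}$ for $\widehat{\CCC}_{ij}$ and $\widehat{\SS}_{\KKK}^{-1}$ for $\widehat{\SA}_{\KK}^{-1}$, invoking Assumption~\ref{assumpIV2}\ref{assumpIV2c} and noting $\|\widehat{\SS}_{\KKK}^{-1}\|_{\op}=\hat{\llambda}_{\KKK}^{-1}\geq\hat{\llambda}_1^{-1}$ is bounded away from zero in probability under Assumption~\ref{assumpIV2}\ref{assumpIV3a}.

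There is no genuine obstacle here; the only point requiring a moment of care is verifying that the $O_p(1)$ term $\widehat{\CCC}_{11}^{-1}$ is indeed dominated by $\|\widehat{\SA}_{\KK}^{-1}\|_{\op}$, which is handled by the lower-bound observation $\hat{\lambda}_{\KK}^{-1}\geq\hat{\lambda}_1^{-1}\geq c>0$ w.p.a.1. Everything else is routine operator-norm bookkeeping.
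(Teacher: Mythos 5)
Your proposal is correct and follows essentially the same route as the paper: both bound the block-operator representation term by term using $\|\widehat{\CCC}_{11}^{-1}\|_{\op}=O_p(1)$, $\|\widehat{\CCC}_{ij}\|_{\op}=O_p(1)$, and sub-multiplicativity, and then absorb the lone $O_p(1)$ summand into $O_p(\|\widehat{\SA}_{\KK}^{-1}\|_{\op})$. Your justification of that last absorption via the explicit lower bound $\|\widehat{\SA}_{\KK}^{-1}\|_{\op}=\hat{\lambda}_{\KK}^{-1}\geq\hat{\lambda}_1^{-1}\geq c>0$ w.p.a.1 is a slightly cleaner version of the paper's appeal to the divergence of $\|\widehat{\SA}_{\KK}^{-1}\|_{\op}$, but the argument is otherwise identical.
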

\begin{proof}[Proof of Lemma \ref{lem2}]
For any $v\in \widetilde{\mathcal H}$, we can find $v_1$ and $v_2$ such that $v=\left[\begin{smallmatrix}v_1\\v_2\end{smallmatrix} \right]$, $v_1 \in \mathbb{R}^{m}$, and  $v_2 \in \mathcal H$. We similarly write $w  = \left[\begin{smallmatrix}w_1\\w_2\end{smallmatrix}\right]$. Then, to show the former result,  we first observe that  $\|\widehat{C}_{\Upsilon\Upsilon,\KK}^{-1}\|_{\op} = \sup_{\|v\|=\|w\| = 1} |\langle \widehat{C}_{\Upsilon\Upsilon,\KK}^{-1}v,w\rangle|$. 
Note that $\widehat{\CCC}_{11}$ (and its inverse), $\widehat{\CCC}_{12}$, and $\widehat{\CCC}_{21}$ are all convergent (see Assumption \ref{assum2}) in the operator norm, which implies that they are all $O_p(1)$ uniformly in $T$; see Lemma S5.1 of \cite{seo2024functional} based on some results given in \cite{skorohod2001}. On the other hand, $\|\SA_{\KK}^{-1}\|_{\op}$ diverges as $T$ gets larger, and thus it is obvious that, for large enough $T$, 
\begin{align}
	|	\langle \widehat{C}_{\Upsilon\Upsilon,\KK}^{-1}v,w \rangle| \leq&	|\langle \widehat{\CCC}_{11}^{-1}v_1 + \widehat{\CCC}_{11}^{-1}\widehat{\CCC}_{12} \widehat{\SA}^{-1}_{\KK} \nonumber
	\widehat{\CCC}_{21}\widehat{\CCC}_{11}^{-1}v_1 -\widehat{\CCC}_{11}^{-1} \widehat{\CCC}_{12} \widehat{\SA}^{-1}_{\KK} v_2  , w_1 \rangle|  \\ \nonumber
	& + |\langle -\widehat{\SA}^{-1}_{\KK} \widehat{\CCC}_{21}\widehat{\CCC}_{11}^{-1}  v_1 + \widehat{\SA}^{-1}_{\KK}v_2, w_2 \rangle |	\\  \nonumber
	&\leq O_p(\|\widehat{\CCC}_{11}^{-1}\|_{\op}) +  O_p(\|\widehat{\SA}^{-1}_{\KK}\|_{\op}).    \nonumber
\end{align} 
Since $\widehat{\CCC}_{11}^{-1}$ is also bounded uniformly in $T$ (Assumption \ref{assum2}\ref{assum2c}), the desired result immediately follows. The latter result can similarly be established, so the details are omitted. 
\end{proof}
Note that ${C}_{\Upsilon\Upsilon,\KK}^{-1}$ and ${C}_{\ZZ\XX,\KK}^{-1}$, defined in Theorems \ref{thm2} and \ref{thm4}, are given by  $${C}_{\Upsilon\Upsilon,\KK}^{-1}  = \left[\begin{matrix} 
{\CCC}_{11}^{-1} + {\CCC}_{11}^{-1}{\CCC}_{12} \SA^{-1}_{\KK} {\CCC}_{21}{\CCC}_{11}^{-1} & -{\CCC}_{11}^{-1} {\CCC}_{12} \SA^{-1}_{\KK} \\ -\SA^{-1}_{\KK} {\CCC}_{21}{\CCC}_{11}^{-1} &\SA^{-1}_{\KK}\end{matrix}\right]$$ and \begin{equation} \label{eqeqeq001}
{C}_{\ZZ\XX,\KK}^{-1}  = \left[\begin{matrix} {\DDD}_{11}^{-1} + {\DDD}_{11}^{-1}{\DDD}_{12} \SS^{-1}_{\KK} {\DDD}_{21}{\DDD}_{11}^{-1} & -{\DDD}_{11}^{-1} {\DDD}_{12} \SS^{-1}_{\KK} \\ -\SS^{-1}_{\KK} {\DDD}_{21}{\DDD}_{11}^{-1} &\mathcal{S}^{-1}_{\KK}	\end{matrix}\right].
\end{equation}
From the properties of the operator norm and similar arguments used in the proof of Lemma~\ref{lem2}, it is straightforward to show that 
\begin{equation} \label{addeqrem}
\|{C}_{\Upsilon\Upsilon,\KK}^{-1}\|_{\op} = O_p(\|\SA_{\KK}^{-1}\|_{\op}) = O_p(|\lambda_{\KK}|^{-1})  \quad\text{and}\quad  \|{C}_{\ZZ\XX,\KK}^{-1}\|_{\op} = O_p(\|\SS_{\KK}^{-1}\|_{\op})= O_p(|\llambda_{\KK}|^{-1}).  
\end{equation}
Furthermore, the following can be shown:
\begin{lemma} \label{lem3} As $\reg T \to \infty$, 
$\|\widehat{C}_{\Upsilon\Upsilon,\KK}^{-1}-{C}_{\Upsilon\Upsilon,\KK}^{-1}\|_{\op} = O_p(\|\widehat{\SA}_{\KK}^{-1}-\SA_{\KK}^{-1}\|_{\op}) + o_p(1)$ and $\|\widehat{C}_{\ZZ\XX,\KK}^{-1}-{C}_{\ZZ\XX,\KK}^{-1}\|_{\op} = O_p(\|\widehat{\SS}_{\KKK}^{-1}-{\SS}_{\KKK}^{-1}\|_{\op}) + o_p(1)$. 
\end{lemma}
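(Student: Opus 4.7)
The plan is to reduce the block-operator norm to its four entries and then, for each entry, apply a telescoping-product expansion. For any bounded block operator $M=\left[\begin{smallmatrix} M_{11} & M_{12}\\ M_{21} & M_{22}\end{smallmatrix}\right]$ on $\mathbb{R}^m \times \mathcal H$, the triangle inequality gives $\|M\|_{\op} \leq \sum_{i,j}\|M_{ij}\|_{\op}$, so it suffices to show that each of the four blocks of $\widehat{C}_{\Upsilon\Upsilon,\KK}^{-1}-{C}_{\Upsilon\Upsilon,\KK}^{-1}$ is of order $O_p(\|\widehat{\SA}_{\KK}^{-1}-\SA_{\KK}^{-1}\|_{\op})+o_p(1)$. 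Before doing so I would record the key deterministic bound $\|\SA_{\KK}^{-1}\|_{\op} \le \reg^{-1/2}$: by definition of $\KK$, $\widehat\lambda_{\KK}^{2}\geq\reg$, so $\|\widehat{\SA}_{\KK}^{-1}\|_{\op} = \widehat\lambda_{\KK}^{-1}\le\reg^{-1/2}$, and standard eigenvalue perturbation ($|\lambda_{\KK}-\widehat\lambda_{\KK}| \le \|\widehat{\SA}-\SA\|_{\op}=O_p(T^{-1/2})$) together with $T\reg\to\infty$ yields $\|\SA_{\KK}^{-1}\|_{\op} = O_p(\reg^{-1/2})$ as well.

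For each block I apply the identity
\[
\widehat A_{1}\cdots\widehat A_{k} - A_{1}\cdots A_{k} \;=\; \sum_{i=1}^{k} A_{1}\cdots A_{i-1}\bigl(\widehat A_i - A_i\bigr)\widehat A_{i+1}\cdots\widehat A_k,
\]
with $k\in\{1,2,3,5\}$ depending on the block. For instance, the off-diagonal $(1,2)$-block $-\widehat{\CCC}_{11}^{-1}\widehat{\CCC}_{12}\widehat{\SA}_{\KK}^{-1}+\CCC_{11}^{-1}\CCC_{12}\SA_{\KK}^{-1}$ expands into three telescoped terms, and the $(1,1)$-block similarly expands into one $O_p(T^{-1/2})$ contribution from $\widehat{\CCC}_{11}^{-1}-\CCC_{11}^{-1}$ plus five telescoped terms arising from $\widehat{\CCC}_{11}^{-1}\widehat{\CCC}_{12}\widehat{\SA}_{\KK}^{-1}\widehat{\CCC}_{21}\widehat{\CCC}_{11}^{-1}$. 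Each summand is then bounded using Assumption \ref{assum2}: $\|\widehat{\CCC}_{ij}-\CCC_{ij}\|_{\op}=O_p(T^{-1/2})$ and $\|\widehat{\CCC}_{11}^{-1}-\CCC_{11}^{-1}\|_{\op}=O_p(T^{-1/2})$, together with the deterministic bound above on $\|\widehat{\SA}_{\KK}^{-1}\|_{\op}$. The one telescoped term that contains $\widehat{\SA}_{\KK}^{-1}-\SA_{\KK}^{-1}$ is flanked by operators of norm $O_p(1)$ and contributes exactly $O_p(\|\widehat{\SA}_{\KK}^{-1}-\SA_{\KK}^{-1}\|_{\op})$; every remaining telescoped term picks up one factor of $O_p(T^{-1/2})$ together with at most one factor involving $\widehat{\SA}_{\KK}^{-1}$ (of norm $O_p(\reg^{-1/2})$), and is therefore $O_p((T\reg)^{-1/2}) = o_p(1)$ under $T\reg\to\infty$.

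The second statement, for $\widehat{C}_{\ZZ\XX,\KK}^{-1}-{C}_{\ZZ\XX,\KK}^{-1}$, follows by a verbatim repetition with the substitutions $\CCC_{ij}\mapsto\DDD_{ij}$, $\SA_{\KK}\mapsto\SS_{\KKK}$, and Assumption \ref{assumpIV2} in place of Assumption \ref{assum2}; the only new wrinkle is that $\DDD_{11}$, $\DDD_{ij}$, and $\SS_{\KKK}$ are not self-adjoint, but this plays no role since the telescoping argument uses only operator-norm submultiplicativity. I expect the main (though essentially mechanical) obstacle to be the bookkeeping: confirming that in every telescoped term not containing $\widehat{\SA}_{\KK}^{-1}-\SA_{\KK}^{-1}$, at least one factor of $O_p(T^{-1/2})$ is paired with no more than one $\widehat{\SA}_{\KK}^{-1}$ of norm $O_p(\reg^{-1/2})$—an overlooked extra $\reg^{-1/2}$ factor anywhere would defeat the $o_p(1)$ remainder, and the condition $T\reg\to\infty$ is precisely what is needed to absorb these cross terms.
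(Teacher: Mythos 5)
Your proposal is correct and follows essentially the same route as the paper's proof: both reduce the block operator matrix $\widehat{C}_{\Upsilon\Upsilon,\KK}^{-1}-{C}_{\Upsilon\Upsilon,\KK}^{-1}$ to its entries, use $\|\widehat{\CCC}_{ij}-\CCC_{ij}\|_{\op}=O_p(T^{-1/2})$, $\|\widehat{\CCC}_{11}^{-1}-\CCC_{11}^{-1}\|_{\op}=O_p(T^{-1/2})$, and $\|\widehat{\SA}_{\KK}^{-1}\|_{\op}\leq\reg^{-1/2}$ to conclude that each block is $O_p(\|\widehat{\SA}_{\KK}^{-1}-\SA_{\KK}^{-1}\|_{\op})+O_p(T^{-1/2}\reg^{-1/2})$, with the cross terms absorbed into $o_p(1)$ via $T\reg\to\infty$. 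Your telescoping identity simply makes explicit the ``some algebra'' the paper leaves implicit, and the second statement is handled identically in both.
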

\begin{proof}[Proof of Lemma \ref{lem3}]
To show the former result, we note that $\widehat{\CCC}_{ij} - \CCC_{ij} = O_p(T^{-1/2})$ and $\widehat{\CCC}_{11}^{-1} - {\CCC}_{11}^{-1} = O_p(T^{-1/2})$. Using these results and some algebra, we find that the stochastic orders of the elements of the operator matrix $\widehat{C}_{\Upsilon\Upsilon,\KK}^{-1} - {C}_{\Upsilon\Upsilon,\KK}^{-1}$ are given by
\begin{align}
	\begin{bmatrix} O_p(\widehat{\SA}_{\KK}^{-1} - {\SA}_{\KK}^{-1})  + O_p(T^{-1/2}) + O_p(T^{-1/2}\widehat{\SA}_{\KK}^{-1})  & O_p(\widehat{\SA}_{\KK}^{-1} - {\SA}_{\KK}^{-1})+ O_p(T^{-1/2}\widehat{\SA}_{\KK}^{-1}) \\
		O_p(\widehat{\SA}_{\KK}^{-1} - {\SA}_{\KK}^{-1})+	O_p(T^{-1/2}\widehat{\SA}_{\KK}^{-1} ) & O_p(\widehat{\SA}_{\KK}^{-1} - {\SA}_{\KK}^{-1})
	\end{bmatrix}.
\end{align}
Since $\|\widehat{\SA}_{\KK}^{-1}\|_{\op} = \reg^{-1/2}$ by our choice of $\KK$ in \eqref{eqshur1},  $O_p(T^{-1/2})\widehat{\SA}_{\KK}^{-1}=O_p(\reg^{-1/2}T^{-1/2}) = o_p(1)$, from which the desired result follows.

The proof for the latter result is nearly identical, so omitted.
\end{proof}
\begin{remark} \label{remaddapp01} 	 {We consider the adjoint ${C}_{\XX\ZZ,\KK}^{-1}$ of ${C}_{\ZZ\XX,\KK}^{-1}$ in Theorem \ref{thm4}.  
	If we express ${C}_{\ZZ\XX,\KK}^{-1}$ in \eqref{eqeqeq001} as ${C}_{\ZZ\XX,\KK}^{-1} = \left[\begin{smallmatrix}
		a & b \\ c& d
	\end{smallmatrix}\right]$, then its adjoint ${C}_{\XX\ZZ,\KK}^{-1}$ is given by $	{C}_{\XX\ZZ,\KK}^{-1} = \left[\begin{smallmatrix}
		a^\ast & c^\ast \\ b^\ast& d^\ast
	\end{smallmatrix}\right]$. Moreover, since $\|A\|_{\op} = \|A^\ast\|_{\op}$ for any linear map $A$ between Hilbert spaces (see e.g., \citealp{Megginson2012}, Proposition 3.1.2), it follows that  $\|{C}_{\XX\ZZ,\KK}^{-1}\|_{\op} = O_p(\|\SS_{\KK}^{-1}\|_{\op})= O_p(|\llambda_{\KK}|^{-1})$ and also $\|\widehat{C}_{\XX\ZZ,\KK}^{-1}-{C}_{\XX\ZZ,\KK}^{-1}\|_{\op} = O_p(\|\widehat{\SS}_{\KKK}^{-1}-{\SS}_{\KKK}^{-1}\|_{\op}) + o_p(1)$. } 
	\end{remark}

	\subsection{Proofs of the main theoretical results} \label{app_proof}
	\paragraph{Proof of Proposition \ref{prop: svar: identification}} 
	Let $\mathfrak S$ be the Schur complement of $\mathcal B=\left[\begin{smallmatrix} I_1& \beta_{12} \\ \beta_{21}& I_2 \end{smallmatrix}\right]$, i.e., $\mathfrak S = I_1-\beta_{12}\beta_{21}$. Then the operator matrix $\mathcal B$ is invertible as long as $(I_1-\beta_{12}\beta_{21})x$ is nonzero for some nonzero $x \in \mathbb{R}$ and its inverse, viewed as a map depending on $\beta_{12}$ and $\beta_{21}$,  is given as follows (see \citealp{Bart2007}, p.\ 29):
	\begin{align} \label{eqinversefn}
\mathcal B^{-1}(\beta_{12},\beta_{21}) = \begin{bmatrix} \mathfrak S^{-1} & -\mathfrak S^{-1} \beta_{12} \\ - \beta_{21} \mathfrak S^{-1} & I_2 + \beta_{21} \mathfrak S^{-1} \beta_{12} \end{bmatrix}.
\end{align}
We first consider the case when $\beta_{12} = 0$. In this case, we know that $\Gamma$ in the RFVAR is given by $\Gamma = \left[\begin{smallmatrix} \gamma_{11} & \gamma_{12} \\ \gamma_{21} & \gamma_{22} \end{smallmatrix}\right]  =\left( \mathcal B^{-1}(0,\beta_{21}) \right)\left[\begin{smallmatrix} \alpha_{11} & \alpha_{12} \\ \alpha_{21}& \alpha_{22} \end{smallmatrix}\right] =  \left[\begin{smallmatrix} \alpha_{11} & \alpha_{12} \\ -\beta_{21}\alpha_{11} + \alpha_{21} & -\beta_{21}\alpha_{12}+ \alpha_{22} \end{smallmatrix}\right]$ and $	\left[\begin{smallmatrix} \varepsilon_{1t} 
\\ \mathcal E_{2t}\end{smallmatrix}\right] = \left(\mathcal B^{-1}(0,\beta_{21}) \right)\left[\begin{smallmatrix} u_{1t} 
\\ U_{2t}\end{smallmatrix}\right] = \left[\begin{smallmatrix} u_{1t} \\ -\beta_{21}u_{1t}+U_{2t}\end{smallmatrix}\right]$. The covariance operator   $\mathbf \Sigma_\varepsilon$ of $\left[\begin{smallmatrix}\varepsilon_{1t}\\ 
\mathcal E_{2t}\end{smallmatrix}\right]$ can be written as
\begin{equation*}
\mathbf \Sigma_\varepsilon  = \begin{bmatrix}
	\sigma_{\varepsilon,11}  &\sigma_{\varepsilon,12} \\\sigma_{\varepsilon,21} & {\Sigma}_{\varepsilon,22}
\end{bmatrix} = \begin{bmatrix}
	\sigma_{11} & -\sigma_{11} \beta_{21}^\ast  \\- \beta_{21} \sigma_{11}  & \beta_{21} \sigma_{11} \beta_{21}^\ast + \Sigma_{22} 
\end{bmatrix}.
\end{equation*} 
Observing that $\sigma_{11}^{-1}$ and $\sigma_{\varepsilon,11}^{-1}$ are well defined (and they even commute with any linear map in the considered case; see footnote \ref{foot1}), we find that  $\beta_{21} = - \sigma_{\varepsilon,21}\sigma_{\varepsilon,11}^{-1}$, $\alpha_{11}=\gamma_{11}$, $\alpha_{12}=\gamma_{12}$, $\alpha_{21}= \gamma_{21} - (\sigma_{\varepsilon,21}\sigma_{\varepsilon,11}^{-1})\gamma_{11}$, $\alpha_{22}=\gamma_{22}-(\sigma_{\varepsilon,21}\sigma_{\varepsilon,11}^{-1})\gamma_{12}$. Moreover, from that  $\sigma_{11}=\sigma_{\varepsilon,11}$ and $\beta_{21}=- \sigma_{\varepsilon,21}\sigma_{\varepsilon,11}^{-1}$, we find that $\Sigma_{22}=\Sigma_{\varepsilon,22} - (\sigma_{\varepsilon,21}\sigma_{\varepsilon,21}^\ast \sigma_{\varepsilon,11}^{-1})$. Thus for the given set of reduced-form parameters, we can find the unique structural parameters, which is the desired identification result.

We next consider the case when $\beta_{21} = 0$. Then \eqref{eqreduced1} holds with $\left[\begin{smallmatrix} \gamma_{11} & \gamma_{12} \\ \gamma_{21} & \gamma_{22} \end{smallmatrix}\right]=  \left(\mathcal B^{-1}(\beta_{12},0)\right) \left[\begin{smallmatrix} \alpha_{11} & \alpha_{12} \\ \alpha_{21}& \alpha_{22} \end{smallmatrix}\right] =  \left[\begin{smallmatrix} \alpha_{11} -\beta_{12}\alpha_{21}& \alpha_{12} -\beta_{12}\alpha_{22} \\ \alpha_{21} & \alpha_{22} \end{smallmatrix}\right]$ and in this case 
\begin{equation*}
\mathbf \Sigma_\varepsilon  = \begin{bmatrix}
	\sigma_{\varepsilon,11}  &\sigma_{\varepsilon,12} \\\sigma_{\varepsilon,21} & {\Sigma}_{\varepsilon,22}
\end{bmatrix} =  \begin{bmatrix}
	\sigma_{11}  + \beta_{12}\Sigma_{22}\beta_{12}^\ast & -\beta_{12} \Sigma_{22}  \\-\Sigma_{22}\beta_{12}^\ast & \Sigma_{22} 
\end{bmatrix}.
\end{equation*}
Observe first that $\Sigma_{\varepsilon, 22} = \Sigma_{22}$, but $\beta_{12}$ is not identified from the equation $\sigma_{\varepsilon,12} = \beta_{12}{\Sigma}_{\varepsilon, 22} = \beta_{12}\Sigma_{22}$ unless $\Sigma_{22}$ is injective (see Proposition 2.1 and the following discussion in \citealp{Mas2007}). Moreover, even if we know $\sigma_{11}$, we cannot determine $\beta_{12}$ from the equation $\sigma_{\varepsilon,11} = \sigma_{11} + \beta_{12}\Sigma_{22}\beta_{12}^\ast$ if $\Sigma_{22}$ is not injective. It thus may be easily deduced that $\beta_{12}$ is not uniquely identified from the reduced-form parameters. Of course, in this functional setting $\Sigma_{\varepsilon,22}$ is not invertible and thus the characterization via the inverse transform, $\beta_{12}=\sigma_{\varepsilon,12}\Sigma_{\varepsilon,22}^{-1}$, is invalid. Nevertheless, it turns out that if $\Sigma_{22}$ is injective (and thus $\Sigma_{\varepsilon,22}$ is injective), the equation $\sigma_{\varepsilon,12} = \beta_{12}\Sigma_{\varepsilon,22}$ defines a unique bounded linear map $\beta_{12}$ (see e.g., \citealp{carrasco2007linear}). Once $\beta_{12}$ is uniquely identified by $\sigma_{\varepsilon,12}$ and $\Sigma_{\varepsilon,22}$,  then, from similar algebra as in the above case with $\beta_{12}=0$, it is straightforward to show that the remaining structural parameters, $\alpha_{ij}$ for $i,j \in \{1,2\}$ and $\sigma_{11}$, can be written as functions of the reduced-form parameters and $\beta_{12}$. \qed

\paragraph{Proof of Proposition \ref{prop: svar: identification: a}}
Let $Y_t =\left[\begin{smallmatrix}y_t\\ X_t\end{smallmatrix}\right]$, $\mathcal E_t = \left[\begin{smallmatrix}\varepsilon_{1,t}\\\mathcal E_{2,t}\end{smallmatrix}\right]$ and  $\Gamma = \left(	\mathcal B^{-1}(\beta_{12},\beta_{21})\right) \mathcal A$ (see \eqref{eqinversefn}). From \eqref{eqreduced1}, we find that 
\begin{align}
Y_{t+h} &=   \Gamma Y_{t+h-1} + \mathcal E_{t+h} = \Gamma^2 Y_{t+h-2} +\Gamma \mathcal E_{t+h-1} + \mathcal E_{t+h} = \cdots = \Gamma^h Y_{t} + \tilde{V}_{t}, \label{eqpf01}
\end{align}
where $\tilde{V}_{t} = \sum_{j=0}^{h-1}\Gamma^j \mathcal E_{t+h-j}$, and we write this as $\tilde V_t = \left[\begin{smallmatrix}\tilde v_{1,t}\\ \tilde V_{2,t}\end{smallmatrix}\right]$. We may view $\Gamma^h$ as the following operator matrix $\Gamma^h = \left[\begin{smallmatrix} \gamma_{h,11} & \gamma_{h,12} \\ \gamma_{h,21} & \gamma_{h,22} \end{smallmatrix}\right]$, then \eqref{eqpf01} implies that 
\begin{align}
y_{t+h}&=  \gamma_{h,11} y_t + \gamma_{h,12} X_{t} + \tilde v_{1,t},  \label{eqpf02}\\
X_{t+h}&=  \gamma_{h,21} y_t + \gamma_{h,22} X_{t} + \tilde V_{2,t}.\label{eqpf03}
\end{align}
Since $\tilde V_t$ is $(\mathbb{R}\times \mathcal H)$-valued MA($h-1$) process, $\tilde v_{1,t}$ and $\tilde V_{2,t}$ are MA($h-1$) processes in the relevant spaces.   

If $\beta_{12}=0$, $\mathcal B^{-1} = \mathcal B^{-1}(0,\beta_{21}) = \left[\begin{smallmatrix} I_1& 0 \\ -\beta_{21}& I_2 \end{smallmatrix}\right]$ and thus  we have
\begin{align} \label{eqproofadd01}
\IRF_{h,12}=\pcr \Gamma^{h} \mathcal B^{-1}\pchh  = \pcr \begin{bmatrix} \gamma_{h,11} & \gamma_{h,12} \\ \gamma_{h,21} & \gamma_{h,22} \end{bmatrix}\begin{bmatrix} I_1& 0 \\ -\beta_{21}& I_2 \end{bmatrix} \pchh.
\end{align}
Noting that $\pcr \left[\begin{smallmatrix} a &b \\c&d \end{smallmatrix}\right] \pchh = b$ for all appropriately defined operator elements $a$, $b$, $c$ and $d$, we find from \eqref{eqproofadd01} that  $\gamma_{h,12}: \mathcal H\to \mathbb{R}$ is equivalent to $\IRF_{h,12}$. 	From this result, \eqref{eqpf02}, and the facts that (i) any linear map $g:\mathcal H \to \mathbb{R}$ can be understood as the map $\langle h_g, \cdot  \rangle:\mathcal H\to \mathbb{R}$ for some uniquely identified $h_g \in \mathcal H$ by the Riesz representation theorem (e.g. \citealp{Conway1994}, p.\ 13) and (ii) any linear map $g:\mathbb{R} \to \mathbb{R}$ can be understood as a scalar multiplication (see footnote \ref{foot1}), the desired result \eqref{prop2: eq1} is deduced. 
On the other hand, 
\begin{align}\label{eqpf02a}
\IRF_{h,21}=\pch \Gamma^{h} \mathcal B^{-1}\pcrr  = \pch \begin{bmatrix} \gamma_{h,11} & \gamma_{h,12} \\ \gamma_{h,21} & \gamma_{h,22} \end{bmatrix}\begin{bmatrix} I_1& 0 \\ -\beta_{21}& I_2 \end{bmatrix} \pcrr = \gamma_{h,21}-\gamma_{h,22}\beta_{21}.
\end{align}
Observing that $X_t   = -\beta_{21}y_t+  \alpha_{21}y_{t-1}+\alpha_{22}X_{t-1} + U_{2t}$, \eqref{eqpf03} can be written as
\begin{align}
X_{t+h}&=  \gamma_{h,21} y_t + \gamma_{h,22} (-\beta_{21}y_t+  \alpha_{21}y_{t-1}+\alpha_{22}X_{t-1} + U_{2t} )   +  \tilde V_{2,t} \notag \\
&= (\gamma_{h,21} -  \gamma_{h,22} \beta_{21})y_t  +   \gamma_{h,22}\alpha_{21}y_{t-1}+ \gamma_{h,22}\alpha_{22}X_{t-1} +  V_{2,t},\label{eqpf03a}
\end{align}
where $V_{2,t} = \gamma_{h,22} U_{2t}  +  \tilde V_{2,t}$, which is obviously an MA($h-1$) process. The desired result \eqref{prop2: eq2} follows from 
\eqref{eqpf02a} and \eqref{eqpf03a}.

Now suppose that $\beta_{21}=0$ and $\mathbf\Sigma_{22}$ is injective. We then have $\mathcal B^{-1}  = \mathcal B^{-1}(\beta_{12},0)= \left[\begin{smallmatrix} I_1& -\beta_{12} \\ 0 & I_2 \end{smallmatrix}\right]$. Thus 
\begin{equation}
\IRF_{h,21}=\pch \Gamma^{h} \mathcal B^{-1}\pcrr  = \pch \begin{bmatrix} \gamma_{h,11} & \gamma_{h,12} \\ \gamma_{h,21} & \gamma_{h,22} \end{bmatrix}\begin{bmatrix} I_1& -\beta_{12} \\ 0 & I_2 \end{bmatrix} \pcrr = \gamma_{h,21}.
\end{equation}
Therefore, $\gamma_{h,21}: \mathcal H\to \mathbb{R}$ is equivalent to  $\IRF_{h,21}$. Combining this result with \eqref{eqpf03}, the desired result \eqref{prop2: eq3} is obtained.  On the other hand, 
\begin{align} \label{eqpf03a2}
\IRF_{h,12}=\pcr \Gamma^{h} \mathcal B^{-1}\pchh = \pcr \begin{bmatrix} \gamma_{h,11} & \gamma_{h,12} \\ \gamma_{h,21} & \gamma_{h,22} \end{bmatrix}\begin{bmatrix} I_1& -\beta_{12} \\ 0 & I_2 \end{bmatrix} \pchh = -\gamma_{h,11}\beta_{12} + \gamma_{h,12}.
\end{align}
Observing that $y_t   = -\beta_{12}X_t+  \alpha_{11}y_{t-1}+\alpha_{12}X_{t-1} + u_{1t}$, \eqref{eqpf02} can be written as
\begin{align}
y_{t+h}&=  \gamma_{h,11} y_t + \gamma_{h,12} X_{t} + \tilde{v}_{1,t} = \gamma_{h,11}( -\beta_{12}X_t+  \alpha_{11}y_{t-1}+\alpha_{12}X_{t-1} + u_{1t})+ \gamma_{h,12} X_{t} + \tilde{v}_{1,t} \notag  \\ &= (-\gamma_{h,11}\beta_{12} + \gamma_{h,12})X_t + \gamma_{h,11} \alpha_{11}y_{t-1} + \gamma_{h,11}\alpha_{12}X_{t-1} + v_{1,t},	\label{eqpf03b} 
\end{align}  
where $v_{1,t}= \gamma_{h,11}u_{1t} + \tilde{v}_{1,t}$, which is an MA($h-1$) process. Then the desired result \eqref{prop2: eq4} follows from  \eqref{eqpf03a2} and \eqref{eqpf03b}.\qed

\paragraph{Proof of Theorem  \ref{thm1}}  
We note that ${\SA}$ in \eqref{eqshur0} does not change when ${v}_j$ is redefined as $\sgn(\langle \hat{v}_j,v_j \rangle)v_j$. Thus, with a slight abuse of notation, we hereafter let  ${v}_j$ denote $\sgn(\langle \hat{v}_j,v_j \rangle)v_j$ in this proof for notational simplicity. 
Observe that $\theta_h = \left[\begin{smallmatrix}\alpha_h\\\beta_h\end{smallmatrix}\right]$, $\hat\theta_h =\left[\begin{smallmatrix}\hat \alpha_h\\\hat \beta_h\end{smallmatrix}\right]$ and $\widehat{C}_{\Upsilon y} = \widehat{C}_{\Upsilon \Upsilon} \theta_h + \widehat{C}_{\Upsilon u}$, from which we find that
\begin{align} 
\hat{\theta}_h &= \widehat{C}_{\Upsilon\Upsilon,\KK}^{-1}  \widehat{C}_{\Upsilon y} =  \widehat{C}_{\Upsilon\Upsilon,\KK}^{-1}  \widehat{C}_{\Upsilon \Upsilon}  \theta_h   + \widehat{C}_{\Upsilon\Upsilon,\KK}^{-1}  \widehat{C}_{ \Upsilon u} \notag \\ &= 
\begin{bmatrix}
	\alpha_h + \widehat{\CCC}_{11}^{-1}\widehat{\CCC}_{12}(I_2-\widehat{\Pi}_{\KK}) \beta_h \\
	\widehat{\Pi}_{\KK} \beta_h
\end{bmatrix} + \widehat{C}_{\Upsilon\Upsilon,\KK}^{-1}  \widehat{C}_{ \Upsilon u}.
\end{align} 
Note that for large enough $T$, $\|\widehat{C}_{\Upsilon\Upsilon,\KK}^{-1}\|_{\op}=O_p(\|\widehat{\SA}_{\KK}^{-1}\|_{\op})$ (see Lemma \ref{lem2}). 			
Since  $\widehat{\lambda}_j^{-1} \leq \reg^{-1/2}$ for $j=1,\ldots,\KK$, we find the following:
\begin{equation*} \label{eqpfadd01}
\|  \widehat{C}_{\Upsilon\Upsilon,\KK}^{-1}  \widehat{C}_{\Upsilon u }\|_{\op} \leq O(\reg^{-1/2}) \|\sum_{j=1}^{\KK} \hat{v}_j \otimes \hat{v}_j\|_{\op}  \|\widehat{C}_{ \Upsilon u}\|_{\op}  = O_p(\reg^{-1/2}T^{-1/2}) \to_p 0. 
\end{equation*}
From a little algebra, we find that 
\begin{equation*}
\hat{\theta}_{h} - \theta_{h} =  	\begin{bmatrix}
	\widehat{\CCC}_{11}^{-1}\widehat{\CCC}_{12}(I_2-{\Pi}_{\KK}) \beta_h + \widehat{\CCC}_{11}^{-1}\widehat{\CCC}_{12}({\Pi}_{\KK}-\widehat{\Pi}_{\KK})  \beta_h \\
	- (I_2-{\Pi}_{\KK})  \beta_h - ({\Pi}_{\KK}-\hat{\Pi}_{\KK}) \beta_h
\end{bmatrix}  +  O_p(\reg^{-1/2}T^{-1/2}).
\end{equation*}
Therefore, 
\begin{equation}\label{eqpfaddarguments}
\hat{\theta}_h - \theta_h = O_p((\widehat{\Pi}_{\KK}-{\Pi}_{\KK}) \beta_h) + O_p((I_2-{\Pi}_{\KK}) \beta_h) + O_p(\reg^{-1/2}T^{-1/2}).
\end{equation} 
Observe that $\|\widehat{\SA}-\SA\|_{\op}=O_p(T^{-1/2})$ under Assumption \ref{assum2}. From Lemma 4.3 of \cite{Bosq2000} and Assumption \ref{assum3}, we find that $\|{v}_j-\hat{v}_j\| \leq O_p((\lambda_j-\lambda_{j+1})^{-1}T^{-1/2})=O_p((\lambda_j^{2}-\lambda_{j+1}^{2})^{-1} (\lambda_j+\lambda_{j+1})T^{-1/2})$. 
	{Also, note that $\reg^{1/2} \leq \hat{\lambda}_{\KK} = \hat{\lambda}_{\KK} - {\lambda}_{\KK} + {\lambda}_{\KK} \leq  \hat{\lambda}_{\KK} - {\lambda}_{\KK} +  \KK^{-\rho/2}$, and hence we obtain that $\tau^{1/2}\KK^{\rho/2} +  ( {\lambda}_{\KK} -\hat{\lambda}_{\KK})\KK^{\rho/2} \leq 1$ for every $T$. Since $\hat{\lambda}_{\KK} - {\lambda}_{\KK} = O_p(T^{-1/2})$, $\lambda_{\KK} \leq O_p(\KK^{-\rho/2})$ and $T^{-1/2}\tau^{-1/2} \to 0$ under the employed assumptions, we find that $O_p(1) = \tau^{1/2}\KK^{\rho/2} +  ( {\lambda}_{\KK} -\hat{\lambda}_{\KK})\KK^{\rho/2}  = \tau^{1/2}\KK^{\rho/2}  +  \tau^{1/2}\KK^{\rho/2} ( \tau^{-1/2} ({\lambda}_{\KK} -\hat{\lambda}_{\KK}))
		=\tau^{1/2}\KK^{\rho/2}(1+o_p(1)),$ from which we conclude that $\tau^{1/2}\KK^{\rho/2} = O_p(1)$.}
	Observe that 
	\begin{align}
		\|(\widehat{\Pi}_{\KK}- {\Pi}_{\KK}) \beta_h\| 
		&= \|\sum_{j=1}^{\KK} \langle \hat{v}_j-v_j, \beta_h \rangle \hat{v}_j -  \sum_{j=1}^{\KK} \langle {v}_j, \beta_h \rangle ({v}_j-\hat{v}_j)\| \notag \\
		&\leq O_p(T^{-1/2}) \sum_{j=1}^{\KK} j^{\rho+1} (\lambda_j+\lambda_{j+1})=  O_p(T^{-1/2})\sum_{j=1}^{\KK} j^{\rho/2+1} \notag \\ &\leq  O_p(T^{-1/2}\reg^{-1/2-2/\rho}\reg^{1/2+2/\rho}\KK^{\rho/2+2})=O_p(T^{-1/2}\reg^{-1/2-2/\rho}). \label{eqappadd01}
	\end{align}
	By allocating appropriate vectors $v_j$ to zero eigenvalues of $\SA$, we may assume that $\{v_j\}_{j \geq 1}$ forms an orthonormal basis of $\mathcal H$ (see the proof of Lemma A.1 of \citealp{Seo2024}). We then note that
	\begin{equation}\label{eqpf04}
		\|{\Pi}_{\KK} \beta_h -  \beta_h\|^2   = \|\sum_{j=\KK+1}^{\infty} \langle  \beta_h ,v_j \rangle v_j\|^2 = \sum_{j=\KK+1}^{\infty} \langle  \beta_h ,v_j\rangle^2,
	\end{equation}
	where the second equality holds due to the Parseval's identity. From \eqref{eqpf04}, Assumption \ref{assum3}\ref{assum3b}, the Euler-Maclaurin summation formula for the Riemann zeta-function (see (5.6) of \citealp{ibukiyama2014euler}), we find that 
	$ \|{\Pi}_{\KK} \beta_h -  \beta_h\|^2  \leq \KK^{1-2\varsigma}$. 
	Moreover, from similar arguments used in the proof in (S2.13) of \cite{seong2021functional}, we deduce that $\KK^{1-2\varsigma} \leq [(1+d_T) \reg]^{(2\varsigma-1)/\rho}$, where $d_T$ is a sequence depending on $T$  and satisfies that $(1+d_T) \reg/\reg \to_p 1$. Combining this result with \eqref{eqappadd01}, the desired result is established.  \qed

	\paragraph{Proof of Theorem \ref{thm2}}
	As in our proof of Theorem \ref{thm1}, we let  ${v}_j$ denote $\sgn(\langle \hat{v}_j,v_j \rangle)v_j$ for notational simplicity. \\
	
	\noindent {1. Proof of  \ref{thm2i0}:} Note that $\hat{\theta}_h = \widehat{P}_{\KK} \theta_h + \widehat{C}_{\Upsilon\Upsilon,\KK}^{-1}  \widehat{C}_{\Upsilon u}$ and thus
	\begin{align*}
		\sqrt{\frac{T}{\psi_{\KK}(\zeta)}} \widehat{\Theta}_1  
		&=  \left\langle \widehat{C}_{\Upsilon\Upsilon,\KK}^{-1} \frac{1}{\sqrt{T\psi_{\KK}(\zeta)}}\sum_{t=1}^T u_{h,t} \Upsilon_t, \zeta \right\rangle.   
	\end{align*}
	From similar arguments used in our proof of Lemma \ref{lem2}, $\|\widehat{C}_{\Upsilon\Upsilon,\KK}^{-1} - {C}_{\Upsilon\Upsilon,\KK}^{-1}\|_{\op} = O_p(\|\widehat{\SA}_{\KK}^{-1}-{\SA}_{\KK}^{-1}\|_{\op}) + o_p(1)$ can be deduced. Observe that
	
	\begin{align}
		&\widehat{\SA}_{\KK}^{-1}-{\SA}_{\KK}^{-1} =  \sum_{j=1}^{\KK} (\hat{\lambda}_j^{-1} - \lambda_j^{-1}){v}_j \otimes {v}_j + \sum_{j=1}^{\KK}  \hat{\lambda}_j^{-1}(\hat{v}_j \otimes \hat{v}_j-v_j\otimes v_j), 
	\end{align}
	where, using the fact that $\|\widehat{\SA}-\SA\|_{\op} = O_p(T^{-1/2})$, we find that the first term is bounded above by $O_p(\reg^{-1/2} T^{-1/2}\lambda_{\KK}^{-1})$ (see (S2.5) of \citealp{seong2021functional}) and the second is bounded above by $ O_p(\reg^{-1/2}T^{-1/2} \sum_{j=1}^{\KK} (\lambda_j-\lambda_{j+1})^{-1})$  under Assumption \ref{assum2} (see Lemma 4.3 of \citealp{Bosq2000}). These are all $o_p(1)$ under the conditions given in Assumption \ref{assum3},  and hence
	\begin{equation} 
		\sqrt{\frac{T}{\psi_{\KK}(\zeta)}} \widehat{\Theta}_1  = \left\langle ({C}_{\Upsilon\Upsilon,\KK}^{-1} + o_p(1)) \frac{1}{\sqrt{T\psi_{\KK}(\zeta)}} \sum_{t=1}^T u_{h,t}  \Upsilon_t, \zeta \right\rangle. \label{eqpfa1}
	\end{equation}
	From the functional central limit theorem for $L^2$-$m$-approximable sequences (see \citealp{berkes2013weak}, Theorem 1.1) and the Skorokhod representation theorem, we know that there exists a random element $V_T$ such that  $\|T^{-1/2} \sum_{t=1}^T u_{h,t} \Upsilon_t - V_T\| \to_p 0$ such that $V_T =_d N(0,\Lambda_{\UU})$ (Gaussian random element with mean zero and covariance $\Lambda_{\UU}$) for every $T$. Thus, neglecting asymptotically negligible terms, \eqref{eqpfa1} can be written as 
	\begin{equation}
		\mathcal {V}_T + \mathcal W_T, \label{eqpfa2}
	\end{equation}
	where $\mathcal{V}_T = \langle V_T, \frac{1}{\sqrt{\psi_{\KK}(\zeta)}}{C}_{\Upsilon\Upsilon,\KK}^{-1}\zeta \rangle$ and $\mathcal W_T = \langle T^{-1/2} \sum_{t=1}^T u_{h,t} \Upsilon_t-V_T, \frac{1}{\sqrt{\psi_{\KK}(\zeta)}}{C}_{\Upsilon\Upsilon,\KK}^{-1}\zeta \rangle.$ We let $\zeta_{\KK}=\frac{1}{\sqrt{\psi_{\KK}(\zeta)}}{C}_{\Upsilon\Upsilon,\KK}^{-1}\zeta$ and note that  
	\allowdisplaybreaks{\begin{align}
			\left|\frac{\langle T^{-1/2} \sum_{t=1}^T u_{h,t} \Upsilon_t - V_T, \zeta_{\KK} \rangle}{\langle V_T,\zeta_{\KK}\rangle}\right| 	&=  \frac{|\langle T^{-1/2} \sum_{t=1}^T u_{h,t} \Upsilon_t - V_T, \frac{\zeta_{\KK}}{\|\zeta_{\KK}\|} \rangle|}{|\langle V_T,\zeta_{\KK}\rangle|/\|\zeta_{\KK}\|} \notag \\ 
			&\leq \frac{\|T^{-1/2} \sum_{t=1}^T u_{h,t} \Upsilon_t - V_T\|}{|\langle V_T,\frac{\zeta_{\KK}}{\|\zeta_{\KK}\|}\rangle|} \leq \frac{o_p(1)}{|\langle V_T,\frac{\zeta_{\KK}}{\|\zeta_{\KK}\|}\rangle|}. \label{eqpfadd03}
	\end{align}}We observe that $\zeta_{\KK}/\|\zeta_{\KK}\| = \sum_{j=1}^{\KK} c_{\KK,j}(\zeta) \vtw_j$ and $c_{\KK,j}(\zeta)  \neq 0$ for some $j$ under Assumption \ref{assum4}\ref{assum4aa}, and also $\sum_{j=1}^{\KK} c_{\KK,j}(\zeta)^2 = 1$ for every $\KK$. From the property of $V_T$, we find that $\langle V_T,\zeta_{\KK}/\|\zeta_{\KK}\|\rangle$ is a normal random variable with mean zero, and its variance is given by $\sum_{j=1}^{\KK} \sum_{\ell=1}^{\KK} c_{\KK,j}(\zeta) c_{\KK,\ell}(\zeta) \langle \Lambda_{\UU}\vtw_j,\vtw_{\ell} \rangle$,  
	which converges to a positive constant under Assumption \ref{assum4}\ref{assum4bb}. This implies that the latter term in \eqref{eqpfa2} is  asymptotically negligible. 
	Note that by the properties of $V_T$, we know that $\langle V_T, \zeta_{\KK} \rangle$ is normally distributed with mean zero and its variance is given by 
	\begin{equation*}
		\langle \Lambda_{\UU} \zeta_{\KK},\zeta_{\KK} \rangle =  \frac{1}{{\psi_{\KK}(\zeta)}}\langle \Lambda_{\UU} {C}_{\Upsilon\Upsilon,\KK}^{-1}\zeta,  {C}_{\Upsilon\Upsilon,\KK}^{-1}\zeta \rangle = 1,
	\end{equation*}
	which establishes the result of \ref{thm2i0}.\\ 

	\noindent{2. Proof of \ref{thm2i1}:} 
	Observe that $\widehat{\Theta}_{2A} = \langle\widehat{P}_{\KK} \theta_h - {P}_{\KK}\theta_h,\zeta_2\rangle =  O_p(\langle (\widehat{D}-{D})\beta_h,\zeta_1\rangle) + O_p(\langle (\widehat{\Pi}_{\KK}- {\Pi}_{\KK})\beta_h,\zeta_2\rangle) $, where $\widehat{D} = \widehat{\CCC}_{11}^{-1}\widehat{\CCC}_{12}(I_2-\widehat{\Pi}_{\KK})$ and ${D} = {\CCC}_{11}^{-1}{\CCC}_{12}(I_2-{\Pi}_{\KK})$. We find that 
	\begin{align}
		&\langle(\widehat{\Pi}_{\KK} - {\Pi}_{\KK})\beta_h,\zeta_2\rangle =  \sum_{j=1}^{\KK} \langle \hat{v}_j,\beta_h\rangle \langle \hat{v}_j,\zeta_2\rangle-  \sum_{j=1}^{\KK} \langle {v}_j,\beta_h\rangle  \langle {v}_j,\zeta_2\rangle  \notag \\ 
		&=   \sum_{j=1}^{\KK} \langle \hat{v}_j-v_j,\beta_h\rangle \langle {v}_j,\zeta_2\rangle +  \sum_{j=1}^{\KK} \langle {v}_j,\beta_h\rangle  \langle \hat{v}_j-{v}_j,\zeta_2\rangle +   \sum_{j=1}^{\KK} \langle \hat{v}_j-v_j,\beta_h\rangle \langle \hat{v}_j-v_j,\zeta_2\rangle.  \label{eqpf07}
	\end{align}
	We write $\langle	(\hat{\SA} - \SA)v_j,v_{\ell}\rangle = F_1 + F_2$, where 
	\begin{align*}
		F_1 &= \frac{1}{T}\sum_{t=1}^T (\langle X_t,v_j \rangle  \langle X_t-\CCC_{21}\CCC_{11}^{-1}\mathbf{w}_t,v_{\ell} \rangle-\mathbb{E}[\langle X_t,v_j \rangle \langle X_t-\CCC_{21}\CCC_{11}^{-1}\mathbf{w}_t,v_{\ell} \rangle]),  \\
		F_2 &= \frac{1}{T}\sum_{t=1}^T \langle X_t,v_j \rangle \langle  ({\CCC}_{12}{\CCC}_{11}^{-1}-\hat{\CCC}_{12}\hat{\CCC}_{11}^{-1})\mathbf{w}_t, v_{\ell} \rangle . 
	\end{align*}
	{Note that $F_2$ can be rewritten as $F_2 = \langle \hat{\CCC}_{12}v_j,  ({\CCC}_{11}^{-1}{\CCC}_{12}-\hat{\CCC}_{11}^{-1}\hat{\CCC}_{12})v_{\ell}\rangle.$  We observe that, under Assumption \ref{assum5}, $\mathbb{E}[\langle X_t,v_{\ell} \rangle \langle \mathbf{w}_t, v \rangle] = \mathbb{E}[\langle X_t,v_{\ell} \rangle^2]^{1/2} O(1) = O(\lambda_{\ell}^{1/2})$ for any arbitrary $v \in \mathbb{R}^m$, and this implies that $\CCC_{12}v_{\ell} = O(\lambda_{\ell}^{1/2})$.
		From Lemma 2.1 of \cite{hormann2010} (and the subsequent discussion on non-independent pair of $L^p$-$m$-approximable sequences for $p \geq 1$), we conclude that $\{\langle X_t, v_{\ell} \rangle \mathbf{w}_t\}$ is $L^2$-$m$-approximable. Furthermore, from the central limit theorem for $L^2$-$m$-approximable sequences (see e.g., \citealp{berkes2013weak}) and Assumption \ref{assum5}\ref{assum5a}, we obtain  $\sqrt{T}(\hat{\CCC}_{12}-\CCC_{12})v_\ell = O_p(\lambda_{\ell}^{1/2})$. Since $\hat{\CCC}_{11}^{-1}-{\CCC}_{11}^{-1} = O_p(T^{-1/2})$, we therefore find that
		$({\CCC}_{11}^{-1}{\CCC}_{12}-\hat{\CCC}_{11}^{-1}\hat{\CCC}_{12})v_{\ell}=  ({\CCC}_{11}^{-1}-\widehat{\CCC}_{11}^{-1}) \CCC_{21}v_{\ell} + \widehat{\CCC}_{11}^{-1}( {\CCC}_{12}-\hat{\CCC}_{12})v_{\ell}  = O_p(T^{-1/2} \lambda_{\ell}^{1/2})$ and also $\hat{\CCC}_{12}v_j = 
		(\hat{\CCC}_{12}-\CCC_{12})v_j +\CCC_{12}v_j = O(T^{-1/2}\lambda_j^{1/2}) + O(\lambda_j^{1/2}) = O({\lambda}^{1/2}_j)$. We thus conclude that $F_2 = O(\lambda_j^{1/2}\lambda_{\ell}^{1/2}T^{-1/2})$.  }
	
	Moreover, from the arguments used in \cite{seong2021functional} (for the unnumbered equation between (S2.28) and (S2.29)) and Assumption \ref{assum5}, the following can be shown: 
	\begin{equation*}
		T\mathbb{E}[F_1^2] \leq \sum_{s=0}^T\mathbb{E}[r_t(j,\ell)r_{t-s}(j,\ell)]\leq O(1) \mathbb{E}[\langle X_t,v_j \rangle^2\langle X_{w,t},v_{\ell} \rangle^2] = O(\lambda_j\lambda_{\ell}), 
	\end{equation*}
	where the last equality follows from the Cauchy-Schwarz inequality and Assumption \ref{assum5}\ref{assum5a}. Therefore, we deduce that $F_1 = O_p(T^{-1/2})\lambda_j^{1/2}\lambda_{\ell}^{1/2}$. From these results, it is concluded  that $\langle	(\hat{\SA} - \SA)v_j,v_{\ell}\rangle = O_p(T^{-1/2})\lambda_j^{1/2}\lambda_{\ell}^{1/2}$. 
	Combining this result with the employed assumptions and Lemma S1 of \cite{seong2021functional}, $\|\hat{v}_j-v_j\| = O_p(T^{-1/2}j^2)$ can be shown. In turn, the following can also be shown from nearly identical arguments used in the proofs of (S2.16) and (S2.33) of \cite{seong2021functional}:
	\begin{align}
		&\langle \hat{v}_j-v_j,\beta_h\rangle = O_p(T^{-1/2})j^{\rho/2+1-\varsigma}, \label{eqpf07a}\\
		&\langle \hat{v}_j-v_j,\zeta_2\rangle = O_p(T^{-1/2}) j^{ \rho/2 + 1-\delta}.\label{eqpf07b}
	\end{align}
	Note that $\rho/2 + 2 < \varsigma+ \delta$ and $\rho > 2$, and thus $\varsigma + \delta > 3$. From \eqref{eqpf07}, \eqref{eqpf07a}, \eqref{eqpf07b} and the conditions given in Assumption \ref{assum5}\ref{assum5b}, we find that 
	\begin{align}
		&\sqrt{\frac{T}{\psi_{\KK}(\zeta)}}\langle(\widehat{\Pi}_{\KK} - {\Pi}_{\KK})\beta_h,\zeta_2\rangle = O_p\left(\frac{1}{\sqrt{\psi_{\KK}(\zeta)}}\right) \left\{ \sum_{j=1}^{\KK} j^{\rho/2+1-\varsigma-\delta}+ \frac{1}{\sqrt{T}}\sum_{j=1}^{\KK} j^{\rho+2-\varsigma-\delta}\right\} \notag  \\ 
		&= O_p\left(\frac{1}{\sqrt{\psi_{\KK}(\zeta)}} \right) \left(1+ T^{-1/2}\KK^{\rho+3-\varsigma-\delta}\right) =  O_p\left(\frac{1}{\sqrt{\psi_{\KK}(\zeta)}} \right) \left(1+ T^{-1/2}\reg^{-1/2} \KK^{\rho/2+3-\varsigma-\delta}\right) \notag \\
		&=  O_p\left(\frac{1}{\sqrt{\psi_{\KK}(\zeta)}} \right) \left(1+\KK^{1-\varsigma-\delta}\right) =  O_p\left(\frac{1}{\sqrt{\psi_{\KK}(\zeta)}} \right).  \label{eqthmpf001}
	\end{align} 
	It only remains to show that $\sqrt{T}\langle (\widehat{D}-{D})\beta_h,\zeta_1\rangle=O_p({1}/{\sqrt{\psi_{\KK}(\zeta)}})$ for the desired result. We note that 	\begin{equation}
		(\langle (\widehat{D}-{D})\beta_h,\zeta_1\rangle)  = \langle (\widehat{\CCC}_{11}^{-1}\widehat{\CCC}_{12}-{\CCC}_{11}^{-1}{\CCC}_{12})(I_2-\widehat{\Pi}_{\KK})\beta_h ,\zeta_1\rangle -  \langle{\CCC}_{11}^{-1}{\CCC}_{12}(\widehat{\Pi}_{\KK} -\Pi_{\KK})\beta_h,\zeta_1 \rangle. \label{eqadd0070}
	\end{equation}
	The second term of \eqref{eqadd0070}, $\langle{\CCC}_{11}^{-1}{\CCC}_{12}(\widehat{\Pi}_{\KK} -\Pi_{\KK})\beta_h,\zeta_1 \rangle$, is equal to 
	\begin{align}
		& \sum_{j=1}^{\KK} \langle \hat{v}_j-v_j,\beta_h\rangle \langle \CCC_{11}^{-1}\CCC_{12}{v}_j,\zeta_2\rangle +  \sum_{j=1}^{\KK} \langle {v}_j,\beta_h\rangle  \langle  \CCC_{11}^{-1}\CCC_{12}(\hat{v}_j-{v}_j),\zeta_2\rangle \notag \\ & +   \sum_{j=1}^{\KK} \langle \hat{v}_j-v_j,\beta_h\rangle \langle  \CCC_{11}^{-1}\CCC_{12}(\hat{v}_j-v_j),\zeta_1\rangle. \label{eqpfadd07}
	\end{align}
	Under Assumptions \ref{assum1}-\ref{assum5}, the following holds:
	\begin{equation}
		\|\CCC_{11}^{-1}\CCC_{12} (\hat{v}_j-v_j)\| = O_p(T^{-1/2})j^{\rho/2 +1 - \delta}, \label{eqpfadd08}
	\end{equation} 
	which is deduced from nearly identical arguments used to show (S2.29) of \cite{seong2021functional}. From the expression of $\langle{\CCC}_{11}^{-1}{\CCC}_{12}(\widehat{\Pi}_{\KK} -\Pi_{\KK})\beta_h,\zeta_1 \rangle$  in \eqref{eqpfadd07}, Assumption \ref{assum5}\ref{assum5b}, and the results given by \eqref{eqpf07a} and \eqref{eqpfadd07}, the following is deduced exactly as in \eqref{eqthmpf001}:
	\begin{align*}
		&\sqrt{\frac{T}{\psi_{\KK}(\zeta)}}\langle{\CCC}_{11}^{-1}{\CCC}_{12}(\widehat{\Pi}_{\KK} -\Pi_{\KK})\beta_h,\zeta_1 \rangle 
		=  O_p\left(\frac{1}{\sqrt{\psi_{\KK}(\zeta)}} \right). 
	\end{align*} 
	Since $\widehat{\CCC}_{11}^{-1}\widehat{\CCC}_{12}-{\CCC}_{11}^{-1}{\CCC}_{12} = O_p(T^{-1/2})$ and $(I_2-\widehat{\Pi}_{\KK})\beta_h = O_p(1)$, the first term of \eqref{eqadd0070} obviously decays faster than the second term, thus  we conclude that $\sqrt{T}\langle (\widehat{D}-{D})\beta_h,\zeta_1\rangle=O_p({1}/{\sqrt{\psi_{\KK}(\zeta)}})$, which completes the proof. 				 \\
	
	\noindent{3. Proof of \ref{thm2i2}:} 
	We now consider $\widehat{\Theta}_{2B}$. We first observe that 
	$\widehat{\Theta}_{2B} = O_p(\langle {\CCC}_{11}^{-1}{\CCC}_{12}(I_2-\Pi_{\KK})\beta_h, \zeta_1 \rangle) + O_p(\langle (I_2-\Pi_{\KK})\beta_h, \zeta_2 \rangle)$. If we show that $\langle {\CCC}_{11}^{-1}{\CCC}_{12}(I_2-\Pi_{\KK})\beta_h, \zeta_1 \rangle$ and $\sqrt{T}\langle (I_2-\Pi_{\KK}) \beta_h, \zeta_2 \rangle$ are $O_p(1/\sqrt{\psi_{\KK}(\zeta)})$, then the desired result is established. Note that
	\begin{equation*}\label{eqpf08}
		\sqrt{\frac{T}{\psi_{\KK}(\zeta)}}\langle (I_2-\Pi_{\KK})  \beta_h, \zeta_2 \rangle=\sqrt{\frac{T}{\psi_{\KK}(\zeta)}}\sum_{j=\KK+1}^\infty \langle  \beta_h, v_j \rangle\langle \zeta_2, v_j \rangle 
		\leq O\left(\frac{1}{\sqrt{\psi_{\KK}(\zeta)}}\right) T^{1/2}\KK^{1-\delta - \varsigma},
	\end{equation*}
	where the last inequality follows from the Euler-Maclaurin summation formula for
	the Riemann zeta-function (e.g., (5.6) of \citealp{ibukiyama2014euler}). Since $\reg^{1/2}\KK^{\rho/2} \leq O_p(1)$,  
	\begin{equation} \label{eqpf09}
		\sqrt{\frac{T}{\psi_{\KK}(\zeta)}}\langle (I_2-\Pi_{\KK}) \beta_h,\zeta_2 \rangle \leq \frac{1}{\sqrt{\psi_{\KK}(\zeta)}} O_p(T^{1/2}\reg^{(\delta + \varsigma - 1)/\rho}).
	\end{equation}
	If $\delta$ and $\varsigma$ are large enough so that $T^{1/2}\reg^{(\delta + \varsigma - 1)/\rho} = O_p(1)$, the above is $O_p(1/\sqrt{\psi_{\KK}(\zeta)})$. 
	Similarly, we find that  
	\begin{align}\label{eqpf08add}
		\sqrt{\frac{T}{\psi_{\KK}(\zeta)}}\langle {\CCC}_{11}^{-1}{\CCC}_{12}(I_2-\Pi_{\KK})  \beta_h, \zeta_1 \rangle 
		\leq \frac{1}{\sqrt{\psi_{\KK}(\zeta)}}  O_p(T^{1/2}\reg^{(\delta + \varsigma - 1)/\rho}),
	\end{align}
	which is also $O_p(1/\sqrt{\psi_{\KK}(\zeta)})$.
	\\
	
	\noindent{4. Proof of the asymptotic equivalence between the choices of $\psi_{\KK}$ and $\widehat{\psi}_{\KK}$:} Under Assumptions \ref{assum1}-\ref{assum4}, we find that  $\hat{\theta}_h$ is consistent and    $\|\widehat{\SA}_{\KK}^{-1} - {\SA}_{\KK}^{-1}\|_{\op} \to_p 0$. Note that $\hat{u}_t = u_t + \hat{\delta}_t$, where $\hat{\delta}_t = \langle \theta_h-\hat{\theta}_h, \Upsilon_t \rangle$ and thus  $\hat{\delta}_t = o_p(1)$. We observe  that 
	\begin{equation*} \label{eqdelta}
		\sup_{1\leq t\leq T} \|\hat{\delta}_t \Upsilon_t\| 
		\leq \| \theta_h-\hat{\theta}_h\| (\sup_{1\leq t\leq T} \|\Upsilon_t\|)^2 = O_p( \| \theta_h-\hat{\theta}_h\|). 
	\end{equation*}
	Moreover, since $\{{u}_{h,t}\Upsilon_t\}$ is an $L^2$-$m$-approximable sequence,  
	we find that  $\|\widehat{\Lambda}_{\UU}^0-\Lambda_{\UU}\|_{\op} \to_p 0$ (\citealp{horvath2013estimation}, Theorem 1), where $\widehat{\Lambda}_{\UU}^0 = T^{-1} \sum_{s=-\bdw}^{\bdw}\mathrm{k}(s/\bdw) \sum_{t=|s|+1}^T  {u}_{h,t}\Upsilon_t \otimes {u}_{h,t-s}\Upsilon_{t-s}$. To obtain the desired result, it suffices to show that $\|\widehat{\Lambda}_{\UU}-\widehat{\Lambda}_{\UU}^0\| = o_p(1)$. To see this, observe that 
	\begin{align*}\label{eqdelta1}
		&\hat{u}_{h,t}\Upsilon_t \otimes \hat{u}_{h,t-s}\Upsilon_{t-s} - {u}_{h,t}\Upsilon_t \otimes {u}_{h,t-s}\Upsilon_{t-s}\\
		&=\hat{\delta}_t \Upsilon_t \otimes {u}_{h,t-s}\Upsilon_{t-s} + \hat{\delta}_t \Upsilon_t \otimes \hat{\delta}_{t-s}\Upsilon_{t-s}  +  {u}_{h,t}\Upsilon_t \otimes \hat{\delta}_{t-s}\Upsilon_{t-s}. 
	\end{align*} 
	Note that 
	\begin{align}\nonumber
		&\|T^{-1} \sum_{s=-\bdw}^\bdw \mathrm{k}\left(\frac{s}{\bdw}\right)  \sum_{t=|s|+1}^T \hat{\delta}_t \Upsilon_t \otimes {u}_{h,t-s}\Upsilon_{t-s}\|_{\op} \leq  T^{-1} \sum_{s=-\bdw}^\bdw \mathrm{k}\left(\frac{s}{\bdw}\right) \sum_{t=|s|+1}^T \| \hat{\delta}_t \Upsilon_t\| \|{u}_{h,t-s}\Upsilon_{t-s}\|\\ \nonumber
		&\leq  \sum_{s=-\bdw}^\bdw \mathrm{k}\left(\frac{s}{\bdw}\right) \sup_{1\leq t\leq T}\|\hat{\delta}_t \Upsilon_t\| \frac{1}{T} \sum_{t=1}^T \|u_t\Upsilon_t\| = \left(\sum_{s=-\bdw}^\bdw \mathrm{k}\left(\frac{s}{\bdw}\right)\right)  \|\hat{\theta}_h-\theta_h\| O_p(1)  = o_p(1).
		&
	\end{align}
	From similar arguments, we also find that $\|T^{-1} \sum_{s=-\bdw}^\bdw \mathrm{k}\left(\frac{s}{\bdw}\right)  \sum_{t=|s|+1}^T {u}_{h,t}\Upsilon_{t} \otimes\hat{\delta}_{t-s} \Upsilon_{t-s} \|_{\op} = o_p(1)$. Lastly, note that
	\begin{align*}
		&\|T^{-1} \sum_{s=-\bdw}^\bdw \mathrm{k}\left(\frac{s}{\bdw}\right)  \sum_{t=|s|+1}^T \hat{\delta}_t \Upsilon_t \otimes  \hat{\delta}_{t-s} \Upsilon_{t-s}\|_{\op} \leq  T^{-1} \sum_{s=-\bdw}^\bdw \mathrm{k}\left(\frac{s}{\bdw}\right) \sum_{t=|s|+1}^T \| \hat{\delta}_t \Upsilon_t\| \|\hat{\delta}_{t-s} \Upsilon_{t-s}\|\\
		&\leq  \left(\sum_{s=-\bdw}^\bdw \mathrm{k}\left(\frac{s}{\bdw}\right) \right)  O_p(\|\hat{\theta}_h-\theta_h\|^2) = o_p(1).
		&
	\end{align*}
	From these results, $\|\widehat{\Lambda}_{\UU}-\widehat{\Lambda}_{\UU}^0\|_{\op} = o_p(1)$ is established.  							\qed
	
\paragraph{Proof of Corollary \ref{coradd}}
$\sqrt{{T}/{\psi_{\KK}(\zeta)}} \widehat{\Theta}_1 \to_d N(0,1)$ can be shown from the same arguments used in our proof of Theorem \ref{thm2}. 
If $\zeta_2 = 0$, we find that 
\begin{equation*}
	\widehat{\Theta}_{2A}+\widehat{\Theta}_{2B} = \langle    \hat P_{\KK} \theta_h - \theta_h , \zeta\rangle 
	=					F_1+F_2,
\end{equation*}	
where $F_1 = \langle (\widehat\CCC_{11} ^{-1}\widehat\CCC_{12} - \CCC_{11} ^{-1}\CCC_{12} ) (I_2 - \widehat\Pi_{\KK}) \beta_h, \zeta_1 \rangle$ and $F_2=	\langle \CCC_{11} ^{-1}\CCC_{12} (I_2 - \widehat\Pi_{\KK}) \beta_h, \zeta_1 \rangle$.
We first note that  
\begin{equation*}
	|  F_1|  \leq \Vert \widehat\CCC_{11} ^{-1}\widehat\CCC_{12} - \CCC_{11} ^{-1}\CCC_{12}\Vert_{\op} \Vert \zeta_1 \Vert \Vert \sum_{j= \KK+1} ^\infty \langle  \beta_h , \widehat{v}_j \rangle \widehat v_j \Vert. 
\end{equation*}
Since $\Vert \widehat\CCC_{11} ^{-1}\widehat\CCC_{12} - \CCC_{11} ^{-1}\CCC_{12}\Vert_{\op} = O_p(T^{-1/2})$ under Assumption \ref{assum2}\ref{assum2c} and $\Vert \sum_{j= \KK+1} ^\infty \langle  \beta_h , \widehat{v}_j \rangle \widehat v_j \Vert \to_p 0$ as $\KK \to \infty$ and $T \to \infty$ (because $\| \beta_h\| < \infty$), the above term is $o_p(T^{-1/2})$. We write
\begin{equation*}
	F_2 = F_3+F_4+F_5,
\end{equation*}
where $F_3=\sum_{j= \KK+1} ^\infty \langle \widehat v_j  -v_j , \beta_{h}\rangle \langle \CCC_{11} ^{-1}\CCC_{12} \widehat v_j, \zeta_1\rangle $, $F_4=\sum_{j= \KK+1} ^\infty \langle v_j , \beta_{h}\rangle \langle \CCC_{11} ^{-1}\CCC_{12} ( \widehat v_j - v_j ), \zeta_1\rangle$, and $F_5= \sum_{j= \KK+1} ^\infty \langle v_j , \beta_{h}\rangle \langle \CCC_{11} ^{-1}\CCC_{12}   v_j  , \zeta_1 \rangle $. 
Observe that
\begin{align*}
	F_3 &=   \sum_{j= \KK+1} ^\infty \langle \widehat v_j  - v_j , \beta_h\rangle \langle \CCC_{11} ^{-1}\CCC_{12} (\widehat v_j -v_j), \zeta_1\rangle  +   \sum_{j= \KK+1} ^\infty \langle \widehat v_j  -v_j , \beta_h\rangle \langle \CCC_{11} ^{-1}\CCC_{12} v_j, \zeta_1\rangle\notag  \\
	&= O_p(T^{-1} \KK^{\rho +3 -\varsigma - \delta} + T^{-1/2}\KK^{\rho/2 + 2  -\varsigma-\delta}),
\end{align*}
where the last equality follows from \eqref{eqpf07a}, \eqref{eqpfadd08} and the Euler-Maclaurin summation formula. 
Using Assumption \ref{assum5}\ref{assum5b}, we similarly find that  
\begin{equation*}
	F_4 = O_p(T^{-1/2}\KK^{\rho/2 + 2  -\varsigma-\delta}), \quad 
	F_5 =  O_p(\KK^{1-\varsigma - \delta}).
\end{equation*} 
Under the employed assumption, we have $\rho/2 + 2 < \varsigma + \delta$, and thus  $\KK^{\rho/2 + 2  -\varsigma-\delta} = o_p(1)$ and $T^{-1/2} \KK^{\rho +3 -\varsigma - \delta}  =o_p(1)$. Moreover, $T^{1/2}\KK^{1  -\varsigma-\delta} = O_p(T^{1/2} \reg^{(\varsigma + \delta - 1)/\rho}) = o_p(1)$ under the conditions on the decay rate of $\reg$. These results show that $F_2 = F_3+F_4+F_5 = o_p(T^{-1/2})$, and hence $	\sqrt{T}(\widehat{\Theta}_{2A}+\widehat{\Theta}_{2B}) \to_p 0$.
\qed 					

\subsection{Proofs of the results in Section \ref{sec:est2}}
\paragraph{Proof of Theorem \ref{thm3}} 
	As in our proof of Theorem \ref{thm1}, we find that $\tilde{\theta}_h=\widehat{\PP}_{\KKK}\theta_h + \widehat{C}_{\ZZ\XX,\KKK}^{-1}  \widehat{C}_{ \ZZ u}$ and $\widehat{C}_{\ZZ\XX,\KKK}^{-1}  \widehat{C}_{ \ZZ u} = O_p(\reg^{-1/2}T^{-1/2})$, from which the following is deduced:  
	\begin{equation*}
		\tilde{\theta}_{h} - \theta_{h} =  	\begin{bmatrix}
			\widehat{\DDD}_{11}^{-1}\widehat{\DDD}_{12}(I_2-{\PPI}_{\KK}) \beta_h + \widehat{\DDD}_{11}^{-1}\widehat{\DDD}_{12}({\PPI}_{\KK}-\widehat{\PPI}_{\KK})  \beta_h \\
			- (I_2-{\PPI}_{\KK})  \beta_h - ({\PPI}_{\KK}-\hat{\PPI}_{\KK}) \beta_h
		\end{bmatrix}  +  O_p(\reg^{-1/2}T^{-1/2}),
	\end{equation*}
	where $\widehat{\PPI}_{\KKK} = \sum_{j=1}^{\KKK} \hat{\wv}_j \otimes \hat{\wv}_j$. 
	As in \eqref{eqpfaddarguments}, we find that $\tilde{\theta}_h - \theta_h = O_p((\widehat{\PPI}_{\KKK}-{\PPI}_{\KKK})\beta_h) + O_p((I_2-{\PPI}_{\KKK})\beta_h) + O_p(\reg^{-1/2}T^{-1/2})$.
	Observe that $\hat{\wv}_j$ is an eigenvector of $\hat\SS^\ast \hat\SS$ and $\|\hat\SS^\ast \hat\SS - \SS^\ast \SS\|_{\op} = O_p(T^{-1/2})$. From Lemma 4.3 of \cite{Bosq2000} and Assumption \ref{assumpIV2}, we find that $\|\hat{\wv}_j-{\wv}_j\| \leq O_p((\llambda_j^{2}-\llambda_{j+1}^{2})^{-1}T^{-1/2})$. From (S2.11) in the proof of Theorem 3 of \cite{seong2021functional} and the fact that $\reg \leq \hat{\llambda}_{\KKK}^2 = \hat{\llambda}_{\KKK}^2 - {\llambda}_{\KKK}^2 + {\llambda}_{\KKK}^2 = O_p(T^{-1/2}) + \KKK^{-\rho}$, 
	it can be shown that $\reg^{1/2}\KKK^{\rho/2} = O_p(1)$. We observe that
	\begin{align}
		&	\|(\widehat{\PPI}_{\KKK}- {\PPI}_{\KKK})\beta_{h}\| 
		= \|\sum_{j=1}^{\KKK} \langle \hat{\wv}_j-\wv_j,\beta_{h}\rangle \hat{\wv}_j -  \sum_{j=1}^{\KKK} \langle {\wv}_j,\beta_{h} \rangle ({\wv}_j-\hat{\wv}_j)\|\notag  \\
		&\leq O_p(T^{-1/2}) \sum_{j=1}^{\KKK} j^{\rho+1}\leq  O_p(T^{-1/2}\reg^{-1-2/\rho}\reg^{1+2/\rho}\KKK^{\rho+2})=O_p(T^{-1/2}\reg^{-1-2/\rho}). \label{eqpf04add2}
	\end{align}
	Moreover, the following is deduced as in our proof of Theorem \ref{thm1}:
	\begin{align}\label{eqpf04add}
		\|({\PPI}_{\KKK} - I_2) \beta_{h}\|^2   = \|\sum_{j=\KK+1}^{\infty} \langle \beta_{h} ,\wv_j \rangle \wv_j\|^2 = \sum_{j=\KK+1}^{\infty} \langle \beta_{h} ,\wv_j\rangle^2\leq O_p(\reg^{(2\varsigma-1)/\rho}).
	\end{align}
	Combining this result with \eqref{eqpf04add2}, the desired result is obtained.  \qed


\paragraph{Proof of Theorem \ref{thm4}}				
\noindent {1. Proof of  \ref{thm4i0}:} Since $\tilde{\theta}_h = \widehat{\PP}_{\KKK} \theta_h+\widehat{C}_{\ZZ\XX,\KKK}^{-1}  \widehat{C}_{\ZZ u}$, we have 
\begin{align}
	\sqrt{\frac{T}{\omega_{\KKK}(\zeta)}} \widetilde{\Theta}_1  
	&=  \left\langle \widehat{C}_{\ZZ\XX,\KKK}^{-1} \frac{1}{\sqrt{T\omega_{\KKK}(\zeta)}}\sum_{t=1}^T u_{h,t} \ZZ_t, \zeta \right\rangle.  \label{eqpf05a}
\end{align}
We find from Lemma \ref{lem3} that $\| \widehat{C}_{\ZZ\XX,\KKK}^{-1} - {C}_{\ZZ\XX,\KKK}^{-1}\|_{\op} = O_p(\|\widehat{\SS}_{\KKK}^{-1}-\SS_{\KKK}^{-1}\|_{\op}) + o_p(1)$.  
Redefine ${\wv}_j$ (resp.\ $\ww_j$) as $\sgn(\langle \hat{\wv}_j,\wv_j \rangle)\wv_j$ (resp.\ $\sgn(\langle \hat{\ww}_j,\ww_j \rangle)\ww_j$), and accordingly, redefine $\llambda_j$ as $\sgn(\langle \hat{\wv}_j,\wv_j \rangle)\sgn(\langle \hat{\ww}_j,\ww_j \rangle) \llambda_j$ as in the proof of Theorem 2 of \cite{seong2021functional}. We then find that  $\|\widehat{\SS}_{\KKK}^{-1}-\SS_{\KKK}^{-1}\|_{\op}$ is bounded above by $\|\sum_{j=1}^{\KKK} (\nu_j^{-1} - \hat{\nu}_j^{-1}){\ww}_j \otimes {\wv}_j \|_{\op} + \|\sum_{j=1}^{\KKK} \hat{\nu}_j^{-1} (\hat{\ww}_j\otimes \hat{\wv}_j   -{\ww}_j \otimes  {\wv}_j  )\|_{\op}$.
From similar arguments used in (S2.4)-(S2.6) for the proof of Theorem 2 of \cite{seong2021functional} and the facts that $\|\hat{\wv}_j-{\wv}_j\| \leq O_p((\nu_j^2-\nu_{j+1}^2)^{-1}T^{-1/2})=O_p(j^{\rho+1} T^{-1/2})$ (see Lemma 4.3 of \citealp{Bosq2000}) and $\|\widehat{\SS} -\SS\|_{\op} = O_p(T^{-1/2})$, we find that 
\begin{equation*} 
	\|\widehat{\SS}_{\KKK}^{-1}-\SS_{\KKK}^{-1}\|_{\op}  \leq O(\reg^{-1/2}j^{\rho+1} T^{-1/2}) +O(\reg^{-1/2}T^{-1/2})\sum_{j=1}^{\KKK} j^{\rho+1} 
	= O_p(T^{-1/2}\reg^{-3/2-2/\rho}) =o_p(1),
\end{equation*}
implying that  $\| \widehat{C}_{\ZZ\XX,\KKK}^{-1} - {C}_{\ZZ\XX,\KKK}^{-1}\|_{\op} =o_p(1)$ (see Lemma \ref{lem3}).
Combining this result with \eqref{eqpf05a}, we have
\begin{equation}
	\sqrt{\frac{T}{\omega_{\KKK}(\zeta)}} \widetilde{\Theta}_1  = \left\langle  \left({C}_{\ZZ\XX,\KKK}^{-1} + o_p(1)\right) \frac{1}{\sqrt{T\omega_{\KKK}(\zeta)}} \sum_{t=1}^T u_{h,t}  \ZZ_t, \zeta \right\rangle.\label{eqpfb1}
\end{equation}
From similar arguments used in our proof of Theorem \ref{thm2}, 
we write  \eqref{eqpfb1} as $\mathcal {V}_T + \mathcal W_T$ with neglecting asymptotically negligible terms, where $\mathcal{V}_T = \langle V_T, \frac{1}{\sqrt{\omega_{\KKK}(\zeta)}}{C}_{\XX\ZZ,\KKK}^{-1}\zeta \rangle$ and $\mathcal W_T = \langle T^{-1/2} \sum_{t=1}^T u_{h,t} \ZZ_t-V_T, \frac{1}{\sqrt{\omega_{\KKK}(\zeta)}}{C}_{\XX\ZZ,\KKK}^{-1}\zeta \rangle.$ 
As in \eqref{eqpfadd03}, we let  $\zeta_{\KK}=\frac{1}{\sqrt{\omega_{\KK}(\zeta)}}{C}_{\XX\ZZ,\KKK}^{-1}\zeta$ and obtain the following:  
\begin{equation*}
	\left|\frac{\langle T^{-1/2} \sum_{t=1}^T u_{h,t} \ZZ_t - V_T, \zeta_{\KKK} \rangle}{\langle V_T,\zeta_{\KKK}\rangle}\right|
	\leq \frac{\|T^{-1/2} \sum_{t=1}^T u_{h,t} \ZZ_t - V_T\|}{|\langle V_T,\frac{\zeta_{\KKK}}{\|\zeta_{\KKK}\|}\rangle|} 
	\leq \frac{o_p(1)}{|\langle V_T,\frac{\zeta_{\KKK}}{\|\zeta_{\KKK}\|}\rangle|}.
\end{equation*}
Under Assumption \ref{assumpIV4}\ref{assumpIV4aa},  $\zeta_{\KKK}/\|\zeta_{\KKK}\| = \sum_{j=1}^{\KKK} d_{\KKK,j}(\zeta) \wwtw_j$, $d_{\KKK,j}(\zeta)  \neq 0$ for some $j$, and $\sum_{j=1}^{\KKK} d_{\KKK,j}^2(\zeta) = 1$ for every $\KKK$.  
By construction of $V_T$, we find that $\langle V_T,\zeta_{\KKK}/\|\zeta_{\KKK}\|\rangle$ is a normal random variable with mean zero, and its variance is 
$\sum_{j=1}^{\KKK} \sum_{\ell=1}^{\KKK}d_{\KKK,j}(\zeta) d_{\KKK,\ell}(\zeta) \langle \Lambda_{\UUU}\wwtw_j,\wwtw_{\ell} \rangle$, 
which converges to a positive constant under Assumption \ref{assumpIV4}\ref{assumpIV4bb}. This implies that $\mathcal W_T$ is  asymptotically negligible. 
From similar arguments used in our proof of Theorem \ref{thm1}, it can be shown that  $\langle V_T, \zeta_{\KK} \rangle $ is a zero-mean normal random variable whose variance is equal to $\langle \Lambda_{\UUU} {C}_{\XX\ZZ,\KK}^{-1} \zeta, {C}_{\XX\ZZ,\KK}^{-1}\zeta \rangle/ \omega_{\KKK}(\zeta) =   1,$ which completes the proof.\\


\noindent{2. Proof of \ref{thm4i1}:} 
Note that $\widetilde{\Theta}_{2A} = \langle\widehat{\PP}_{\KKK} \theta_h - {\PP}_{\KKK}\theta_h,\zeta\rangle =  O_p(\langle (\widehat{D}-{D})\beta_{h},\zeta_1\rangle) + O_p(\langle(\widehat{\PPI}_{\KKK} - {\PPI}_{\KKK})\beta_{h},\zeta_2\rangle)$, where $\widehat{D} = \widehat{\DDD}_{11}^{-1}\widehat{\DDD}_{12}(I_2-\widehat{\PPI}_{\KKK})$ and ${D} = {\DDD}_{11}^{-1}{\DDD}_{12}(I_2-{\PPI}_{\KKK})$.  As in \eqref{eqpf07}, we find that 
\begin{align}
	\langle(\widehat{\PPI}_{\KKK} - {\PPI}_{\KK})\beta_{h},\zeta_2\rangle =&
	\sum_{j=1}^{\KKK} \langle \hat{\wv}_j-\wv_j,\beta_{h}\rangle \langle {\wv}_j,\zeta_2\rangle +  \sum_{j=1}^{\KKK} \langle {\wv}_j,\beta_{h}\rangle  \langle  \hat{\wv}_j-{\wv}_j,\zeta_2\rangle 
	\notag \\ & +   \sum_{j=1}^{\KKK} \langle \hat{\wv}_j-\wv_j,\beta_{h}\rangle \langle \hat{\wv}_j-\wv_j,\zeta_2\rangle. \label{eqpf07add}
\end{align}
Similarly as in our proof of Theorem \ref{thm2}, we find that $\langle (\widehat{\SS}-\SS) {\wv}_j,{\ww}_\ell \rangle$ can be written as $\langle (\widehat{\SS}-\SS) {\wv}_j,{\ww}_\ell \rangle = F_1+F_2$, where 
\begin{align*}
	F_1 &= \frac{1}{T}\sum_{t=1}^T (\langle X_t,\wv_j \rangle \langle Z_{2,t}-\DDD_{21}\DDD_{11}^{-1}\mathbf{z}_{1,t},\ww_{\ell} \rangle-\mathbb{E}[\langle X_t,\wv_j \rangle \langle Z_{2,t}-\DDD_{21}\DDD_{11}^{-1}\mathbf{z}_{1,t},\ww_{\ell} \rangle]), \\  
	F_2 &= \frac{1}{T}\sum_{t=1}^T \langle X_t,\wv_j \rangle \langle ({\DDD}_{21}{\DDD}_{11}^{-1}-\hat{\DDD}_{21}\hat{\DDD}_{11}^{-1}) \mathbf{z}_{1,t},\ww_{\ell} \rangle.
\end{align*}
We rewrite $F_2 = \langle \hat{\DDD}_{12}\wv_j,  (({\DDD}_{11}^{-1})^\ast{\DDD}_{21}^\ast-(\hat{\DDD}_{11}^{-1})^\ast \hat{\DDD}_{21}^\ast)\ww_{\ell}\rangle.$ Under Assumption \ref{assumpIV5}, $\DDD_{21}^\ast\ww_{\ell} = O(|\llambda_{\ell}|^{1/2})$ since $\mathbb{E}[\langle Z_{2,t},\ww_{\ell} \rangle \langle \mathbf{w}_{t}, v \rangle] = \mathbb{E}[\langle Z_{2,t},\ww_{\ell} \rangle^2]^{1/2} O(1) = O(|\llambda_{\ell}|^{1/2})$ for any arbitrary $v \in \mathbb{R}^m$. Similarly, it can also be shown that $\DDD_{12}\wv_{j} = O(|\llambda_{j}|^{1/2})$. Noting that, by Assumption \ref{assumpIV2} and Lemma 2.1 of \cite{hormann2010} (and their subsequent discussion on non-independent pairs of $L^p$-$m$-approximable sequences for $p \geq 1$), the sequences $\{\langle \XX_t,v \rangle \ZZ_t - \mathbb{E}[\langle \XX_t,v \rangle \ZZ_t] \}$  and $\{\langle \ZZ_t,v \rangle   - \mathbb{E}[\langle \ZZ_t,v \rangle \XX_t] \}$ are $L^2$-$m$-approximable for any $v \in \mathcal H$, and using the central limit theorem for $L^2$-$m$-approximable sequences (see e.g., \citealp{berkes2013weak}) along with Assumption \ref{assumpIV5}\ref{assumpIV5b}, we find that $\sqrt{T}(\hat{\DDD}_{12}-\DDD_{12})\wv_j = O_p(|\llambda_{j}|^{1/2})$ and  $\sqrt{T}(\hat{\DDD}_{21}^\ast-\DDD_{21}^\ast)\ww_\ell = O_p(|\llambda_{\ell}|^{1/2})$.
Combining these results with the employed assumptions, we obtain  	$ (({\DDD}_{11}^{-1})^\ast{\DDD}_{21}^\ast-(\hat{\DDD}_{11}^{-1})^\ast \hat{\DDD}_{21}^\ast)\ww_{\ell}= ({\DDD}_{11}^{-1}-\widehat{\DDD}_{11}^{-1})^\ast \DDD_{21}^\ast \ww_{\ell} + (\widehat{\DDD}_{11}^{-1})^\ast( {\DDD}_{21}^\ast-\hat{\DDD}_{21}^\ast)\ww_{\ell}  = O_p(T^{-1/2} |\llambda_{\ell}|^{1/2})$ and also $\hat{\DDD}_{12}\wv_j = 	(\hat{\DDD}_{12}-\DDD_{12})\wv_j +\DDD_{12}\wv_j = O_p(T^{-1/2}|\llambda_j|^{1/2}) + O(|\llambda_j|^{1/2}) = O_p(|{\llambda}_j|^{1/2})$. These results imply that $F_2 = O_p(T^{-1/2})|\llambda_j\llambda_{\ell}|^{1/2}$.

From similar arguments in our proof of Theorem \ref{thm2}, we find that 
\begin{equation*}
T\mathbb{E}[F_1^2] \leq O(1) \mathbb{E}[\langle X_t,\wv_j \rangle^2\langle Z_{w,t},\ww_{\ell} \rangle^2] = O(|\llambda_j\llambda_{\ell}|), 
\end{equation*}
from which we deduce that $F_1 = O_p(T^{-1/2})|\llambda_j\llambda_{\ell}|^{1/2}$ and conclude that $\langle	(\hat{\SS} - \SS)\wv_j,\ww_{\ell}\rangle = O_p( T^{-1/2})|\llambda_j\llambda_{\ell}|^{1/2}$. 
By combining this result with the given assumptions and Lemma S1 of \cite{seong2021functional}, it can be shown that $\|\hat{\wv}_j - \wv_j\| = O_p(T^{-1/2}j^2)$; in turn, the following can also be obtained using nearly identical arguments as those in the proofs of (S2.16) and (S2.33) of \cite{seong2021functional}:
\begin{align}
&\langle \hat{\wv}_j-\wv_j,\beta_{h}\rangle = O_p(T^{-1/2})j^{\rho/2+1-\varsigma}, \label{eqpf07adda}\\
&\langle \hat{\wv}_j-\wv_j,\zeta_2\rangle = O_p(T^{-1/2}) j^{ \rho/2 + 1-\delta}.\label{eqpf07addb}
\end{align}
When  $\rho/2 + 2 < \varsigma+ \delta$, the following result is deduced from \eqref{eqpf07add}, \eqref{eqpf07adda}, \eqref{eqpf07addb} and Assumption \ref{assumpIV5}\ref{assumpIV5c}:  
\begin{align}
&\sqrt{\frac{T}{\omega_{\KKK}(\zeta)}}\langle (\widehat{\PPI}_{\KKK} - {\PPI}_{\KKK})\beta_{h},\zeta_2\rangle = \frac{1}{\sqrt{\omega_{\KKK}(\zeta)}} \left\{ \sum_{j=1}^{\KKK} j^{\rho/2+1-\varsigma-\delta} + \frac{1}{\sqrt{T}}\sum_{j=1}^{\KKK} j^{\rho+2-\varsigma-\delta}\right\} \notag  \\ 
&=  O_p\left(\frac{1}{\sqrt{\omega_{\KKK}(\zeta)}} \right) \left(1+\KKK^{1-\varsigma-\delta}\right) =  O_p\left(\frac{1}{\sqrt{\omega_{\KKK}(\zeta)}} \right),  \label{eqthmpf001a}
\end{align} 
where nearly identical arguments to those used in \eqref{eqthmpf001} are applied.  
It only remains to verify that $\sqrt{T}\langle (\widehat{D}-{D})\beta_h,\zeta_1\rangle=O_p({1}/{\sqrt{\omega_{\KK}(\zeta)}})$ for the desired result. From some algebra, we find that $\langle(\widehat{D}-{D})\beta_h,\zeta_1\rangle = \langle (\widehat{\DDD}_{11}^{-1}\widehat{\DDD}_{12}-{\DDD}_{11}^{-1}{\DDD}_{12})(I_2-\widehat{\PPI}_{\KKK})\beta_h ,\zeta_1\rangle - \langle{\DDD}_{11}^{-1}{\DDD}_{12}(\widehat{\PPI}_{\KKK} -\PPI_{\KKK})\beta_h,\zeta_1 \rangle$, where the latter term, $ \langle{\DDD}_{11}^{-1}{\DDD}_{12}(\widehat{\PPI}_{\KKK} -\PPI_{\KKK})\beta_h,\zeta_1 \rangle$, is equal to 
\begin{align}
\sum_{j=1}^{\KKK} \langle \hat{\wv}_j-\wv_j,\beta_h\rangle \langle \DDD_{11}^{-1}\DDD_{12}\wv_j,\zeta_1\rangle +  \sum_{j=1}^{\KKK} \langle \wv_j,\beta_h\rangle  \langle  \DDD_{11}^{-1}\DDD_{12}(\hat{\wv}_j-{\wv}_j),\zeta_1\rangle \nonumber\\  +   \sum_{j=1}^{\KKK} \langle \hat{\wv}_j-\wv_j,\beta_h\rangle \langle  \DDD_{11}^{-1}\DDD_{12}(\hat{\wv}_j-\wv_j),\zeta_1\rangle; \label{eqthmpf001b}
\end{align}
see \eqref{eqpfadd07}.	Under Assumptions \ref{assumpIV2}-\ref{assumpIV5}, the following is deduced from nearly identical arguments used to show (S2.29) of \cite{seong2021functional}:
\begin{equation}
\|\DDD_{11}^{-1}\DDD_{12} (\hat{\wv}_j-\wv_j)\| = O_p(T^{-1/2})j^{\rho/2 +1 - \delta}. \label{eqpfadd08add}
\end{equation} 
From \eqref{eqpf07adda}, \eqref{eqthmpf001b}, \eqref{eqpfadd08add} and Assumption \ref{assumpIV5}\ref{assumpIV5c}, we find that $\sqrt{\frac{T}{\omega_{\KKK}(\zeta)}}\langle \langle{\DDD}_{11}^{-1}{\DDD}_{12}(\widehat{\PPI}_{\KKK} - {\PPI}_{\KKK})\beta_{h},\zeta_2\rangle = O_p\left({1}/{\sqrt{\omega_{\KK}(\zeta)}} \right)$. On the other hand,  because  $ \widehat{\DDD}_{11}^{-1}\widehat{\DDD}_{12}-{\DDD}_{11}^{-1}{\DDD}_{12} = O_p(T^{-1/2})$ and $(I_2-\widehat{\PPI}_{\KKK})\beta_{h} = O_p(1)$,  the first term \eqref{eqpf07adda} decays faster than the second term, and hence,  $\sqrt{T}\langle (\widehat{D}-{D})\beta_h,\zeta_1\rangle=O_p({1}/{\sqrt{\omega_{\KKK}(\zeta)}})$ as desired.
\\

\noindent{3. Proof of \ref{thm4i2}:} 
We observe that 
$\widetilde{\Theta}_{2B} = O_p(\langle {\DDD}_{11}^{-1}{\DDD}_{12}(I_2-\PPI_{\KKK})\beta_{h},\zeta_1\rangle) + O_p (\langle(I_2-\PPI_{\KKK}) \beta_h,\zeta_2 \rangle)$. As in our proof of Theorem \ref{thm2} (see particularly \eqref{eqpf09} and \eqref{eqpf08add}), we find that  						
\begin{equation} \label{eqpf09add1}
\sqrt{\frac{T}{\omega_{\KKK}(\zeta)}}\langle (I_2-\PPI_{\KKK}) \beta_h,\zeta_2 \rangle \leq \frac{1}{\sqrt{\omega_{\KKK}(\zeta)}} T^{1/2}\reg^{(\delta + \varsigma - 1)/\rho},
\end{equation}
and 	
\begin{align}\label{eqpf08adda}
\sqrt{\frac{T}{\omega_{\KKK}(\zeta)}}\langle {\DDD}_{11}^{-1}{\DDD}_{12}(I_2-\PPI_{\KKK})  \beta_h, \zeta_1 \rangle 
\leq \frac{1}{\sqrt{\omega_{\KKK}(\zeta)}}  O_p(T^{1/2}\reg^{(\delta + \varsigma - 1)/\rho}).
\end{align}
If $\delta$ and $\varsigma$ are large enough so that $T^{1/2}\reg^{(\delta + \varsigma - 1)/\rho} = O_p(1)$, \eqref{eqpf09add1} and \eqref{eqpf08adda} are $O_p(1/\sqrt{\omega_{\KKK}(\zeta)})$, as desired. \\

\noindent{4. Proof of the asymptotic equivalence between the choices of $\omega_{\KKK}$ and $\widehat{\omega}_{\KKK}$:} Under the employed assumptions, the following can be shown using similar arguments used in our proof of Theorem \ref{thm2}:  $\|\widehat{C}_{\ZZ\XX,\KKK}^{-1}- {C}_{\ZZ\XX,\KKK}^{-1}\|_{\op} \to_p 0, \|\widehat{C}_{\XX\ZZ,\KKK}^{-1}- {C}_{\XX\ZZ,\KKK}^{-1}\|_{\op} \to_p 0$ (see Remark \ref{remaddapp01}) and $\|\widehat{\Lambda}_{\UUU} -{\Lambda}_{\UUU}\|_{\op}\to_p 0$.  
From these, the desired result follows.\qed

\newpage 

\section*{Supplementary results}
\section{Direct estimation of $\IRF$s from the RFVAR}\label{sec: svar: est}
In this section, we discuss the estimation of $\IRF$s directly from the RFVAR model as mentioned earlier in Section \ref{sec_IRF}. As in that section, we consider the augmented Hilbert space $\widetilde{\mathcal H} = \mathbb{R} \times \mathcal H$, and define $\Upsilon_t = \left[\begin{smallmatrix}y_t \\X_t \end{smallmatrix}\right]$. Then, the reduced-form of the SVAR model \eqref{eq: model: svar} can be written as 
\begin{align}
	\Upsilon_t = \Gamma \Upsilon_{t-1} + \mathcal E_t,
\end{align}
where $\mathcal E_t =  \left[\begin{smallmatrix}\varepsilon_{1t}\\\mathcal E_{2t} \end{smallmatrix}\right]$ and $\Gamma = \mathcal B^{-1}\mathcal A$. Assuming stationarity of $\{\Upsilon_t\}$ and  $\mathbb{E}[\Upsilon_{t-1}\otimes \mathcal E_t] = 0$, we obtain the following operator equation: $D_{\Upsilon\Upsilon} = \Gamma C_{\Upsilon\Upsilon}$, where $D_{\Upsilon\Upsilon} =\mathbb{E}[\Upsilon_{t-1}\otimes \Upsilon_t]$.
This is a standard population moment equation for the stationary functional AR(1) model, and $\Gamma$ and $\Sigma_\varepsilon = \mathbb{E}[\mathcal E_t\otimes \mathcal E_t]$ are the parameters of interest. Under some standard regularity conditions, we can construct estimators $\widehat{\Gamma}$ and $\widehat{\Sigma}_{\varepsilon}$ (see e.g., \citealp{Bosq2000,Park2012}) satisfying  $\|\widehat{\Gamma}-\Gamma\|_{\op} \to_p 0$ and $\|\widehat{\Sigma}_\varepsilon - \Sigma_\varepsilon\|_{\op} \to_p 0$. As in \eqref{eq: model: svar}, we write $\widehat{\Gamma}$ and $\widehat{\Sigma}_{U}$ as follows:
\begin{align}
	\widehat{\Gamma} = \begin{bmatrix}
		\hat{\gamma}_{11} & \hat{\gamma}_{12} \\
		\hat{\gamma}_{21} & \hat{\gamma}_{22} \\
	\end{bmatrix}, \qquad 
	\widehat{\Sigma}_\varepsilon = \begin{bmatrix}
		\hat{\sigma}_{\varepsilon,11} & \hat{\sigma}_{\varepsilon,12} \\
		\hat{\sigma}_{\varepsilon,21} & \hat{\Sigma}_{\varepsilon,22} 
	\end{bmatrix}. \nonumber
\end{align}
Then under the identification scheme $\beta_{12}=0$, the structural parameters $\beta_{21}$, $\alpha_{11}$, $\alpha_{12}$, $\alpha_{21}$, $\alpha_{22}$, $\sigma_{11}$, and $\Sigma_{22}$ in \eqref{eq: model: svar} can be, respectively, consistently estimated as follows:
\begin{align}
	&\hat{\sigma}_{11}=\hat{\sigma}_{\varepsilon,11}, \quad \hat{\beta}_{21} = \hat{\sigma}_{U,21}\hat{\sigma}_{\varepsilon,11}^{-1}, \quad \widehat{\Sigma}_{22}=\widehat{\Sigma}_{\varepsilon,22} -\hat{\beta}_{21}\hat{\sigma}_{\varepsilon,11}^{-1}\hat{\beta}_{21}^\ast, \nonumber \\ &\hat{\alpha}_{11}=\hat{\gamma}_{11},\quad \hat{\alpha}_{12}=\hat{\gamma}_{12}, \quad
	\hat{\alpha}_{21}= \hat{\gamma}_{21}+\hat{\beta}_{21}\hat{a}_{11},   \quad \hat{\alpha}_{22}= \hat{\gamma}_{22}+\hat{\beta}_{21}\hat{a}_{12}, \nonumber
\end{align}
where $\hat{\sigma}_{\varepsilon,11}^{-1}$ is well defined (see our proof of Proposition \ref{prop: svar: identification}) and $\widehat \beta_{21} ^\ast$ is the adjoint of $\widehat\beta_{21}$. 
Then the impulse response of interest, $\IRF_{12,h}$, can also be consistently estimated. However, it is given by a complicated function of the above operator-valued estimators whose domains and codomains are different, and at present, to the best of the authors' knowledge, there is no requisite theory to obtain more detailed asymptotic properties beyond consistency for such an object from the estimation results of the standard functional autoregressive model. Thus, we leave this for future study.

\section{Extension: Inference on  $\IRF_{21,h}$} \label{sec_extension}

\subsection{Inference on $\IRF_{21,h}$ with exogenous predictors} \label{sec_appen_infer}
We now consider statistical inference on $\IRF_{21,h}$ for $h\geq 1$ under $\beta_{12}=0$ in \eqref{eq: model: svar} according to Proposition \ref{prop: svar: identification: a}. To accommodate a more general case with a set of control variables, we consider the following model: for bounded linear maps $A_{1,h}: \mathbb{R}^m \to \mathcal H$ and $A_{2,h}:\mathcal H\to \mathcal H$,
\begin{equation}
	X_{t+h+1} =   A_{1,h} \mathbf{w}_t  + A_{2,h} X_{t} +   \UUUU_{h,t}, \label{eq: model: benchmark: reduced4}
\end{equation}
where $\mathbf{w}_t = (w_{1,t}, \ldots, w_{m,t})'$ denotes the collection of  variables, including the $y_{t+1}$, $y_{t}$ and other control variables, and $\UUUU_{h,t}$ denotes the error term, which is a functional random variable (unlike $u_{h,t}$ in Section \ref{sec:est}). The model \eqref{prop2: eq2} given in Proposition \ref{prop: svar: identification: a} may be viewed as a special case of \eqref{eq: model: benchmark: reduced4}, with some notational differences which are made for notational simplicity in the subsequent development. In particular, note that we intentionally increase the time index of some variables; this is done to reuse the notation and some assumptions from the previous sections. 
If we let $w_{1,t} = y_{t+1}$ and consider a unit shock applied to $y_{t+1}$, then $A_{1,h}\zeta_m$, where  $\zeta_m=(1,0,\ldots,0) \in \mathbb{R}^m$, is interpreted as the impulse response function of interest under appropriate conditions (see \eqref{eqpartialirf2} and Proposition  \ref{prop: svar: identification: a}).

For convenience, we hereafter let 
\begin{equation} \label{eqdefY1}
	Y_{h,t} = X_{t+h+1} \quad \text{and} \quad	\Upsilon_t =   \begin{bmatrix}\mathbf{w}_t\\X_t\end{bmatrix}.
\end{equation}
As in Section \ref{sec:est}, we assume that \( \Upsilon_t \) and \( Y_{h,t} \) are zero-mean random elements for simplicity, as extending the subsequent results to the case with nonzero means is straightforward. 
Then the considered model can be rewritten as follows: for some $A_h: \elltwo \to \mathcal H$, 
\begin{equation}
	Y_{h,t} =  A_h \Upsilon_t +  \UUUU_{h,t}. \label{eq: model: benchmark: reduced5}
\end{equation}
Let $C_{Y\Upsilon} = \mathbb{E}[\Upsilon_t\otimes Y_{h,t}]$, $C_{\Upsilon \Upsilon} = \mathbb{E}[\Upsilon_t\otimes \Upsilon_t]$, and $C_{\UUUU \Upsilon } = \mathbb{E}[\Upsilon_t \otimes \UUUU_{h,t}]$ and let their sample counterparts be denoted by $\widehat{C}_{Y\Upsilon}$, $\widehat{C}_{\Upsilon \Upsilon}$, and $\widehat{C}_{\UUUU\Upsilon}$, respectively.
We also write ${C}_{\Upsilon\Upsilon}$ (resp.\ $\widehat{C}_{\Upsilon\Upsilon}$) as operator matrices on $\elltwo$ with elements $\CCC_{ij}$ (resp.\ $\hat \CCC_{ij}$)  as in \eqref{covblock}, and then define the operator Schur complements $\SA$ and $\widehat{\SA}$ as in \eqref{eqopschur}. Moreover,  we let $(\lambda_j, v_j)$ (resp.\ $(\hat{\lambda}_j, \hat{v}_j)$) be the eigenelements of $\SA$ (resp.\ $\widehat{\SA}$) as in \eqref{eqshur0}. We also define $\KK$ and $\widehat{\SA}_{\KK}^{-1}$ as in \eqref{eqshur1}.

When \eqref{eq: model: benchmark: reduced5} is given, $\Upsilon_t \otimes Y_{h,t} = A_h (\Upsilon_t \otimes \Upsilon_t) + \Upsilon_t \otimes \UUUU_{h,t}$. If the operator-valued sequences appearing in this equation are stationary and ${C}_{\UUUU\Upsilon} = \mathbb{E}[\Upsilon_t \otimes \UUUU_{h,t}] = 0$, the following population relationship holds: $C_{Y\Upsilon} = A_h C_{\Upsilon\Upsilon}$. Based on this, we define our estimator as follows:
\begin{equation}
	\widehat{A}_h = \widehat{C}_{Y\Upsilon} \widehat{C}_{\Upsilon\Upsilon,\KK}^{-1}.  
\end{equation}
For the unique identification of $A_h$ and our asymptotic analysis, we employ the following assumption: below, $\{\Upsilon_t\otimes \UUUU_{h,t} \}$ is a sequence of Hilbert-Schmidt operators and understood as Hilbert-valued random elements  (see e.g., \citealp{Bosq2000}, p.\ 34). Moreover, $A_{h}$ is equivalently understood as the operator matrix $A_{h} = [\begin{matrix} A_{1,h} & A_{2,h}	\end{matrix}]$ (see Section \ref{sec_prelim2}). 
\begin{assumpA} \label{assum2add}
	\begin{enumerate*}[(i)]
		\item \label{assum2adda0}  $ A_h\ker C_{\Upsilon\Upsilon}=\{0\}$;
		\item \label{assum2adda} \eqref{eq: model: benchmark: reduced4} (or equivalently \eqref{eq: model: benchmark: reduced5}) holds with ${C}_{\UUUU\Upsilon}=0$;
		\item\label{assum2addb} $\{\Upsilon_t \}$,  $\{\UUUU_{h,t}\}$ and  $\{\Upsilon_t\otimes \UUUU_{h,t}\}$ are stationary and $L^4$-$m$-approximable;  
		\item \label{assumaddc} $\|\widehat{C}_{\UUUU\Upsilon}\|_{\op} = O_p(T^{-1/2})$, $\|\widehat{C}_{\Upsilon\Upsilon} -C_{\Upsilon\Upsilon}\|_{\op} = O_p(T^{-1/2})$,  and $\|\widehat{\CCC}_{11}^{-1} - \CCC_{11}^{-1}\|_{\op} = O_p(T^{-1/2})$; \item\label{assum3adda} Assumption \ref{assum3}\ref{assum3a} holds; 
		\item\label{assum3addb} for $\varsigma > 1/2$ and $\gamma>1/2$, $|\langle A_{2,h}  v_j,v_{\ell} \rangle| \leq \CC j^{-\varsigma}\ell^{-\gamma}$.
	\end{enumerate*} 
\end{assumpA}
The consistency and rate of convergence of $\widehat{A}_h$ are given in the following theorem:
\begin{theoremA} \label{thm1addadd} Let Assumption \ref{assum2add} hold and $T\reg^{1+4/\rho} \to \infty$. Then, 
	\begin{equation}
		\|\widehat{A}_h -A_h\|_{\op} = O_p(T^{-1/2}\reg^{-1/2-2/\rho} + \reg^{(2\varsigma-1)/2\rho}).
	\end{equation}
\end{theoremA}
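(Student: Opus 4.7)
The plan is to follow the structure of the proof of Theorem~\ref{thm1}, adapted to an operator-valued target measured in the operator norm. Using $\widehat{C}_{Y\Upsilon}=A_h\widehat{C}_{\Upsilon\Upsilon}+\widehat{C}_{\UUUU\Upsilon}$, I first write
\begin{equation*}
\widehat{A}_h-A_h \;=\; A_h\bigl(\widehat{Q}_{\KK}-I\bigr)+\widehat{C}_{\UUUU\Upsilon}\widehat{C}_{\Upsilon\Upsilon,\KK}^{-1},\qquad \widehat{Q}_{\KK}:=\widehat{C}_{\Upsilon\Upsilon}\widehat{C}_{\Upsilon\Upsilon,\KK}^{-1}.
\end{equation*}
Since $\widehat{C}_{\Upsilon\Upsilon,\KK}^{-1}$ and $\widehat{C}_{\Upsilon\Upsilon}$ are both self-adjoint, $\widehat{Q}_{\KK}=\widehat{P}_{\KK}^{\ast}$; taking the adjoint of the expression for $\widehat{P}_{\KK}$ in \eqref{eqprojections} gives the block form $\widehat{Q}_{\KK}=\left[\begin{smallmatrix} I_1 & 0 \\ (I_2-\widehat{\Pi}_{\KK})\widehat{\CCC}_{21}\widehat{\CCC}_{11}^{-1} & \widehat{\Pi}_{\KK}\end{smallmatrix}\right]$. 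Viewing $A_h$ as the block row $[A_{1,h},\,A_{2,h}]$, the $A_{1,h}$ contribution cancels, and
\begin{equation*}
A_h\widehat{Q}_{\KK}-A_h \;=\; \bigl[\,A_{2,h}(I_2-\widehat{\Pi}_{\KK})\widehat{\CCC}_{21}\widehat{\CCC}_{11}^{-1},\;\; -A_{2,h}(I_2-\widehat{\Pi}_{\KK})\,\bigr].
\end{equation*}
Because $\widehat{\CCC}_{21}\widehat{\CCC}_{11}^{-1}=O_p(1)$ in operator norm by Assumption~\ref{assum2add}\ref{assumaddc}, the bias is entirely governed by $\|A_{2,h}(I_2-\widehat{\Pi}_{\KK})\|_{\op}$.

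Lemma~\ref{lem2} and Assumption~\ref{assum2add}\ref{assumaddc} dispatch the stochastic term: $\|\widehat{C}_{\UUUU\Upsilon}\widehat{C}_{\Upsilon\Upsilon,\KK}^{-1}\|_{\op}\le\|\widehat{C}_{\UUUU\Upsilon}\|_{\op}\|\widehat{C}_{\Upsilon\Upsilon,\KK}^{-1}\|_{\op}=O_p(T^{-1/2}\reg^{-1/2})$, which is dominated by the target rate. For the bias I split $A_{2,h}(I_2-\widehat{\Pi}_{\KK})=A_{2,h}(I_2-\Pi_{\KK})+A_{2,h}(\Pi_{\KK}-\widehat{\Pi}_{\KK})$. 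Assumption~\ref{assum2add}\ref{assum3addb} combined with $\gamma>1/2$ yields $\|A_{2,h}v_j\|^{2}=\sum_{\ell}|\langle A_{2,h}v_j,v_{\ell}\rangle|^{2}\le Cj^{-2\varsigma}\sum_{\ell}\ell^{-2\gamma}\le C'j^{-2\varsigma}$, so the Hilbert--Schmidt norm gives
\begin{equation*}
\|A_{2,h}(I_2-\Pi_{\KK})\|_{\op}^{2}\;\le\;\|A_{2,h}(I_2-\Pi_{\KK})\|_{\HS}^{2}\;=\;\sum_{j>\KK}\|A_{2,h}v_j\|^{2}\;\le\;C\KK^{1-2\varsigma}.
\end{equation*}
Invoking $\reg^{1/2}\KK^{\rho/2}=O_p(1)$ from the proof of Theorem~\ref{thm1}, this is $O_p(\reg^{(2\varsigma-1)/\rho})$, whence $\|A_{2,h}(I_2-\Pi_{\KK})\|_{\op}=O_p(\reg^{(2\varsigma-1)/(2\rho)})$. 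Since $A_{2,h}$ is Hilbert--Schmidt (hence bounded) by the same assumption, the eigenvector piece satisfies $\|A_{2,h}(\Pi_{\KK}-\widehat{\Pi}_{\KK})\|_{\op}\le\|A_{2,h}\|_{\op}\|\Pi_{\KK}-\widehat{\Pi}_{\KK}\|_{\op}$. Applying the Bosq-type bound $\|\hat v_j-v_j\|=O_p(j^{\rho/2+1}T^{-1/2})$ to the expansion $(\widehat{\Pi}_{\KK}-\Pi_{\KK})f=\sum_{j=1}^{\KK}\{\langle\hat v_j-v_j,f\rangle\hat v_j+\langle v_j,f\rangle(\hat v_j-v_j)\}$ for unit $f$ with $|\langle v_j,f\rangle|\le 1$---the same chain of inequalities producing \eqref{eqappadd01}---gives $\|\Pi_{\KK}-\widehat{\Pi}_{\KK}\|_{\op}\le 2\sum_{j=1}^{\KK}\|\hat v_j-v_j\|=O_p(T^{-1/2}\KK^{\rho/2+2})=O_p(T^{-1/2}\reg^{-1/2-2/\rho})$.

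Summing these contributions yields $\|\widehat A_h-A_h\|_{\op}=O_p(T^{-1/2}\reg^{-1/2-2/\rho}+\reg^{(2\varsigma-1)/(2\rho)})$, as claimed. The main obstacle relative to Theorem~\ref{thm1} is the transition from a vector $\beta_h\in\mathcal H$ to an operator $A_{2,h}:\mathcal H\to\mathcal H$: the argument of Theorem~\ref{thm1} absorbed $\beta_h$ through the scalar $\|\beta_h\|$, whereas here $A_{2,h}$ must be controlled uniformly along every eigendirection. This is exactly what the two-sided smoothness of Assumption~\ref{assum2add}\ref{assum3addb} is designed for---summability in the ``column'' index $\ell$ (via $\gamma>1/2$) converts pointwise bounds $|\langle A_{2,h}v_j,v_{\ell}\rangle|\le Cj^{-\varsigma}\ell^{-\gamma}$ into the uniform estimate $\|A_{2,h}v_j\|\le Cj^{-\varsigma}$ required to bound the truncation error. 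Notably, $\gamma$ itself does not appear in the final rate, since only the row-index decay in $j$ controls the truncation bias.
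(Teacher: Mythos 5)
Your proof is correct and follows essentially the same route as the paper's: the paper works with the adjoint, writing $\widehat{A}_h^\ast=\widehat{P}_{\KK}A_h^\ast+\widehat{C}_{\Upsilon\Upsilon,\KK}^{-1}\widehat{C}_{\Upsilon\UUUU}$ and bounding $(\widehat{\Pi}_{\KK}-\Pi_{\KK})A_{2,h}^\ast$, $(I_2-\Pi_{\KK})A_{2,h}^\ast$ (via the Hilbert--Schmidt bound $\sum_{j>\KK}\|A_{2,h}v_j\|^2$), and the $O_p(T^{-1/2}\reg^{-1/2})$ noise term, which is exactly your decomposition transposed since $\widehat{Q}_{\KK}=\widehat{P}_{\KK}^\ast$ and the operator norm is adjoint-invariant. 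The only cosmetic differences are that you keep $A_h$ on the left rather than passing to $A_h^\ast$, and you factor out $\|A_{2,h}\|_{\op}$ from the eigenprojection error instead of carrying $A_{2,h}$ inside the sum; both yield the same three rates.
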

As in Section \ref{sec:est}, we write $\zeta \in \elltwo$ as $\zeta = \left[\begin{smallmatrix}\zeta_1 \\\zeta_2 \end{smallmatrix}\right]$, and in this section, we consider the case where $\zeta_2 = 0 \in \mathcal H$ in \eqref{eqzetadecom}. 
In this case,  $A_h\zeta = A_{1,h}\zeta_1$, and thus the impulse response of interest with respect to a unit shock on $y_{t+1}$ can be obtained by considering a specific choice of $\zeta_1$, as discussed earlier. Noting that $A_h\zeta= A_{1,h}\zeta_1$ itself is $\mathcal H$-valued, we develop inference on $\langle A_h \zeta, \varph \rangle$ for any arbitrary $\varph  \in \mathcal H$.  
Define the real quantity
\begin{equation}
	{\psi}_{\KK}^{(\varph)}(\zeta) = \langle  {\Lambda}_{\UU}^{(\varph)}{C}_{\Upsilon\Upsilon,\KK}^{-1}\zeta, {C}_{\Upsilon\Upsilon,\KK}^{-1}\zeta\rangle,
\end{equation}
where 
\begin{equation} \label{eqlrvadd}
	\Lambda_{\UU}^{(\varph)} =\sum_{s=-\infty}^\infty \mathbb{E}[\UUUU_{h,t}^{(\varph)}\Upsilon_t \otimes \UUUU_{h,t-s}^{(\varph)}\Upsilon_s] \quad \text{and} \quad \UUUU_t^{(\varph)} = \langle \UUUU_t,\varph \rangle.
\end{equation}
We let  $\widehat{\Lambda}_{\UU}^{(\varph)}$ be the sample counterpart of $\Lambda_{\UU}^{(\varph)}$ computed from the residual $\hat{\UUUU}_{h,t}^{(\varph)}= \langle Y_{h,t} - \widehat{A}_h\Upsilon_t, \varph \rangle$ as follows: 
\begin{equation} 
	\widehat{\Lambda}_{\UU}^{(\varph)} =  \frac{1}{T} \sum_{s=-h}^h\mathrm{k}\left(\frac{s}{\bdw}\right) \sum_{1 \leq t, t-s \leq T} \hat{\UUUU}_{h,t}^{(\varph)}\Upsilon_t \otimes \hat{\UUUU}_{h,t-s}^{(\varph)}\Upsilon_{t-s}. 
\end{equation}
Following the spectral decomposition of $C_{\Upsilon\Upsilon}$, we define $c_{m,j}(\zeta)$ as in \eqref{eqcm}. 
To establish asymptotic normality of our estimator, we employ the following assumptions: below, $\CC$ denotes a generic positive constant.
\begin{assumpA} \label{assum4add} 
	\begin{enumerate*}[(i)]
		\item \label{assum4adda}  Assumption \ref{assum4}\ref{assum4a} holds;
		\item \label{assum4addaa}  $\zeta \notin \ker C_{\Upsilon\Upsilon}$ and $\Lambda_{\UU}^{(\varph)} \vtw_j \neq 0$ for $\vtw_j$ corresponding to a nonzero eigenvalue of $C_{\Upsilon\Upsilon}$; 
		\item  \label{assum4addbb} $\sum_{j=1}^{m} \sum_{\ell=1}^{m} c_{m,j}(\zeta) c_{m,\ell}(\zeta) \langle \Lambda_{\UU}^{(\varph)}\vtw_j,\vtw_{\ell} \rangle \to \CC>0$ as $m\to \infty$; \item \label{assum4addcc} $\sup_{1\leq t\leq T} \|\Upsilon_t\| = O_p(1)$.
	\end{enumerate*}
\end{assumpA}
\begin{assumpA} \label{assum5add}
	\begin{enumerate*}[(i)]
		\item\label{assum5adda} Assumption \ref{assum5}\ref{assum5a}  holds; 
		\item\label{assum5addb} for some $\delta>1/2$ and $|\langle \CCC_{21}\CCC_{11}^{-1}\zeta_1,v_j \rangle| \leq \CC j^{-\delta}$. 
	\end{enumerate*}
\end{assumpA}
We decompose  $\langle (\widehat{A}_h-A_h)\zeta,\varph\rangle$ into $\langle (\widehat{A}_h-A_h)\zeta,\varph\rangle=\widehat{\Theta}_1+\widehat{\Theta}_{2A} + \widehat{\Theta}_{2B}$, where
\begin{equation} \label{eqdecomthetaad}
	\widehat{\Theta}_1 =  \langle (\widehat{A}_h-A_h\widehat{P}_{\KK}^\ast)\zeta, \varph \rangle,\quad \widehat{\Theta}_{2A}=  \langle (A_h\widehat{P}_{\KK}^\ast -A_h{P}_{\KK}^\ast)  \zeta, \varph \rangle, \quad \widehat{\Theta}_{2B}=  \langle (A_h{P}_{\KK}^\ast -A_h)  \zeta, \varph \rangle
\end{equation}
and $\widehat{P}_{\KK}^\ast$ (resp.\ ${P}_{\KK}^\ast$) is the adjoint of $\widehat{P}_{\KK}$ (resp.\  ${P}_{\KK}^\ast$) defined as in \eqref{eqprojections}.
\begin{theoremA} \label{thm2add}  Suppose that Assumptions \ref{assum2add}-\ref{assum4add} hold and  $T\reg^{2+4/\rho} \to \infty$. Then the following holds:
	\begin{enumerate}[(i)]
		\item\label{thm2addi0} $\sqrt{{T}/{\psi_{\KK}^{(\varph)}(\zeta)}}\widehat{\Theta}_1 \to_d N(0,1)$.
		\end{enumerate}
		If Assumption \ref{assum5add} is additionally satisfied with $\rho/2 + 2 < \varsigma+ \delta$, the following hold:
		\begin{enumerate}[(i)]\addtocounter{enumi}{1}
			\item\label{thm2addi1}  If $\psi_{\KK}^{(\varph)}(\zeta) \to_p \infty$, $
			\sqrt{{T}/{\psi_{\KK}^{(\varph)}(\zeta)}}\widehat{\Theta}_{2A} \to_p 0$.
			\item \label{thm2addi2} If $\psi_{\KK}^{(\varph)}(\zeta) \to_p \infty$ and  $T^{1/2}\reg^{(\delta+\varsigma-1)/\rho} \to 0$, 
			$\sqrt{{T}/{\psi_{\KK}^{(\varph)}(\zeta)}}\widehat{\Theta}_{2B} \to_p 0.$
			
		\end{enumerate}
		The results given in \ref{thm2i0}-\ref{thm2i2} hold when $\psi_{\KK}^{(\varph)}(\zeta)$ is replaced by $\widehat{\psi}_{\KK}^{(\varph)}(\zeta) = \langle  \widehat{\Lambda}_{\UU}^{(\varph)}\widehat{C}_{\Upsilon\Upsilon,\KK}^{-1}\zeta,\widehat{C}_{\Upsilon\Upsilon,\KK}^{-1} \zeta\rangle.$ 
	\end{theoremA}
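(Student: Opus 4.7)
The plan is to mimic the structure of the proof of Theorem \ref{thm2}, with the main adjustment being that the parameter of interest is now the \emph{operator} $A_h:\elltwo\to\mathcal H$ rather than a vector $\theta_h\in\elltwo$, and the inner product is taken with a second element $\varph\in\mathcal H$ on the codomain side. The starting identity is the operator analogue of $\hat\theta_h=\widehat P_{\KK}\theta_h+\widehat C_{\Upsilon\Upsilon,\KK}^{-1}\widehat C_{\Upsilon u}$. Substituting \eqref{eq: model: benchmark: reduced5} into $\widehat C_{Y\Upsilon}=T^{-1}\sum \Upsilon_t\otimes Y_{h,t}$ gives $\widehat C_{Y\Upsilon}=A_h\widehat C_{\Upsilon\Upsilon}+\widehat C_{\UUUU\Upsilon}$, and using self-adjointness of $\widehat C_{\Upsilon\Upsilon}$ and $\widehat C_{\Upsilon\Upsilon,\KK}^{-1}$,
\begin{equation*}
\widehat A_h=A_h\widehat P_{\KK}^\ast+\widehat C_{\UUUU\Upsilon}\widehat C_{\Upsilon\Upsilon,\KK}^{-1}.
\end{equation*}
So $\widehat\Theta_1=\langle \widehat C_{\UUUU\Upsilon}\widehat C_{\Upsilon\Upsilon,\KK}^{-1}\zeta,\varph\rangle=T^{-1}\sum_{t=1}^T \UUUU_{h,t}^{(\varph)}\langle\Upsilon_t,\widehat C_{\Upsilon\Upsilon,\KK}^{-1}\zeta\rangle$, a scalar quantity involving the (real-valued) $L^4$-$m$-approximable sequence $\{\UUUU_{h,t}^{(\varph)}\Upsilon_t\}$ whose long-run covariance is $\Lambda_{\UU}^{(\varph)}$ by \eqref{eqlrvadd}.

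For part \ref{thm2addi0}, I would first replace $\widehat C_{\Upsilon\Upsilon,\KK}^{-1}$ by ${C}_{\Upsilon\Upsilon,\KK}^{-1}$ in $\widehat\Theta_1$, using the same eigenvalue/eigenvector perturbation bounds as in the proof of Theorem \ref{thm2}\ref{thm2i0} (based on Lemma \ref{lem3} and the decay rates in Assumption \ref{assum2add}\ref{assum3adda}) to show the resulting error is $o_p(\sqrt{\psi_{\KK}^{(\varph)}(\zeta)/T})$. Then, applying the FCLT for $L^2$-$m$-approximable sequences (\citealp{berkes2013weak}, Theorem 1.1) together with the Skorokhod representation yields a Gaussian proxy $V_T=_d N(0,\Lambda_{\UU}^{(\varph)})$ for $T^{-1/2}\sum\UUUU_{h,t}^{(\varph)}\Upsilon_t$. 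The remainder of part \ref{thm2addi0} is then identical to Theorem \ref{thm2}\ref{thm2i0}: normalize by $\sqrt{\psi_{\KK}^{(\varph)}(\zeta)}$, use Assumption \ref{assum4add}\ref{assum4addbb} to rule out degeneracy of the normal limit, and use Assumption \ref{assum4add}\ref{assum4addaa} together with the ratio bound \eqref{eqpfadd03} to show the Skorokhod remainder is asymptotically negligible relative to the leading Gaussian term.

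For parts \ref{thm2addi1} and \ref{thm2addi2}, I would rewrite $\widehat\Theta_{2A}+\widehat\Theta_{2B}=\langle A_h(\widehat P_{\KK}^\ast-I)\zeta,\varph\rangle=\langle(\widehat P_{\KK}-I)A_h^\ast\varph,\zeta\rangle$ via duality, so the bias is exactly of the form treated in Theorem \ref{thm2}\ref{thm2i1}--\ref{thm2i2} with $\beta_h$ replaced by $A_{2,h}^\ast\varph\in\mathcal H$ and $\alpha_h$ replaced by $A_{1,h}^\ast\varph\in\mathbb R^m$. The crucial input is that $A_{2,h}^\ast\varph$ inherits smoothness in the basis $\{v_j\}$: from Assumption \ref{assum2add}\ref{assum3addb},
\begin{equation*}
|\langle A_{2,h}^\ast\varph,v_j\rangle|=|\langle\varph,A_{2,h}v_j\rangle|\leq\sum_{\ell\geq 1}|\langle A_{2,h}v_j,v_\ell\rangle||\langle\varph,v_\ell\rangle|\leq \CC j^{-\varsigma}\|\varph\|\Bigl(\sum_{\ell\geq 1}\ell^{-2\gamma}\Bigr)^{1/2},
\end{equation*}
which is finite because $\gamma>1/2$. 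This reproduces exactly the polynomial decay required by Assumption \ref{assum3}\ref{assum3b}, so the bias bounds from \eqref{eqthmpf001} and \eqref{eqpf09}, with $\zeta_2=0$, transfer verbatim to the present setup using Assumption \ref{assum5add}\ref{assum5addb}. The asymptotic equivalence of $\psi_{\KK}^{(\varph)}(\zeta)$ and $\widehat\psi_{\KK}^{(\varph)}(\zeta)$ then follows from the consistency of $\widehat A_h$ in Theorem \ref{thm1addadd}, Assumption \ref{assum4add}\ref{assum4addcc}, and the same kernel-HAC argument given at the end of the proof of Theorem \ref{thm2}.

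The main obstacle, I expect, is bookkeeping the adjoint representation cleanly—specifically verifying that the explicit form of $\widehat P_{\KK}^\ast-P_{\KK}^\ast$ acting on $\zeta=[\zeta_1,0]^\top$ reduces to the same perturbation quantities $(\widehat\Pi_{\KK}-\Pi_{\KK})$ and $\widehat{\CCC}_{11}^{-1}\widehat{\CCC}_{12}-\CCC_{11}^{-1}\CCC_{12}$ already controlled in the proof of Theorem \ref{thm2}, so that the bias computations carry over without the need to bound any genuinely new operator-norm quantity. Once that translation is made precise, the remaining analysis is a direct adaptation of the arguments in Section \ref{app_proof}.
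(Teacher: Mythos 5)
Your proposal is correct and follows essentially the same route as the paper's proof: the same decomposition $\widehat A_h = A_h\widehat P_{\KK}^\ast + \widehat C_{\UUUU\Upsilon}\widehat C_{\Upsilon\Upsilon,\KK}^{-1}$, the same FCLT/Skorokhod argument for $\widehat\Theta_1$, and the same perturbation bounds for the bias terms (the paper works with $A_h(\widehat P_{\KK}^\ast - P_{\KK}^\ast)\zeta$ directly rather than transposing to $(\widehat P_{\KK}-P_{\KK})A_h^\ast\varph$, but the resulting quantities $\langle A_{2,h}(\hat v_j - v_j),\varph\rangle$ and $\langle \hat v_j - v_j,\CCC_{21}\CCC_{11}^{-1}\zeta_1\rangle$ are identical). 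Your explicit verification that $A_{2,h}^\ast\varph$ inherits the $j^{-\varsigma}$ decay from Assumption \ref{assum2add}\ref{assum3addb} is a step the paper uses only implicitly, and it is a useful clarification.
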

	
	\subsection{Inference on $\IRF_{21,h}$ with endogenous predictors}
	As in Appendix~\ref{sec:est2}, we now let $\XX_t$ in \eqref{eq: model: benchmark: reduced5} be an endogenous predictor and let $\ZZ_t$ be another variable satisfying $C_{\UUUU\ZZ} = \mathbb{E}[ \ZZ_t \otimes \UUUU_{h,t}]=0$ and $C_{\XX\ZZ}=\mathbb{E}[ \ZZ_t \otimes \XX_t] \neq 0$. In this section, we reuse the notation introduced in Appendix~\ref{sec:est2}, and let $C_{Y\ZZ} = \mathbb{E}[\ZZ_t \otimes Y_{h,t}]$, 
	$\widehat C_{Y\ZZ} = T^{-1} \sum_{t=1}^T \ZZ_t \otimes Y_{h,t}$, $\widehat C_{\XX\ZZ} = T^{-1} \sum_{t=1}^T \ZZ_t \otimes \XX_{t}$ and $\widehat C_{\UUUU\ZZ} = T^{-1} \sum_{t=1}^T \ZZ_t \otimes \UUUU_{h,t}$. We employ the following assumption:
	
	\begin{assumpB} \label{assumpIV2add}
		\begin{enumerate*}[(i)]
			\item \label{assumpIV2adda0} $\mathcal A_h\ker C_{\XX\ZZ}=\{0\}$ and $C_{\XX\ZZ}\neq 0$;
			\item \label{assumpIV2adda} $Y_{h,t} = A_h \XX_t + \UUUU_{h,t}$ holds with $\mathbb{E}[\XX_t \otimes \UUUU_{h,t}]\neq 0$ and $\mathbb{E}[\ZZ_t \otimes \UUUU_{h,t}]=0$; 
			\item\label{assumpIV2addb} $\{\XX_t\}$, $\{\ZZ_t\}$, $\{\UUUU_{h,t}\}$, 
			and $\{\ZZ_t \otimes \UUUU_{h,t}\}$ are stationary and $L^4$-$m$-approximable; 
			\item \label{assumpIV2addc} $\|\widehat{C}_{\XX\ZZ} -C_{\XX\ZZ}\|_{\op} = O_p(T^{-1/2})$, $\|\widehat{C}_{\UUUU\ZZ}\|_{\op} = O_p(T^{-1/2})$  and $\|\widehat{\Delta}_{11}^{-1} - \Delta_{11}^{-1}\| = O_p(T^{-1/2})$; 
			\item\label{assumpIV3adda} Assumption \ref{assumpIV2}\ref{assumpIV3a} holds;
			\item\label{assumpIVaddb} for $\varsigma > 1/2$ and $\gamma > 1/2$, $|\langle A_{h}\wv_j,\ww_\ell\rangle| \leq \CC j^{-\varsigma}\ell^{-\gamma}$.
		\end{enumerate*} 
	\end{assumpB}
	Based on the population moment equation $C_{Y\ZZ} = A_h C_{\XX\ZZ}$, which holds under the above assumptions, we construct our proposed estimator as follows:
	\begin{equation}
		\widetilde{A}_h =  \widehat{C}_{Y\ZZ }  \widehat{C}_{\XX\ZZ,\KKK}^{-1}.
	\end{equation}
			
		
		\begin{theoremA} \label{thm1add} Let Assumption \ref{assumpIV2add} hold and $T\reg^{2+4/\rho} \to \infty$. Then, 
			\begin{equation}
				\|\widetilde{A}_h -A_h\|_{\op} = O_p(T^{-1/2}\reg^{-1/2-2/\rho} + \reg^{(2\varsigma-1)/2\rho}).
			\end{equation}
		\end{theoremA}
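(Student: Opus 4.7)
The plan is to mirror the proof of Theorem A.1 (\ref{thm1addadd}), replacing the self-adjoint covariance $\widehat{C}_{\Upsilon\Upsilon}$ with the cross-covariance $\widehat{C}_{\XX\ZZ}$ and its Schur complement $\widehat{\SS}$, much as the proof of Theorem \ref{thm3} adapts Theorem \ref{thm1} to the endogenous setting. The starting point is the moment identity
\begin{equation*}
\widehat{C}_{Y\ZZ} = A_h\widehat{C}_{\XX\ZZ} + \widehat{C}_{\UUUU\ZZ},
\end{equation*}
obtained by substituting $Y_{h,t}=A_h\XX_t+\UUUU_{h,t}$ into the definition of $\widehat{C}_{Y\ZZ}$ and exploiting linearity of the tensor product in its second argument. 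Post-multiplying by $\widehat{C}_{\XX\ZZ,\KKK}^{-1}$ gives
\begin{equation*}
\widetilde{A}_h - A_h = A_h(\widehat{\PP}_{\KKK}^\ast - I) + \widehat{C}_{\UUUU\ZZ}\widehat{C}_{\XX\ZZ,\KKK}^{-1},
\end{equation*}
where $\widehat{\PP}_{\KKK}^\ast = \widehat{C}_{\XX\ZZ}\widehat{C}_{\XX\ZZ,\KKK}^{-1}$ is the adjoint of the quasi-projector in \eqref{eqwkadd}, and the strategy is to control these two summands separately in operator norm.

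The stochastic term is handled directly: Assumption IV2$^\ast$\ref{assumpIV2c} gives $\|\widehat{C}_{\UUUU\ZZ}\|_{\op}=O_p(T^{-1/2})$, while Lemma \ref{lem2} and Remark \ref{remaddapp01} yield $\|\widehat{C}_{\XX\ZZ,\KKK}^{-1}\|_{\op}=O_p(\|\widehat{\SS}_{\KKK}^{-1}\|_{\op})=O_p(\tau^{-1/2})$, so the stochastic contribution is $O_p(T^{-1/2}\tau^{-1/2})$, dominated by the claimed stochastic rate. The bias term I would then split as
\begin{equation*}
A_h(\widehat{\PP}_{\KKK}^\ast - I) = A_h(\widehat{\PP}_{\KKK}^\ast - \PP_{\KKK}^\ast) + A_h(\PP_{\KKK}^\ast - I).
\end{equation*}
For the population part $A_h(\PP_{\KKK}^\ast-I)$, the block expression \eqref{eqwkadd} shows that the only nontrivial pieces involve $I_2-\PPI_{\KKK}$ acting on either $\zeta_2$ or $\DDD_{11}^{-1}\DDD_{12}\zeta_1$. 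Using Assumption IV2$^\ast$\ref{assumpIVaddb} and Parseval on an orthonormal $\{\ww_\ell\}$, one gets $\|A_h\wv_j\|\leq C j^{-\varsigma}$, hence $\|A_h(I_2-\PPI_{\KKK})\|_{\op}=O(\KKK^{(1-2\varsigma)/2})$, which combined with $\KKK\leq O_p(\tau^{-1/\rho})$ (this last bound proved exactly as in the proof of Theorem \ref{thm3} via the event $\tau\leq \hat\llambda_{\KKK}^2$) produces the bias rate $O_p(\tau^{(2\varsigma-1)/(2\rho)})$.

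The delicate remaining step is the sampling-error part $A_h(\widehat{\PP}_{\KKK}^\ast-\PP_{\KKK}^\ast)$, whose principal term is $A_h(\widehat{\PPI}_{\KKK}-\PPI_{\KKK})$. Applying Lemma 4.3 of \cite{Bosq2000} to $\widehat{\SS}^\ast\widehat{\SS}$, whose eigenvalues are $\hat\llambda_j^2$ with gaps controlled by Assumption IV2$^\ast$\ref{assumpIV3adda}, yields $\|\hat\wv_j-\wv_j\|=O_p(j^{\rho+1}T^{-1/2})$. Expanding $\widehat{\PPI}_{\KKK}-\PPI_{\KKK}=\sum_{j=1}^{\KKK}((\hat\wv_j-\wv_j)\otimes\hat\wv_j+\wv_j\otimes(\hat\wv_j-\wv_j))$, bringing $A_h$ inside the tensor on the left factor, exploiting the smoothness decay $\|A_h\wv_j\|\leq C j^{-\varsigma}$ and $\|A_h\hat\wv_j\|=O(j^{-\varsigma})+o_p(1)$, and summing up with $\KKK\leq O_p(\tau^{-1/\rho})$ then gives the required $O_p(T^{-1/2}\tau^{-1/2-2/\rho})$ bound. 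The residual terms coming from the off-diagonal blocks $\widehat{\DDD}_{11}^{-1}\widehat{\DDD}_{12}-\DDD_{11}^{-1}\DDD_{12}=O_p(T^{-1/2})$ are of smaller order by Assumption IV2$^\ast$\ref{assumpIV2addc}.

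The main obstacle is precisely Step 4. In the exogenous self-adjoint case of Theorem \ref{thm1addadd}, the factorization $\lambda_j-\lambda_{j+1}=(\lambda_j^2-\lambda_{j+1}^2)/(\lambda_j+\lambda_{j+1})$ combined with $\lambda_j+\lambda_{j+1}=O(j^{-\rho/2})$ trims a factor $j^{\rho/2}$ from each eigenvector perturbation. That trick is unavailable here because $\hat\wv_j$ is an eigenvector of $\widehat{\SS}^\ast\widehat{\SS}$ whose eigenvalues $\hat\llambda_j^2$ are \emph{already squared singular values}; closing the resulting gap of $\tau^{-1/2}$ requires leveraging the direction-sensitive smoothness of $A_h$ provided by $\gamma>1/2$ in Assumption IV2$^\ast$\ref{assumpIVaddb} — bounding $\|A_h(\hat\wv_j-\wv_j)\|$ through the $\ww_\ell$-expansion rather than via the crude $\|A_h\|_{\op}\|\hat\wv_j-\wv_j\|$. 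Matching the stated rate will hinge on this finer spectral argument.
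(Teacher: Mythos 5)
Your overall route is the same as the paper's: decompose the error into the stochastic term $\widehat{C}_{\UUUU\ZZ}\widehat{C}_{\XX\ZZ,\KKK}^{-1}$, the estimated-eigenprojection error driven by $\widehat{\PPI}_{\KKK}-\PPI_{\KKK}$, and the truncation bias driven by $I_2-\PPI_{\KKK}$ (the paper works with the adjoint $\widetilde{A}_h^\ast-A_h^\ast$ and uses $\|\widetilde{A}_h^\ast-A_h^\ast\|_{\op}=\|\widetilde{A}_h-A_h\|_{\op}$, which is cosmetic). Your bounds for the first term, $O_p(T^{-1/2}\reg^{-1/2})$, and for the third, $O_p(\reg^{(2\varsigma-1)/2\rho})$ via $\|A_{2,h}\wv_j\|\leq \CC j^{-\varsigma}$ and $\reg^{1/2}\KKK^{\rho/2}=O_p(1)$, coincide with the paper's, as does your observation that the off-diagonal blocks contribute only $O_p(T^{-1/2})$.

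The obstacle you flag at the eigenprojection term is genuine, and as written you have not closed it. Under Assumption \ref{assumpIV2add} alone the only available perturbation bound is $\|\hat{\wv}_j-\wv_j\|=O_p\bigl((\llambda_j^2-\llambda_{j+1}^2)^{-1}T^{-1/2}\bigr)=O_p(j^{\rho+1}T^{-1/2})$ --- there is no square-root gain because $\hat{\wv}_j$ are eigenvectors of $\widehat{\SS}^\ast\widehat{\SS}$ --- and summing gives $\|(\widehat{\PPI}_{\KKK}-\PPI_{\KKK})A_{2,h}^\ast\|_{\op}=O_p(T^{-1/2})\sum_{j\leq\KKK}j^{\rho+1}=O_p(T^{-1/2}\reg^{-1-2/\rho})$, not the claimed $O_p(T^{-1/2}\reg^{-1/2-2/\rho})$. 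You should be aware that the paper's own proof stops at exactly this same display, i.e.\ it derives $O_p(T^{-1/2}\reg^{-1-2/\rho})$ (the rate of Theorem \ref{thm3}) and then asserts the stated result, so the mismatch sits in the paper rather than being something you missed. The refinement you sketch --- bounding $\|A_{2,h}(\hat{\wv}_j-\wv_j)\|$ through the $\ww_\ell$-expansion so that the decay $j^{-\varsigma}$ absorbs part of the eigenvector error --- is the right idea in principle, but it rests on the componentwise bound $\langle(\widehat{\SS}-\SS)\wv_j,\ww_\ell\rangle=O_p(T^{-1/2}|\llambda_j\llambda_\ell|^{1/2})$, which this paper only obtains under the fourth-moment conditions of Assumption \ref{assumpIV5add}; those are not among the hypotheses of this theorem. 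As it stands, your argument (like the paper's) establishes the weaker rate $O_p(T^{-1/2}\reg^{-1-2/\rho}+\reg^{(2\varsigma-1)/2\rho})$, and you should either prove the finer spectral step under added assumptions or conclude that the exponent in the statement should read $-1-2/\rho$.
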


		For statistical inference on $\langle A_h\zeta,\varph \rangle$ for $\zeta = \left[\begin{smallmatrix}\zeta_1 \\0 \end{smallmatrix} \right]\in \elltwo$ and $\varph \in \mathcal H$ as in Appendix~\ref{sec_appen_infer}, we employ the following assumptions: below, $\wwtw_j, d_{m,j}(\zeta)$,  and $\tilde{r}_t(j,\ell)$ are defined as in Appendix~\ref{sec:est2}, and $\Lambda_{\UUU}^{(\varph)}$ is defined by replacing $\Upsilon_t$ with $\ZZ_t$ in \eqref{eqlrvadd}.
		\begin{assumpB} \label{assumpIV4add} 
			\begin{enumerate*}[(i)] 
				\item \label{assumpIV4adda} Assumption \ref{assum4}\ref{assum4a} holds;
				\item \label{assumpIV4addaa}  $\zeta \notin \ker C_{\ZZ\XX}$ and $\Lambda_{\UUU}^{(\varph)} \wwtw_j \neq 0$ for all $\wwtw_j $ corresponding to nonzero eigenvalues of $C_{\XX\ZZ}$,
				\item  \label{assumpIV4addbb} $\sum_{j=1}^{m} \sum_{\ell=1}^{m} d_{m,j}(\zeta) d_{m,\ell}(\zeta) \langle \Lambda_{\UUU}^{(\varph)}\wwtw_j,\wwtw_{\ell} \rangle \to \CC > 0$ as $m\to \infty$; \item \label{assumpIV4addcc} $\sup_{1\leq t\leq T} \|\ZZ_t\| = O_p(1)$.
			\end{enumerate*}
		\end{assumpB}
		
		\begin{assumpB} \label{assumpIV5add} 
			\begin{enumerate*}[(i)] 
				\item \label{assumpIV5addb} Assumption \ref{assumpIV5}\ref{assumpIV5b} holds; 
				\item\label{assumpIV5addc} for some $\delta>1/2$, $| \langle \zeta_1,  \DDD_{11}^{-1} \DDD_{12} \wv_j \rangle| \leq \CC j^{-\delta}$.   
			\end{enumerate*}
		\end{assumpB}

		We let $\widehat{\Lambda}_{\UUU}^{(\varph)}$ be the sample counterpart of ${\Lambda}_{\UUU}^{(\varph)}$ defined as follows:
		\begin{align}
			&\widehat{\Lambda}_{\UUU}^{(\varph)} =  \frac{1}{T} \sum_{s=-h}^h\mathrm{k}\left(\frac{s}{\bdw}\right) \sum_{1 \leq t, t-s \leq T}  \tilde{\UUUU}_{h,t}^{(\varph)}\ZZ_t \otimes \tilde{\UUUU}_{h,t-s}^{(\varph)}\ZZ_{t-s},
		\end{align}
		where $\tilde{\UUUU}_{h,t}^{(\varph)} = \langle  Y_{h,t} -\widetilde{A}_h \XX_t, \varph \rangle$.
		We then let $\omega_{\KKK}^{(\varph)}(\zeta)$ be defined by
		\begin{equation}
			\omega_{\KKK}^{(\varph)}(\zeta) 
			=  \langle {C}_{\ZZ \XX,\KKK}^{-1} {\Lambda}_{\UUU}^{(\varph)} {C}_{\XX\ZZ,\KKK}^{-1}\zeta,  \zeta\rangle. 
		\end{equation}
		As in \eqref{eqdecomthetaad}, we write $\langle (\widetilde{A}_h-A_h)\zeta,\varph\rangle=\widetilde{\Theta}_1+\widetilde{\Theta}_{2A} + \widetilde{\Theta}_{2B}$, where
		\begin{equation} 
			\widetilde{\Theta}_1 =  \langle (\widetilde{A}_h-A_h\widehat{\PP}_{\KKK}^\ast)\zeta, \varph \rangle,\quad \widetilde{\Theta}_{2A}=  \langle (A_h\widehat{\PP}_{\KKK}^\ast -A_h{\PP}_{\KKK}^\ast)  \zeta, \varph \rangle, \quad \widetilde{\Theta}_{2B}=  \langle (A_h{\PP}_{\KKK}^\ast -A_h)  \zeta, \varph \rangle
		\end{equation}
		and $\widehat{\PP}_{\KK}^\ast$ (resp.\ ${\PP}_{\KK}^\ast$) is the adjoint of $\widehat{\PP}_{\KK}$ (resp.\  ${\PP}_{\KK}^\ast$) defined as in \eqref{eqwkadd}.
		\begin{theoremA} \label{thm2addadd}  Suppose that Assumptions \ref{assumpIV2add}-\ref{assumpIV4add} hold and $T\reg^{3+4/\rho} \to \infty$. Then the following holds:
			\begin{enumerate}[(i)]
				\item\label{thm2addaddi0} $\sqrt{{T}/{\omega_{\KK}^{(\varph)}(\zeta)}}\widetilde{\Theta}_1 \to_d N(0,1)$.
				\end{enumerate}
				If Assumption \ref{assumpIV5add} is additionally satisfied with $\rho/2 + 2 < \varsigma+ \delta$, the following hold:
				\begin{enumerate}[(i)]\addtocounter{enumi}{1}
					\item\label{thm2addaddi1}  If $\omega_{\KK}^{(\varph)}(\zeta) \to_p \infty$, $
					\sqrt{{T}/{\omega_{\KK}^{(\varph)}(\zeta)}}\widetilde{\Theta}_{2A} \to_p 0$.
					\item \label{thm2addaddi2} If $\omega_{\KK}^{(\varph)}(\zeta) \to_p \infty$ and $T^{1/2}\reg^{(\delta+\varsigma-1)/\rho} \to 0$, 
					$\sqrt{{T}/{\omega_{\KK}^{(\varph)}(\zeta)}}\widetilde{\Theta}_{2B} \to_p 0.$
					
				\end{enumerate}
				The results given in \ref{thm2i0}-\ref{thm2i2} hold when $\omega_{\KK}^{(\varph)}(\zeta)$ is replaced by $\widehat{\omega}_{\KKK}^{(\varph)}(\zeta) = \langle \widehat{C}_{\ZZ\XX,\KK}^{-1} \widehat{\Lambda}_{\UUU}^{(\varph)}\widehat{C}_{\XX\ZZ,\KK}^{-1}\zeta,   \zeta\rangle.$ 
			\end{theoremA}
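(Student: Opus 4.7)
The plan is to adapt the proof of Theorem~\ref{thm4} by exploiting that inference on the scalar $\langle A_h\zeta,\varph\rangle$ reduces, via projection of the $\mathcal H$-valued residuals through $\varph$, to a scalar-on-function IV problem with noise $\UUUU_{h,t}^{(\varph)}:=\langle\UUUU_{h,t},\varph\rangle$. Writing $\widehat{C}_{Y\ZZ}=A_h\widehat{C}_{\XX\ZZ}+\widehat{C}_{\UUUU\ZZ}$ under Assumption~\ref{assumpIV2add} and noting that $\widehat{\PP}_{\KKK}^\ast=\widehat{C}_{\XX\ZZ}\widehat{C}_{\XX\ZZ,\KKK}^{-1}$ (cf.\ Remark~\ref{remaddapp01}) gives
\begin{equation*}
\widetilde{A}_h - A_h\widehat{\PP}_{\KKK}^\ast = \widehat{C}_{\UUUU\ZZ}\,\widehat{C}_{\XX\ZZ,\KKK}^{-1}, \qquad \widetilde{\Theta}_1 = \frac{1}{T}\sum_{t=1}^T \UUUU_{h,t}^{(\varph)}\bigl\langle \ZZ_t,\widehat{C}_{\XX\ZZ,\KKK}^{-1}\zeta\bigr\rangle,
\end{equation*}
which is structurally identical to the leading term analyzed in Theorem~\ref{thm4}\ref{thm4i0}, with $u_{h,t}$ replaced by $\UUUU_{h,t}^{(\varph)}$.

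For part~\ref{thm2addaddi0}, I would first replace $\widehat{C}_{\XX\ZZ,\KKK}^{-1}$ by its population analog, invoking Lemma~\ref{lem3} and the bound $\|\widehat{\SS}_{\KKK}^{-1}-\SS_{\KKK}^{-1}\|_{\op}=O_p(T^{-1/2}\reg^{-3/2-2/\rho})=o_p(1)$ established inside the proof of Theorem~\ref{thm4}\ref{thm4i0} under the same requirement $T\reg^{3+4/\rho}\to\infty$. The sequence $\{\UUUU_{h,t}^{(\varph)}\ZZ_t\}$ inherits $L^2$-$m$-approximability from Assumption~\ref{assumpIV2add}\ref{assumpIV2addb} via Lemma~2.1 of \cite{hormann2010}, so Theorem~1.1 of \cite{berkes2013weak} combined with the Skorokhod representation theorem yields a Gaussian coupling $V_T\stackrel{d}{=}N(0,\Lambda_{\UUU}^{(\varph)})$ satisfying $\|T^{-1/2}\sum_t\UUUU_{h,t}^{(\varph)}\ZZ_t-V_T\|\to_p 0$. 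After normalizing by $\sqrt{\omega_{\KKK}^{(\varph)}(\zeta)}$, the coupling error is asymptotically negligible by the same argument as in~\eqref{eqpfadd03}, since Assumption~\ref{assumpIV4add}\ref{assumpIV4addbb} ensures the denominator there converges to a positive constant, while the surviving Gaussian term has unit variance by construction of $\omega_{\KKK}^{(\varph)}(\zeta)$.

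For parts~\ref{thm2addaddi1}--\ref{thm2addaddi2}, the block form of $\PP_{\KKK}^\ast$ applied to $\zeta=[\zeta_1;0]$ yields, after a short calculation,
\begin{equation*}
\widetilde{\Theta}_{2B} = \sum_{j>\KKK}\bigl\langle\zeta_1,\DDD_{11}^{-1}\DDD_{12}\wv_j\bigr\rangle\bigl\langle A_{2,h}\wv_j,\varph\bigr\rangle.
\end{equation*}
Expanding $\varph$ in an orthonormal basis completed from $\{\ww_\ell\}$ and applying Cauchy--Schwarz, Assumption~\ref{assumpIVaddb} together with $\gamma>1/2$ gives $|\langle A_{2,h}\wv_j,\varph\rangle|\leq\CC j^{-\varsigma}(\sum_\ell\ell^{-2\gamma})^{1/2}\|\varph\|=O(j^{-\varsigma})$; combined with Assumption~\ref{assumpIV5add}\ref{assumpIV5addc} this yields $|\widetilde{\Theta}_{2B}|=O(\KKK^{1-\delta-\varsigma})=O(\reg^{(\delta+\varsigma-1)/\rho})$, and part~\ref{thm2addaddi2} follows from $\sqrt{T}\reg^{(\delta+\varsigma-1)/\rho}\to 0$. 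For part~\ref{thm2addaddi1}, I would expand $\widetilde{\Theta}_{2A}$ into three sums mirroring~\eqref{eqpf07add}, reuse $\|\hat\wv_j-\wv_j\|=O_p(T^{-1/2}j^2)$ together with the bounds~\eqref{eqpf07adda}, \eqref{eqpf07addb}, and~\eqref{eqpfadd08add}, and substitute $|\langle A_{2,h}\wv_j,\varph\rangle|=O(j^{-\varsigma})$ in place of $|\langle\wv_j,\beta_h\rangle|\leq\CC j^{-\varsigma}$; the requirement $\rho/2+2<\varsigma+\delta$ then forces each of the three series to be $O(1/\sqrt{\omega_{\KKK}^{(\varph)}(\zeta)})$.

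The main technical obstacle will be the uniform-in-$\varph$ transition from the two-index bound of Assumption~\ref{assumpIVaddb} on $\langle A_h\wv_j,\ww_\ell\rangle$ to the single-index bound $|\langle A_{2,h}\wv_j,\varph\rangle|=O(j^{-\varsigma})$; this step crucially relies on $\gamma>1/2$ via Cauchy--Schwarz, and is what makes Assumption~\ref{assumpIVaddb} the natural functional-response analog of Assumption~\ref{assumpIV2}\ref{assumpIV3b}. The asymptotic equivalence of $\omega_{\KKK}^{(\varph)}(\zeta)$ and $\widehat{\omega}_{\KKK}^{(\varph)}(\zeta)$ then follows the template at the end of the proof of Theorem~\ref{thm4}: consistency of $\widetilde{A}_h$ from Theorem~\ref{thm1add} controls the plug-in error in $\hat{\UUUU}_{h,t}^{(\varph)}=\langle Y_{h,t}-\widetilde{A}_h\XX_t,\varph\rangle$, and \citet[Theorem~1]{horvath2013estimation} yields $\|\widehat{\Lambda}_{\UUU}^{(\varph)}-\Lambda_{\UUU}^{(\varph)}\|_{\op}\to_p 0$, from which the stated replacement is immediate.
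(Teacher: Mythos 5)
Your proposal is correct and follows essentially the same route as the paper: the same decomposition $\widetilde{\Theta}_1=\langle \widehat{C}_{\UUUU\ZZ}\widehat{C}_{\XX\ZZ,\KKK}^{-1}\zeta,\varph\rangle$ with the Gaussian coupling from \cite{berkes2013weak}, the same three-sum expansion of $\widetilde{\Theta}_{2A}$ with the eigenvector perturbation bounds, and the same tail-sum bound for $\widetilde{\Theta}_{2B}$. Your explicit Cauchy--Schwarz derivation of $|\langle A_{2,h}\wv_j,\varph\rangle|=O(j^{-\varsigma})$ from the two-index bound (using $\gamma>1/2$) correctly fills in a step the paper leaves implicit when it substitutes this bound for $|\langle \wv_j,\beta_h\rangle|$ in the Theorem \ref{thm4} arguments.
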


			\section{Mathematical Appendix} \label{sec:app:pf}
			\subsection{Preliminaries} \label{sec_prelim}
			\subsubsection{Hilbert-valued random elements and linear operators}\label{sec_prelim1}
			Let $\mathcal H_1$ (resp.\ $\mathcal H_2$) be a separable Hilbert space with inner product $\langle \cdot,\cdot \rangle_1$ (resp.\ $\langle \cdot,\cdot \rangle_2$)  
			and norm $\|\cdot\|_1$  (resp.\ $\|\cdot\|_2$). We call $X$ a $\mathcal H_1$-valued random variable if it is a measurable map from the underlying probability space to $\mathcal H_1$. $X$ is said to be square-integrable if $\mathbb{E}[\|X\|_1^2]<\infty$. We let $\otimes$ denotes the tensor product associated with $\mathcal H_1$, $\mathcal H_2$, or both, i.e., for any $\zeta \in \mathcal H_k$ and $h \in \mathcal H_\ell$, 
			\begin{equation}
				\zeta\otimes h (\cdot) = \langle \zeta,\cdot \rangle_k h, 
			\end{equation}
			which is a map from $\mathcal H_k$ to $\mathcal H_\ell$ for $k \in \{1,2\}$ and $\ell \in \{1,2\}$.
			
			For any square-integrable random variables $X \in \mathcal H_k$ and $Y \in \mathcal H_\ell$, $\mathbb{E}[X \otimes Y]$ is called the covariance operator of $X$ and $Y$. We call $A:\mathcal H_k \to \mathcal H_\ell$ a bounded linear map if $A$ is linear  and its uniform operator norm, defined as $\|A\|_{\op}=\sup_{\| x\|\leq 1, x \in \mathcal H_k} \|Ax\|_\ell$, is bounded. For such an $A$, $A^\ast$ denotes the adjoint defined by the property $\langle A\zeta,h \rangle_\ell=\langle \zeta,A^\ast h \rangle_k$ for all $\zeta\in \mathcal H_k$ and $h \in \mathcal H_\ell$. $A$ is called a compact operator if $\mathcal H_k=\mathcal H_\ell$ and there exist two orthonormal bases $\{h_{1,j}\}_{j\geq 1}$ and $\{h_{2,j}\}_{j\geq 1}$ and a real sequence  $\{a_j\}_{j\geq 1}$ tending to zero such that $A = \sum_{j=1}^\infty a_j h_{1,j} \otimes h_{2,j}$. In this expression, we may assume that $a_j\geq 0$ and $h_{1,j}=h_{2,j}$ if $A$ is self-adjoint (i.e., $A=A^\ast$) and nonnegative (i.e., $\langle A\zeta,\zeta \rangle_1\geq 0$ for any $\zeta$), and in this case, $a_j$ (resp.\ $\zeta_{1,j}$) becomes an eigenvalue (resp.\ eigenvector) of $A$.   A compact operator $A$ is Hilbert-Schmidt if $\sum_{j=1}^\infty \|A \zeta_j\|_k^2 < \infty$ for some orthonormal basis $\{\zeta_j\}_{j\geq1}$, and in this case, the Hilbert-Schmidt norm $\|A\|_{\HS}$ is defined as  $\|A\|_{\HS}= \sqrt{\sum_{j=1}^\infty \|A \zeta_j\|_k^2}$. It is well known that $\|A\|_{\op} \leq \|A\|_{\HS}$.
			
			\subsubsection{Approximable functional time series} \label{Section_AFTS}
			We briefly introduce the property of $L^p$-$m$-approximability, which we use in our theoretical development. A more detailed discussion on the property can be found in \cite{hormann2010}, \cite{horvath2013estimation} and \cite{berkes2013weak}.  
			
			Let $\overline{\mathcal H}$ be any arbitrary separable Hilbert space with inner product $\langle \cdot,\cdot \rangle_{\overline{\mathcal H}}$ and norm $\|\cdot\|_{\overline{\mathcal H}}$. 
			For some measurable function 
			$f$, $p\in \mathbb{N}$, $\delta \in (0,1)$, and iid elements $\{e_t\}$, we consider a $\overline{\mathcal H}$-valued sequence $h_t$ given by 
			\begin{equation} \label{lpmcon1}
				h_t = f(e_t,e_{t-1},\ldots)  
			\end{equation}
			satisfying
			\begin{equation} \label{lpmcon2}
				\mathbb{E}[h_t]=0, \quad \mathbb{E}[\|h_t\|^{p+\delta}] <\infty, \quad 
				\sum_{m\geq 1} \left(\mathbb{E}\left[\|h_t-h_{t,m}\|^{p+\delta}\right]\right)^{1/\kappa} < \infty\quad \text{for some $\kappa > p +\delta$}, 
			\end{equation}
			where 
			\begin{equation} 
				h_{t,m}=f(e_t,\ldots,e_{t-m+1},e_{t,t-m}^{(m)},e_{t,t-m-1}^{(m)},\ldots)
			\end{equation}
			and the sequences $\{e^{(m)}_{t,k}\}$ are defined as independent copies of the sequence $\{e_{t}\}$. The sequence $\{h_t\}$ satisfying \eqref{lpmcon1}-\eqref{lpmcon2} is said to be $L^p$-$m$-approximable, which is necessarily stationary. The $L^p$-$m$-approximablility (particularly, the third condition in \eqref{lpmcon2}) has been widely used as a measure of weak dependence of functional time series; in addition to the articles mentioned at the beginning of this section, see also  \cite{horvath2014test} and \cite{HORVATH2016676}.
			
			

			
			
			\subsubsection{Operator matrices on a Hilbert space}\label{sec_prelim2}
			Now assume that $\mathcal H_3 = \mathcal H_1\times \mathcal H_2$ of which any element $x \in \mathcal H_3$ may be understood as a tuple $\left[\begin{smallmatrix} x_1 \\ x_2\end{smallmatrix}\right]$, where $x_1\in \mathcal H_1$,  $x_2 \in \mathcal H_2$ and the inner product on $\mathcal H_3$ is defined as $\langle \cdot,\cdot \rangle_3= \langle \cdot,\cdot \rangle_1 + \langle \cdot,\cdot \rangle_2$. We consider the coordinate projection map $P_1:\mathcal H_3 \to \mathcal H_1$ given by $P_1\left[\begin{smallmatrix} x_1 \\ x_2\end{smallmatrix}\right] = x_1$ for any $x=\left[\begin{smallmatrix} x_1 \\ x_2\end{smallmatrix}\right] \in \mathcal H_3$ and its adjoint $P_1^\ast$ defined by $P_1^\ast x_1 = \left[\begin{smallmatrix} x_1 \\ 0\end{smallmatrix}\right]$ for any  $x_1\in \mathcal H_1$. $P_2$ and $P_2^\ast$ can similarly be defined associated with $\mathcal H_2$. Then it is obvious that 
			\begin{equation} \label{eqopmatrix1}
				Ax = A \left[\begin{smallmatrix} x_1 \\ x_2\end{smallmatrix}\right]  = A \left[\begin{smallmatrix} x_1 \\ 0\end{smallmatrix}\right] + A \left[\begin{smallmatrix} 0 \\ x_2\end{smallmatrix}\right] =  A P_1^\ast x_1 + A P_2^\ast x_2 = \left[\begin{smallmatrix} P_1 A P_1^\ast x_1 \\  P_2A P_1^\ast x_1 \end{smallmatrix}\right] + \left[\begin{smallmatrix} P_1 A P_2^\ast x_2 \\  P_2A P_2^\ast x_2 \end{smallmatrix}\right],
			\end{equation}
			where the last equality follows from that $A P_1^\ast x_1 $ (resp.\ $A P_2^\ast x_2 $ ) can be written as the tuple $\left[\begin{smallmatrix} P_1A P_1^\ast x_1 \\  P_2A P_1^\ast x_1 \end{smallmatrix}\right]$ (resp.\ $\left[\begin{smallmatrix} P_1 A P_2^\ast x_2 \\  P_2A P_2^\ast x_2 \end{smallmatrix}\right]$). The result given in \eqref{eqopmatrix1} can be understood similarly as a matrix transformation of $\left[\begin{smallmatrix} x_1 \\ x_2 \end{smallmatrix}\right]$, which is regarded as a (2$\times$1) vector, where the transformation matrix is given by 
			\begin{equation} 
				\label{eqopmatrix}	\mathcal A  = \begin{bmatrix} P_1A P_1^\ast & P_1A P_2^\ast \\ P_2A P_1^\ast &  P_2A P_2^\ast \end{bmatrix}.
			\end{equation}
			Thus the linear map $A$ on $\mathcal H_3$ may be understood as the operator matrix $\mathcal A$ given above, and the operator matrices that appear in Sections \ref{sec: svar}-\ref{sec:est} are defined in this way. From this result, we find that the cross-covariance operator of $\mathcal H_3$-valued random element $Z$ and $\widetilde{Z}$, $\mathbb{E}[Z\otimes \widetilde{Z}]$, can also be written as the following operator matrix:
			\begin{equation} \label{eqcovop}
				\mathbb{E}[Z\otimes \widetilde{Z}]  = \begin{bmatrix} P_1\mathbb{E}[Z\otimes \widetilde{Z}] P_1^\ast & P_1\mathbb{E}[Z\otimes \widetilde{Z}] P_2^\ast \\ P_2\mathbb{E}[Z\otimes\widetilde{Z}] P_1^\ast &  P_2\mathbb{E}[Z\otimes \widetilde{Z}] P_2^\ast \end{bmatrix} = \begin{bmatrix} \mathbb{E}[ P_1Z \otimes P_1 \widetilde{Z}] & \mathbb{E}[P_2 Z\otimes P_1 \widetilde{Z}]  \\ \mathbb{E}[P_1 Z\otimes P_2 \widetilde{Z}] &  \mathbb{E}[P_2 Z\otimes P_2 \widetilde{Z}] \end{bmatrix},
			\end{equation}
			where we used the equality that $B\mathbb{E}[Z\otimes \widetilde{Z}]C = \mathbb{E}[C^\ast Z\otimes B Z]$ for any bounded linear maps $B$ and $C$.  
			
			If $A$ is a map from $\mathcal H_3$ to $\mathcal H_j$ for $j=1$ or $2$, then it is obvious from the above discussion that this operator $A$ can be represented as the operator matrix 
			\begin{equation}
				\mathcal A = \begin{bmatrix}	P_jAP_1^\ast  & P_jA P_2^\ast \end{bmatrix}. \label{eqopmatrix2}
			\end{equation}

				\subsection{Proofs of the results in Section \ref{sec_extension}}\label{sec:ext_pf}
				We first discuss some useful lemmas, which will be repeatedly used in the subsequent sections, and then provide proofs of the theoretical results. 	With a slight abuse of notation, we will use $\langle \cdot, \cdot \rangle$ (resp. $\|\cdot\|$) to denote the inner product (resp. norm) regardless of the underlying Hilbert space. This may cause little confusion while significantly reducing notational burden. As another way to simplify notation, we let $I_1$ denote the identity map on $\mathbb{R}^m$ for various $m$ depending on the context. This is used together with $I_2$, denoting the identity map on $\mathcal H$, and $I$, denoting the identity map on $\mathbb{R}^m \times \mathcal H$.

				\paragraph{Proof of Theorem \ref{thm1addadd}}
				As in our proof of Theorem \ref{thm1}, let  ${v}_j$ denote $\sgn(\langle \hat{v}_j,v_j \rangle)v_j$ for notational simplicity.
				We note that $\|\widehat{A}_h -A_h\|_{\op} = \|\widehat{A}_h^\ast -A_h^\ast\|_{\op}$ and thus analyze the latter to obtain the desired result. Observe that 
				\begin{align} \label{eqthetadd}
					\widehat{A}_h^\ast &=  \widehat{P}_{\KK}  A_h^\ast   + \widehat{C}_{\Upsilon\Upsilon,\KK}^{-1}  \widehat{C}_{\Upsilon \UUUU} = 
					\begin{bmatrix}
						A_{1,h}^\ast + \widehat{\CCC}_{11}^{-1}\widehat{\CCC}_{12}(I_2-\widehat{\Pi}_{\KK})  A_{2,h}^\ast \\
						\widehat{\Pi}_{\KK} A_{2,h}^\ast
					\end{bmatrix} + \widehat{C}_{\Upsilon\Upsilon,\KK}^{-1}  \widehat{C}_{\Upsilon \UUUU}.
				\end{align}
						From nearly identical arguments used in our proof of Theorem \ref{thm1} (and the fact that $\|\widehat{C}_{\UUUU\Upsilon }\|_{\op} = \|\widehat{C}_{\Upsilon \UUUU}\|_{\op}$), the following can be shown: (a) $\widehat{C}_{\Upsilon\Upsilon,\KK}^{-1}  \widehat{C}_{\Upsilon \UUUU} = O_p(\reg^{-1/2}T^{-1/2})$, (b) $\widehat{A}_h^\ast - A_h^\ast=O_p((\widehat{\Pi}_{\KK}-{\Pi}_{\KK})A_{2,h}^\ast) + O_p((I_2-{\Pi}_{\KK})A_{2,h}^\ast) + O_p(\reg^{-1/2}T^{-1/2}),$ (c) $\|v_j-\hat{v}_j\| = O_p((\lambda_j^{-2}-\lambda_{j+1}^{-2})(\lambda_j+\lambda_{j+1}) T^{-1/2})$ and (d) $\reg^{1/2}\KK^{\rho/2} = O_p(1)$. From the latter two results and similar arguments used to show \eqref{eqappadd01}, we find that 
						\begin{equation}
							\|(\widehat{\Pi}_{\KK}- {\Pi}_{\KK})A_{2,h}^\ast\|_{\op} 
							= \|\sum_{j=1}^{\KK} \langle A_{2,h}(\hat{v}_j-v_j),\cdot \rangle \hat{v}_j -  \sum_{j=1}^{\KK}\langle  A_{2,h} {v}_j,\cdot \rangle ({v}_j-\hat{v}_j)\|_{\op} 
							={O_p(T^{-1/2}\reg^{-1/2-2/\rho})}. \nonumber
						\end{equation}
						We also note that, under Assumption \ref{assum2add}, the following holds:
						\begin{align}\label{eqpf04adda}
							\|(I_2-{\Pi}_{\KK})A_{2,h}^\ast\|_{\op}^2  = \sum_{j=\KK+1}^{\infty}  \|A_{2,h}v_j\|^2 \leq O(1)\sum_{j=\KK+1}^{\infty} j^{-2\varsigma} \leq O( \reg^{(2\varsigma-1)/\rho}). 
						\end{align}
						From these results, the desired result is established.  \qed
						
						\paragraph{Proof of Theorem \ref{thm2add}}
						\noindent {1. Proof of  \ref{thm2addi0}:} Note that $\langle \hat{A}_h\zeta ,\varph\rangle=\langle  (A_h\widehat{C}_{\Upsilon\Upsilon} + \widehat{C}_{\UUUU \Upsilon}) \widehat{C}_{\Upsilon\Upsilon,\KK}^{-1} \zeta ,\varph\rangle=  \langle A_h\widehat{P}_{\KK}^\ast\zeta ,\varph\rangle   + \langle \widehat{C}_{\UUUU \Upsilon}  \widehat{C}_{\Upsilon\Upsilon,\KK}^{-1}\zeta ,\varph\rangle$. 
						We thus find that 
						\begin{align*}
							\sqrt{\frac{T}{\psi_{\KK} ^{(\varph)}(\zeta)}} \widehat{\Theta}_1  
							&=  \frac{1}{\sqrt{T\psi_{\KK}^{(\varph)}(\zeta)}}\sum_{t=1}^T \langle \Upsilon_t,  \widehat{C}_{\Upsilon\Upsilon,\KK}^{-1}  \zeta \rangle  \UUUU_{h,t}^{(\varph)} =  \frac{1}{\sqrt{T\psi_{\KK}^{(\varph)}(\zeta)}}\sum_{t=1}^T \langle \Upsilon_t, \widehat{C}_{\Upsilon\Upsilon,\KK}^{-1}  \zeta \rangle  \UUUU_{h,t}^{(\varph)}. 
						\end{align*}
						From similar arguments used in our proof of Theorem \ref{thm2}, we find that 
						\begin{equation} 
							\sqrt{\frac{T}{\psi_{\KK}^{(\varph)}(\zeta)}} \widehat{\Theta}_1  = \frac{1}{\sqrt{T\psi_{\KK}^{(\varph)}(\zeta)}}\sum_{t=1}^T \langle \Upsilon_t, ({C}_{\Upsilon\Upsilon,\KK}^{-1} + o_p(1))  \zeta \rangle  \UUUU_{h,t}^{(\varph)} \label{eqpfa1add}
						\end{equation}
						and $\|T^{-1/2} \sum_{t=1}^T  \Upsilon_t \UUUU_{h,t}^{(\varph)}  - V_T\| \to_p 0$ for $V_T =_d N(0,\Lambda_{\UU}^{(\varph)})$ for every $T$. Thus, neglecting asymptotically negligible terms, \eqref{eqpfa1add} can be written as $\mathcal {V}_T + \mathcal W_T$, where $\mathcal{V}_T = \langle V_T, \zeta_{\KK} \rangle$, $\mathcal W_T = \langle T^{-1/2} \sum_{t=1}^T \UUUU_{h,t}^{(\varph)} \Upsilon_t-V_T, \zeta_{\KK} \rangle$ and $\zeta_{\KK}= {C}_{\Upsilon\Upsilon,\KK}^{-1}\zeta /\sqrt{\psi_{\KK}^{(\varph)}(\zeta)}$.  Note also that  
						\begin{align*}
							&\left| \frac{\langle T^{-1/2} \sum_{t=1}^T \UUUU_{h,t}^{(\varph)} \Upsilon_t - V_T, \zeta_{\KK} \rangle}{\langle V_T,\zeta_{\KK}\rangle}\right| 
							&\leq \frac{\sup_{\|v\|\leq 1}|\langle T^{-1/2} \sum_{t=1}^T  \UUUU_{h,t}^{(\varph)} \Upsilon_t - V_T, v \rangle|}{|\langle V_T,\frac{\zeta_{\KK}}{\|\zeta_{\KK}\|}\rangle|} 
							\leq \frac{o_p(1)}{|\langle V_T,\frac{\zeta_{\KK}}{\|\zeta_{\KK}\|}\rangle|}.
						\end{align*}
						We observe that $\zeta_{\KK}/\|\zeta_{\KK}\| = \sum_{j=1}^{\KK} c_{\KK,j}(\zeta) \vtw_j$ and $c_{\KK,j}(\zeta)  \neq 0$ for some $j$ under Assumption \ref{assum4add}\ref{assum4addaa}, and also $\sum_{j=1}^{\KK} c_{\KK,j}(\zeta)^2 = 1$ for every $\KK$. From the property of $V_T$, we find that $\langle V_T,\zeta_{\KK}/\|\zeta_{\KK}\|\rangle$ is a normal random variable with mean zero and variance $\sum_{j=1}^{\KK} \sum_{\ell=1}^{\KK} c_{\KK,j}(\zeta) c_{\KK,\ell}(\zeta) \langle \Lambda_{\UU}^{(\varph)}\vtw_j,\vtw_{\ell} \rangle$ 
						which converges to a positive constant under Assumption \ref{assum4add}\ref{assum4addbb}. This implies that $\mathcal W_T$ is  asymptotically negligible. 
						Note that by construction of $V_T$, we know that $\langle V_T, \zeta_{\KK} \rangle$ is normally distributed with mean zero; its variance is given by 
						$\langle \Lambda_{\UU}^{(\varph)}  \zeta_{\KK},  \zeta_{\KK} \rangle =   \langle  \Lambda_{\UU}^{(\varph)} {C}_{\Upsilon\Upsilon,\KK}^{-1}\zeta,{C}_{\Upsilon\Upsilon,\KK}^{-1} \zeta \rangle/{\psi_{\KK}^{(\varph)}(\zeta)} = 1$, which establishes Theorem \ref{thm2add}\ref{thm2i0}. \\
						
						
						\noindent{2. Proof of \ref{thm2addi1}:} 
						From a little algebra, we find that 
						\begin{equation}
							\widehat{\Theta}_{2A} =  O_p(\langle A_{2,h}(I_2-\widehat{\Pi}_{\KK})(\widehat{\CCC}_{21}\widehat{\CCC}_{11}^{-1}-{\CCC}_{21}{\CCC}_{11}^{-1})\zeta_1,\varph \rangle)   + O_p(\langle A_{2,h}({\Pi}_{\KK}-\widehat{\Pi}_{\KK}){\CCC}_{21}{\CCC}_{11}^{-1}\zeta_1,\varph \rangle),  \label{eqasdf01}
						\end{equation}
						where the first term is $O_p(T^{-1/2})$ since $A_{2,h}(I_2-\widehat{\Pi}_{\KK}) = O_p(1)$ and $\widehat{\CCC}_{21}\widehat{\CCC}_{11}^{-1}-{\CCC}_{21}{\CCC}_{11}^{-1} = O_p(T^{-1/2})$. Therefore, \eqref{eqasdf01} can be written as
						\begin{equation*}
							\widehat{\Theta}_{2A}  
							=  O_p(T^{-1/2}) +  O_p(1) (F_1+F_2+F_2), 
						\end{equation*}
						where $F_1=\sum_{j=1}^{\KK}  \langle v_j-\hat{v}_j, {\CCC}_{21}{\CCC}_{11}^{-1}\zeta_1\rangle \langle A_{2,h} v_j,\varph\rangle$, $F_2= \sum_{j=1}^{\KK} \langle\hat{v}_j-v_j, {\CCC}_{21}{\CCC}_{11}^{-1}\zeta_1 \rangle \langle A_{2,h} ({v}_j-\hat{v}_j),\varph\rangle$ and $F_3=\sum_{j=1}^{\KK}  \langle v_j, {\CCC}_{21}{\CCC}_{11}^{-1}\zeta_1 \rangle \langle A_{2,h} (\hat{v}_j-{v}_j),\varph\rangle$.
						As in our proof of Theorem \ref{thm2}, it can be shown that $\langle	(\hat{\SA} - \SA)v_j,v_{\ell}\rangle = O_p(T^{-1/2})\lambda_j^{1/2}\lambda_{\ell}^{1/2}$, and combining this result with the arguments used in the proofs of (S2.16) and (S2.33) of \cite{seong2021functional}, we obtain the following: (a)~$\langle A_{2,h}(\hat{v}_j-v_j),\varph\rangle = O_p(T^{-1/2})j^{\rho/2+1-\varsigma}$ and (b) $\langle \hat{v}_j-v_j,{\CCC}_{21}{\CCC}_{11}^{-1}\zeta_1\rangle = O_p(T^{-1/2}) j^{ \rho/2 + 1-\delta}$. Using these results with similar arguments used for \eqref{eqthmpf001}, we find that  $\sqrt{{T}/{\psi_{\KK}^{(\varph)}(\zeta)}}\widehat{\Theta}_{2A}= O_p(\sqrt{1/{\psi_{\KK}^{(\varph)}(\zeta)}})$. \\
						
						\noindent{3. Proof of \ref{thm2addi2}:} 
						Observe that, for $\zeta \in \left[\begin{smallmatrix}\zeta_1\\0\end{smallmatrix}\right]$ and $\varph \in \mathcal H$, 
						\begin{align*}
							\langle A_{h}({P}_{\KK}^\ast-I)\zeta, \varph\rangle &=  \langle A_{2,h}(I-{\Pi}_{\KK})\CCC_{21}\CCC_{11}^{-1}\zeta_1, \varph\rangle =\sum_{j=\KK+1}^\infty \langle v_j, \CCC_{21}\CCC_{11}^{-1}\zeta_1\rangle \langle A_{2,h}v_j, \varph \rangle.
						\end{align*}
						From this result, it can be shown that  $\sqrt{{T}/{\psi_{\KK}^{(\varph)}(\zeta)}}\langle A_{h}({P}_{\KK}^\ast-I)\zeta, \varph\rangle = O_p(1/\sqrt{\psi_{\KK}^{(\varph)}(\zeta)})$  
						as in our proof of Theorem \ref{thm2}\ref{thm2i2} (see \eqref{eqpf09}).\\
						
						\noindent{4. Proof of the asymptotic equivalence between the choices of ${\psi}_{\KK}^{(\varph)}$ and $\widehat{\psi}_{\KK}^{(\varph)}$:} Under the employed assumptions, we found that $\|\widehat{\SA}_{\KK}^{-1} - {\SA}_{\KK}^{-1}\|_{\op} \to_p 0$. Let $\hat{\UUUU}_t = \UUUU_t + \hat{\delta}_t$, where $\hat{\delta}_t = (A_h-\widehat{A}_h) \Upsilon_t$ and this is $o_p(1)$ since $\|\widehat{A}_h-A_h\|_{\op} = o_p(1)$. Noting that $\{\UUUU_{h,t}^{(\varph)}\Upsilon_t\}$ is an $L^2$-$m$-approximable sequence, we apply similar arguments used in our proof of Theorem \ref{thm2}, and find that 
					$\|\widehat{\Lambda}_{\UU}^{(\varph)} -{\Lambda}_{\UU}^{(\varph)} \|_{\op}\to_p 0$, from which the desired result follows. \qed  
					
					\paragraph{Proof of Theorem \ref{thm1add}}
					Observe that $\widetilde{A}_h^\ast =  \widehat{\PP}_{\KK}  A_h^\ast   + (\widehat{C}_{U\ZZ} \widehat{C}_{\XX\ZZ,\KK}^{-1})^\ast$ and $\|(\widehat{C}_{U\ZZ} \widehat{C}_{\XX\ZZ,\KK}^{-1})^\ast\|_{\op}=\|\widehat{C}_{U\ZZ} \widehat{C}_{\XX\ZZ,\KK}^{-1}\|_{\op}$ by the properties of the operator norm. It is thus deduced that 
					\begin{align} \label{eqthetaddadd}
						\widetilde{A}_h^\ast & 
						= \begin{bmatrix}
							A_{1,h}^\ast + \widehat{\DDD}_{11}^{-1}\widehat{\DDD}_{12}(I_2-\widehat{\PPI}_{\KK})  A_{2,h}^\ast  \\  \widehat{\PPI}_{\KK} A_{2,h}^\ast
						\end{bmatrix}+  O_p(\widehat{C}_{U\ZZ} \widehat{C}_{\ZZ\XX,\KK}^{-1}).
					\end{align}
					We find the following as in our proof of Theorem \ref{thm3}:
					(a) $\widehat{C}_{U\ZZ} \widehat{C}_{\ZZ\XX,\KK}^{-1} = O_p(\reg^{-1/2}T^{-1/2})$ and hence $\widetilde{A}_h^\ast - A_h^\ast=O_p((\widehat{\PPI}_{\KK}-{\PPI}_{\KK})A_{2,h}^\ast) + O_p((I_2-{\PPI}_{\KK})A_{2,h}^\ast) + O_p(\reg^{-1/2}T^{-1/2}),$ (b) $\|\hat{\wv}_j-{\wv}_j\| \leq O_p((\llambda_j^2-\llambda_{j+1}^2)^{-1}T^{-1/2})$ and (c) $\reg^{1/2}\KK^{\rho/2} = O_p(1)$. 
					From the latter two results, we find that
					\begin{align}
						\|(\widehat{\PPI}_{\KK}- {\PPI}_{\KK})A_{2,h}^\ast\|_{\op} 
						&= \|\sum_{j=1}^{\KK} \langle A_{2,h}(\hat{\wv}_j-\wv_j),\cdot \rangle \hat{\wv}_j -  \sum_{j=1}^{\KK}\langle  A_{2,h} {\wv}_j,\cdot \rangle ({\wv}_j-\hat{\wv}_j)\|_{\op} \notag \\ &
						=  O_p(T^{-1/2})\sum_{j=1}^{\KK} j^{\rho+1} 
						={O_p(T^{-1/2}\reg^{-1-2/\rho})}. \label{eqappadd1}
					\end{align}
					Moreover, under the employed assumptions, we deduce the following as in \eqref{eqpf04add}:
					\begin{equation}\label{eqpf04addadd}
						\|({\PPI}_{\KKK}-I_2)A_{2,h}^\ast\|_{\op}^2   \leq \sum_{j=\KK+1}^{\infty} \| A_{2,h} \wv_j\|^2 = O( \reg^{(2\varsigma-1)/\rho}).
					\end{equation}
					From \eqref{eqappadd1}-\eqref{eqpf04addadd}, the results (a)-(c) given above, and the fact that $\|\widetilde{A}_h^\ast-{A}_h^\ast\|_{\op}=\|\widetilde{A}_h-{A}_h\|_{\op}$, the desired result immediately follows. \qed 
					
					\paragraph{Proof of Theorem \ref{thm2addadd}}
					\noindent {1. Proof of  \ref{thm2addaddi0}:}  Note that $\langle \widetilde{A}_h\zeta ,\varph\rangle=\langle  (A_h\widehat{C}_{\XX\ZZ} + \widehat{C}_{\UUUU \ZZ}) \widehat{C}_{\XX\ZZ,\KK}^{-1} \zeta ,\varph\rangle=  \langle A_h\widehat{\PP}_{\KK}^\ast\zeta ,\varph\rangle   + \langle \widehat{C}_{\UUUU \ZZ}  \widehat{C}_{\XX\ZZ,\KK}^{-1}\zeta ,\varph\rangle$. 
					From similar arguments used in our proof of Theorem \ref{thm2add}, we find that 
					\begin{equation} 
						\sqrt{\frac{T}{\omega^{(\varph)}_{\KK}(\zeta)}} \widetilde{\Theta}_1  = \frac{1}{\sqrt{T\omega^{(\varph)}_{\KK}(\zeta)}}\sum_{t=1}^T \langle \ZZ_t, ({C}_{\XX\ZZ,\KK}^{-1} + o_p(1))  \zeta \rangle  \UUUU_{h,t}^{(\varph)} \label{eqpfa1addadd}
					\end{equation}
					and there exists a random element $V_T$ such that  $\|T^{-1/2} \sum_{t=1}^T  \ZZ_t \UUUU_{h,t}^{(\varph)}  - V_T\| \to_p 0$ where $V_T =_d N(0,\Lambda_{\UUU}^{(\varph)})$ for every $T$. Neglecting asymptotically negligible terms, \eqref{eqpfa1addadd} can be written as $
					\mathcal {V}_T + \mathcal W_T$, where $\mathcal{V}_T = \langle V_T, {C}_{\XX\ZZ,\KK}^{-1}\zeta \rangle/{\sqrt{\omega_{\KK}^{(\varph)}(\zeta)}}$ and $\mathcal W_T = \langle T^{-1/2} \sum_{t=1}^T  \UUUU_{h,t}^{(\varph)} \ZZ_t-V_T, {C}_{\XX\ZZ,\KK}^{-1}\zeta\rangle/{\sqrt{\omega_{\KK}^{(\varph)}(\zeta)}}.$ We let $\zeta_{\KK}={C}_{\XX\ZZ,\KK}^{-1}\zeta/{\sqrt{\omega_{\KK}^{(\varph)}(\zeta)}}$ and note that  
					\begin{align}
						&\left|\frac{\langle T^{-1/2} \sum_{t=1}^T  \UUUU_{h,t}^{(\varph)} \ZZ_t - V_T, \zeta_{\KKK} \rangle}{\langle V_T,\zeta_{\KKK}\rangle}\right|
						\leq \frac{\|T^{-1/2} \sum_{t=1}^T \UUUU_{h,t}^{(\varph)} \ZZ_t - V_T\|}{|\langle V_T,\frac{\zeta_{\KKK}}{\|\zeta_{\KKK}\|}\rangle|} 
						\leq \frac{o_p(1)}{|\langle V_T,\frac{\zeta_{\KKK}}{\|\zeta_{\KKK}\|}\rangle|}.
					\end{align}
					Under Assumption \ref{assumpIV4add}\ref{assumpIV4addaa}.
					$\zeta_{\KKK}/\|\zeta_{\KKK}\| = \sum_{j=1}^{\KKK} d_{\KKK,j}(\zeta) \wwtw_j$ and $d_{\KKK,j}(\zeta)  \neq 0$ for some $j$ and also $\sum_{j=1}^{\KKK} d_{\KKK,j}(\zeta)^2 = 1$ for every $\KKK$. As in our proof of Theorem \ref{thm2add}, we deduce that $\langle V_T,\zeta_{\KKK}/\|\zeta_{\KKK}\|\rangle$ is a mean-zero normal random variable and its variance is given by $
					\sum_{j=1}^{\KKK} \sum_{\ell=1}^{\KKK} d_{\KKK,j}(\zeta) d_{\KKK,\ell}(\zeta) \langle \Lambda_{\UUU}^{(\varph)} \wwtw_j, \wwtw_{\ell} \rangle$, 
					which converges to a positive constant by Assumption \ref{assumpIV4add}\ref{assumpIV4addbb}. This implies that $\mathcal W_T$ is asymptotically negligible.
					Moreover, from construction of $V_T$, we find that $\langle V_T, \zeta_{\KK} \rangle$ is normally distributed with mean zero and variance $ \langle \Lambda_{\UUU}^{(\varph)} {C}_{\XX\ZZ,\KK}^{-1} \zeta, {C}_{\XX\ZZ,\KK}^{-1}\zeta \rangle /  {{\omega_{\KKK}^{(\varph)}(\zeta)}}  = \langle {C}_{\ZZ \XX,\KK}^{-1} \Lambda_{\UUU}^{(\varph)} {C}_{\XX\ZZ,\KK}^{-1} \zeta, \zeta \rangle /  {{\omega_{\KKK}^{(\varph)}(\zeta)}} = 1$, from which the desired result follows. \\

					\noindent{2. Proof of \ref{thm2addaddi1}:} 
					We first note that $\widetilde{\Theta}_{2A} = O_p(\langle A_{2,h} (\widehat{D}^\ast - D^\ast) \zeta_1,\varph \rangle)$, where $\widehat{D} = \hat{\DDD}_{11}^{-1} \hat{\DDD}_{12} (I_2 - \widehat{\PPI}_{\KK})$ and $D = {\DDD}_{11}^{-1} {\DDD}_{12} (I_2 - {\PPI}_{\KK})$. From similar algebra used in \eqref{eqasdf01}, this can further be simplified as follows: 	$\widetilde{\Theta}_{2A} = O_p(T^{-1/2})+ O_p(\langle A_{2,h}({\PPI}_{\KK}-\widehat{\PPI}_{\KK}) D_0^\ast \zeta_1,\varph \rangle)$, where $D_0  =  {\DDD}_{11}^{-1} {\DDD}_{12}$. By expanding $\langle A_{2,h}({\PPI}_{\KK}-\widehat{\PPI}_{\KK}) D_0^\ast \zeta_1,\varph \rangle$, we rewrite this as $\widetilde{\Theta}_{2A} = O_p(T^{-1/2}) + O_p(1) (F_1+F_2+F_3)$, where $F_1 = \sum_{j=1}^{\KK} 	\langle \wv_j-\hat{\wv}_j, D_0^\ast {\zeta}_1 \rangle \langle A_{2,h} \wv_j,\varph\rangle$, $F_2= \sum_{j=1}^{\KK} \langle \wv_j-\hat{\wv}_j, D_0^\ast {\zeta_1} \rangle \langle A_{2,h} (\hat\wv_j-{\wv}_j),\varph \rangle$ and $F_3=\sum_{j=1}^{\KK} \langle \wv_j, D_0^\ast{\zeta_1} \rangle \langle  A_{2,h} (\wv_j-\hat{\wv}_j),\varph\rangle$.
					Observe that $\langle \phi_j, D_0^\ast \zeta_1\rangle = \langle \DDD_{11}^{-1}\DDD_{12} \phi_j , \zeta_1\rangle \leq \CC j^{-\delta}$. Moreover, under Assumptions \ref{assumpIV2add}-\ref{assumpIV5add},  the following can be shown from nearly identical arguments used in the proofs of (S2.16) and (S2.33) of \cite{seong2021functional}: (a) $\langle A_{2,h}(\hat{\wv}_j-\wv_j),\varph\rangle = O_p(T^{-1/2})j^{\rho/2+1-\varsigma}$ and (b) $\langle \hat{\wv}_j-\wv_j, D_0^\ast{\zeta_1}\rangle = O_p(T^{-1/2}) j^{ \rho/2 + 1-\delta}$. Combining all these results, we deduce that 
					$\sqrt{{T}/{\omega_{\KK}^{(\varph)}(\zeta)}}\widetilde{\Theta}_{2A}= O_p\left(1/{\sqrt{\omega_{\KK}^{(\varph)}(\zeta)}} \right)$, where similar arguments to those used for  \eqref{eqthmpf001} are applied. \\ 
					
					\noindent {3. Proof of  \ref{thm2addaddi2}:}  Observe that $	\langle A_{2,h}({\PP}_{\KK}^\ast-I)\zeta, \varph\rangle =  \langle A_{2,h}(I_2-{\PPI}_{\KK})D_0^\ast \zeta_1, \varph\rangle $, where $D_0 = \DDD_{11}^{-1}\DDD_{12}$. Therefore, 
					\begin{equation*}
						\langle A_{2,h}({\PP}_{\KK}^\ast-I)\zeta, \varph\rangle  =\sum_{j=\KK+1}^\infty \langle  \wv_j, D_0^\ast \zeta_1\rangle \langle A_{2,h}\wv_j, \varph \rangle=\sum_{j=\KK+1}^\infty \langle  \DDD_{11}^{-1}\DDD_{12} \wv_j,  \zeta_1\rangle \langle A_{2,h}\wv_j, \varph \rangle.
					\end{equation*}
					From this result and Assumption \ref{assumpIV5add}, it can be shown that   $\sqrt{T/{\omega_{\KK}^{(\varph)}(\zeta)})}\langle A_{2,h}({\PP}_{\KK}^\ast-I)\zeta, \varph\rangle=O_p(1/\sqrt{\omega_{\KK}^{(\varph)}(\zeta)})$  
					as in our proof of Theorem \ref{thm2}\ref{thm2i2} (see \eqref{eqpf09}).
					\\ 
					
					
					\noindent{4. Proof of the asymptotic equivalence between the choices of $\omega_{\KK}^{(\varph)}$ and $\widehat{\omega}_{\KK}^{(\varph)}$:} The proof is similar to that given to show that $\psi_{\KK}^{(\varph)}$ can be replaced by $\widehat{\psi}_{\KK}^{(\varph)}$ in our proof of Theorem \ref{thm2add} and is thus omitted. \qed

\bibliographystyle{apalike}

\end{document}